\documentclass[]{article}
\usepackage{filecontents}
\usepackage [english] {babel}
\usepackage{amsmath,amsfonts,amssymb,amsthm,epsfig,epstopdf,titling,array}
\usepackage [utf8]{inputenc}
\usepackage[T1]{fontenc}
\usepackage{csquotes}
\usepackage{textcomp}
\usepackage{geometry}
\usepackage{bm}
\usepackage[margin=10pt, font=scriptsize, labelfont=bf]{caption}
\usepackage{subcaption}
\usepackage{braket}
\usepackage{environ}
\usepackage{verbatim}
\usepackage{dsfont}
\usepackage{tikz}
\usetikzlibrary{shapes}
\usetikzlibrary{positioning}
\usepackage[colorinlistoftodos]{todonotes}
\usepackage{newunicodechar}
\usepackage{authblk}

\usepackage[colorlinks]{hyperref}
\usepackage{xr}
\externaldocument[PartI-]{Ising_cylinder_partI_v1}
\newcommand{\xref}[1]{\ref*{PartI-#1}}

\newcommand{\xcref}[1]{\cref*{PartI-#1}} %
\usepackage[capitalize]{cleveref}

\usepackage[style=alphabetic,url=false,doi=false,date=year,giveninits=true,isbn=false,sorting=nyt]{biblatex}
\AtEveryBibitem{\clearfield{issue} \clearfield{number}}
\DeclareFieldFormat
  [article,inbook,incollection,inproceedings,patent,thesis,unpublished]
  {titlecase:title}{\MakeSentenceCase*{#1}}

\newunicodechar{ȩ}{\c{e}}

\newcommand{\orcid}[1]{\href{https://orcid.org/#1}{\textcolor[HTML]{A6CE39}{\aiOrcid}}}

\addbibresource{ising.bib}

\usetikzlibrary{trees}
\usetikzlibrary{calc}
\usetikzlibrary{decorations.pathmorphing}
\usetikzlibrary{decorations.markings}
\usetikzlibrary{shapes}
\usetikzlibrary{fit}

\makeatletter
\newsavebox{\measure@tikzpicture}
\NewEnviron{scaletikz}[1]{%
  \def\tikz@width{#1}%
  \begin{lrbox}{\measure@tikzpicture}%
  \BODY
  \end{lrbox}%
  \pgfmathparse{#1/\wd\measure@tikzpicture}%
  \BODY
}
\makeatother

\theoremstyle{plain}

\newtheorem{theorem}{Theorem}
\numberwithin{theorem}{section}
\newtheorem{lemma}[theorem]{Lemma}

\newtheorem{proposition}[theorem]{Proposition}

\newtheorem{remark}[theorem]{Remark}
\newtheorem{Remark}[theorem]{Remark}
\newtheorem{corollary}[theorem]{Corollary}

\numberwithin{equation}{section}
\numberwithin{equation}{subsection}
\renewcommand*{\theequation}{%
  \ifnum\value{subsection}=0 %
    \thesection
  \else
    \thesubsection
  \fi
  .\arabic{equation}%
}

\theoremstyle{definition}

\definecolor{darkGreen}{rgb}{0,0.5,0}
\hypersetup{
    colorlinks,
    citecolor=darkGreen,
    filecolor=red,
    linkcolor=blue}

\newcommand{\condset}[2]{\left\{ {#1} \middle| {#2} \right\}}

\DeclareMathOperator{\diam}{diam}

\newcommand{\ul}[1]{{\ensuremath{\underline{#1}}}}
\newcommand{\lis}[1]{{\ensuremath{\overline{#1}}}}

\newcommand{\cc}{\ensuremath{ {\rm c}}}
\newcommand{\dd}{\,\text{\rm d}}             %

\newcommand{\scal}{{\ensuremath{{\rm scal}}}}
\newcommand{\per}{{\ensuremath{{\rm per}}}}
\newcommand{\free}{{\ensuremath{{\rm free}}}}

\newcommand{\B}{\ensuremath{\textup{B}} }

\newcommand{\E}{\ensuremath{\textup{E}} }

\newcommand{\cA}{\ensuremath{\mathcal{A}}}
\newcommand{\cB}{\ensuremath{\mathcal{B}}}
\newcommand{\cC}{\ensuremath{\mathcal{C}}}
\newcommand{\cD}{\ensuremath{\mathcal{D}}}
\newcommand{\cF}{\ensuremath{\mathcal{F}}}

\newcommand{\cH}{\ensuremath{\mathcal{H}}}
\newcommand{\cL}{\ensuremath{\mathcal{L}}}
\newcommand{\cO}{\ensuremath{\mathcal{O}}}
\newcommand{\cP}{\ensuremath{\mathcal{P}}}
\newcommand{\cQ}{\ensuremath{\mathcal{Q}}}
\newcommand{\cR}{\ensuremath{\mathcal{R}}}
\newcommand{\cS}{\ensuremath{\mathcal{S}}}
\newcommand{\cT}{\ensuremath{\mathcal{T}}}
\newcommand{\cV}{\ensuremath{\mathcal{V}}}

\newcommand{\cW}{\ensuremath{\mathcal{W}}}
\newcommand{\cM}{\ensuremath{\mathcal{M}}}

\newcommand{\calN}{\ensuremath{\mathcal{N}}}

\newcommand{\tcL}{\ensuremath{\widetilde \cL}}
\newcommand{\tcR}{\ensuremath{\widetilde \cR}}

\newcommand{\bbR}{\ensuremath{\mathbb{R}}}

\newcommand{\bC}{\ensuremath{\mathbb{C}}}

\newcommand{\bR}{\ensuremath{\mathbb{R}}}

\newcommand{\bZ}{\ensuremath{\mathbb{Z}}}

\newcommand{\fB}{\ensuremath{\mathfrak{B}}}
\newcommand{\fG}{\ensuremath{\mathfrak{G}}}

\newcommand{\fg}{\ensuremath{\mathfrak{g}}}

\newcommand{\bs}[1]{{\ensuremath{\boldsymbol{#1}}}}

\newcommand{\wt}[1]{\ensuremath{\widetilde{#1}}}

\newcommand{\piecewise}[1]{\begin{cases} #1 \end{cases}}

\newcommand{\floor}[1]{\left\lfloor {#1} \right\rfloor}

\newcommand{\successor}{\ensuremath{\rhd}}

\newcommand{\allct}{\ul{{\upsilon}}}

\newcommand{\ind}{\mathds{1}}

\DeclareMathOperator{\INT}{INT}

\DeclareMathOperator{\Pf}{Pf}

\newcommand{\media}[1]{ { \left\langle #1 \right\rangle}}

\tikzset{vertex/.style={circle,fill=black,inner sep=2pt},
ctVertex/.style={diamond,fill=black,inner sep=2pt},
bigvertex/.style={circle,fill=black,inner sep=4pt},
E/.append style={fill=white,draw},
probeEP/.style={circle,fill=black,draw,inner sep=2pt,
  prefix after command= {\pgfextra{\tikzset{every pin/.style = {pin edge={decorate,decoration={snake,amplitude=2pt,segment length =4pt}}}}}}
},
bareProbeEP/.style={rectangle,fill=black,draw,inner sep=3pt,
  prefix after command= {\pgfextra{\tikzset{every pin/.style = {pin edge={decorate,decoration={snake,amplitude=2pt,segment length =4pt}}}}}}
},
nuEP/.style={circle,fill=white,draw, inner sep=2pt},
linelabel/.style={sloped,above,very near start, inner sep=1pt,execute at begin node=$\scriptstyle,execute at end node=$},
baseline=(current  bounding  box.center),doubled/.style={double distance= 1pt,line width=1.5pt}
}
\pgfdeclarelayer{background}
\pgfsetlayers{background,main}

\newcommand{\tikzvertex}[1]{\tikz[baseline=default]{ \node [#1] {};}}

\DeclareMathOperator\sech{sech}

\makeatletter\@input{partIrefs.tex}\makeatother
\begin{document}
\title{Non-integrable Ising models in cylindrical geometry: \\ Grassmann representation and infinite volume limit}
\author[1]{Giovanni Antinucci}
\author[2,3]{Alessandro Giuliani}
\author[2,*]{Rafael L. Greenblatt}
\affil[1]{\small{Universit\'e de Gen\`eve, Section de math\'ematiques, 2-4 rue du Li\`evre, 1211 Gen\`eve 4, Switzerland}}
\affil[2]{\small{Universit\`a degli Studi Roma Tre, Dipartimento di Matematica e Fisica, L.go S. L. Murialdo 1, 00146 Roma, Italy}}
\affil[3]{\small{Centro Linceo Interdisciplinare {\it Beniamino Segre}, Accademia Nazionale dei Lincei, Palazzo Corsini, Via della Lungara 10,
00165 Roma, Italy.}}
\affil[*]{\small{Present affiliation: Scuola Internazionale Superiore di Studi Avanzati (SISSA), Mathematics Area, Via Bonomea 265, 34136 Trieste, Italy}}
\date{\today}

\maketitle

\begin{abstract} 
In this paper, meant as a companion to \cite{AGG_part1}, we consider a class of
non-integrable $2D$ Ising models in cylindrical domains, and we discuss two key aspects of the multiscale construction of their scaling limit. In particular, we provide a detailed derivation of the Grassmann representation of 
the model, including a self-contained presentation of the exact solution of the nearest neighbor model in the cylinder. Moreover, we prove precise asymptotic estimates of the fermionic 
Green's function in the cylinder, required for the multiscale analysis of the model. We also review the multiscale construction of the 
effective potentials in the infinite volume limit, in a form suitable for the generalization to finite cylinders. Compared to previous works, we introduce a few important simplifications in the localization
procedure and in the iterative bounds on the kernels of the effective potentials, which are crucial for the adaptation of the construction to domains with boundaries. 
\end{abstract}

\tableofcontents

\section{Introduction}\label{sec:intro}

In this article, which is a companion to \cite{AGG_part1}, we consider a class of non-integrable perturbation of the 2D nearest-neighbor Ising model in cylindrical geometry, and discuss some of the key ingredients required 
in the multiscale construction of the scaling limit of the energy correlations in finite domains. The material presented here generalizes and simplifies the approach proposed by two of the authors in \cite{GGM}, 
where a similar problem in the translationally invariant setting was investigated. As discussed extensively in \cite[Section \xref{sec:intro}]{AGG_part1}, which we refer to for additional motivations and references, 
the methods of \cite{GGM}, as well as of several other related works on the Renormalization Group (RG) construction of the bulk scaling limit of non-integrable lattice models at the critical point, 
are insufficient for controlling the effects of the boundaries at the precision required for the construction of the scaling limit in finite domains. This is a serious obstacle in the program of proving conformal invariance of 
the scaling limit of statistical mechanics models \cite{G21}; the goal would be to prove results comparable to the remarkable ones obtained for the nearest neighbor 2D Ising model \cite{Smi10, CS, CHI}, but for a class of non-integrable models, such as perturbed Ising \cite{ADTW} or dimer models \cite{GMT19} in two 
dimensions, via methods that do not rely on the exact solvability of the microscopic model. In this paper and in its companion \cite{AGG_part1}, we attack this program by constructing the scaling limit of the energy 
correlations of a class of non-integrable perturbations of the standard 2D Ising model in the simplest possible finite domain with boundary, that is, a finite cylinder.

Let us define the setting more precisely. For positive integers $L$ and $M$, with $L$ even, we let $G_\Lambda$ be the discrete cylinder with sides $L$ and $M$ in the horizontal and vertical directions, respectively, 
with periodic boundary conditions in the horizontal direction and open boundary conditions in the vertical directions. We consider $G_\Lambda$ as a graph with vertex set $\Lambda=\mathbb Z_L\times (\mathbb Z\cap [1,M])$, where $\mathbb Z_L=\mathbb Z/L\mathbb Z$ (in the following we shall identify the elements of $\mathbb Z_L$ with $\{1,\ldots, L\}$, unless otherwise stated) and edge set $\fB_\Lambda$ consisting of all pairs of 
the form\footnote{If $z=((z)_1,(z)_2)\in\Lambda$ has horizontal coordinate $(z)_1=L$, we use the convention that $z+\hat e_1\equiv (1,(z)_2)$.}
$\left\{ z, z + \hat e_j \right\}$ for $z \in \Lambda, j \in \left\{ 1,2 \right\}$ and $\hat e_1,\hat e_2$ the unit vectors in the 
two coordinate directions. For $x\in\fB_\Lambda$, we let $j(x)$ be the $j$ for which $x= \left\{ z,z+\hat e_j \right\}$ for some $z\in\Lambda$, so that $j(x)=1$ for a horizontal bond and $j(x)=2$ for a vertical bond.
The model is defined by the Hamiltonian
\begin{equation}
\label{eq:HM}
H_{\Lambda}(\sigma)
=-\sum_{x \in \fB_\Lambda} J_{j(x)} \epsilon_x
-\lambda \sum_{X\subset \Lambda} V(X) \sigma_X,
\end{equation}
where: 
$J_1,J_2$ are two positive constants, representing the couplings in the horizontal and vertical directions, $\epsilon_x=\epsilon_x(\sigma) := \sigma_z \sigma_{z'}$ for 
$x= \left\{ z,z'\right\}$; the spin variable $\sigma$ belongs to $\Omega_\Lambda:=\{\pm 1\}^{\Lambda}$, and $\sigma_X:=\prod_{x\in X}\sigma_x$; $V$ is a 
finite range, translationally invariant, even interaction, obtained by periodizing in the horizontal direction a $\Lambda$-independent, translationally invariant, potential on 
$\mathbb Z^2$; finally, $\lambda$ is the strength of the interaction, which can be 
of either sign and, for most of the discussion below, the reader can think of as being small, compared to $J_1,J_2$, but independent of the system size. In the following, 
we shall refer to model \eqref{eq:HM} with $\lambda\neq0$ as to the `interacting' model, in contrast with the standard nearest-neighbor model, which we will refer to as the 
`non-interacting', one of several terminological conventions motivated by analogy with quantum field theory. 
The Hamiltonian defines a Gibbs measure $\media{\cdot}_{\beta,\Lambda}$ depending on the inverse temperature $\beta > 0$, which assigns to any $F : \Omega_\Lambda \to \bR$ the expectation value
\begin{equation} 
\media{F}_{\beta,\Lambda}:= \frac{\sum_{\sigma\in\Omega_{\Lambda}}e^{-\beta H_{\Lambda}(\sigma)}F(\sigma)}{\sum_{\sigma\in\Omega_{\Lambda}} e^{-\beta H_{\Lambda}(\sigma)}}.\end{equation}
The truncated correlations, or cumulants, of the energy observable $\epsilon_x$, denoted $\media{\epsilon_{x_1};\cdots;\epsilon_{x_n}}_{\beta,\Lambda}$, are given by
\begin{equation} \media{\epsilon_{x_1};\cdots;\epsilon_{x_n}}_{\beta,\Lambda}
:=\frac{\partial^n}{\partial A_1 \cdots\partial A_n} \log\media{e^{A_1\epsilon_{x_1}+\cdots
A_n\epsilon_{x_n}}}_{\beta,\Lambda}\Big|_{A_1=\cdots=A_n=0}.
\label{eq:truncated}\end{equation}
For the formulation of the main result, let us fix once and for all an interaction $V$ with the properties spelled out after \eqref{eq:HM}, and 
assume that $J_1/J_2$ and $L/M$ belong to a compact $K\subset (0,+\infty)$. We let $t_l:=t_l(\beta):=\tanh \beta J_l$, with $l=1,2$, and recall that in the 
non-interacting case, $\lambda=0$, 
the critical temperature $\beta_c(J_1,J_2)$ is the unique solution of $t_2(\beta)=(1-t_1(\beta))/(1+t_1(\beta))$. Note that 
there exists a suitable compact $K'\subset (0,1)$ such that whenever $J_1/J_2 \in K$ and $\beta \in [\tfrac12 \beta_c(J_1,J_2),2 \beta_c(J_1,J_2)]$, 
then $t_1,t_2 \in K'$. From now on, we will think $K,K'$ to be fixed once and for all. Moreover, we parameterize the Gibbs measure in terms of $t_l$ as follows: 
$$\media{\cdot}_{\beta,\Lambda}\equiv \media{\cdot}_{\lambda,t_1,t_2;\Lambda}.$$
Given these premises, we are ready to state the main result proven in \cite{AGG_part1}.
\begin{theorem}
\label{prop:main} Fix $V$ as discussed above. Fix $J_1,J_2$ so that $J_1/J_2$ belongs to the compact $K$ introduced above.  
There exists $\lambda_0>0$ and analytic functions
 $\beta_c(\lambda)$, $t_1^*(\lambda)$, $Z_1(\lambda)$, $Z_2(\lambda)$, defined for $|\lambda| \le \lambda_0$, 
such that, for any finite cylinder $\Lambda$ with $L/M\in K$ and any 
$m$-tuple $\bs x=(x_1,\ldots x_m)$ of distinct elements of $\fB_\Lambda$, with $m_1$ horizontal elements, $m_2$ vertical elements, and $m=m_1+m_2\ge 2$, 
	\begin{equation}
		\media{\epsilon_{x_1}; \dots ; \epsilon_{x_m}}_{\lambda,t_1(\lambda),t_2(\lambda);\Lambda}
		=
		\big(Z_1(\lambda)\big)^{m_1}\big(Z_2(\lambda)\big)^{m_2}
		\media{\epsilon_{x_1}; \dots ; \epsilon_{x_m}}_{0,t^*_1(\lambda),t^*_2(\lambda);\Lambda}
		+ R_\Lambda(\bs x), 
		\label{eq:corr_main_statement}
	\end{equation}
where $t_1(\lambda):=\tanh(\beta_c(\lambda)J_1)$, $t_2(\lambda):=\tanh(\beta_c(\lambda)J_2)$ and $t_2^*(\lambda):=(1-t_1^*(\lambda))/(1+t_1^*(\lambda))$. 
Moreover, denoting by $\delta(\bs x)$ the tree distance of $\bs x$, i.e., the cardinality of the smallest connected subset of $\fB_\Lambda$ containing the elements of 
$\bs x$, and by $d=d(\bs x)$ the minimal pairwise distance among the midpoints of the edges in $\bs x$ and 
the boundary of $\Lambda$, for all $\theta\in(0,1)$ and $\varepsilon\in(0,1/2)$ and a
suitable $C_{\theta,\varepsilon}>0$, the remainder $R_\Lambda$ can be
bounded as
\begin{equation} |R_\Lambda(\bs x)|\le  C_{\theta,\varepsilon}^m  |\lambda| m!
\frac1{d^{m+\theta}}\left(\frac{d}{\delta(\bs x)}\right)^{2-2\varepsilon}\;.\label{10b}\end{equation}
\end{theorem}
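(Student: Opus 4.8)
The plan is to combine the Grassmann representation of the model, derived in this paper, with a multiscale renormalization-group expansion, treating the non-nearest-neighbor part of the interaction $V$ as a perturbation of the exactly solvable nearest-neighbor model. Concretely, one first rewrites the generating function $\log\media{e^{\sum_{j=1}^m A_j\epsilon_{x_j}}}_{\lambda,t_1,t_2;\Lambda}$ as the logarithm of a Grassmann integral with a Gaussian reference measure whose covariance is the fermionic Green's function of the nearest-neighbor model in the cylinder, perturbed by an effective potential $\cV$ that is analytic in the Grassmann fields, in $\lambda$, and in the source fields $A_j$, with each $A_j$ multiplying a local Grassmann monomial representing $\epsilon_{x_j}$ at the midpoint of the bond $x_j$. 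Differentiating once in every $A_j$ and setting $A_1=\dots=A_m=0$ then extracts $\media{\epsilon_{x_1};\cdots;\epsilon_{x_m}}_{\lambda,t_1,t_2;\Lambda}$ as a sum of connected contributions with $m$ external energy vertices.

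Next I would integrate out the Grassmann field scale by scale, decomposing the propagator into single-scale components indexed by $h=0,-1,\dots,h^*$ with $2^{h^*}\sim 1/\min\{L,M\}$, and using the asymptotic estimates on the cylinder Green's function established here; at each step the kernels of the effective potential are controlled by Gallavotti--Nicol\`o tree expansions. The only relevant or marginal local running couplings are the mass $\mu_h$ and the field-strength renormalizations $Z_{1,h},Z_{2,h}$, together with the multiplicative renormalization of the energy vertices. Choosing $\beta_c(\lambda)$ to be the unique value for which the flow of $\mu_h$ is driven to zero, and defining $t_1^*(\lambda)$, $Z_1(\lambda)$, $Z_2(\lambda)$ from the infrared limits of the corresponding running couplings --- their analyticity for $|\lambda|\le\lambda_0$ following from a fixed-point argument on the beta function together with the implicit function theorem --- the leading contribution to each energy correlation becomes, after undoing the scale-by-scale rescalings, precisely $(Z_1(\lambda))^{m_1}(Z_2(\lambda))^{m_2}$ times the nearest-neighbor correlation at the renormalized couplings $t_1^*(\lambda),t_2^*(\lambda)$ evaluated on the same cylinder; everything left over is collected into $R_\Lambda(\bs x)$. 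Since the whole construction reduces to the identity with $Z_1=Z_2=1$, $t_l^*=t_l$, $R_\Lambda=0$ at $\lambda=0$, the remainder inherits an overall factor $|\lambda|$ from analyticity.

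The bound \eqref{10b} then follows by resumming the tree expansion for $R_\Lambda$ with the sharp single-scale bounds on the cylinder propagator. The factor $|\lambda|\,m!\,C_{\theta,\varepsilon}^m$ is the standard analyticity/combinatorial estimate from the convergence of the expansion (the $m!$ counting tree structures with $m$ external energy vertices); the power $d^{-(m+\theta)}$ reflects the scaling dimension one of each of the $m$ energy insertions, with the extra $\theta$ absorbing logarithmic corrections and the loss from non-optimal exponents; and the dimensional gain $(d/\delta(\bs x))^{2-2\varepsilon}$ expresses the fact that the difference between the interacting correlation and its reference is generated only by the irrelevant part of $\cV$, each insertion of which costs a relative factor $\sim 2^{(2-2\varepsilon)h}$, summed over the range of scales probed by the configuration of external points.

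I expect the main obstacle to be the control of boundary effects. In the bulk, the single-scale propagator decays isotropically and the effective potential is translation invariant, so the usual localization procedure suffices; on the cylinder, vertical translation invariance is broken, the propagator acquires image-type contributions reflected off the two boundary rows, and new marginal boundary kernels can a priori be generated at every scale. Taming these requires (i) the refined near- and far-from-boundary asymptotics of the Green's function proved in this paper, distinguishing bulk, edge, and corner-like regimes, and (ii) the modified localization and iterative kernel bounds introduced here, which must establish that the would-be boundary counterterms are either forbidden by the reflection and spin-flip symmetries of the model or are automatically irrelevant --- so that the same $Z_1(\lambda),Z_2(\lambda),t_1^*(\lambda)$ that renormalize the bulk also renormalize the finite-cylinder correlations up to the stated remainder.
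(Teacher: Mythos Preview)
Your sketch matches the paper's strategy, with the important caveat that this paper does not itself contain a complete proof of Theorem~\ref{prop:main}: it states the result as proved in the companion paper~\cite{AGG_part1} and supplies three of the ingredients (the Grassmann representation with its bulk--edge decomposition, the multiscale and Gram bounds on the cylinder propagator, and the infinite-volume RG with the fixed-point construction of $\beta_c(\lambda),t_1^*(\lambda),Z(\lambda)$ in Section~\ref{sec:fixed_point}), while the finite-volume localization, the iterative bounds on the edge part of the effective potentials, and the final estimates on the correlation functions yielding~\eqref{10b} are all deferred to~\cite{AGG_part1}.

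Two points of detail are worth correcting. First, the structure of the running couplings is not quite as you describe: there are \emph{three} bulk running constants $\nu_h,\zeta_h,\eta_h$ (one mass-type and two anisotropic gradient-type terms, see~\eqref{rCC}), matched to the three free parameters $\beta,Z,t_1^*$ via~\eqref{eq:ct1_def}; the $Z_{1,h},Z_{2,h}$ are a separate pair of \emph{source-vertex} renormalizations attached to the $A_x$ fields, and it is their infrared limits that produce the prefactors $Z_1(\lambda),Z_2(\lambda)$ in the statement. Second, a structural fact that you do not mention but that drives the whole analysis is the automatic vanishing of the local quartic term, $\cL_\infty(V_{4,0})=0$, by Grassmann anticommutativity (Eq.~\eqref{pauli}); this is what makes the flow effectively super-renormalizable, forces $\epsilon_h\le K_\vartheta|\lambda|2^{\vartheta h}$ (Proposition~\ref{thm:ct}), and is the source of the short-memory gain underlying the extra $d^{-\theta}$ in~\eqref{10b}. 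Your identification of boundary control as the main obstacle is exactly right, and is precisely the part the paper hands off to~\cite{AGG_part1}.
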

As a corollary of this theorem, one readily obtain the existence and explicit structure of the scaling limit for the `energy sector' of the interacting model, with quantitative 
estimates on the speed of convergence; see \cite[Corollary \xref{cor:main}]{AGG_part1} and Appendix \ref{Sec:pfaffian_proof}. 

The proof of Theorem \ref{prop:main} is based on a multiscale analysis of the
generating function of the energy correlations, formulated in the form of a Grassmann (Berezin) integral. While the strategy of this proof is based on the same general ideas used 
in \cite{GGM} in the translationally invariant setting, that is, on the methods of the {\it fermionic constructive RG}, 
the presence of boundaries introduces several technical and 
conceptual difficulties, whose solution requires to adapt, improve and generalize the `standard' RG procedure (e.g., in the definition of the `localization procedure', in 
the way in which the kernels of the `effective potentials' are iteratively bounded and in which the resulting bounds are summed over the label specifications, etc.)
As discussed in 
\cite[Section \xref{sec:intro}]{AGG_part1}, which we refer to for additional details, we expect that
understanding how to implement RG in the presence of boundaries or, more in general, of defects breaking translational invariance, 
will have an impact on several related problems, such as the computation of boundary critical exponents
in models in the Luttinger liquid universality class, the Kondo problem, the Casimir effect, and the phenomenon of many-body localization. 

In this paper we give a full presentation of some of 
the key ingredients required in the proof of our main result, namely:
\begin{enumerate}
\item \label{itin1} exact solution of the nearest neighbor model on the cylinder in its Grassmann 
formulation, including multiscale bounds on the bulk and edge parts of the fermionic Green's function (Section \ref{sec:propagators}); 
\item \label{itin2} reformulation of the generating function of 
energy correlations of the interacting model as a Grassmann integral (Section~\ref{sec:gen}); 
\item \label{itin3} tree expansion and iterative bounds on the kernels of the effective potentials
of the interacting theory in the full plane limit, including the computation and proof of analyticity of the interacting critical temperature 
(Section \ref{sec:renexp}). 
\end{enumerate}
The other ingredients, including most of the novel aspects of the RG construction in finite volume, such as the definition of the localization procedure in finite volume, 
the norm bounds on the edge part of the effective potentials and the asymptotically sharp estimates on the correlation functions in the cylinder, are deferred to \cite{AGG_part1}; 
see the end of \cite[Section \ref{sec:intro}]{AGG_part1} for a detailed summary and roadmap of the proof of Theorem \ref{prop:main}.

\medskip

Before starting the technical presentation, let us anticipate in little more detail the contents of the following sections, thus clarifying the main results of this paper. 

\paragraph{Section \ref{sec:propagators}: exact solution of the model in the cylinder.}

The multiscale construction of the interacting theory in the domain $\Lambda$ 
requires a very fine control of the noninteracting model at the critical point, and, in particular, of the structure of its fermionic Green's function, 
which we call the `propagator'; the propagator is nothing but the inverse of a signed adjacency matrix $A$, 
whose definition we recall in Section~\ref{sec:propagators} below \cite[Chapter~IX]{MW}.
The key properties we need, and we prove in Section~\ref{sec:propagators} below (with some -- important! -- technical aspects of the proofs deferred to Appendices \ref{app:diagonalization},
\ref{sec:proofthm2.3} and \ref{sec:proof_2.9}), 
see in particular Eq.~\eqref{msbedec} and Proposition \ref{thm:g_decomposition} below, 
are the following: 
\begin{itemize}
\item multiscale decomposition of the propagator and 
bulk-edge decomposition of the single-scale propagator; 
\item exponentially decaying
pointwise bounds on the bulk- and edge-parts of the single-scale propagators, 
with optimal dimensional bounds (with respect to the scale index) on their $L^\infty$ norms and on their decay rates;
\item Gram representation\footnote{We say that a matrix $g$ admits a Gram representation, if its elements 
$g_{i,j}$ can be written as the scalar product of two vectors $u_i$ and $v_j$ in a suitable Hilbert space.}
of the bulk- and edge-parts of the single-scale propagators, 
with optimal dimensional bounds (with respect to the scale index) on the norms of the Gram vectors. 
\end{itemize}
In reference with the second item, let us remark that the exponential decay needed (and proved below) for the propagator between two 
points $z,z'\in\Lambda$, is in terms of the `right' distance between $z$ and $z'$, namely: 
the standard Euclidean distance on the cylinder between $z$ and $z'$
in the case of the bulk-part of the single-scale propagator; the Euclidean distance on the cylinder between $z,z'$ and the boundary of $\Lambda$, 
in the case of the edge-part of the single-scale propagator. In particular, the exponential decay of the edge-part of the single-scale propagator in the distance 
of $z,z'$ from the boundary of $\Lambda$ 
is of crucial importance for proving improved dimensional bounds on the finite-size corrections to the thermodynamic and correlation functions of the interacting model, 
which are systematically used in  
the conclusion of the proof of Theorem~\ref{prop:main} in the companion paper~\cite{AGG_part1}, see \cite[Section \xref{sec:correlations}]{AGG_part1}.

The proof we give of these key properties is based on an exact diagonalization of the signed adjacency matrix $A$
in terms of the roots of a set of polynomials (this calculation first appeared in~\cite{Greenblatt14}, 
and a similar calculation for a rectangle appears in~\cite{Hucht17a}). It is unlikely that such an explicit diagonalization can be 
obtained in more general domains than the torus, the straight cylinder or the rectangle. Therefore, in order to generalize Theorem \ref{prop:main} to more general domains, it 
would be desirable to prove the properties summarized in the three items above via a more robust method, not based on an explicit diagonalization of $A$. It remains 
to be seen whether the methods of discrete holomorphicity, which allowed to prove the convergence of the propagator in general domains to an explicit, conformally covariant, 
limiting function \cite{CS}, may allow one to prove the desired properties in general domains. 

\paragraph{Section \ref{sec:gen}: Grassmann representation of the generating function.}

In Section~\ref{sec:gen}, we turn our attention to the generating function for the energy correlations 
\begin{equation}
	Z_\Lambda({\bs A}):=
	\sum_{\sigma \in \Omega_\Lambda} \exp\left( 
		\sum_{x \in \fB_\Lambda} \left[ \beta J_{j(x)} +  A_x \right] \epsilon_x
	+ \beta\lambda \sum_{X \subset \Lambda} V( X)  \sigma_X\right).
	\label{eq:Ising_gen}
\end{equation}
that, if computed at a configuration $\bs A$ such that $A_x$ is equal to $A_i$ for $x=x_i$ and zero otherwise, reduces to the combination 
$\media{e^{A_1\epsilon_{x_1}+\cdots A_n\epsilon_{x_n}}}_{\beta,\Lambda}$ appearing in~\eqref{eq:truncated}, up to an overall multiplicative constant, independent of $\bs A$. 

In Proposition~\ref{thm:Ising_to_Grassman} and Eq.\eqref{eq:Xi_def} (adapting a similar result for the torus in \cite{GGM}), we show that the correlations without repeated bonds are the same as those obtained by replacing $Z_\Lambda(\bs A)$ with a Grassman integral of the form
\begin{equation} 
	\Xi_\Lambda(\bs A)
	:=e^{\cW(\bs A)}
	\int P_c^* (\cD \phi) P_m^* (\cD\xi) 
	e^{\cV^{(1)}(\phi,\xi,\bs A)}
	,	
	\label{eq:ZA_goal}
\end{equation}
where $P_c^*$ and $P_m^*$ are Gaussian Grassmann measures associated with the critical, non-interacting Ising model at parameters $t_1^*, t_2^*:=(1-t_1^*)/(1+t_1^*)$, 
with $t_1^*$ a free parameter. Moreover, $\cW(\bs A)$ is a multlinear function of $\bs A$ and $\cV^{(1)}(\phi,\xi,\bs A)$ 
is a Grassmann polynomial whose coefficients are multilinear functions of $\bs A$, 
both of which are defined in terms of explicit, convergent, expansions. As a corollary of Lemma \ref{it:WE_base}, we additionally 
prove that the `kernels' of $\cW(\bs A)$ and $\cV^{(1)}(\phi,\xi,\bs A)$ (i.e., the coefficients of their expansions in $\bs A,\phi,\xi$, 
thought of as functions of the positions of the components of $\bs A,\phi,\xi$ on the cylinder) can be naturally decomposed into sums of a `bulk' part (equal, essentially, to their 
infinite plane limit restricted to the cylinder, with the appropriate boundary conditions) plus an `edge' part (their boundary corrections), exponentially decaying in the appropriate distances. In particular, the edge part of the kernels decays exponentially (on the lattice scale) away from the boundary, a fact that will play a major role in the 
control of the boundary corrections to the correlation functions in \cite{AGG_part1}. 

Let us remark that, in addition to $t_1^*$ and to the inverse temperature $\beta$, the representation \eqref{eq:ZA_goal} has another free parameter, $Z$ (entering the definition of $\cV^{(1)}$); while, 
for the validity of \eqref{eq:ZA_goal}, these parameters can be chosen arbitrarily in certain intervals, in order for 
this representation to produce a convergent expansion for the critical energy correlations of the interacting model, uniformly in the system size, we will need to fix 
$t_1^*,\beta,Z$ appropriately (a posteriori, they will be fixed uniquely by our construction, see below). 
Parameters of this kind are known as \emph{counterterms} in the RG terminology.

\paragraph{Section \ref{sec:renexp}: the RG expansion for the effective potentials in the full plane limit.}

Equation~\eqref{eq:ZA_goal} is the starting point for a multiscale expansion, which is fully presented in the companion paper~\cite{AGG_part1}, see in particular 
\cite[Section \xref{sec:renexp}]{AGG_part1}, but which we summarize here in 
order to provide the context for Section~\ref{sec:renexp}, where we carry out an auxiliary expansion for the full plane limit of the kernels of the `effective potentials'. Such an 
auxiliary expansion, among other things, fixes the values of $t_1^*,\beta,Z$ which are actually used in Theorem~\ref{prop:main}, see Section \ref{sec:fixed_point} below. 

The goal is to iteratively compute \eqref{eq:ZA_goal} in terms of a sequence of effective potentials, defined as follows: at the first step we let 
\begin{equation}
    e^{\cW^{(0)}(\bs A)+\cV^{(0)}(\phi,\bs A)}\propto 
	\int  P_m^* (\cD\xi) 
	e^{\cV^{(1)}(\phi,\xi,\bs A)}	,
	\label{eq:Vcyl_N}
\end{equation}
where $\propto$ means `up to a multiplicative constant independent of $\bs A$';  
the polynomials $\cW^{(0)},\cV^{(0)}$ are specified uniquely by the normalization $\cW^{(0)}(\bs 0)=\cV^{(0)}(0,\bs A)=0$.

We are left with computing the integral of $e^{\cV^{(0)}(\phi,\bs A)}$ with respect to the Gaussian integration $P_c^*(\cD \phi)$ with propagator $\fg^*_c$. As anticipated above, 
in Section \ref{sec:2.2} we decompose the critical propagator $\fg_c^*$ as $\fg^{(\le h)}+\sum_{j=h+1}^0\fg^{(j)}$, for any $h<0$;
correspondingly, in light of the addition formula for Grassman integrals (see e.g.\ \cite[Proposition~1]{GMT17a}), we introduce the sequences $P^{(\le h)}$ and $P^{(h)}$ of Gaussian Grassmann integrations, whose propagators are 
$\fg^{(\le h)}$ and $\fg^{(h)}$, respectively, and satisfy, for any Grassman function $f$,
\begin{equation}
	\int P^{(\le h)} (\cD \phi) f( \phi)
	=
	\int P^{(\le h - 1)} (\cD \phi)
	P^{(h)}(\cD \varphi)
	f(\phi+\varphi).
\end{equation}
We can then iteratively define $\cV^{(h)}$ and $\cW^{(h)}$
with $\cW^{(h)}(\bs 0 ) = \cV^{(h)}(0, \bs A) \equiv 0$ and
\begin{equation}
	e^{\cW^{(h-1)}(\bs A)+\cV^{(h-1)}(\phi,\bs A)}
	\propto
  \int   P^{(h)} (\cD\varphi)
	e^{\cV^{(h)}(\phi+\varphi,\bs A)}.
	\label{eq:Vcyl_h}
\end{equation}
The iteration continues until the scale $h^*=-{\lfloor}\log_2(\min\{L,M\}){\rfloor}$ is reached, at which point we let 
\begin{equation}
	e^{\cW^{(h^*-1)}(\bs A)}\propto
	\int P^{(\le h^*)} (\cD \phi)
	e^{\cV^{(h^*)}(\phi,\bs A)},
	\label{eq:Z_V_hbar}
\end{equation}
giving
\begin{equation}\label{reprXiA}
	\Xi_\Lambda (\bs A)\propto \exp\left(\cW(\bs A)+\sum_{h=h^*-1}^0\cW^{(h)}(\bs A)\right).
\end{equation}
In order to obtain bounds on the kernels of $\cW^{(h)}(\bs A)$ leading to an expansion for the energy correlations that is 
uniform in the system size, at each step it is necessary to isolate from $\cV^{(h)}$ the contributions that tend to expand (in an appropriate norm) 
under iterations: these, in the RG terminology, are the relevant and marginal terms, which we collect in the so-called {\it local part} of $\cV^{(h)}$, denoted by $\cL\cV^{(h)}$. 
In other words, at each step of the iteration, we rewrite $\cV^{(h)}=\cL\cV^{(h)}+\cR\cV^{(h)}$, where, in our case, $\cL \cV^{(h)}$ includes: three terms that are 
quadratic in the Grassmann variables and independent of $\bs A$, depending on a sequence of $h$-dependent parameters which we denote 
$\allct=\{(\nu_h,\zeta_h,\eta_h)\}_{h\le 1}$ and call the {\it running coupling constants}; and two terms that are quadratic in the Grassmann variables and linear in $\bs A$, 
depending on another sequence of effective parameters, $\{Z_{1,h}, Z_{2,h}\}_{h\le 0}$, called the {\it effective vertex renormalizations}. Moreover, $\cR \cV^{(h)}$ is the so-called 
{\it irrelevant}, or {\it renormalized}, part of the effective potential, which is not the source of any divergence.

Such a decomposition corresponds to a systematic reorganization, or `resummation', of the expansions arising from the multiscale computation of the 
generating function. 
The goal will be to show that, by appropriately choosing the parameters $t_1^*,\beta,Z$, which the right side of \eqref{eq:ZA_goal} depends on (and which are 
related via a simple invertible mapping to the initial values of the running coupling constants, $\nu_0,\zeta_0,\eta_0$), 
the whole sequence $\allct$ remains bounded, uniformly in $h^*$; see Section \ref{sec:fixed_point}. 
Under these conditions, we will be able to show that the resulting expansions for 
multipoint energy correlations are convergent, uniformly in $h^*$. Our estimates are based on writing the quantities involved as sums over terms indexed by 
Gallavotti-Nicol{\`o} (GN) trees~\cite{GN,GM01,Ga}, which emerge naturally from the multiscale procedure; the relevant aspects of the definitions of the GN trees 
will be reviewed in Section~\ref{sec:tree_defs} below. 

In order to obtain $L,M$ independent values of these parameters, we study the iteration in the limit $L,M \to \infty$ in \cref{sec:renexp}; we can also restrict to $\bs A = \bs 0$, 
since this already includes all of the potentially divergent terms.  This would superficially appear to involve a number of complications such as defining an infinite-dimensional 
Grassmann integral, but in fact the multiscale computation of the generating function, when understood as an iteration for the kernels of $\cV^{(h)}$, denoted by $V^{(h)}_\Lambda$, has a perfectly straightforward infinite-volume version, which is stated and analyzed in \cref{sec:renexp}. 
The convergence as $L,M\to\infty$ of the finite volume kernels $V_\Lambda^{(h)}$ to the solution $V_\infty^{(h)}$ of the infinite-volume recursive equations for 
the kernels is one of the main subjects of \cite{AGG_part1}, especially \cite[Section \xref{sec:renexp}]{AGG_part1}.  

\cref{sec:renexp} is a reformulation of \cite[Section~3]{GGM}.  We nonetheless present it at length, partly because the treatment of the propagator on the cylinder in \cref{sec:propagators} imposes a different choice of variables which makes the translation of some statements awkard, but mainly in order to take the opportunity to make a number of technical improvements and simplify some unnecessarily obscure aspects of what is already a complicated argument.

Previously, e.g.\ in \cite{GM01,GM05,GGM,Antinucci.thesis}, the localization operator (and consequently the remainder) was defined in terms of the Fourier transform of the 
functions involved. This has the advantage of providing a simple procedure for parametrizing the local part of the effective potential by a finite number of 
running coupling constants, but is quite difficult to apply to non-translation-invariant systems (in \cite{Antinucci.thesis} this led to a peculiar restriction on the dependence of the 
interaction on the system size). Moreover, it makes the treatment of finite size corrections awkard and leads to a convoluted definition of the derivative operators in the remainder
(see \cite{BM01} and \cite{GM13} for the treatment of finite-size corrections on a finite torus via the `standard' definition of localization operator). 
To deal with this, in \cref{sec:interpolation} we introduce a localization operator defined directly in terms of lattice functions, and write the remainder in terms of discrete derivatives using a lattice interpolation procedure. Such definitions naturally admit finite volume counterparts, discussed in \cite[Section \xref{sec:cylinder_multiscale}]{AGG_part1}. 

The strategy used to estimate the interpolation factors in the above cited works also involves decomposing them into components which can be matched with propagators; this involves a number of complications, since it cannot be done in a strictly iterative fashion (this is the problem discussed in \cite[Section~3.3]{BM01}).  
When we handle this issue in Section \ref{sec:formal_bounds}, see in particular \cref{thm:W8:existence}, we instead show iteratively that the coefficient function of the effective 
interaction satisfies a norm bound morally equivalent to exponential decay in the position variables (associated with exponential decay in the scale-decomposed propagators, see 
\cref{thm:g_decomposition}); this then makes it possible to bound the contribution of the interpolation operator immediately, avoiding technical issues such as the `accumulation 
of derivatives' (see \cite[Section~3.3]{BM01} and \cite[end of Section 8.4]{GM01}) or the proof that the Jacobian associated with the change of variables arising from the 
interpolation procedure is equal to $1$ (see \cite[(3.119)]{BM01}). In these aspects, the strategy used in this paper for iteratively estimating the kernels of the effective potentials 
overlaps with \cite{GMR}, which was developed in parallel with the present work.

\section{The nearest-neighbor model}
\label{sec:propagators}

In this section we review some aspects of the exact solution of the nearest-neighbor model ($\lambda=0$), which will play a central role in the multiscale computation of the generating function for the energy correlations
of the non-integrable model, to be discussed in the following sections. In particular, after having recalled the Grassmann representation of the partition function, we explain how to diagonalize the Grassmann action; next, we 
compute the Grassmann propagator and define its multi-scale decomposition, to be used in the following; finally, we compute the scaling limit of the propagator, with quantitative bounds on the remainder. 

\subsection{Diagonalization of the free action}
\label{sec:diagonalization}

\subsubsection{Introduction to the Grassmann variables and representation}
Let us recall the form of the Hamiltonian $H_\Lambda(\sigma)$ \eqref{eq:HM} in the integrable case $\lambda=0$:
\begin{equation*}
H_{\Lambda}(\sigma)=-\sum_{l=1}^2J_l\sum_{z\in \Lambda}\sigma_{z}\sigma_{z+\hat{e}_l},
\end{equation*}
with the understanding that $\sigma_{z+\hat e_2}=0$ for $z=(z_1,z_2)$ such that\footnote{In this section we shall denote the components of $z\in \Lambda$ by 
$z_1, z_2$. We warn the reader that in the following sections the symbols $z_1,z_2$ will mostly be used, instead, 
for the first two elements of an $n$-ple $\bs z$ in $\Lambda^n$ or in $(\mathbb Z^2)^n$, for which we will use the notation 
$\bs z=(z_1,\ldots, z_n)$.} $z_2=M$ and $\sigma_{z+\hat e_1}=\sigma_{z+(1-L)\hat e_1}$ for $z$ such that $z_1=L$. 

As well known \cite[Chapter~VI.3]{MW}, the partition function at  inverse temperature $\beta>0$, $Z_\Lambda=\sum_{\sigma \in \Omega_\Lambda}e^{- \beta H_\Lambda(\sigma)}$, can be written as a Pfaffian, which
admits the following representation in terms of Grassmann variables, see e.g. \cite{S80a} or \cite[Appendix A1]{GM05}:
\begin{equation}
	\begin{split}
	Z_\Lambda=& 2^{LM} (\cosh \beta J)^{L(2M-1)}\int  \mathcal D \Phi\, e^{\cS_{t_1,t_2}(\Phi)},
\end{split}
\label{eq:Z_pfaffian_grassmann}
\end{equation}
where $\Phi=\{(\lis H_{z},H_{z},\lis V_{z},V_{z})\}_{z\in \Lambda}$ is a collection of $4LM$ Grassmann variables 
(we will also use the notation $\{\Phi_i\}_{i\in \mathcal I}$ for $\mathcal I$ a suitable label set with
$4LM$ elements), 
$\mathcal D \Phi$ denotes the Grassmann `differential', 
$$\mathcal D \Phi=\prod_{z\in\Lambda}d\lis H_{z} dH_{z}  d\lis V_{z} dV_{z},$$ and 
\begin{equation} 
	\cS_{t_1,t_2}(\Phi) := \sum_{z\in\Lambda}(t_1 \lis H_{z} H_{z+\hat e_1}+t_2 \lis V_{z} V_{z+ \hat e_2}+\lis H_{z} H_{z}+\lis V_{z} V_{z}+\lis V_{z} \lis H_{z}+ V_{z}
\lis H_{z}+ H_{z} \lis V_{z}+ V_{z} H_{z}) 
	\label{eq:cS_def}
\end{equation}
where $t_{l} =\tanh \beta J_{l}$ for $l=1,2$, and $H_{(L+1,(z)_2)}$, $V_{((z)_1,M+1)}$ should be interpreted as representing $-H_{(1,(z)_2)}$ and $0$, respectively.
The identification of $H_{(L+1,(z)_2)}$ with $-H_{(1,(z)_2)}$ corresponds to {\it anti-periodic} boundary conditions in the horizontal direction for the Grassmann variables: 
these are the right boundary conditions to consider for a cylinder with an even number of sites in the periodic direction, see 
\cite[Eq.(2.6d)]{MW}. For later reference, we let $E_x=\lis H_z H_{z+\hat e_1}$ for a horizontal edge $x$ 
with endpoints $z,z+\hat e_1$, and $E_x=\lis V_z V_{z+\hat e_2}$ for a vertical edge $x$ 
with endpoints $z,z+\hat e_2$. Sometimes, we will call $\{(\lis H_{z},H_{z})\}_{z\in \Lambda}$ the {\it horizontal} variables, 
and $\{(\lis V_{z},V_{z})\}_{z\in \Lambda}$ the {\it vertical} ones. 

The quadratic form $\cS_{t_1,t_2}(\Phi)$ can be written as $\cS_{t_1,t_2}(\Phi)=\frac12 (\Phi, A\Phi)$ for a suitable $4LM\times 4LM$ anti-symmetric matrix $A$ (here $(\cdot,\cdot)$ indicates the standard scalar product for 
vectors whose components are labelled by indices in $\mathcal I$,  
i.e., $(\Phi, A\Phi)=\sum_{i,j\in \mathcal I}\Phi_iA_{ij}\Phi_j$). In terms of this matrix $A$, \eqref{eq:Z_pfaffian_grassmann} can be rewritten as 
$$
Z_\Lambda
=
2^{LM} 
\left( \cosh \beta J_1 \right)^{LM}
\left( \cosh \beta J_2 \right)^{L(M-1)}
{\rm Pf}A.$$ 
[We recall that the Pfaffian of a
$2n\times 2n$ antisymmetric matrix $A$ is defined as
\begin{equation}
{\rm Pf} A:=\frac1{2^n n!}\sum_\pi (-1)^\pi
A_{\pi(1),\pi(2)}...A_{\pi(2n-1),\pi(2n)}; \label{h1}
\end{equation}
where the sum is over permutations $\pi$ of $(1,\ldots,2n)$, with $(-1)^\pi$ denoting the signature.
One of the properties of the Pfaffian is that $({\rm Pf} A)^2={\rm det}A$.]
For later purpose, we also need to compute the averages of arbitrary monomials in the Grassmann variables $\Phi_i$, with $i\in\mathcal I$. These can all be reduced to the computation of the inverse of $A$, thanks to the `fermionic Wick rule': 
\begin{equation} \langle\Phi_{i_1}\cdots \Phi_{i_m}\rangle :=\frac1{{\rm Pf }A}\int \mathcal D\Phi \,
\Phi_{i_1}\cdots \Phi_{i_m}e^{\frac12(\Phi,A\Phi)}={\rm Pf}G\;,\label{2.2c}
\end{equation}
where, if $m$ is even, $G$ is the $m\times m$ matrix with entries 
\begin{eqnarray}
  \label{eq:12}
G_{jk}=
\langle\Phi_{i_j}\Phi_{i_k}\rangle=-[A^{-1}]_{i_j,i_k}
\end{eqnarray}
(if $m$ is odd, the
r.h.s.\ of (\ref{2.2c}) should be interpreted as $0$). Often $\langle\Phi_{i}\Phi_{j}\rangle$ is referred to as the ($ij$ component of the) {\it propagator} of the Grassmann field $\Phi$, or as the {\it covariance} of $\cD \Phi e^{\cS(\Phi)}$; such a form (with a quadratic function in the exponent) is known as a {\it Grassmann Gaussian measure}. 

In the following sections, we shall explain how to compute the Pfaffian of $A$ and its inverse $A^{-1}$, via a block diagonalization procedure. 

\subsubsection{Diagonalization of $\mathcal S_{t_1,t_2}$}
\paragraph{Horizontal direction diagonalization and Schur reduction.}
\label{section_vertical_diagonalization}
By exploiting the periodic boundary conditions in the horizontal direction, we can block diagonalize the Grassmann action by performing a Fourier transform in the same direction: 
for each $z_2\in\{1,2,\ldots,M\}$ we define\begin{equation}
\begin{split}
	&H_{z_2}(k_1)=\sum_{z_1=1}^L e^{ik_1 z_1} 	H_{(z_1, z_2)}, \hspace{4mm} {\lis H}_{z_2}(k_1)=	\sum_{z_1=1}^L e^{ik_1 z_1} \lis H_{(z_1,z_2)},\\
	&V_{z_2}(k_1)=\sum_{z_1=1}^L e^{ik_1 z_1} 	V_{(z_1,z_2)}, \hspace{4mm} {\lis V}_{z_2}(k_1)=\sum_{z_1=1}^L e^{ik_1 z_1} \lis V_{(z_1,z_2)},
\end{split}\label{eq:fourier_transform}
\end{equation}
with $k_1\in\mathcal D_{L}$, where 
\begin{equation}
	\mathcal D_L:=\left\{\frac{\pi (2m-1)}{L}: \ m= -\frac{L}{2}+1,\cdots, \frac{L}2\right\},
\label{momentum_space_discrete_D_L}
\end{equation}
in terms of which 
\begin{equation} 
	\begin{split}
		\mathcal S_{t_1,t_2}(\Phi)
		&= \frac1L\sum_{\substack{k_1\in \mathcal D_L \\ z_2=1,\ldots,M}} \Big[
			(1+t_1e^{-ik_1}) \lis H_{z_2}(-k_1) H_{z_2}(k_1)+
			\lis V_{z_2}(-k_1) V_{z_2}(k_1)	+t_2 \lis V_{z_2}(-k_1) V_{z_2+1}(k_1)
			\\ &
			+\lis V_{z_2}(-k_1) \lis H_{z_2}(k_1)
			+V_{z_2}(-k_1) \lis H_{z_2}(k_1)
			+ H_{z_2}(-k_1) \lis V_{z_2}(k_1)
			+ V_{z_2}(-k_1) H_{z_2}(k_1)
		\Big].
	\end{split}\label{eq:2.1.8}
\end{equation}
[Note that as a consequence of the convention that $V_{(z_1,M+1)}=0$, $V_{M+1}(k_1)$ should also be interpreted as $0$.]
The terms in the second line of \eqref{eq:2.1.8}, mixing the horizontal with the vertical variables, 
can be eliminated by a linear transformation corresponding to a Schur reduction of the coefficient matrix (cf.~\cite[p.~120]{MW}):
\begin{equation} 
\begin{split}
&\begin{bmatrix}\lis H_{z_2}(k_1) \\ H_{z_2}(k_1)\end{bmatrix}=
\begin{bmatrix} \xi_{+,z_2}(k_1) \\ \xi_{-,z_2}(k_1)\end{bmatrix}+
\begin{bmatrix} 
	(1+t_1 e^{ik_1})^{-1} &  - (1+t_1 e^{ik_1})^{-1} \\ 
	(1+t_1 e^{-ik_1})^{-1} & (1+t_1 e^{-ik_1})^{-1} 
\end{bmatrix} 
\begin{bmatrix} \phi_{+,z_2}(k_1) \\ \phi_{-,z_2}(k_1) \end{bmatrix},\\  
&\begin{bmatrix}\lis V_{z_2}(k_1) \\ V_{z_2}(k_1)\end{bmatrix}=
\begin{bmatrix} \phi_{+,z_2}(k_1) \\ \phi_{-,z_2}(k_1)\end{bmatrix}
;
\label{eq:sc}
\end{split}
\end{equation}
Defining a related set of Grassmann variables on $\Lambda$ by
$\phi_{\omega,z}=\frac1{L}\sum_{k_1\in \mathcal D_L}e^{-ik_1 z_1} \phi_{\omega,z_2}(k_1)$ and analogously for $\xi_{\omega,z}$, we then obtain
\begin{equation} \begin{bmatrix}\lis H_{z} \\ H_{z}\end{bmatrix}=
\begin{bmatrix} \xi_{+,z} \\ \xi_{-,z}\end{bmatrix}+\sum_{y=1}^L\begin{bmatrix} s_+(z_1-y) & -s_+(z_1-y) \\ s_-(z_1-y) & s_-(z_1-y) \end{bmatrix} \begin{bmatrix} \phi_{+,(y,z_2)} \\ \phi_{-,(y,z_2)}\end{bmatrix}, \qquad  
\begin{bmatrix}\lis V_{z} \\ V_{z}\end{bmatrix}=
\begin{bmatrix} \phi_{+,z} \\ \phi_{-,z}\end{bmatrix},\label{eq:sc_2}
\end{equation}
where $s_\pm(z_1):=\frac1{L}\sum_{k_1\in \mathcal D_L}\frac{e^{-ik_1z_1}}{1+t_1e^{\pm ik_1}}$. By the Poisson summation formula,  
\begin{equation} \label{spmpoisson}
s_\pm(z_1)=\sum_{n\in\mathbb Z}(-1)^n s_{\infty,\pm}(z_1+nL), \quad \text{with}\quad s_{\infty,\pm}(z)=\int_{-\pi}^\pi \frac{dk_1}{2\pi} \frac{e^{-ik_1z_1}}{1+t_1e^{\pm i k_1}}.
\end{equation}
It is straightforward to check, via a complex shift of the path of integration over $k_1$,
that $s_{\infty,\pm}$ (and, therefore, $s_\pm$) decays exponentially in $z_1$; more precisely, 
$|s_{\infty,\pm}(z_1)|\le e^{-\alpha |z_1|}(1-t_1e^\alpha)^{-1}$ 
for any $\alpha\in[0,-\log t_1)$,
and $|s_\pm (z_1) - s_{\infty,\pm} (z_1)| \le  e^{-\alpha L}(1-t_1e^\alpha)^{-1}$
whenever $|z_1| \le L/2$.

\medskip

In terms of the new variables, the Grassmann action reads $\cS_{t_1,t_2}(\Phi)=\cS_m(\xi)+\cS_c(\phi)$ (the labels `$m$' and `$c$' stand for `massive' and `critical', for reasons that will become clear soon), where
\begin{align} 
	\cS_m(\xi)
	=& 
	\frac1L\sum_{k_1\in \mathcal D_L} \sum_{z_2=1}^M (1+t_1e^{-ik_1}) \xi_{+,z_2}(-k_1) \xi_{-,z_2}(k_1),
	\label{eq:2.1.12}
	\\
	\cS_c(\phi)
	=& 
	\frac1L\sum_{k_1\in \mathcal D_L} \sum_{z_2=1}^M \Big[-b(k_1) \phi_{+,z_2}(-k_1) \phi_{-,z_2}(k_1)
	+t_2  \phi_{+,z_2}(-k_1) \phi_{-,z_2+1}(k_1) 
	\nonumber\\ & \qquad 
	- \frac{i}2\Delta(k_1) \phi_{+,z_2}(-k_1) \phi_{+,z_2}(k_1)+\frac{i}2\Delta (k_1) \phi_{-,z_2}(-k_1) \phi_{-,z_2}(k_1)\Big],
	\label{eq:2.1.13}
\end{align}
with 
\begin{equation}
\Delta(k_1) := \frac{2 t_1 \sin k_1}{|1+t_1e^{ik_1}|^2} ,\qquad 
b(k_1) := \frac{1-t_1^2 }{|1+t_1e^{ik_1}|^2}, \label{eq:b_def}
\end{equation}
where, as a consequence of the convention used above for $V_z$, the term in $\cS_c(\phi)$ involving $\phi_{M+1,-}(k_1)$ should be interpreted as being equal to zero. 
Since $\cS_m$ and $\cS_c$ involve independent sets of Grassmann variables, the Gaussian integral appearing in the partition function factors into a product of two integrals, and the propagators associated with the two terms can be calculated separately.

\paragraph{The `massive' propagator} The calculations for $\cS_m$ are trivial.  Let the antisymmetric matrix $A_m$ be defined by $\cS_m(\xi)=\frac12(\xi,A_m\xi)$. Recall that $\cS_m$ was defined in \eqref{eq:2.1.12}, from which 
$${\rm Pf}A_m=\prod_{k_1\in \mathcal D_L} (1+t_1e^{ik_1})^M,$$
and the propagator is given by the appropriate entry of $A_m^{-1}$, which in the form used in~\eqref{eq:2.1.12} is block-diagonal with $2 \times 2$ blocks such that
$$\langle \xi_{\omega,z_2}(k_1)\xi_{\omega',z_2'}(k'_1)\rangle
=
L\delta_{z_2,z_2'}\delta_{\omega,-\omega'}\delta_{k_1,-k'_1}\frac{\omega}{1+t_1e^{i\omega k_1}}.$$
Therefore, going back to $x$-space, 
\begin{equation}\label{eq:pm} 
\langle \xi_{\omega,z}\xi_{\omega',z'}\rangle=\omega\,\delta_{\omega,-\omega'}  s_\omega(z_1-z_1')\, \delta_{z_2,z_2'},\end{equation}
where $s_\omega(z_1)$ was defined right after \eqref{eq:sc_2}. For later reference, the matrix formed by the elements in \eqref{eq:pm} will be referred to as the massive propagator, and denoted by 
\begin{equation}{\mathfrak g}_{m} (z, z') =\delta_{z_2,z_2'} 
\begin{bmatrix} 0 & s_+(z_1-z_1') \\
-s_-(z_1-z_1') & 0\end{bmatrix}. \label{eq:propmassive}\end{equation}
Recalling that $s_\pm$ decays exponentially, see comments after \eqref{spmpoisson}, we see that $\mathfrak{g}_m(z, z')$ decays exponentially as well, and so 
corresponds to a massive field in the language of quantum field theory. 

\paragraph{The `critical' propagator} 
The antisymmetric matrix $A_c$
defined by $\cS_c(\phi) = \tfrac12 (\phi, A_c \phi)$
can be placed into an explicit block-diagonal form by an ansatz resembling a Fourier sine transform with shifted frequencies; this involves a lengthy but elementary calculation which is detailed in Appendix~\ref{app:diagonalization}.
Here we simply state the result for the critical case
\begin{equation}
  t_1 t_2 + t_1 + t_2 = 1 \Leftrightarrow t_2= \frac{1-t_1}{1+t_1} \Leftrightarrow t_1= \frac{1-t_2}{1+t_2},
  \label{eq:anisotropic_critical_condition}
\end{equation}
which is the only case of relevance for the present work\footnote{The case $t_2 < (1-t_1)/(1+t_1)$, corresponding to the paramagnetic phase can be considered in the same way, while the case
$t_2 > (1-t_1)/(1+t_1)$ is somewhat more complicated since one of the frequencies may be complex \cite{Greenblatt14}.}.
In this case we obtain
\begin{equation}
		\begin{bmatrix}
		\langle \phi_{+,z} \phi_{+,z'} \rangle &
		\langle \phi_{+,z} \phi_{-,z'} \rangle \\
		\langle \phi_{-,z} \phi_{+,z'} \rangle &
		\langle \phi_{-,z} \phi_{-,z'} \rangle 
		\end{bmatrix}
		=  
		\begin{bmatrix}
			g_{++}(z, z') &
			g_{+-}(z, z') \\
			g_{-+}(z, z') &
			g_{--}(z, z') 
		\end{bmatrix}
		=
		{\mathfrak g}_\cc (z, z')
		\label{eq:g_cyl_matrix}
\end{equation}
where
\begin{equation}
	\begin{split}
		{\mathfrak g}_\cc (z, z') :=  &	\frac{1}{L}
		\sum_{k_1\in \mathcal D_L}
		\sum_{k_2 \in \mathcal{Q}_M(k_1)}
		\frac{1}{2N_M(k_1, k_2)}
		e^{-ik_1(z_1-z_1')}
		\\ & \times
		\left\{ e^{-ik_2(z_2-z_2')} \hat{\mathfrak{g}}(k_1, k_2)
		- e^{-ik_2(z_2+z_2')}
	        \begin{bmatrix}
		  \hat{g}_{++}(k_1, k_2) &
		 \hat{g}_{+-}(k_1,-k_2) \\
		  \hat{g}_{-+}(k_1, k_2) & e^{2ik_2(M+1)} 
		  \hat{g}_{--}(k_1, k_2)
		\end{bmatrix}
	      \right\}
	\end{split}
  \label{eq:g_finite_cyl}
\end{equation}
with
\begin{equation}
	\begin{split}
		\hat{\mathfrak g} (k_1, k_2)
		:=&
		\begin{bmatrix}
			\hat {g}_{++} (k_1, k_2) & 
			\hat {g}_{+-}(k_1, k_2) \\
			\hat {g}_{-+}(k_1, k_2) &
			\hat {g}_{--}(k_1, k_2) 
		\end{bmatrix}
		\\
		:= &
		\frac{1}{D(k_1, k_2)}
		 \begin{bmatrix}
			 {-}2 i t_1 \sin k_1 &
			- (1- t_1^2) (1 - B(k_1) e^{-ik_2}) \\
			 (1- t_1^2) (1 - B(k_1) e^{ik_2}) &
			{+} 2 i t_1 \sin k_1 
		\end{bmatrix}
		\label{eq:ghat}
	\end{split}
\end{equation}
where 
\begin{align}
D(k_1, k_2) &:= 2(1 - t_2)^2(1 - \cos k_1)+2(1 - t_1)^2(1 - \cos k_2), \label{defDk1k2}\\
B(k_1)&:= t_2\frac{|1+t_1 e^{ik_1}|^2}{1-t_1^2}, 
\label{eq:B_def}
\end{align}
$\cQ_M(k_1)$ is the set of solutions of {the following equation, thought of as an equation for $k_2$ at $k_1$ fixed}, in the interval $({-\pi,\pi})$:
\begin{equation}
\sin k_2(M+1)=B(k_1)\sin k_2M,
\label{eq:q_condition}
\end{equation}
and
\begin{equation}
	\label{eq:N_M_def}
	N_M(k_1, k_2)=\frac{\frac{d}{dk_2}\left(B(k_1)\sin k_2M-\sin k_2(M+1) \right)}{B(k_1)\cos k_2M-\cos k_2(M+1)}.
\end{equation}

\begin{remark}
	From the above formula it is immediately clear that
	\begin{equation}
		\hat g_{++} (k_1, k_2) = \hat g_{++}(k_1,-k_2) = - \hat g_{++}(-k_1, k_2) = \hat g_{--}(-k_1, k_2)
	\end{equation}
	and 
	\begin{equation}
		\hat g_{+-} (k_1, k_2) = \hat g_{+-} (-k_1, k_2) = - \hat g_{-+} (k_1,-k_2);
	\end{equation}
	furthermore, Equation~\eqref{eq:q_condition} is equivalent to 
	\begin{equation}
		\hat g_{+ -}(k_1, k_2)
		=
		-
		e^{-2ik_2(M+1)} 
		\hat g_{- +} (k_1, k_2)
		,
		\label{eq:ghat_cyl_symmetry}
	\end{equation}
	which therefore holds for all $q \in \cQ_M(k_1)$. Moreover, $N_M(k_1, k_2) = N_M(-k_1, k_2)=N_M(k_1,-k_2)$.
	As we will see in Section~\ref{sec:prop_sym} below, these relationships are closely related to the symmetries of the Ising model on a cylindrical lattice.
	\label{rem:ghat_sym}
\end{remark}

\begin{remark}
	The definition~\eqref{eq:g_finite_cyl} can be extended to all $z,z' \in \bbR^2$;
	then in particular, using the relationships listed in the previous remark, we have
	\begin{equation}
		g_{++} \left( \left( z_1,0 \right), z' \right)
=		g_{++} \left(z, \left( z_1',0 \right) \right)
		=
		g_{+-}  \left( \left( z_1,0 \right), z' \right)
		=
		g_{-+} \left(z, \left( z_1',0 \right) \right)
		=
		0,
		\label{eq:g_cyl_boundary1}
	\end{equation}
	and 
	\begin{equation}
		g_{+-}\left(z, \left( z_1',M+1 \right) \right)
		=
		g_{-+} \left( \left( z_1,M+1 \right), z' \right)
		=
		g_{--}\left( \left( z_1,M+1 \right), z' \right)
		=
		g_{--} \left(z, \left( z_1',M+1 \right) \right)
		=
		0
		,
		\label{eq:g_cyl_boundary2}
	\end{equation}
	for all $z, z'$.\label{rem:g_cyl_boundary}
\end{remark}

\subsection{The critical propagator: multiscale decomposition and decay bounds}
\label{sec:2.2}

In this section, we decompose $\mathfrak{g}_{\cc}(z, z')$ into a sum of terms satisfying bounds which are the main inputs of the multiscale expansion. 

\subsubsection{Multiscale and bulk/edge decompositions}
Let
\begin{equation}
f_\eta(k_1, k_2):= D(k_1, k_2) 
e^{-\eta D(k_1, k_2)},
  \label{eq:fa_def}
\end{equation}
with $D(k_1, k_2)$ as in \eqref{defDk1k2}.   
Note that 
\begin{equation}
  \int_0^\infty f_\eta(k_1, k_2) \dd \eta
  =
  1
  \label{eq:fa_unit}
\end{equation}
as long as $k_1$ and $k_2$ are not both integer multiples of $2\pi$ (and so whenever $k_2 \in \cQ_M(k_1)$). Comparing with Equation~\eqref{eq:ghat} we see immediately that $f_\eta(k_1, k_2) \hat \fg(k_1, k_2)$ is an entire function of both $k_1$ and $k_2$.
The reader may find it helpful in what follows to bear in mind that, for large $\eta$, $f_\eta(k_1, k_2)$ is peaked in a region where $D(k_1, k_2)$ is of the order $\eta^{-1}$, and so $k_1, k_2$ are of order $\eta^{-1/2}$. 

Thanks to \eqref{eq:fa_unit}, $f_\eta$ induces the following multi-scale decomposition 
of $\mathfrak g_\cc$ defined in \eqref{eq:g_finite_cyl}. Let $h^*:=-{\lfloor}\log_2(\min\{L,M\}){\rfloor}$; then, for any $h^* \le h< 0$, 
\begin{equation}\mathfrak g_c(z, z')=\mathfrak g^{(\le h)}(z, z')+\sum_{j=h+1}^0\mathfrak g^{(j)}(z, z'),
\label{eq:gcylmultis}\end{equation}
where
\begin{gather}
  \mathfrak{g}^{(0)} (z, z')
  :=
  \int_{0}^1  \mathfrak{g}^{[\eta]}(z, z')
  \ \dd \eta,
  \label{eq:gN_def}
  \\
  \mathfrak{g}^{(h)} (z, z')
  :=
  \int_{2^{-2h-2}}^{2^{-2h}} 
  \mathfrak{g}^{[\eta]}(z, z')
  \ \dd \eta \quad \text{ for }h^* < h < 0,
  \label{eq:gh_def}
  \\
  \mathfrak{g}^{(\le h)} (z, z')
  :=
 \int_{2^{-2h-2}}^\infty 
  \mathfrak{g}^{[\eta]}(z, z')\ \dd \eta,
  \label{eq:ghbar_def}
\end{gather}
and
\begin{equation}	
  \begin{split}
    {\mathfrak g}^{[\eta]} (z, z') 
    :=
    &
    \frac{1}{L}
    \sum_{k_1 \in \mathcal D_L}
    \sum_{k_2 \in \mathcal{Q}_M(k_1)}
    \frac{1}{2N_M(k_1, k_2)}
    e^{-ik_1(z_1-z_1')}
    f_\eta(k_1, k_2)
    \\ & \quad \times
    \left\{ e^{-ik_2(z_2-z_2')} \hat{\mathfrak{g}}(k_1, k_2)
    - e^{-ik_2(z_2+z_2')}
    \begin{bmatrix}
      \hat{g}_{++}(k_1, k_2) &
      \hat{g}_{+-}(k_1,-k_2) \\
      \hat{g}_{-+}(k_1, k_2) & e^{2ik_2(M+1)} 
      \hat{g}_{--}(k_1, k_2)
    \end{bmatrix}
  \right\}.
\end{split}
\label{eq:ga_def}
\end{equation}
Note that the single-scale propagators preserve the cancellations at the boundary spelled out in Remark \ref{rem:g_cyl_boundary} above, namely, denoting the components of $\mathfrak g^{(h)}$ 
by $g^{(h)}_{\omega\omega'}$, with $\omega,\omega'\in\{\pm\}$, in analogy with \eqref{eq:g_cyl_matrix}, 
	\begin{eqnarray}
		\label{eq:g_cyl_boundary1h}
&&	\hskip-.3truecm	
g_{++}^{(h)} \left( \left( z_1,0 \right), z' \right)
=		g_{++}^{(h)} \left( z, \left( z_1',0 \right) \right)
		=
		g_{+-}^{(h)}  \left( \left( z_1,0 \right), z' \right)
		=
		g_{-+}^{(h)} \left(z, \left( z_1',0 \right) \right)
		=
		0,\\
&& \hskip-.3truecm
g_{+-}^{(h)}\left(z, \left( z_1',M+1 \right) \right)
		=
		g_{-+}^{(h)} \left( \left( z_1,M+1 \right), z' \right)
		=
		g_{--}^{(h)}\left( \left( z_1,M+1 \right), z' \right)
		=
		g_{--}^{(h)} \left(z, \left( z_1',M+1 \right) \right)
		=
		0
		,
		\nonumber\end{eqnarray}
and analogously for $g_{\omega\omega'}^{(\le h)}$. 
Note also that, taking $L,M\to\infty$, the cutoff propagator ${\mathfrak g}_\cc^{[\eta]}(z, z')$  
tends to its infinite-plane counterpart, provided that $z, z'$  are chosen `well inside the cylinder'; in particular, if $z_{L,M}:=(L/2,\lfloor M/2\rfloor)$, then
\begin{equation} \begin{split}\lim_{L,M\to\infty} {\mathfrak g}^{[\eta]} (z_{L,M}+z, z_{L,M}+ z') &=
	\int\limits_{[-\pi,\pi]^2} \frac{\dd k_1\dd k_2}{\left( 2 \pi \right)^2}\,
	e^{-i[k_1(z_1-z_1')+k_2(z_2-z_2')]} 
	f_\eta (k_1,k_2)
	\hat{\mathfrak{g}}(k_1, k_2)\\
&=: \mathfrak{g}_\infty^{[\eta]}(z- z').\end{split}\label{ginftyeta} \end{equation}
For later purposes, we need to decompose the cutoff propagator ${\mathfrak g}^{[\eta]}$ into a `bulk' part which is minimally sensitive to the size and shape of the cylinder, plus a remainder which we call the `edge' part. The bulk part is simply chosen to be 
the restriction of $\mathfrak{g}_\infty^{[\eta]}$ to the cylinder, with the appropriate (anti-periodic) boundary conditions in the horizontal direction: 
\begin{equation} 
	\fg^{[\eta]}_{\B} (z, z') 
	:=	s_L \left( z_1 - z_1' \right)
	\fg_\infty^{[\eta]}\left( \per_L\left( z_1 - z_1' \right), z_2 - z_2' \right)
	\label{eq:gaB_def}\end{equation}
where, recalling that $z_1,z_1'\in\{1,\ldots, L\}$,
\begin{equation}
s_L(z_1-z_1')
	:=
	\piecewise{+1, & |z_1-z_1'|<L/2\\
		\phantom{+}0, & |z_1-z_1'|=L/2 \\
-1, & |z_1-z_1'|>L/2 }
\end{equation}
and
\begin{equation}
	\per_L (z_1)
	:=
	z_1- L \floor{\frac{z_1}{L} + \frac12}.\label{perL}
\end{equation}
The edge part is, by definition, the difference between the full cutoff propagator and its bulk part: 
\begin{equation} \fg^{[\eta]}_{\E} (z, z') 
	:=\fg^{[\eta]} (z, z') 
-\fg^{[\eta]}_{\B} (z,z').\end{equation}
Using these expressions, we define $\fg^{(h)}_\B$ and $\fg^{(h)}_\E$ via the analogues of \eqref{eq:gN_def}-\eqref{eq:gh_def}, with the subscript $\cc$ replaced by $\B$ and 
$\E$, respectively. As a consequence, for any $h^*\le h< 0$, 
\begin{equation}\mathfrak g_{\cc}(z, z')=\mathfrak g^{(\le h)}(z, z')+\sum_{j=h+1}^0(\mathfrak g_{\B}^{(j)}(z, z')+\mathfrak g_{\E}^{(j)}(z, z')).\label{msbedec}\end{equation}
As already observed in Remark \ref{rem:g_cyl_boundary}, 
all the functions involved in this identity can be naturally extended to all $x, y\in\mathbb R^2$ (and, therefore, in particular, to all $z,z'\in\mathbb Z^2$), by interpreting the right side of \eqref{eq:ga_def}, etc., as a function on $\mathbb R^2\times\mathbb R^2$. 

\subsubsection{Decay bounds and Gram decomposition: statement of the main results}\label{sec:2.2.2}

Given the multi-scale and bulk-edge decomposition \eqref{msbedec}, we now intend to prove suitable decay bounds for the single-scale bulk and edge propagators, as well as to show the existence of 
an inner product representation (`Gram representation') thereof. These will 
be of crucial importance in the non-perturbative multi-scale bounds on the partition and generating functions, discussed in the rest of this work, and they are summarized
in the following proposition. 

Given a function $f:\left(\mathbb Z^2\right)^2\to \mathbb C$ (or a function $f:\Lambda^2\to \mathbb C$ extendable to 
$(\mathbb Z^2)^2$, in the sense explained after \eqref{msbedec}), 
we let $\partial_{1,j}$ be the discrete derivative in direction $j$ with respect to the first argument, defined by
 $\partial_{1,j} f (z, z')	:=f(z + \hat{e}_j, z') - f(z, z')$, with $\hat{e}_j$ the $j$-th Euclidean basis vector; an analogous definition holds for $\partial_{2,j}$. 
\begin{proposition}\label{thm:g_decomposition}
There exists constants $c,C$ such that, for any integer $h$ in $[h^*+1,0]$, any $\bs r=(r_{1,1}, r_{1,2}, r_{2,1}, r_{2,2})\in\mathbb Z_+^4$, and any $z, z'\in\Lambda^2$, 
	\begin{enumerate}
			\item \begin{equation}
					\| \bs\partial^{\bs r}{\mathfrak g}^{(h)} (z, z')\|\le 
					C^{1+|\bs r|_1}\bs r! 
					2^{(1+|\bs r|_1)h} e^{-c 2^h \| z- z'\|_1}
					\label{eq:gBh_bounds}
				\end{equation}
				where: the matrix norm in the left side (recall that $\bs\partial^{\bs r}{\mathfrak g}^{(h)} (x, y)$ is a $2\times2$ matrix) is 
				the max norm, i.e., the maximum over the matrix elements, 				
$\bs\partial^{\bs r}:=\prod_{i,j=1}^2 \partial_{i,j}^{r_{i,j}}$, 
                                $\bs r!=\prod_{i,j=1}^2 r_{i,j}!$,  and 
				$\|z\|_1=|\per_L(z_1)|+|z_2|$, see \eqref{perL},
				\label{it:gBh_bounds}
		\end{enumerate}
Moreover, if $z, z'\in\Lambda$ are such that $|\per_L(z_1-z_1')|< L/2-|r_{1,1}|-|r_{2,1}|$, 
\begin{enumerate}
		  \setcounter{enumi}{1}	
		  \item 				\begin{equation}
					\| \bs\partial^{\bs r} {\mathfrak g}^{(h)}_{\E} (z, z')\|\le 
					C^{1+|\bs r|_1} 	\bs r!
					2^{(1+|\bs r|_1)h} e^{-c 2^h {d_\E(z, z')}}
					\label{eq:gEh_bounds}
				\end{equation}
				where $d_\E(z, z') :=\min\{ |\per_L(z_1 - z_1')| + \min\{z_2+z_2',2(M+1)-z_2-z_2'\}, L-|\per_L(z_1-z_1')|+|z_2-z_2'|\}$.
				\label{it:gEh_bounds}\
\end{enumerate}
Finally, there exists a Hilbert space $\cH_{LM}$ with inner product $(\cdot,\cdot)$ including elements 
	$\gamma^{(h)}_{\omega,\bs s, z}$, $\tilde{\gamma}^{(h)}_{\omega, \bs s, z}$,
	$\gamma^{(\le h)}_{\omega,\bs s, z}$, $\tilde{\gamma}^{(\le h)}_{\omega, \bs s, z}$  
	(for $\bs s=(s_1,s_2)\in\mathbb Z_+^2$, $x\in\Lambda$) 
	such that whenever $h^* \le h \le 0$, 		
		\begin{enumerate}
		  \setcounter{enumi}{2}
			\item 
				$\bs \partial^{(\bs s,\bs s')}g_{\omega\omega'}^{(h)} (z, z') \equiv \left( \tilde \gamma^{(h)}_{\omega,\bs s, z},\gamma^{(h)}_{\omega', \bs s', z'} \right) $
				and  
				$\bs \partial^{(\bs s,\bs s')}g_{\omega\omega'}^{(\le h)} (z, z') \equiv \left( \tilde \gamma^{(\le h)}_{\omega,\bs s, z},\gamma^{(\le h)}_{\omega', \bs s', z'} \right) $,
				and
\label{it:gh_gram}
\item $\left|\gamma^{(h)}_{\omega,\bs s, z}\right|^2 , \ 
	\left|\tilde \gamma^{(h)}_{\omega, \bs s, z}\right|^2 , \ 
	\left|\gamma^{(\le h)}_{\omega,\bs s, z}\right|^2 , \ 
	\left|\tilde \gamma^{(\le h)}_{\omega, \bs s, z}\right|^2  
	 \le C^{1+|\bs s|_1}\bs s! 2^{h(1+2|\bs s|_1)}$ where $|\cdot|$ is the norm generated by the inner product 
$(\cdot,\cdot)$.
\label{it:gh_gram_bounds}
\end{enumerate}
\end{proposition}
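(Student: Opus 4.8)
\emph{Strategy and reduction.} The plan is to reduce all three groups of estimates to power‑counting bounds on the single cutoff propagator $\mathfrak g^{[\eta]}$ of \eqref{eq:ga_def} and on its bulk and edge parts, uniform for $\eta$ in a dyadic window $[2^{-2h-2},2^{-2h}]$ (or in $[2^{-2h-2},\infty)$ for the $(\le h)$ objects), and then to integrate over $\eta$ according to \eqref{eq:gN_def}--\eqref{eq:ghbar_def}. The first step is to rewrite the two momentum sums in \eqref{eq:ga_def} and \eqref{eq:gaB_def} as honest integrals: for the horizontal variable one applies the Poisson summation formula to $\tfrac1L\sum_{k_1\in\mathcal D_L}$, producing $\sum_{n\in\bbZ}(-1)^n\int_{-\pi}^{\pi}\tfrac{\dd k_1}{2\pi}(\cdots)e^{ik_1nL}$ (the sign $(-1)^n$ encoding the antiperiodic boundary conditions), the $n=0$ term feeding the bulk and the $n\ne0$ terms the horizontal wrap‑around part of the edge; for the vertical variable one rewrites $\sum_{k_2\in\cQ_M(k_1)}\tfrac1{2N_M(k_1,k_2)}(\cdots)$ as a contour integral, using that $\cQ_M(k_1)$ is the set of unit‑circle roots of $B(k_1)\sin k_2M-\sin k_2(M+1)$ (a polynomial equation after $w=e^{ik_2}$) and that $1/(2N_M)$ is the associated residue weight, and then deforms the contour onto $[-\pi,\pi]$; together with the reflected terms already isolated in \eqref{eq:g_finite_cyl}, this expresses $\mathfrak g^{[\eta]},\mathfrak g^{[\eta]}_\B,\mathfrak g^{[\eta]}_\E$ as finite combinations of \emph{model integrals} $\int\dd\eta\int_{\bbR^2}\tfrac{\dd^2k}{(2\pi)^2}\,f_\eta(k)\,[\hat{\mathfrak g}(k)]_{\omega\omega'}\,e^{i k\cdot\Delta}$, with $\Delta$ a displacement built from $z,z'$, the horizontal images $nL$, and the vertical reflections about $0$ and $M+1$; for the bulk term $|\Delta|_1=\|z-z'\|_1$, while for every edge term $|\Delta|_1\ge d_\E(z,z')$ (the two entries of the $\min$ in $d_\E$ correspond to a reflected image and a horizontal wrap image, respectively). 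This residue/deformation step is carried out in Appendices~\ref{app:diagonalization}, \ref{sec:proofthm2.3} and~\ref{sec:proof_2.9}.

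\emph{Decay and dimensional bounds (items~\ref{it:gBh_bounds}, \ref{it:gEh_bounds}).} The inputs are: (i) $f_\eta\hat{\mathfrak g}=e^{-\eta D}N$ with $N$ the \emph{entire} numerator matrix of \eqref{eq:ghat}, so $f_\eta\hat{\mathfrak g}$ extends analytically to a fixed strip $\{|\mathrm{Im}\,k_j|\le\kappa_0\}$, where moreover $\|f_\eta(k)\hat{\mathfrak g}(k)\|\le C\,|\mathrm{Re}\,k|\,e^{-c\eta(\mathrm{Re}\,k)^2}$ --- the extra power of $|k|$ over the naive $D/D$ being precisely the effect of the critical condition \eqref{eq:anisotropic_critical_condition}, which forces $B(0)=1$, $B'(0)=0$ and hence makes $\sin k_1$ and $1-B(k_1)e^{\mp ik_2}$ vanish to first order at $k=0$ while $D\sim|k|^2$; and (ii) each discrete derivative $\partial_{i,j}$ acts in momentum space as multiplication by $e^{\pm ik_j}-1=O(|k|)$. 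Shifting the two contours to $\mathrm{Im}\,k_j=\pm c\,2^{h}$, the sign of each shift chosen opposite to that of the corresponding component of $\Delta$, produces the factor $e^{-c2^{h}|\Delta|_1}$, i.e.\ $e^{-c2^h\|z-z'\|_1}$ for the bulk and $e^{-c2^h d_\E(z,z')}$ for the edge, and the remaining integral is bounded by
\[
\int_{2^{-2h-2}}^{2^{-2h}}\!\!\dd\eta\int_{\bbR^2}\!\dd^2k\;|k|^{1+|\bs r|_1}e^{-c\eta|k|^2}\;\le\;\int_{2^{-2h-2}}^{2^{-2h}}\! C^{|\bs r|_1}\Gamma\!\big(\tfrac{3+|\bs r|_1}2\big)\,\eta^{-(3+|\bs r|_1)/2}\dd\eta\;\le\;C^{1+|\bs r|_1}\bs r!\;2^{(1+|\bs r|_1)h},
\]
which is exactly \eqref{eq:gBh_bounds}, \eqref{eq:gEh_bounds}; the factorial comes from the $\Gamma$‑function generated by the Gaussian momentum integral, and the geometric constant absorbs $e^{|\bs r|_1\kappa_0}$ from the shifted contour. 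The $(\le h)$ bounds are identical with the upper limit replaced by $+\infty$; convergence at large $\eta$ follows from $\int_{2^{-2h-2}}^{\infty}e^{-\eta D}\dd\eta=D^{-1}e^{-2^{-2h-2}D}$, which still leaves an integrable $|k|^{-1}$ singularity with effective infrared cutoff $|k|\gtrsim2^h$. For the edge bound one must also check that the $n\ne0$ and residue‑deformation remainders are all dominated by $e^{-c2^h d_\E}$ and summable; the hypothesis $|\per_L(z_1-z_1')|<L/2-|r_{1,1}|-|r_{2,1}|$ merely guarantees that applying the discrete derivatives does not move the arguments across the locus where $\per_L$ and $s_L$ in \eqref{eq:gaB_def}, \eqref{perL} jump.

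\emph{Gram representation (items~\ref{it:gh_gram}, \ref{it:gh_gram_bounds}).} After the $\eta$‑integration, $\hat{\mathfrak g}^{(h)}(k):=\int_{2^{-2h-2}}^{2^{-2h}}e^{-\eta D(k)}N(k)\dd\eta$ is a $2\times2$‑matrix‑valued function of $k$ with $\|\hat{\mathfrak g}^{(h)}(k)\|\le m_h(k):=C\,|k|^{-1}e^{-c2^{-2h}|k|^2}$ (and likewise for $(\le h)$, with the same $m_h$ up to constants). Write $\hat{\mathfrak g}^{(h)}(k)=m_h(k)O(k)$ with $\|O(k)\|\le1$ and $O_{\omega\omega'}(k)=\sum_{a=1}^2 u_{\omega a}(k)\overline{v_{\omega'a}(k)}$, $|u_{\omega\cdot}(k)|,|v_{\omega'\cdot}(k)|\le\sqrt2$; take $\cH_{LM}=\mathbb{C}^2\otimes\mathbb{C}^2\otimes\ell^2(\mathcal D_L)\otimes\bigoplus_{k_1}\ell^2\big(\cQ_M(k_1),\tfrac1{2N_M}\big)$, enlarged by a finite index recording the vertical reflections, and set $\tilde\gamma^{(h)}_{\omega,\bs s,z}(a,k):=\prod_j(e^{-ik_j}-1)^{s_j}\,m_h(k)^{1/2}u_{\omega a}(k)e^{-ik\cdot z}+(\text{reflected})$ and $\gamma^{(h)}_{\omega,\bs s,z}(a,k):=\prod_j(e^{ik_j}-1)^{s_j}\,m_h(k)^{1/2}v_{\omega a}(k)e^{ik\cdot z}+(\text{reflected})$. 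Then $(\tilde\gamma^{(h)}_{\omega,\bs s,z},\gamma^{(h)}_{\omega',\bs s',z'})$ reproduces $\bs\partial^{(\bs s,\bs s')}g^{(h)}_{\omega\omega'}(z,z')$ by construction (item~\ref{it:gh_gram}), and
\[
\big|\gamma^{(h)}_{\omega,\bs s,z}\big|^2\le 2\,\tfrac1L\!\sum_{k_1,k_2}\!\tfrac1{2N_M}|k|^{2|\bs s|_1}m_h(k)\;\le\;C\!\int_{\bbR^2}\!|k|^{2|\bs s|_1-1}e^{-c2^{-2h}|k|^2}\dd^2k\;\le\;C^{1+|\bs s|_1}\bs s!\;2^{h(1+2|\bs s|_1)},
\]
and similarly for $\tilde\gamma$ and for the $(\le h)$ vectors (item~\ref{it:gh_gram_bounds}); the exponent $1+2|\bs s|_1$, rather than $1+|\bs s|_1$ as for the propagator, appears because each Gram vector carries the full factor $|k|^{|\bs s|_1}$ and half, $m_h^{1/2}$, of the singular weight, the $2^h$ coming from $\int|k|^{-1}e^{-c2^{-2h}|k|^2}\dd^2k$ being split as $2^{h/2}\cdot2^{h/2}$.

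\emph{Main difficulty.} The genuinely delicate point is the reduction of the first paragraph --- specifically, converting the sum over $\cQ_M(k_1)$ (roots of a transcendental equation) into a contour integral and deforming it so as to separate cleanly the continuum bulk integral from the exponentially decaying edge corrections, \emph{uniformly in $k_1\in\mathcal D_L$}. This is most delicate as $k_1\to0$, where $B(k_1)\to1$ so that the smallest element of $\cQ_M(k_1)$ approaches $0$ and coincides in the limit with the singularity of $\hat{\mathfrak g}$ at the origin: the root carrying the universal scaling behaviour must be isolated from the others without losing control of the weights $N_M$ of \eqref{eq:N_M_def}. Once this reduction (the content of Appendices~\ref{app:diagonalization}, \ref{sec:proofthm2.3} and~\ref{sec:proof_2.9}) is in place, the analyticity‑plus‑contour‑shift and Gram‑factorisation arguments above are routine power counting.
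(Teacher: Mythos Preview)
Your proposal is correct and follows essentially the same route as the paper's proof in Appendix~\ref{sec:proofthm2.3}: reduce to the cutoff propagator $\fg^{[\eta]}$, rewrite the $k_1$ sum via Poisson/contour integral and the $k_2$ sum over $\cQ_M(k_1)$ via a residue formula (the paper's \eqref{eccola}--\eqref{Asigmadef}), shift the resulting integration contours by $\sim\eta^{-1/2}\sim 2^h$ to extract the exponential decay, and integrate the Gaussian $e^{-\eta D}$ over $k$ and then $\eta$. Your identification of the main difficulty---controlling the contour deformation for the $\cQ_M$ sum uniformly in $k_1$, especially as $B(k_1)\to1$---is exactly right; in the paper this is the content of the bound \eqref{eq:RsigmaCb} on $R_\sigma$ and its somewhat delicate proof. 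The only notable difference is in the Gram construction: you factor the $2\times2$ matrix $\hat\fg^{(h)}(k)=m_h(k)O(k)$ abstractly via an SVD-type decomposition $O=uv^*$, whereas the paper takes entrywise square roots $\sqrt{\hat g_{\sharp,\omega\omega'}(k_1,k_2)}$ and uses a $4$-dimensional auxiliary index; both produce the same norm bounds and are equally valid.
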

Combining points~\ref{it:gh_gram} and~\ref{it:gh_gram_bounds}, we see that
\begin{corollary}
  \label{cor:g_le_h_bounds}
  For all $z, \ z' \in \Lambda$, $\bs r \in \bZ_+^4$,  and $h^* \le h \le 0$, 
		\begin{equation}
			\| \bs\partial^{\bs r}  {\mathfrak g}^{(\le h)} (z, z')\|	\le 
			C^{1+|\bs r|_1} \bs r! 
			2^{(1+|\bs r|_1)h}.
			\label{eq:g_le_h_bounds}
		\end{equation}		
\end{corollary}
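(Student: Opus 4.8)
The plan is to read the bound off directly from the Gram representation in point~\ref{it:gh_gram} together with the Gram-vector estimates in point~\ref{it:gh_gram_bounds}, via a single application of the Cauchy--Schwarz inequality in the Hilbert space $\cH_{LM}$. No further analytic input is needed: all of it has already been packaged into Proposition~\ref{thm:g_decomposition}, and the corollary is pure bookkeeping.

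Concretely, I would split the multi-index $\bs r=(r_{1,1},r_{1,2},r_{2,1},r_{2,2})$ into the part $\bs s:=(r_{1,1},r_{1,2})\in\mathbb Z_+^2$ carrying the derivatives with respect to the first argument and the part $\bs s':=(r_{2,1},r_{2,2})\in\mathbb Z_+^2$ carrying those with respect to the second, so that $\bs\partial^{\bs r}=\bs\partial^{(\bs s,\bs s')}$ in the notation of point~\ref{it:gh_gram}. For each fixed $\omega,\omega'\in\{\pm\}$, point~\ref{it:gh_gram} gives
\[ \bs\partial^{\bs r} g^{(\le h)}_{\omega\omega'}(z,z') = \bigl( \tilde\gamma^{(\le h)}_{\omega,\bs s,z},\, \gamma^{(\le h)}_{\omega',\bs s',z'} \bigr), \]
so Cauchy--Schwarz followed by point~\ref{it:gh_gram_bounds} yields
\[ \bigl| \bs\partial^{\bs r} g^{(\le h)}_{\omega\omega'}(z,z') \bigr| \le \bigl| \tilde\gamma^{(\le h)}_{\omega,\bs s,z} \bigr|\,\bigl| \gamma^{(\le h)}_{\omega',\bs s',z'} \bigr| \le C^{1+|\bs r|_1/2}\,(\bs r!)^{1/2}\,2^{(1+|\bs r|_1)h}, \]
where I used $|\bs s|_1+|\bs s'|_1=|\bs r|_1$ and $\bs s!\,\bs s'!=\bs r!$ to combine the two half-powers (the $2$-exponent being $\tfrac12(1+2|\bs s|_1)+\tfrac12(1+2|\bs s'|_1)=1+|\bs r|_1$). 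Since we may assume without loss of generality $C\ge 1$, this is at most $C^{1+|\bs r|_1}\bs r!\,2^{(1+|\bs r|_1)h}$; taking the maximum over $\omega,\omega'\in\{\pm\}$, which is precisely the matrix max-norm appearing on the left of \eqref{eq:g_le_h_bounds}, finishes the proof.

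I do not expect any genuine obstacle. The only points requiring care are: (i) the bookkeeping of the four components of $\bs r$, i.e.\ checking that they distribute into $\bs s$ and $\bs s'$ so that the constant, the factorials, and the powers of $2$ combine exactly as stated; and (ii) the fact that the range $h^*\le h\le 0$ quantified in the corollary is precisely the range over which points~\ref{it:gh_gram} and~\ref{it:gh_gram_bounds} of Proposition~\ref{thm:g_decomposition} are asserted --- note in particular that this includes $h=h^*$, which is the reason the estimate covers $\mathfrak g^{(\le h)}$ down to the bottom scale.
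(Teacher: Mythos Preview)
Your proof is correct and is exactly the argument the paper has in mind: the text introduces the corollary with ``Combining points~\ref{it:gh_gram} and~\ref{it:gh_gram_bounds}, we see that'', i.e.\ Gram representation plus Cauchy--Schwarz, and you have simply written out the bookkeeping that the paper leaves implicit.
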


\begin{remark} Since $\fg^{(h)}_\B=\fg^{(h)}-\fg^{(h)}_\E$, from items \ref{it:gBh_bounds} and \ref{it:gEh_bounds} it follows that 
  the bulk, single-scale, propagator $\fg^{(h)}_\B$ satisfies the same estimate as \eqref{eq:gBh_bounds} for all $x, y\in\Lambda$ allowed in Item~\ref{it:gEh_bounds}, i.e. whenever 
$|\per_L(z_1-z_1')|< L/2-|r_{1,1}|-|r_{2,1}|$. 
The latter restriction on $z, z'$ just comes from the requirement that the discrete derivatives 
do not act on the discontinuous functions $s_L$ and $\per_L$ entering the definition of $\fg^{(h)}_\B$ (and, therefore, of $\fg^{(h)}_\E$); in fact, 
one can easily check from the proof that, if $\bs r=\bs 0$, then \eqref{eq:gEh_bounds} is valid for all $z, z'\in\Lambda$, without further restrictions. 
\end{remark}

\begin{remark}\label{rem:gramfgE}
All the estimates stated in the proposition are uniform in $L,M$, therefore, they remain valid for the $L,M\to\infty$ limit of the propagators. 
In particular, $\fg_\infty^{(h)}(z- z')$ and $\fg^{(\le h)}_\infty(z- z')$ satisfy the same estimates as 
\eqref{eq:gBh_bounds} and \eqref{eq:g_le_h_bounds}, respectively. 
Similarly, the Gram representation stated in items \ref{it:gh_gram} and \ref{it:gh_gram_bounds} is also valid for $\fg_\infty^{(h)}$. 
Therefore, if $|z_1-z_1'|\le L/2-|s_1|-|s'_1|$, also $\bs\partial^{(\bs s,\bs s')}\fg_\E^{(h)}(\bs x,\bs y)=\bs\partial^{(\bs s,\bs s')}\fg^{(h)}(z, z')-
\bs\partial^{(\bs s,\bs s')}\fg_\infty^{(h)}(z- z')$ admits a Gram representation, with qualitatively the same Gram bounds (and, of course, 
such a representation can be extended by anti-periodicity to all $z, z'$ such that $|\per_L(z_1-z_1')|\le L/2-|s_1|-|s'_1|$).
\end{remark}

\begin{remark}
	\label{rem:g_massive_rescaled}
	If we rename the massive propagator in \eqref{eq:propmassive} as $\fg_m(z, z') =: \fg^{(1)}(z, z')$ and then use it to define $\fg_\infty^{(1)}$, $\fg_\B^{(1)}$, and $\fg_\E^{(1)}$ in the same way that we did above with $\fg^{[\eta]}$,
it is straightforward to check that the estimates 
in items \ref{it:gBh_bounds} and \ref{it:gEh_bounds} of the proposition remain valid for $h=1$. Similarly, the reader can check that the proof of items \ref{it:gh_gram} and 
\ref{it:gh_gram_bounds} given below can be straightforwardly applied to the case $h=1$, as well.
\end{remark}

\begin{remark}
  \label{rem:g_analytic}
  Again since $D(k_1, k_2)$ is exactly the denominator in the definition~\eqref{eq:ghat},
  it is easily seen that $f_\eta(k_1, k_2) \hat{\fg}(k_1, k_2)$ is an entire function of $t_1$.
  All of the bounds in Proposition~\ref{thm:g_decomposition} are obtained by writing the relevant quantities as absolutely convergent integrals or sums in $k_1$, $k_2$, and $\eta$; since these bounds are locally uniform in $t_1$ as long as it is bounded away from $0$ and $1$, we also see that all of the propagators are analytic functions of $t_1$ with all other arguments held fixed.
\end{remark}

The proof of Proposition \ref{thm:g_decomposition} is in Appendix \ref{sec:proofthm2.3}.

\subsection{Asymptotic behavior of the critical propagator}
\label{sec:asymp_prop}

Although our main result, \cref{prop:main}, involves correlation functions for a finite lattice, since the continuum limit of the energy correlation functions of the non-interacting model is well understood \cite{Hon.thesis,CCK}, as a result we also obtain a characterization of the scaling limit \cite[\xcref{cor:main}]{AGG_part1}.
For completeness, we give here a description of the scaling limit of the critical propagator, from which the non-interacting energy correlation functions are easily calculated. 

We rescale the lattice as follows: fix two positive constants $\ell_1,\ell_2$ (no condition 
on the ratio $\ell_1/\ell_2$), and let $L=2\lfloor a^{-1}\ell_1/2\rfloor$, $M=\lfloor a^{-1}\ell_2\rfloor$ for $a>0$ the lattice mesh.
Let $\Lambda^a:=a\Lambda$ and 
$\Lambda_{\ell_1,\ell_2}$ the continuum cylinder. We also let $\|\cdot\|$ indicate the Euclidean distance on the cylinder $\Lambda_{\ell_1,\ell_2}$. 
Given $z, z'\in\Lambda_{\ell_1,\ell_2}$, we let 
\begin{equation}
\fg_{\cc, a}(z, z')= a^{-1}\fg_\cc(\lfloor a^{-1} z\rfloor, \lfloor a^{-1} z'\rfloor). \label{eq:g_cyl_scaling_finite_a}
\end{equation}
The main result of this section concerns the limiting behavior of $\fg_{\cc, a}$ as $a\to 0$. 

\begin{proposition}\label{thm:g_scaling}
	Given $\ell_1,\ell_2>0$, there exist $C,c > 0$ such that for all $x, y\in \Lambda_{\ell_1,\ell_2}$ such that $ z\neq z'$ and $a>0$ for which $a(\min\{\ell_1,\ell_2,\|z- z'\|\})^{-1} \le c$,
\begin{equation}
\fg_{\cc, a}(z, z')= \fg_\scal(z, z')+ R_a(z, z'),\end{equation}
and $\|R_a(z, z')\|\le Ca \big(\min\{\ell_1,\ell_2,\|z- z'\|\}\big)^{-2}$, where 
\begin{eqnarray} 
  \label{eq:g_scal_cyl_def}
  &&
\fg_\scal(z, z')=\sum_{\bs n\in\mathbb Z^2}(-1)^{\bs n} \Biggl\{ \fg^{\scal}_\infty(z_1-z_1'+n_1\ell_1,z_2-z_2'+2n_2\ell_2)\\
&&+
\begin{bmatrix} 
-g^{\scal}_{1}(z_1-z_1'+n_1\ell_1,z_2+z_2'+2n_2\ell_2) & g^{\scal}_{2}(z_1-z_1'+n_1\ell_1,z_2+z_2'+2n_2\ell_2) \\
-g^{\scal}_{2}(z_1-z_1'+n_1\ell_1,z_2+z_2'+2n_2\ell_2) & g^{\scal}_{1}(z_1-z_1'+n_1\ell_1,z_2+z_2'+2(n_2-1)\ell_2) \end{bmatrix}
\Biggr\},\nonumber\end{eqnarray}
and where, letting 
\begin{equation}
g^{\scal}(z_1,z_2):=\frac{1}{t_2(1-t_2)} \iint_{\mathbb R^2}\frac{\dd k_1\dd k_2}{(2\pi)^2} e^{-ik_1z_1-ik_2z_2}\frac{-i k_1}{k_1^2+k_2^2}
=-\frac1{{2\pi} t_2(1-t_2)}\frac{z_1}{z_1^2+z_2^2},
\label{eq:g_scale_scalar}
\end{equation}
we denoted $g_1^{\scal}(z_1,z_2):=g^{\scal}(\frac{z_1}{1-t_2},\frac{z_2}{1-t_1})$, $g_2^{\scal}(z_1,z_2):=g^{\scal}(\frac{z_2}{1-t_1},\frac{z_1}{1-t_2})$, and
\begin{equation}
  \fg_\infty^\scal(z_1,z_2)
    :=
	\begin{bmatrix}
    g_1^\scal (z_1,z_2) & g_2^\scal (z_1,z_2) \\
    g_2^\scal (z_1,z_2) & - g_1^\scal (z_1,z_2)
	\end{bmatrix}
  .
\end{equation}
\end{proposition}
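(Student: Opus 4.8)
\textbf{Proof plan for Proposition \ref{thm:g_scaling}.}
The strategy is to start from the exact finite-cylinder formula \eqref{eq:g_finite_cyl} and take the scaling limit term by term, tracking the error quantitatively. First I would rescale: substituting $z\mapsto\lfloor a^{-1}z\rfloor$ and multiplying by $a^{-1}$, and using $L=2\lfloor a^{-1}\ell_1/2\rfloor$, $M=\lfloor a^{-1}\ell_2\rfloor$, I want to show that the dominant contribution comes from the region where $k_1,k_2=O(a)$, i.e.\ after the change of variables $k_i=a p_i$ the sums over $\mathcal D_L$ and $\mathcal Q_M(k_1)$ become Riemann sums converging to integrals/series over the rescaled dual lattice. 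The key point is that near the origin $D(k_1,k_2)\approx (1-t_2)^2 k_1^2+(1-t_1)^2 k_2^2$ (the massless dispersion relation) and $\hat{\mathfrak g}(k_1,k_2)$ develops the simple-pole structure $\sim \frac{1}{(1-t_2)^2 k_1^2+(1-t_1)^2 k_2^2}\begin{bmatrix} O(k_1) & O(k_1,k_2)\\ O(k_1,k_2) & O(k_1)\end{bmatrix}$, which after rescaling reproduces exactly $\fg^\scal_\infty$ via \eqref{eq:g_scale_scalar}; the anisotropy constants $1-t_1,1-t_2$ are precisely what appear in the definitions of $g_1^\scal,g_2^\scal$.

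Second, I would handle the quantization condition. Equation \eqref{eq:q_condition}, $\sin k_2(M+1)=B(k_1)\sin k_2 M$, with $B(k_1)\to B(0)=t_2$ as $k_1\to0$, determines $\cQ_M(k_1)$; rescaling $k_2=ap_2$ and expanding, one finds that the allowed $p_2$ converge to the spectrum $\{\pi n/\ell_2 : n\in\mathbb Z\}$ shifted appropriately by the boundary phases, and $N_M(k_1,k_2)$ from \eqref{eq:N_M_def} has the limiting value making $\frac{1}{2N_M}\to \frac{a}{2\ell_2}\cdot(\text{const})$, the correct density for the Riemann-sum-to-series conversion. The horizontal sum over $\mathcal D_L$ similarly becomes, via Poisson summation as in \eqref{spmpoisson}, the sum over images $z_1-z_1'+n_1\ell_1$ with the alternating sign $(-1)^{n_1}$ coming from the antiperiodic boundary conditions, and the vertical structure (the $e^{-ik_2(z_2-z_2')}$ term plus the reflected $e^{-ik_2(z_2+z_2')}$ term in \eqref{eq:g_finite_cyl}) produces, after resumming over the $n_2$ images, exactly the direct-plus-image structure with the $2n_2\ell_2$ and $2(n_2-1)\ell_2$ arguments displayed in \eqref{eq:g_scal_cyl_def}; the explicit matrix of $g_1^\scal,g_2^\scal$ with its sign pattern is the image of the boundary-term matrix in \eqref{eq:g_finite_cyl} under the symmetries recorded in Remark~\ref{rem:ghat_sym}.

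Third, for the error bound $\|R_a\|\le Ca(\min\{\ell_1,\ell_2,\|z-z'\|\})^{-2}$, I would split each $k$-integral into a ``low-momentum'' region $|k_i|\le a^{1/2}$ (say) and a ``high-momentum'' complement. On the low-momentum region the error is controlled by the difference between $\hat{\mathfrak g}$ and its leading singular part, which is $O(1)$ relative to a quantity of size $|k|^{-2}$, i.e.\ an $O(a)$ relative correction after rescaling; the lattice-vs-continuum discretization error in passing from sums to integrals/series contributes at the same order, using that the summand is smooth on the scale $1/\min\{L,M\}\sim a/\min\{\ell_1,\ell_2\}$. On the high-momentum complement I would use the already-established exponential decay and dimensional bounds from Proposition~\ref{thm:g_decomposition} (or a direct contour-shift argument on \eqref{eq:g_finite_cyl}, exploiting analyticity of $f_\eta\hat{\fg}$) to show this piece is $O(a^\infty)$, and symmetrically the tail of $\fg_\scal$ (large $\bs n$) is negligible; the condition $a(\min\{\ell_1,\ell_2,\|z-z'\|\})^{-1}\le c$ is exactly what guarantees the separation of scales making both splittings effective, and the factor $(\min\{\ldots\})^{-2}$ reflects the dimension of a two-point function of operators of scaling dimension $1$. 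The main obstacle I anticipate is the careful bookkeeping in the third step: controlling the discretization error of the sum over $\cQ_M(k_1)$ uniformly, since the nodes themselves depend on $k_1$ through the implicit equation \eqref{eq:q_condition}, so one must differentiate the quantization condition to get quantitative control on the node locations and on $N_M$, and then bound the resulting Euler--Maclaurin-type remainder uniformly in $z,z',a$; the cylinder images add a further layer of resummation whose convergence must be checked to be uniform as well.
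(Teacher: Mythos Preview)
Your high-level strategy is sound, but the implementation you sketch diverges from the paper's in two important technical respects, and the paper's choices avoid precisely the obstacle you identify as the main difficulty.

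First, instead of a low/high-momentum cutoff at $|k_i|\le a^{1/2}$, the paper works throughout with the heat-kernel representation $\fg_\cc=\int_0^\infty\fg^{[\eta]}\,d\eta$ (and the analogous formula for $\fg_\scal$), and splits in $\eta$ rather than in $k$. After reducing to the unit lattice, the difference is written as $R_1^{[\eta]}+R_2^{[\eta]}$: the first collects the discrepancy between the exact $G_\sharp^{[\eta]}$ and its scaling approximant $G_{\scal;\sharp}^{[\eta]}$ (linearized numerator, quadratic $d(k_1,k_2)$), both summed over explicit lattices; the second isolates the effect of replacing the $\cQ_M(k_1)$ sum with weight $1/(2N_M)$ by a sum over the fixed half-integer lattice $\frac{\pi}{2(M+1)}(2\bZ+1)$ with weight $1/(2(M+1))$. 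The $\eta$-splitting plays the role of your momentum cutoff but is better adapted to the existing machinery (the factor $e^{-\eta D}$ already localizes in $k$) and avoids having to bound a momentum-space tail of a discrete sum directly.

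Second, and more importantly, the paper does \emph{not} control the $\cQ_M(k_1)$ sum by differentiating the quantization condition and running an Euler--Maclaurin argument with $k_1$-dependent nodes. Instead, both sums (over $\cQ_M(k_1)$ and over the comparison lattice) are converted to contour integrals via the residue theorem, as in \eqref{eccola}--\eqref{eq:B.18} and \eqref{eq:B.20}; the difference of the two weight functions then appears explicitly as $A''_\sigma(k_2)-A_\sigma(k_1,k_2)$, which factors through $1+R_\sigma(k_1,k_2)$ and is bounded by $C(|k|^2+b^2)/b$ after a complex shift. This bypasses entirely the node-tracking you anticipate as the hard step. A further subtlety your outline misses is that for large $\eta$ the diagonal and off-diagonal matrix elements of $R_2^{[\eta]}$ need separate treatment, exploiting additional cancellations (oddness in $k_1$ for the diagonal, a trigonometric identity for the off-diagonal) to gain the extra power of $\eta^{-1/2}$ required for convergence of $\int^\infty d\eta$. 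Your Euler--Maclaurin route is not wrong in principle, but would require substantially more work to make uniform in $k_1$ and to recover these cancellations; the contour approach gets them essentially for free.
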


The proof of \cref{thm:g_scaling} is given in \cref{sec:proof_2.9}. 
It is easy to see from the definition of $\fg_\cc^{\scal}$ that its entries vanish for $z_2 = 0$ and/or $z_2 = \ell_2$ and/or $z_2' = 0$ and/or $z_2' = \ell_2$ in a 
fashion analogous to the one discussed in Remark~\ref{rem:g_cyl_boundary}.

\subsection{Symmetries of the propagator}
\label{sec:prop_sym}

Note that the action $\cS_{t_1,t_2}(\Phi)$ of Equation~\eqref{eq:cS_def} is unchanged by the substitutions
\begin{equation}
	\lis H_{z} \to i H_{\theta_1 z}, \quad
	H_{z} \to i \lis H_{\theta_1 z}, \quad
	\lis V_{z} \to i \lis V_{\theta_1 z}, \quad
	V_{z} \to  - i V_{\theta_1 z}
	\label{eq:symm_horizontal_HV}
\end{equation}
with $\theta_1 (z_1, z_2) := (L+1-z_1, z_2)$,
or
\begin{equation}
	\lis H_{z} \to -i \lis H_{\theta_2 z}, \quad
	H_{z} \to  i  H_{\theta_2 z}, \quad
	\lis V_{z} \to i V_{\theta_2 z}, \quad
	V_{z} \to i \lis V_{\theta_2 z}
	\label{eq:symm_vertical_HV}
\end{equation}
where $\theta_2 (z_1, z_2) := (z_1, M+1 - z_2)$.
These transformations, of course, correspond to the reflection symmetries of the Ising model on a cylinder. 
In terms of the $\phi,\xi$ variables, it is easy to see from Equation~\eqref{eq:sc_2} that these substitutions are equivalent to 
\begin{equation}
	\phi_{\pm,z} \to \Theta_1 \phi_{\pm,z} := \pm i \phi_{\pm, \theta_1 z}, \quad
	\xi_{ \pm,z} \to \Theta_1 \xi_{\pm,z} := i \xi_{\mp,\theta_1 z}
	\label{eq:symm_horizontal_phixi}
\end{equation}
and
\begin{equation}
	\phi_{\pm,z} \to \Theta_2  \phi_{\pm,z} :=  i \phi_{\mp,\theta_2 z}, \quad
	\xi_{ \pm, z} \to \Theta_2 \xi_{ \pm, z} :=  \mp i \xi_{\pm,\theta_2 z}
	.
	\label{eq:symm_vertical_phixi}
\end{equation}
With a little more notation we can write this more compactly:
letting $\phi_{\omega,z}$ denote $\phi_{+,x},\phi_{-,x},\xi_{+,x},\xi_{-,x}$ for $\omega=1,-1,+i,-i$, respectively, and letting $\theta_j \omega := (-1)^{j+1} \lis \omega$ for $\omega \in \bC$, Equations~\eqref{eq:symm_horizontal_phixi} and~\eqref{eq:symm_vertical_phixi} can be combined into
\begin{equation}
	\Theta_j \phi_{\omega,z}
	:=
	i \alpha_{j,\omega}
	\phi_{\theta_j \omega, \theta_j x }
	\label{eq:symm_master_phixi}
\end{equation}
where $\alpha_{j,\omega}$ is $-1$ if ($j=1$ and $\omega = -1$) or ($j=2$ and $\omega = i$), and 1 otherwise.
Since these transformations act on the vector $\Psi$ as orthogonal matrices,
this is equivalent to the symmetry of the coefficient matrix $A$ (and therefore its inverse) under the associated similarity transform, and since $\fg_\cc$ is just a diagonal block of $A^{-1}$ we have
\begin{equation}
		\fg_\cc (z, z')
	=
		\begin{bmatrix}
		-g_{\cc; + +}  
			&
			g_{\cc; + -}  
			\\
			g_{\cc; - +}  
			&
		-g_{\cc; - -}  
		\end{bmatrix}
	\left( \theta_1 z, \theta_1 z' \right)
	\label{eq:symm_horizontal_g}
\end{equation}
and
\begin{equation}
	\fg_\cc  (z, z')
		=
	-
		\begin{bmatrix}
			g_{\cc; - -} 
			&
		g_{\cc; -+}  
			\\
		g_{\cc; +-}  
			&
			g_{\cc; + +} 
		\end{bmatrix}
	\left( \theta_2  z, \theta_2 z' \right)
	.
	\label{eq:symm_vertical_g}
\end{equation}
These relationships can also be recovered from Equation~\eqref{eq:g_finite_cyl}, using the observations on $\hat \fg_\infty$ in Remark~\ref{rem:ghat_sym}.
This latter point is helpful because, since $f_\eta$ is even in both $k_1$ and $k_2$, it also applies to $\fg_\cc^{[\eta]}$.  Taking the appropriate $L,M \to \infty$ limit we also obtain
\begin{equation}
		\fg_\infty^{[\eta]} (z_1,z_2)
	=
	\begin{bmatrix}
			-g^{[\eta]}_{\infty; ++}
		&
			g^{[\eta]}_{\infty; +-}
		\\
			g^{[\eta]}_{\infty; -+}
		&
			-g^{[\eta]}_{\infty; --}
	\end{bmatrix}
		\left( -z_1, z_2 \right)
		=
		-
		\begin{bmatrix}
			g^{[\eta]}_{\infty; --}
			&
			g^{[\eta]}_{\infty; -+}
			\\
			g^{[\eta]}_{\infty; +-}
			&
			g^{[\eta]}_{\infty; ++}
		\end{bmatrix}
		\left( z_1, - z_2 \right)
		,
		\label{eq:ga8_symmetries}
\end{equation}
	which also implies that $\fg_\B^{[\eta]}$ has the symmetries \eqref{eq:symm_horizontal_g} and~\eqref{eq:symm_vertical_g}.
	Applying the differences and integrals in the relevant definitions we see that
\begin{lemma}
  $\fg^{(h)}_\cc$, $\fg^{(\le h)}_\cc$, $\fg^{(h)}_\B$, and $\fg^{(h)}_\E$ 
  all have the symmetries \eqref{eq:symm_horizontal_g} and~\eqref{eq:symm_vertical_g} for any $h_* \le h \le 0$,
  and $\fg_\infty^{(h)}$ has the symmetries~\eqref{eq:ga8_symmetries} for any $h \le 0$.
	\label{eq:ga_symmetries}
\end{lemma}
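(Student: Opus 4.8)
The plan is to reduce the statement to the symmetry properties of the \emph{cutoff} propagators $\fg^{[\eta]}_\cc$, $\fg^{[\eta]}_\B$ and $\fg^{[\eta]}_\infty$ already recorded above, and then to transport those properties through the $\eta$–integrals and the single subtraction that define the scale–decomposed objects.

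First I would collect the building blocks. As observed just before the statement, from the momentum representation~\eqref{eq:ga_def} of $\fg^{[\eta]}_\cc$, together with the relations of Remark~\ref{rem:ghat_sym} (in particular the $k_2\to-k_2$ invariance of $\cQ_M(k_1)$ and $N_M(k_1,k_2)$ and the $k_1\to-k_1$ symmetry of $N_M$) and the fact that $f_\eta(k_1,k_2)$ is even in each of $k_1,k_2$, the cutoff propagator $\fg^{[\eta]}_\cc$ satisfies the two identities~\eqref{eq:symm_horizontal_g} and~\eqref{eq:symm_vertical_g} verbatim; likewise $\fg^{[\eta]}_\infty$ satisfies~\eqref{eq:ga8_symmetries} (this is Eq.~\eqref{eq:ga8_symmetries} itself), and hence, via~\eqref{eq:gaB_def} together with the evenness of $s_L$ and the oddness of $\per_L$ (the discontinuity set $|z_1-z_1'|=L/2$ being harmless, since $s_L$ vanishes there), $\fg^{[\eta]}_\B$ satisfies~\eqref{eq:symm_horizontal_g}--\eqref{eq:symm_vertical_g} as well.

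Next I would note that $\fg^{(h)}_\cc$, $\fg^{(\le h)}_\cc$, $\fg^{(h)}_\B$ and $\fg^{(h)}_\infty$ are each obtained from the corresponding $\eta$–family by integrating over a fixed range of $\eta$ (Eqs.~\eqref{eq:gN_def}--\eqref{eq:ghbar_def} and their $\B$– and $\infty$–analogues). The relations~\eqref{eq:symm_horizontal_g},~\eqref{eq:symm_vertical_g},~\eqref{eq:ga8_symmetries} are $\eta$–independent \emph{linear} identities between matrix entries evaluated at the reflected points $\theta_j z,\theta_j z'$: they combine selection of a matrix entry, multiplication by $\pm1$, and the $\eta$–independent affine argument reflections $z\mapsto\theta_j z$. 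Since $\int\dd\eta$ commutes with all of these operations, the integrated propagators inherit the identities, which gives the claim for $\fg^{(h)}_\cc$, $\fg^{(\le h)}_\cc$, $\fg^{(h)}_\B$ (for $h^*\le h\le0$) and for $\fg^{(h)}_\infty$ (for all $h\le0$).

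Finally, for the edge part I would use $\fg^{(h)}_\E=\fg^{(h)}_\cc-\fg^{(h)}_\B$ (obtained by integrating $\fg^{[\eta]}_\E=\fg^{[\eta]}_\cc-\fg^{[\eta]}_\B$): since~\eqref{eq:symm_horizontal_g}--\eqref{eq:symm_vertical_g} are linear in the propagator and hold for both $\fg^{(h)}_\cc$ and $\fg^{(h)}_\B$, they hold for the difference. There is no substantial obstacle; the only point that calls for care is the intermediate verification that $\fg^{[\eta]}_\B$ respects the cylinder reflections, i.e. tracking the parities of $s_L$, $\per_L$ and of the image–charge exponentials $e^{-ik_2(z_2+z_2')}$ in~\eqref{eq:ga_def} under $z_2\mapsto M+1-z_2$, while using that $\cQ_M(k_1)$ and $N_M$ are even in $k_2$ so that the momentum sum is unchanged — and this has already been carried out in the text preceding the statement, so the proof may simply cite it.
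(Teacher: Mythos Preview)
Your proposal is correct and follows the same approach as the paper: the paper's proof is just the clause ``Applying the differences and integrals in the relevant definitions we see that'' preceding the lemma, which is exactly your argument of transporting the already-established $\eta$-wise symmetries of $\fg^{[\eta]}_\cc$, $\fg^{[\eta]}_\infty$, $\fg^{[\eta]}_\B$ through the $\eta$-integrals \eqref{eq:gN_def}--\eqref{eq:ghbar_def} and the subtraction defining $\fg^{(h)}_\E$. Your write-up is simply more explicit about the linearity and $\eta$-independence that make this work.
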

For $\fg^{(1)}_\cc \equiv \fg_m$ 
we have
	\begin{equation}
  \fg_m( z, z')
		=
  -
		\begin{bmatrix}
		g_{m; --}  
			&
		g_{m; -+}  
			\\
		g_{m; +-}  
			&
		g_{m; ++}  
		\end{bmatrix}
	\left( \theta_1 z, \theta_1 z' \right)
		=
		\begin{bmatrix}
		-g_{m; + +}  
			&
		g_{m; + -}  
			\\
		g_{m; - +}  
			&
		-g_{m; - -}  
		\end{bmatrix}
	\left( \theta_2 z, \theta_2 z' \right)
  \label{eq:g_m_symmetries}
	\end{equation}
which similarly extends to $\fg^{(1)}_\infty$, $\fg^{(1)}_\B$, and $\fg_\E^{(1)}$.

\section{Grassmann representation of the generating function}\label{sec:gen}

In this section we rewrite the generating function of the energy correlations for the Ising model \eqref{eq:HM} with finite range interactions as an {\it 
interacting} Grassmann integral, and we set the stage for the multiscale integration thereof, to be discussed in the following sections.
The estimates in this and in the following section are uniform for $J_1/J_2, L/M\in K$ and $t_1,t_2\in K'$, but may depend upon the choice of $K,K'$, 
with $K,K'$ the compact sets introduced before the statement of Theorem \ref{prop:main}. 
As anticipated there, we will think of $K,K'$ as being fixed once and for all and, for simplicity, we will not track the 
dependence upon these sets in the constants $C,C', \ldots, c,c', \ldots, \kappa, \kappa', \ldots$, appearing below. Unless otherwise stated, 
the values of these constants may change from line to line. 

\medskip

Our goal is to show that the generating function \eqref{eq:Ising_gen} of the energy correlations can be replaced, for the purpose of computing multipoint 
energy correlations at distinct edges, by the Grassmann generating function \eqref{eq:ZA_goal}. 
Our main representation result for the Taylor coefficients at $\bs A=\bs 0$ of the logarithm of $Z_\Lambda({\bs A})$, analogous to \cite[Proposition 1]{GGM}, is the following. 
\begin{proposition}
	For any translation invariant interaction $V$ of finite range, there exists $\lambda_0=\lambda_0 (V)$ 
	such that, for any $|\lambda| \le \lambda_0 (V)$, 
	$$ 
	\left. \frac{\partial}{\partial A_{x_1}} \dots \frac{\partial}{\partial A_{x_n}} \log Z_\Lambda({\bs A}) \right|_{\bs A = \bs 0} =  
	\left. \frac{\partial}{\partial A_{x_1}} \dots \frac{\partial}{\partial A_{x_n}} \log \wt \Xi_\Lambda(\bs A) \right|_{\bs A = \bs 0} 
	$$
	as long as $ n \ge 2$ and the $x_j$ are distinct, where
\begin{equation}
	\wt \Xi_\Lambda ({\bs A}):= e^{\mathcal W(\bs A)}\int \cD \Phi \; e^{\cS_{t_1,t_2}(\Phi) 
		+ \cV(\Phi, \bs A)}
	\label{eq:Ising2Grassman}
			\end{equation}
where $\cS_{t_1,t_2}$ was defined in \eqref{eq:cS_def} and, recalling that $E_x$ is the Grassmann binomial defined after \eqref{eq:cS_def}:
	\begin{enumerate}
	\item 		\label{it:B_cyl_base}
	\begin{equation}
\begin{split}			\cV (\Phi, \bs A)
			&=
			\sum_{x\in \fB_\Lambda}(1-t_{j(x)}^2)E_x A_x 
			+\sum_{\substack{ X,Y \subset \fB_\Lambda \\ X \neq \emptyset}}
			W_\Lambda^{{\rm int}} (X,Y)\prod_{x \in X} E_x	\prod_{x \in Y} A_x
			\\ &\equiv \cB^{\free}(\Phi,\bs A)+\cV^{\rm int}(\Phi,\bs A)
		\end{split}
		\label{eq:B_expansion_bis}
		\end{equation}
		where, for any $n\in \mathbb N$, $m\in\mathbb N_0$, and suitable positive constants $C,c,\kappa$, 
		\begin{equation}
			\sup_{x_0\in\fB_\Lambda}\sum_{\substack{X,Y\subset \fB_\Lambda:\,  x_0\ni X\\ |X|=n,\, |Y|=m}}|W^{\rm int}_\Lambda(X,Y)| e^{c \delta( X \cup Y)}	\le
			C^{m+n}|\lambda|^{\max (1,\kappa (m+n))}
			\label{eq:B_base_decay}
		\end{equation}
		and $\delta(X)$, for $X \subset \fB_\Lambda$, denotes the size of the smallest $Z \supset X$ which is the edge set of a connected subgraph of $\fG_\Lambda$.
\item 	\begin{equation}
			\cW (\bs A)
			=\sum_{\substack{ Y \subset \fB_\Lambda \\ |Y|\ge 2}}
			w_\Lambda(Y)	\prod_{x \in Y} A_x
			\label{eq:B_expansion_tris}
		\end{equation}
		where, for any $m\in\mathbb N$, and the same $C,c,\kappa$ as above, 
		\begin{equation}
			\sup_{x_0\in\fB_\Lambda}\sum_{\substack{Y\subset \fB_\Lambda:\\  x_0\ni Y,\, |Y|=m}}|w_\Lambda(Y)| e^{c \delta(Y)}\le
			C^{m}|\lambda|^{\max (1,\kappa m)}.
			\label{eq:B_base_decay2}
		\end{equation}
		\label{it:B_cyl_base2}
	\item $W^{\rm int}_\Lambda, w_\Lambda$, considered as functions of $\lambda$, $t_1$, and $t_2$, can be analytically continued to any complex $\lambda, t_1, t_2$ such that $|\lambda| \le \lambda_0$ and $|t_1|, |t_2|\in K'$, with 
	$K'$ the same compact set introduced before the statement of Theorem \ref{prop:main}, and the analytic continuations satisfies the same bounds above.
	\end{enumerate}
	\label{thm:Ising_to_Grassman}
\end{proposition}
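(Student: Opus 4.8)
The plan is to reduce the Ising partition function with source terms to a Grassmann integral by the standard high-temperature/Pfaffian route, then expand the non-quadratic part of the action (arising from the finite-range interaction $V$ and from products of $E_x$'s) in a convergent polymer expansion. First, I would write $Z_\Lambda(\bs A)$ as a sum over spin configurations and, for each bond $x$, use the identity $e^{(\beta J_{j(x)}+A_x)\epsilon_x}=\cosh(\beta J_{j(x)}+A_x)\bigl(1+\epsilon_x\tanh(\beta J_{j(x)}+A_x)\bigr)$; expanding $\tanh(\beta J_{j(x)}+A_x)$ to first order around $A_x=0$ (the higher orders are irrelevant since we only need $\partial_{A_{x_1}}\cdots\partial_{A_{x_n}}\log Z_\Lambda|_{\bs A=\bs0}$ with distinct $x_j$), one finds $\tanh(\beta J_{j(x)}+A_x)=t_{j(x)}+(1-t_{j(x)}^2)A_x+O(A_x^2)$. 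This is the source of the coefficient $(1-t_{j(x)}^2)$ attached to $E_x A_x$ in \eqref{eq:B_expansion_bis}. The interaction term $e^{\beta\lambda\sum_X V(X)\sigma_X}$ is likewise Taylor-expanded in $\lambda$; because each $\sigma_X$ is a product of an even number of spins (translationally invariant even interaction), each monomial that survives the spin sum can be rewritten, using $\sigma_z^2=1$, as a product of a bounded set of $\epsilon_x$'s, i.e.\ an even subset of bonds, hence as a Grassmann monomial $\prod_{x\in X}E_x$. The spin sum then collapses exactly as in the classical high-temperature expansion to the even-subgraph (Pfaffian) structure on $G_\Lambda$, which by \cite[Chapter VI.3]{MW} (and the Grassmann representation recalled in \eqref{eq:Z_pfaffian_grassmann}--\eqref{eq:cS_def}) equals $\int\cD\Phi\,e^{\cS_{t_1,t_2}(\Phi)+(\text{polynomial in }\Phi,\bs A)}$ up to the $\bs A$-independent prefactor, which is absorbed into $e^{\cW(\bs A)}$ together with the $\log\cosh$ terms.

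Second, I would organize the non-Gaussian part of the action into the form $\cV(\Phi,\bs A)=\cB^{\free}+\cV^{\rm int}$ by collecting: (i) the linear-in-$A$, linear-in-$E$ term with no $\lambda$, which is $\cB^{\free}$; and (ii) everything involving at least one factor of $\lambda$, which gives the kernels $W^{\rm int}_\Lambda(X,Y)$ (here $Y$ records which $A_x$'s appear, $X$ which $E_x$'s appear). The key quantitative input is the convergence bound \eqref{eq:B_base_decay}. This follows from a Mayer/polymer (cluster) expansion: the contributions to $W^{\rm int}_\Lambda(X,Y)$ come from finite collections of interaction terms $V(X_i)$ and bond factors whose supports form a connected set containing $X\cup Y$; since $V$ has finite range and $|\lambda|$ is small, the number of such collections of total ``size'' $s$ grows at most exponentially in $s$ (finite range $\Rightarrow$ bounded branching), while each carries a factor $|\lambda|$ per interaction vertex and is bounded since $t_1,t_2\in K'$ are bounded away from $1$. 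A standard tree-graph/Kirkwood--Salsburg argument then yields a bound of the form $C^{m+n}|\lambda|^{\max(1,\kappa(m+n))}e^{-c\delta(X\cup Y)}$: the $\max(1,\kappa(m+n))$ power of $\lambda$ reflects that any term with many external bonds $m+n$ must involve at least $\sim\kappa(m+n)$ interaction vertices to connect them, and the exponential decay in the connected tree-size $\delta(X\cup Y)$ comes from assigning part of each vertex's smallness to a spanning-tree factor. The bound \eqref{eq:B_base_decay2} for $w_\Lambda$ is obtained in exactly the same way, applied to the $\Phi$-independent (vacuum) part, i.e.\ the logarithm of the $\bs A$-dependent prefactor after integrating out $\Phi$ in the purely-source sector; multilinearity in $\bs A$ and $|Y|\ge2$ are automatic from the construction.

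Third, for the analyticity statement (item 3), I would note that the entire construction expresses $W^{\rm int}_\Lambda(X,Y)$ and $w_\Lambda(Y)$ as absolutely convergent power series in $\lambda$ whose coefficients are \emph{polynomials} in $t_1,t_2$ (via $\tanh$-expansions and the rational structure of the Grassmann Gaussian integral — note $\Pf A$ and $A^{-1}$ are polynomial, resp.\ rational with nonvanishing denominator for $|t_1|,|t_2|\in K'$, in these variables). Since the bounds \eqref{eq:B_base_decay}--\eqref{eq:B_base_decay2} are obtained by term-by-term majorization that is insensitive to phases of $\lambda$ and uses only $|t_1|,|t_2|\in K'$, the same series converge and obey the same bounds for complex $\lambda,t_1,t_2$ in the stated domain, giving the desired analytic continuation by Weierstrass/Vitali. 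Finally, I would remark that the restriction $n\ge2$ and distinct $x_j$ is what allows dropping the $O(A_x^2)$ pieces and the $\bs A$-dependence of the prefactor that is not multilinear: differentiating $\log Z_\Lambda$ at $\bs A=\bs0$ in $n\ge2$ distinct directions kills any term that is quadratic or higher in a single $A_{x_j}$, and the remaining identity with $\log\wt\Xi_\Lambda$ follows because the two generating functions differ only by such discarded terms (this is the point where one invokes the argument behind \cite[Proposition 1]{GGM}).

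I expect the main obstacle to be the combinatorial bookkeeping in the polymer expansion that produces the precise power $|\lambda|^{\max(1,\kappa(m+n))}$ together with the exponential decay $e^{-c\delta(X\cup Y)}$ uniformly in the cylinder $\Lambda$: one must carefully define the polymers (connected clusters of interaction supports glued along shared bonds/vertices), verify a Kotecký--Preiss or tree-graph convergence criterion for small $|\lambda|$, and track how external legs (the bonds in $X$ and $Y$) force a minimal number of interaction vertices and a spanning connected skeleton of size $\delta(X\cup Y)$. The periodic boundary in the horizontal direction and open boundary in the vertical direction do not cause real trouble here — the interaction is obtained by periodization of a fixed finite-range $\mathbb Z^2$ potential, so all local combinatorial estimates are $\Lambda$-independent — but one must be a little careful that $V$ near the vertical boundary is simply the restriction of the bulk $V$ (no new boundary terms), which is guaranteed by the hypotheses on $V$ stated after \eqref{eq:HM}.
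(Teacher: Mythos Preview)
Your proposal is correct and follows essentially the same route as the paper, which explicitly defers to \cite[Proposition~1]{GGM}: high-temperature expansion, representation of the even interaction as products of $\epsilon_x$'s (the ``strings'' of \cite{GGM}), the resulting multipolygon Mayer expansion with truncated coefficients $\varphi^T$, and the Grassmann/Pfaffian identification. One small imprecision worth flagging: $\cW(\bs A)$ is not produced by ``integrating out $\Phi$ in the purely-source sector'' but is simply the $X=\emptyset$ piece of the same polymer expansion, $w_\Lambda(Y)=W_\Lambda(\emptyset,Y)$ in the paper's Eq.~\eqref{sa2.20}, so its bound \eqref{eq:B_base_decay2} is the same cluster-expansion estimate as \eqref{eq:B_base_decay}, not a separate argument.
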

\begin{proof}
The proof is basically the same as \cite[Proposition~1]{GGM}, so we refer to that for the details. 
Note that the restriction in~\cite{GGM} to a pair interaction is unimportant, since any even interaction of the form $V(X)\sigma_X$ with $X\subset \Lambda$ can always be written 
as a product of factors of $\epsilon_x$, in analogy with the rewriting $\sigma_z\sigma_z'=\frac12(U_{z,z'}+D_{z,z'})$ discussed after \cite[Equation~(2.8)]{GGM}; the only 
difference in the current setting is that the `strings' graphically associated with $U_{z,z'}$ and $D_{z,z'}$, see~\cite[Figure~3]{GGM}, are replaced by other figures, whose specific 
shape depends on $V(X)$ and that one should use $t_1$ or $t_2$ in place of $t$ as appropriate\footnote{We take the occasion to 
point out that \cite[Figure~4]{GGM} contains a mistake: the string $S_2$ depicted there is not allowed by the conventions of \cite{GGM}: the shape $S_2$ can only be obtained 
as the union of two appropriate strings.}. Note that the set of strings associated with a pair interaction, or the set of more general figures associated with a generic even 
interaction, is, or can be chosen to be, invariant under the basic symmetries of the model, namely horizontal translations, and horizontal and vertical reflections; therefore, in the 
following, we shall assume that such a graphical representation is invariant under these symmetries. 

By proceeding as in \cite{GGM} we get the analogue of \cite[Eq.(2.20)]{GGM}, namely 
	\begin{equation}
		\mathcal W(\bs A)+\cV(\Phi,{\bs A})	=
		\sum_{\Gamma \in C_\Lambda}
		\varphi^T (\Gamma)
		\prod_{\gamma \in \Gamma}
		\zeta_G (\gamma)
		=
		\sum_{X,Y \subset \fB_\Lambda}W_\Lambda (X,Y)\prod_{x \in X} E_x\prod_{x \in Y} A_x\label{sa2.20}
	\end{equation}
	where $C_\Lambda$ is the set of multipolygons in $\Lambda$, $\varphi^T$ is the Mayer's coefficient, and $\zeta_G$ is the activity 
	of the polygon $\gamma$, which is a polynomial in the $E_x,A_x$ for edges $x$ in $\gamma$ (for more details about the notation and more precise definitions, we refer to \cite{GGM}).
The terms with $X=\emptyset$ contribute to $\cW(\bs A)$ (that is, we let $w_\Lambda(Y) := W_\Lambda (\emptyset,Y)$), 
while those with $X\neq \emptyset$ contribute to $\cV(\Phi,{\bs A})$
(note that, for the purpose 
of computing the derivatives of $\log\wt\Xi_\Lambda(\bs A)$ of order 2 or more, the terms with $|X|=0$ and $|Y|=0,1$ can be dropped from the definition of $\cW(\bs A)$, 
and we do so). The explicit computation of the term independent of $\lambda$, which has $X=Y\in\fB$, 
leads to the decomposition in \cref{eq:B_expansion_bis}. The bounds \cref{eq:B_base_decay,eq:B_base_decay2} follow directly 
from the bounds in \cite{GGM}, see e.g. \cite[Eq.(2.25)]{GGM} and following discussion. 

Finally, the analyticity property (which was used implicitly in \cite{GGM}) follows by noting that we have defined all of the quantities of interest as uniformly absolutely convergent sums of terms which are themselves analytic functions of $\lambda, t_1, t_2$ as long as the absolute values of these parameters belong to the appropriate intervals. \end{proof}

With a view towards the analysis of finite size effects in \cite{AGG_part1} (and, in particular, towards the claims done in \cite[Section \xref{sec:2.2}]{AGG_part1}
after the statement of \cite[Proposition \xref{prop:repr}]{AGG_part1}), it is convenient to decompose the kernel $W^{\rm int}_\Lambda$ 
of $\cV^{\rm int}(\Phi,\bs A)$ into a `bulk' plus an `edge' part. 
This requires a bit of notation. Note that any subset $X$ of $\Lambda$
with horizontal diameter smaller than $L/2$ can be identified (non uniquely, of course) with a subset of $\mathbb Z^2$ with the same diameter and `shape' as $X$; we call $X_\infty\subset \mathbb Z^2$ one of these arbitrarily chosen representatives\footnote{
For instance, 
given $X=\{z_1,\ldots, z_n\}$, recalling that $(z_i)_1\in\{1,2,\ldots, L\}$ and $(z_i)_2\in\{1,2,\ldots, M\}$, we can let $X_\infty=\{y_1,\ldots,y_n\}$ be the set of points such that: (1) 
the vertical coordinates are the same as those of $\bs z$, i.e., $(y_i)_2=(z_i)_2$, $\forall i=1,\ldots, n$; (2)
the horizontal coordinate of $y_1$ is the same as $z_1$, i.e., $(y_1)_1=(z_1)_1$; (3) all the other horizontal coordinates are the same modulo $L$, i.e., $(y_i)_1=(z_i)_1$ mod $L$, $\forall i=2,\ldots, n$; (4) 
the specific values of $(y_i)_2$ for $i\ge 2$ are chosen in such a way that the horizontal distances between the corresponding pairs in $X$ and $X_\infty$ 
are the same, if measured on the cylinder $\Lambda$ or on $\mathbb Z^2$, respectively.}
of $X$, and we shall use an analogous convention for the subsets of $\fB_\Lambda$ with horizontal diameter smaller than $L/2$. 

\begin{lemma} 
	\label{lem:WB+WE}
	Under the same assumptions of Proposition \ref{thm:Ising_to_Grassman}, the kernel $W^{\rm int}_\Lambda$ of $\cV^{\rm int}(\Phi, \bs A)$ can be decomposed as
	\begin{eqnarray} W^{\rm int}_\Lambda(X,Y)&=&W_\infty^{\rm int}(X_\infty,Y_\infty)\, \mathds 1(\diam_1(X\cup Y)\le L/3)+W_\E^{\rm int}(X,Y) \label{WB+WE}\\ &=:& W_\B^{\rm int}(X,Y)+W_\E^{\rm int}(X,Y),\nonumber \end{eqnarray}
where: $\diam_1$ is the horizontal diameter on the cylinder $\Lambda$; $X_\infty, Y_\infty
\subset \mathfrak B:=\mathfrak B_{\mathbb Z^2}$ are two representatives of $X,Y$, respectively, such that $X_\infty\cup Y_\infty$ is a representative of $X\cup Y$, in the sense defined before the statement of the lemma; 
$W_\infty^{\rm int}$ is a function, independent of $L,M$, invariant under translations and under reflections about either coordinate axis, which satisfies the same 
	weighted $L^1$ bound \eqref{eq:B_base_decay} as $W_\Lambda^{\rm int}$. Moreover, for any $n\in\mathbb N$ and $m\in\mathbb N_0$, $W_\E^{\rm int}$ satisfies 
		\begin{equation}
			\frac1{L}\sum_{\substack{X,Y\subset \fB_\Lambda\\ |X|=n,\, |Y|=m}}|W_{\E}^{\rm int}(X,Y)| e^{c\delta_\E(X\cup Y)}\le C^{m+n} |\lambda|^{\max (1,\kappa(m+n))}, 
			\label{eq:WE_base_decay}
		\end{equation}
with the same $C,c,\kappa$ as in Proposition \ref{thm:Ising_to_Grassman}, where $\delta_\E(X)$ is the cardinality of the smallest connected subset of $\fB_\Lambda$ which 
includes $X$ and either touches the boundary of the cylinder\footnote{We say that $X\subset \fB_\Lambda$ `touches the boundary of the cylinder $\Lambda$', if
at least one of the edges in $X$ has an endpoint whose vertical coordinate is either equal to $1$ or to $M$.}, or its horizontal diameter is larger than $L/3$. \label{it:WE_base}
\end{lemma}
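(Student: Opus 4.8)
The plan is to read everything off the Mayer (polymer) expansion~\eqref{sa2.20} established in the proof of Proposition~\ref{thm:Ising_to_Grassman}. For $X\neq\emptyset$ it gives $W^{\rm int}_\Lambda(X,Y)=\sum_{\Gamma}\varphi^T(\Gamma)\prod_{\gamma\in\Gamma}\zeta_G(\gamma)$, the sum running over the connected clusters $\Gamma$ of multipolygons in $\Lambda$ whose support $\mathrm{supp}\,\Gamma:=\bigcup_{\gamma\in\Gamma}\gamma$ — a connected edge set of $\fG_\Lambda$ — contains $X\cup Y$ and supplies exactly the factors $\prod_{x\in X}E_x\prod_{x\in Y}A_x$. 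The first step is to define $W^{\rm int}_\infty$ by the same expansion carried out on $\mathbb Z^2$ in place of $\Lambda$ (manifestly $L,M$-independent); it is invariant under translations and, because the graphical representation of $V$ has been chosen symmetric under horizontal and vertical reflections (cf.\ the proof of Proposition~\ref{thm:Ising_to_Grassman}), under reflections about the axes, and since the convergence estimates behind~\eqref{eq:B_base_decay} are uniform in $L,M$ they pass to the limit, so $W^{\rm int}_\infty$ obeys the same weighted $L^1$ bound. The decomposition claimed in the lemma is then simply $W^{\rm int}_\Lambda=W^{\rm int}_\B+W^{\rm int}_\E$ with $W^{\rm int}_\B(X,Y):=W^{\rm int}_\infty(X_\infty,Y_\infty)\,\mathds 1(\diam_1(X\cup Y)\le L/3)$.

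The core step is the bulk matching. Unwrapping the cylinder produces a bijection between the $\Lambda$-clusters $\Gamma$ with $\diam_1(\mathrm{supp}\,\Gamma)<L/2$ and the $\mathbb Z^2$-clusters $\Gamma_\infty$ with $\diam_1(\mathrm{supp}\,\Gamma_\infty)<L/2$ and all vertices at vertical levels in $[1,M]$, under which the Mayer coefficient and the activities are unchanged because $\fG_\Lambda$ and $\mathbb Z^2$ coincide, with the same edge weights $t_1,t_2$, on any such region; and, if $\diam_1(X\cup Y)\le L/3$ and $X_\infty,Y_\infty$ are compatible representatives (with $X_\infty\cup Y_\infty$ a representative of $X\cup Y$), this bijection matches clusters pinned to $X\cup Y$ with clusters pinned to $X_\infty\cup Y_\infty$. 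Splitting both Mayer sums according to whether $\diam_1(\mathrm{supp}\,\Gamma)<L/2$ or $\ge L/2$, I would conclude that, for such $X,Y$, $W^{\rm int}_\E(X,Y)=W^{\rm int}_\Lambda(X,Y)-W^{\rm int}_\infty(X_\infty,Y_\infty)$ is the difference of the cluster terms coming from: (i)~$\Lambda$-clusters with $\diam_1(\mathrm{supp}\,\Gamma)\ge L/2$; and (ii)~$\mathbb Z^2$-clusters pinned to $X_\infty\cup Y_\infty$ with $\diam_1(\mathrm{supp}\,\Gamma)\ge L/2$, or with some vertex at a vertical level outside $[1,M]$. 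When $\diam_1(X\cup Y)>L/3$ one has $W^{\rm int}_\E=W^{\rm int}_\Lambda$ and every contributing $\Gamma$ already satisfies $\diam_1(\mathrm{supp}\,\Gamma)\ge\diam_1(X\cup Y)>L/3$.

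For the bound~\eqref{eq:WE_base_decay}, the key observation is that in each of these cases $\mathrm{supp}\,\Gamma$ — read on $\Lambda$ in (i) and in the case $\diam_1(X\cup Y)>L/3$, and through its representative in (ii) — is a connected edge set containing $X\cup Y$ which, by appending a path to the nearest vertical boundary or a horizontal segment of length $\le L/3$, completes to a connected witness for $\delta_\E(X\cup Y)$ of cardinality $\le 2\,|\mathrm{supp}\,\Gamma|$; hence $\delta_\E(X\cup Y)\le 2\,|\mathrm{supp}\,\Gamma|$ for every contributing cluster. Therefore $|W^{\rm int}_\E(X,Y)|\,e^{c\,\delta_\E(X\cup Y)}\le\sum_{\Gamma}|\varphi^T(\Gamma)|\prod|\zeta_G(\gamma)|\,e^{2c\,|\mathrm{supp}\,\Gamma|}$, and to sum this over $X,Y$ with $|X|=n,\ |Y|=m$ I would anchor each configuration at a canonical edge $p\in X\cup Y$: a contributing $\Gamma$ must reach $\partial\Lambda$ or span more than $L/3$ horizontally, so $|\mathrm{supp}\,\Gamma|\ge c_0\min\{L,\dist(p,\partial\Lambda)\}$, whence the very convergence estimate that yields~\eqref{eq:B_base_decay}, now applied with decay weight $e^{(2c+c_1)|\mathrm{supp}\,\Gamma|}$, bounds the $p$-anchored contribution by $e^{-c_1c_0\min\{L,\dist(p,\partial\Lambda)\}}\,C^{m+n}|\lambda|^{\max(1,\kappa(m+n))}$. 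Summing over the $\le L$ horizontal positions of $p$ and over its vertical level against this exponential leaves a convergent geometric series transversally, giving a total $\le L\cdot C^{m+n}|\lambda|^{\max(1,\kappa(m+n))}$ — the factor $L$, rather than the volume $LM$, being exactly the gain from the near-boundary suppression — so after dividing by $L$ we obtain~\eqref{eq:WE_base_decay}.

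I expect the main obstacle to be making the bulk/cluster matching of the second paragraph airtight: pinning down the unwrapping bijection $\Gamma\leftrightarrow\Gamma_\infty$ in the presence of the non-unique representatives $X_\infty$ and of the horizontal periodicity, keeping the two thresholds $L/3$ (for $X\cup Y$) and $L/2$ (for cluster supports) straight, and checking that every leftover cluster — on the $\Lambda$ side and on the $\mathbb Z^2$ side — genuinely reaches the boundary or winds around, which is what makes $|\mathrm{supp}\,\Gamma|$ comparable to $\delta_\E(X\cup Y)$. Everything beyond that is a repackaging of the Mayer-expansion estimates already invoked for Proposition~\ref{thm:Ising_to_Grassman}, combined with horizontal translation invariance and the near-boundary exponential decay, which jointly account for the $1/L$ in~\eqref{eq:WE_base_decay}.
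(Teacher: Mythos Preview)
Your proposal is correct and follows essentially the same approach as the paper: define $W^{\rm int}_\infty$ via the infinite-volume Mayer expansion (the paper does this as a limit, \eqref{eq:WZ2}, but immediately notes the equivalent representation \eqref{dumbo}), cancel matching multipolygons between $C_\Lambda$ and $C_\infty$, and bound the leftovers using $|\mathrm{supp}\,\Gamma|\gtrsim\delta_\E(X\cup Y)$. The paper's proof is terser --- it uses matching threshold $L/3$ rather than your $L/2$, asserts $|\mathrm{supp}\,\Gamma|\ge\delta_\E$ without your factor~$2$, and leaves the anchoring/summation step of your third paragraph entirely implicit (``from which the bound \eqref{eq:WE_base_decay} follows'') --- but the strategy is identical.
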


\begin{proof}
In order to obtain the decomposition \eqref{WB+WE}, let 
\begin{equation}\label{eq:WZ2} W_{\infty}^{\rm int}(X_\infty,Y_\infty):=\lim_{L,M\to\infty}W_\Lambda^{\rm int}(X_\infty+z_{L,M},Y_\infty+z_{L,M}),\end{equation} 
where $z_{L,M}=(L/2,\lfloor M/2\rfloor)$ and $X_\infty+z_{L,M}$ is the translate of $X_\infty$ by $z_{L,M}$;
note that this limit is well defined thanks to the 
fact that $W_\Lambda^{\rm int}$ can be expressed in terms of a sum like \cite[Eq.(2.21)]{GGM}, which is exponentially convergent, see \cite[Eq.(2.25)]{GGM}. The kernel 
$W^{\rm int}_\infty$ satisfies the analogue of \eqref{sa2.20}, that is 
\begin{equation}\label{dumbo}\sum_{\Gamma \in C_\infty}
		\varphi^T (\Gamma)
		\big(\prod_{\gamma \in \Gamma}
		\zeta_G (\gamma)-\prod_{\gamma \in \Gamma}\zeta_G(\gamma)\big|_{\lambda=0}\big)
		=	\sum_{X,Y \subset \fB}W_\infty^{\rm int}(X,Y)\prod_{x \in X} E_x\prod_{x \in Y} A_x,\end{equation}
 with $C_\infty$ the set of multipolygons on $\bZ^2$, and the activity $\zeta_G(\gamma)$ the same as the one in \eqref{sa2.20}, provided that $\gamma$ is considered 
 now as a polygon in $\bZ^2$, rather than in $\Lambda$ (note that such identification is possible as long as $\gamma$ does not wrap around the cylinder). 
 Moreover, $W^{\rm int}_\infty$ is translation invariant, and, letting
	\begin{equation}
		W^{\rm int}_\E (X,Y) := W^{\rm int}_\Lambda (X,Y) - \ind(\diam_1(X \cup Y)\le L/3) W^{\rm int}_\infty (X_\infty,Y_\infty),
	\end{equation}
the contribution to the first term in the right side of all multipolygons in $C_\Lambda$ with horizontal diameter $\le L/3$ cancels with their counterparts in $C_\infty$ from 
the second term in the right side. Each of the remaining multipolygons either comes from the first term in the right side and involves a multipolygon in $C_\Lambda$ 
whose support has horizontal diameter larger than $L/3$, or comes from the second term in the right side and involves a multipolygon in $C_\infty$ whose support contains 
a set $Z_\infty$ which is the representative (in the sense explained before the statement of the lemma) of a connected subset $Z_\Lambda$ of $\fB_\Lambda$ that contains 
$X \cup Y$ and touches the boundary of $\Lambda$; in either case the number of edges in the support of such a multipolygon is at least $\delta_\E(X \cup Y)$, from 
which the bound \eqref{eq:WE_base_decay} follows.
\end{proof}

In the following, we will wish to work in the $\phi,\xi$ variables introduces in \cref{section_vertical_diagonalization}; applying the change of variables \eqref{eq:sc_2}, with some abuse of notation we rewrite $\cV(\Phi,\bs A)$ in \eqref{eq:B_expansion_bis} as:
\begin{equation}
		\begin{split}
			\cV(\phi,\xi,\bs A)
			=&
			\sum_{x \in \fB_\Lambda}
			\sum_{(\bs \omega, \bs z) \in  \cO^2 \times \Lambda^2}
			B^\free_\Lambda((\bs \omega, \bs z), x)
			\phi(\bs \omega, \bs z)
			A_x
			\\ & \quad
			+
			\sum_{\substack{n\in2\mathbb N\\ m\in\mathbb N_0}}
			\sum_{(\bs \omega, \bs z) \in  \cO^n \times \Lambda^n}
			\sum_{\bs x\in \fB_\Lambda^m} 
			W^{\rm int}_\Lambda((\bs \omega,\bs z),\bs x) \phi(\bs\omega,\bs z) \bs A(\bs x)
			\\ =:& \ 
			\cB^{\free}(\phi,\xi,\bs A)+\cV^{\rm int}(\phi,\xi,\bs A),
		\end{split}
	\label{expanV1}
\end{equation}
where $\mathbb N$ and $\mathbb N_0$ are the sets of positive ad non-negative integers, respectively,
$\cO := \left\{ 1,-1,i,-i \right\}$, and, for $\bs \omega \in \cO^n$, $\bs z \in \Lambda$, we denote 
\begin{equation}
	\phi(\bs \omega, \bs z)
	=
	\prod_{j=1}^n
	\phi_{\omega_j, z_j}
\end{equation}
with $\phi_{\pm i,z} = \xi_{\pm, z}$, and similarly $\bs A(\bs x) := \prod_{j=1}^m A_{x_j}$ (for $\bs x=\emptyset$, we interperet $\bs A(\emptyset)=1$). 
The decay properties of $s_{\pm}$ noted after \cref{spmpoisson} together with  \cref{eq:B_base_decay} imply a similar decay property for the new coefficients:
\begin{equation}
	\sup_{\bs \omega \in \cO^2}
	\sup_{x \in \fB_\Lambda}
	\sum_{\bs z \in \Lambda^2}
	e^{c  \delta(\bs z, x)}
	\big|B_\Lambda^{{\free}}( (\bs\omega,\bs z),x)\big|\le C, 
	\label{eq:W_free_bound}
\end{equation}
and
\begin{equation}
\begin{split} & 	\sup_{\bs\omega\in\cO^n}\sup_{z_1\in\Lambda}\sum_{z_2,\ldots, z_{n} \in \Lambda}
	e^{c  \delta(\bs z)}
	\big|W_\Lambda^{{\rm int}}( (\bs\omega,\bs z),\emptyset)\big|\le C^{n}|\lambda|^{\max (1, \kappa n)}\\
	&
	\sup_{\bs\omega\in\cO^n}\sup_{x_1\in\fB_\Lambda}\sum_{x_2,\ldots, x_{m} \in \fB_\Lambda}\sum_{\bs z\in\Lambda^n}	
	e^{c  \delta(\bs z,\bs x)}
	\big|W_\Lambda^{{\rm int}}( (\bs\omega,\bs z),\bs x)\big|\le C^{n+m}|\lambda|^{\max (1, \kappa(n+m))}.
	\end{split}\label{eq:WL1bound}
\end{equation}
Note that, with this rewriting in terms of the $\phi,\xi$ variables, recalling  that $\cS_{t_1,t_2}(\Phi)=\cS_m(\xi)+\cS_c(\phi)$, see \eqref{eq:2.1.12}-\eqref{eq:2.1.13}, 
and denoting $P_c(\cD \phi):=\cD \phi\, e^{\cS_c(\phi)}/\Pf(A_c)$, $P_m(\cD \xi):=\cD \xi\, e^{\cS_m(\xi)}/\Pf(A_m)$ (here $A_c$ and $A_m$ are the two
$2|\Lambda|\times 2|\Lambda|$ anti-symmetric matrices associated with the Grassmann quadratic forms $\cS_c(\phi)$ and $\cS_m(\xi)$, respectively),
the Grassmann generating function $\widetilde \Xi_\Lambda(\bs A)$ in \eqref{eq:Ising2Grassman} can be rewritten as
\begin{equation}\label{3.16} \wt \Xi_\Lambda(\bs A)\propto e^{\cW(\bs A)}\int P_c(\cD\phi)\int P_m(\cD\xi) e^{\cV(\phi,\xi,\bs A)},\end{equation}
where $\propto$ means `up to a multiplicative constant independent of $\bs A$'. 
In view of these rewritings, Proposition \ref{thm:Ising_to_Grassman} implies  
\cite[Proposition \xref{prop:repr}]{AGG_part1} as an immediate corollary. 

\medskip

Of course, the bulk-edge decomposition of Lemma \ref{it:WE_base} implies an analogous decomposition for the kernel of $\cV^{\rm int}(\phi,\xi,\bs A)$, which reads as follows: 
\begin{equation} 
	\begin{split}
		W_\Lambda^{\rm int}((\bs\omega,\bs z),\bs x) 
		&= \, 
		(-1)^{\alpha(\bs z)}\,\mathds 1(\diam_1(\bs z,\bs x)\le L/3)
		\,W_\infty^{\rm int}((\bs\omega,\bs z_\infty),\bs x_\infty)
		+W_\E^{\rm int}((\bs\omega,\bs z),\bs x)
		\\ 
		& =:
		W_\B^{\rm int}((\bs\omega,\bs z),\bs x)+W_\E^{\rm int}((\bs\omega,\bs z),\bs x),
	\end{split}
	\label{WB+WE.2}
\end{equation}
where,  for any $\bs z$ with $\diam_1(\bs z)\le L/3$, 
\begin{equation}
	\alpha(\bs z)= 	\begin{cases} \#\{z_i\in\bs z\ : \ (z_i)_1\le L/3\}, & \text{if}\quad \max_{z_i,z_j\in\bs z}\{(z_i)_1-(z_j)_1\}\ge 2L/3,\\
	0 \,, & \text{otherwise.} \end{cases}\label{eq:alphaz}
\end{equation}
and 
\begin{equation}\label{eq:WZ2.2} W_{\infty}^{\rm int}((\bs\omega,\bs z),\bs x):=\lim_{L,M\to\infty}W_\Lambda^{\rm int}((\bs \omega,\bs z+z_{L,M}),\bs x+z_{L,M})
.\end{equation}
The factor $(-1)^{\alpha(\bs z)}$ in front of the first term in the right side of \eqref{WB+WE.2}, 
in light of the antiperiodicity of the $\phi,\xi$ fields, guarantees that $W_{\infty}^{\rm int}$ is translation invariant (in both coordinate directions), 
and that both $W_\B^{\rm int}$ and $W_\E^{\rm int}$ are invariant under simultaneous translations of $\bs z$ and $\bs x$ in the horizontal direction, with anti-periodic and periodic boundary conditions in $\bs z$ and $\bs x$, respectively. In terms of this new notation, Eq.\eqref{eq:WE_base_decay} implies that, for any $n\in\mathbb N$ and $m\in\mathbb N_0$, 
		\begin{equation}
			\frac1{L}\sup_{\bs \omega\in\mathcal O^n}\sum_{\substack{\bs z\in\Lambda^n\\ \bs x\in\fB_\Lambda^m}} |W_{\E}^{\rm int}((\bs\omega,\bs z),\bs x)| e^{c\delta_\E(\bs z,\bs x)}\le C^{m+n} |\lambda|^{\max (1,\kappa(m+n))}, 
			\label{eq:WE_base_decayrenzy}
		\end{equation}
with $\delta_\E(\bs z,\bs x)$ is the `edge' tree distance of $(\bs z,\bs x)$, i.e., the cardinality of the smallest connected subset of $\fB_\Lambda$ that includes $\bs x$, touches the 
points of $\bs z$ and either touches the boundary of the cylinder or it has horizontal diameter larger than $L/3$. Of course, $B^\free_\Lambda$ admits a similar bulk-edge decomposition: in analogy with \eqref{WB+WE}, $B^\free_\Lambda=B^\free_\B+B^\free_\E$ with 
		\begin{equation}
			\frac1{L}\sup_{\bs \omega\in\mathcal O^2}\sum_{\substack{\bs z\in\Lambda^2\\ x\in\fB_\Lambda}} |B_{\E}^{\rm free}(\bs\omega,\bs z,x)| e^{c\delta_\E(\bs z,x)}\le C.			\label{eq:WE_base_decayletta}
		\end{equation}
Before concluding this section, let us comment on the connection between \eqref{3.16} and \eqref{eq:ZA_goal}. Fix once and for all a neighbourhood $U\subset \mathbb R$ of $1$ not containing $0$; say, for definiteness, $U:=\{z\in\mathbb R: |z-1|\le 1/2\}$. For any $Z\in U$ and $t_1^*\in K'$, 
we let $t_2^* := (1- t_1^*)/(1+t_1^*)$ and let $\cS^*_c(\phi)=\frac12(\phi,A_c^*\phi)$ (resp.\ $\cS_m^*(\xi)=\frac12(\xi,A^*_m\xi)$) be obtained from $\cS_c$ (resp.\ $\cS_m$) by 
replacing $t_1,t_2$ with $t_1^*,t_2^*$ in \cref{eq:2.1.13} (resp.\ \cref{eq:2.1.12}). We also let $P_c^*(\cD \phi):= \cD \phi e^{\cS_c^*(\phi)}/\Pf(A^*_c)$, 
$P_m^* (\cD \xi):=\cD \xi e^{\cS_m^* (\xi)}/\Pf(A^*_m)$. Given these definitions, in \eqref{3.16} we 
first rescale the $\phi$ and $\xi$ variables by $Z^{-1/2}$, then multiply and divide the Grassmann integrand by $e^{\cS^*_c(\phi)+\cS^*_m(\xi)}$, thus 
getting \begin{equation}
	\wt \Xi (\bs A)
	\propto
	e^{\cW(\bs A)}
	\int P_c^* (\cD \phi) P_m^* (\cD\xi) 
	e^{\cV^{(1)}(\phi,\xi,\bs A)}
	\label{eq:Xi_def}
\end{equation}
with \begin{equation}
	\cV^{(1)}(\phi,\xi,\bs A)
	\begin{aligned}[t]
		& :=
		Z^{-1} \cS_c (\phi)
		-
		\cS_c^* (\phi)
		+
		Z^{-1} \cS_m (\xi)
		-
		\cS_m^* (\xi)
		+
		\cV (Z^{-1/2} \phi, Z^{-1/2}\xi, \bs A)
		\\
		& =:
	 \calN_c (\phi)
		+
		\calN_m (\xi)
		+
		\cV (Z^{-1/2} \phi, Z^{-1/2}\xi, \bs A)
		.
	\end{aligned}
	\label{eq:cV_1_def}
\end{equation}
This proves \eqref{eq:ZA_goal} and puts us in the position of setting the multiscale computation of the sequence of effective potentials, 
whose infinite plane counterparts are constructed and bounded in the next section. For later reference, we note that, in light of \eqref{eq:2.1.12}, \eqref{eq:2.1.13}, 
\eqref{expanV1}, $\cV^{(1)}(\phi,\xi,\bs A)$ can be written as:
\begin{equation}\cV^{(1)}(\phi,\xi,\bs A)= \sum_{\substack{n\in2\mathbb N,\\ m\in\mathbb N_0}}\sum_{(\bs \omega,\bs z)\in\mathcal O^n\times\Lambda^n}\sum_{\bs x\in\fB_\Lambda^m} W^{(1)}_\Lambda((\bs\omega,\bs z),\bs x)\phi(\bs\omega,\bs z)\bs A(\bs x),\label{AZen}\end{equation}
for an appropriate kernel, which inherits its properties from those of $\cS_c$, $\cS_m$ and $\cV$. With no loss of generality, we 
can assume that $W^{(1)}_\Lambda$ is 
anti-symmetric under simultaneous permutations of $\bs \omega$ and $\bs z$, 
symmetric under permutations of $\bs x$, invariant under simultaneous translations of $\bs z$ and $\bs x$ in the horizontal direction (with anti-periodic and periodic boundary conditions in $\bs z$ and $\bs x$, 
respectively), invariant under the reflection symmetries induced by the transformations $A_x\to A_{\theta_l x}$ and $\phi_{\omega,z}\to \Theta_l\phi_{\omega,z}$, see 
\eqref{eq:symm_horizontal_phixi}-\eqref{eq:symm_vertical_phixi}. 
From now on, with some abuse of notation, given $\bs\omega=(\omega_1,\ldots,\omega_n)\in\mathcal O^n$ and $\bs z=(z_1,\ldots, z_n)\in\Lambda^n$, we shall identify the pair $(\bs\omega,\bs z)$ with the $n$-ple $((\omega_1,z_1),\ldots,(\omega_n,z_n))\in(\mathcal O\times\Lambda)^n$. 

\section{The renormalized expansion in the full-plane limit}\label{sec:renexp}

In this section we construct the sequence of effective potentials (see the last part of Section \ref{sec:intro}) in the infinite volume limit, and derive weighted $L^1$ bounds 
for their kernels, in a form appropriate for the subsequent generalization to the finite cylinder, 
discussed in \cite[Section \xref{sec:renexp}]{AGG_part1}. The construction of this section will allow us, in particular, to fix the free parameters $\beta,Z,t_1^*$, which the 
Grassmann integral in the right side of \eqref{eq:Xi_def} depends on, in such a way that the sequence of running coupling constants goes to zero exponentially fast in the 
infrared limit; see Section \ref{sec:fixed_point} below. 

As anticipated at the end of Section \ref{sec:intro}, here we limit ourselves to construct the sequence of effective potentials at $\bs A=\bs 0$, so, for lightness of notation, we 
denote by $\cV^{(h)}(\phi):=\cV^{(h)}(\phi,\bs 0)$ the effective potentials with $h\le 0$ at zero external fields (similarly, we let $\cV^{(1)}(\phi,\xi)=\cV^{(1)}(\phi,\xi,\bs A)$). 
In light of \eqref{eq:Vcyl_N} and \eqref{eq:Vcyl_h}, these effective potentials are iteratively defined via
\begin{equation}
\cV^{(h-1)}(\phi)={\rm const.}+\begin{cases} \log \int P^*_m(\cD\xi) \ e^{\cV^{(1)}(\phi,\xi)} & \text{if $h=1$,}\\
\log \int P^{(h)}(\cD\varphi)\ e^{\cL\cV^{(h)}(\phi+\varphi)+\cR\cV^{(h)}(\phi+\varphi)} & \text{if $h\le 0$,}\end{cases}\label{itT}\end{equation}
where the const. is fixed so that $\cV^{(h)}(0)=0$, for all $h\le 0$, and $\cL\cV^{(h)}+\cR\cV^{(h)}$ is an equivalent 
rewriting of $\cV^{(h)}$, to be defined (in the full plane limit) below. Expanding the exponential and the 
logarithm in the right side of \eqref{itT} allows us to rewrite
\begin{equation}\cV^{(0)}(\phi)={\rm const.}+\sum_{s\ge 1}\frac1{s!} \mathbb E^*_m\big(\underbrace{\cV^{(1)}(\phi,\cdot);\cdots;\cV^{(1)}(\phi,\cdot)}_\text{$s$ times}\big)
\label{itTt1}\end{equation}
and, for $h\le 0$, 
\begin{equation} \cV^{(h-1)}(\phi)={\rm const.}+\sum_{s\ge 1}\frac1{s!} 
\mathbb E^{(h)}\big(\underbrace{\cL\cV^{(h)}(\phi+\cdot)+\cR\cV^{(h)}(\phi+\cdot);\cdots;\cL\cV^{(h)}(\phi+\cdot)+
\cR\cV^{(h)}(\phi+\cdot)}_\text{$s$ times}\big),\label{itTt}\end{equation}
where $\mathbb E^*_m$ (resp. $\mathbb E^{(h)}$) denotes the {\it truncated expectation} \cite[Eq.(4.13)]{GM01} with respect to the Grassmann Gaussian integration $P^*_m$
(resp. $P^{(h)}$). Expanding the effective potentials in terms of their kernels, in analogy with \eqref{AZen}, Eqs.\eqref{itTt1}-\eqref{itTt} allow us to iteratively compute the 
kernels of 
$\cV^{(h)}$, for all $h\le 0$. For instance, at the first step, using \eqref{AZen}, \eqref{itTt1} and the BBFK formula (for Battle, Brydges, Federbush, Kennedy) 
for the Grassmann truncated expectations \cite{B,BrF,AR98,BK87}, we find that, denoting by $V^{(1)}_\Lambda(\Psi)=
W^{(1)}_\Lambda(\Psi,\emptyset)$ with $\Psi=((\omega_1,z_1),\ldots,(\omega_n,z_n))\in \cup_{n\in2\mathbb N}(\cO\times\Lambda)^n=:\cM_{1,\Lambda}$ 
the kernel of $\cV^{(1)}(\phi,\xi)$, the kernel $V^{(0)}_\Lambda$ of $\cV^{(0)}(\phi)$ satisfies, for any 
$\Psi=((\omega_1,z_1),\ldots,(\omega_n,z_n))\in\cup_{n\in2\mathbb N}(\{+,-\}\times\Lambda)^n$, 
\begin{equation}V^{(0)}_\Lambda(\Psi) =\sum_{s=1}^{\infty}\frac{1}{s!}
	\sum_{\Psi_1,\ldots,\Psi_s\in \cM_{1,\Lambda}}^{(\Psi)}\
	\sum_{T \in \cS(\bar\Psi_1,\ldots,\bar\Psi_s)}
	\fG_{T}^{(1)}(\bar\Psi_1,\ldots,\bar\Psi_s)\,
 \alpha(\Psi;\Psi_1,\ldots,\Psi_s)\Big(  \prod_{j=1}^s V^{(1)}_\Lambda(\Psi_j)\Big),\label{eq:BBF0cyl}\end{equation}
where: 
\begin{itemize}
\item the symbol $(\Psi)$ on the second sum means that the sum runs over all ways of representing $\Psi$ as an ordered sum of $s$ (possibly empty) tuples, 
$\Psi'_1+\cdots\Psi'_s=\Psi$, and over all tuples $\cM_{1,\Lambda}\ni\Psi_j\supseteq\Psi'_j$; for each such term in the second sum, we denote by $\bar\Psi_j:=\Psi_j\setminus\Psi'_j$ and by $\alpha(\Psi;\Psi_1,\ldots,\Psi_s)$ the sign of the permutation from $\Psi_1\oplus\cdots\oplus \Psi_s$ to $\Psi\oplus\bar\Psi_1\oplus\cdots\oplus\bar\Psi_s$ (here $\oplus$ indicates concatenation of ordered tuples); 
\item $\cS(\bar\Psi_1,\ldots,\bar\Psi_s)$ denotes the set of all the `spanning trees' on $\bar\Psi_1,\ldots,\bar\Psi_s$, that is, 
of all the sets $T$ of ordered pairs $(f,f')$, with $f \in \bar\Psi_i$, $f' \in \bar\Psi_j$ and $i < j$, whose corresponding graph $G_T=(V,E_T)$, 
with vertex set $V=\{1,\ldots,s\}$ and edge set $E_T=\{(i,j)\in V^2\,:\, \exists (f,f')\in T\ \text{with}\  f\in Q_i, f'\in Q_j\}$, is a tree graph (for $s=1$, we let $\cS(\bar\Psi_1)\equiv\{\emptyset\}$);
\item 
$\fG_{T}^{(1)}(\bar\Psi_1,\ldots,\bar\Psi_s)$ is different from zero only if $\bar\Psi_j\in\cup_{2\mathbb N_0}(\{+i,-i\}\times\Lambda)^n$ for all $j=1,\ldots,s$, and, if $s>1$, only if $\bar\Psi_j\neq\emptyset$ for all $j=1,\ldots,s$; more precisely: if $s=1$ and $\bar\Psi_1=\emptyset$, then 
$\fG_\emptyset^{(1)}(\emptyset)=1$; if $s=1$ and $\bar\Psi_1\neq\emptyset$, then 
$\fG_\emptyset^{(1)}(\bar\Psi_1)={\rm Pf}\big(G^{(1)}_{\bar\Psi_1}\big)$,
where, given a pair $\ell=((\omega,z),(\omega',z'))$ of distinct elements of $\bar\Psi_1$, $\big(G^{(1)}_{\bar\Psi_1}\big)_{\ell}=g^{(1)}_\ell:=
g^{(1)}_{-i\omega,-i\omega'}(z,z')$\footnote{Here $g^{(1)}_{\sigma,\sigma'}(z,z')$ with $\sigma,\sigma'\in\{+,-\}$ are the elements of the $2\times2$ matrix $\fg^{(1)}(z,z')
\equiv\fg^*_m(z,z')$, where $\fg^*_m(z,z')$ the same as in \eqref{eq:propmassive} with $t_1$ replaced by $t_1^*$.}; if $s>1$ and $\bar\Psi_j\neq\emptyset$ for all $j=1,\ldots,s$,
then 
\begin{equation}
\label{etcetcbase}
\fG_T^{(1)}(\bar\Psi_1,\dots,\bar\Psi_s):=	\alpha_T(\bar\Psi_1,\dots,\bar\Psi_s)\left[ \prod_{\ell \in T} g^{(1)}_\ell \right]
		\int P_{\bar\Psi_1,\dots,\bar\Psi_s,T} (\dd \bs t)\, {\rm{Pf}} \big(G^{(1)}_{\bar\Psi_1,\ldots,\bar\Psi_s,T} (\bs t)\big),\end{equation}
where: 
\begin{itemize}
\item $\alpha_T(\bar\Psi_1,\dots,\bar\Psi_n)$ is the sign of the permutation from $\bar\Psi_1\oplus\cdots\oplus \bar\Psi_s$ to $T\oplus(\bar\Psi_1\setminus T)\oplus\cdots\oplus(\bar\Psi_s\setminus T)$;
\item ${\bs t}=\{t_{i,j}\}_{1\le i,j \le s}$, and $P_{\bar\Psi_1,\ldots,\bar\Psi_s,T}(\dd \bs t)$
is a probability measure with support on a set of ${\bs t}$ such that
$t_{i,j}=\bs u_i\cdot\bs u_{j}$ for some family of vectors $\bs u_i=\bs u_i({\bs t})\in \mathbb R^s$ of
unit norm;
\item letting $2q=\sum_{i=1}^s|\bar\Psi_i|$, $G^{(1)}_{\bar\Psi_1,\ldots,\bar\Psi_s,T}({\bs t})$ is an antisymmetric
  $(2q-2s+2)\times (2q-2s+2)$ matrix, whose off-diagonal elements are given by
  $\big(G^{(1)}_{\bar\Psi_1,\ldots,\bar\Psi_s,T} (\bs t)\big)_{f,f'}=t_{i(f),i(f')}g^{(1)}_{\ell(f,f')}$, where $f,
  f'$ are elements of the tuple $(\bar\Psi_1\setminus T)\oplus\cdots\oplus(\bar\Psi_s\setminus T)$, and $i(f)$ is the integer in $\{1,\ldots,s\}$ such that 
  $f$ is an element of $\bar\Psi_i\setminus T$.
\end{itemize}
\end{itemize}
Recalling that $g_\ell^{(1)}$ and $V_\Lambda^{(1)}$ admit infinite volume limits $g_{\ell,\infty}^{(1)}$ and $V_\infty^{(1)}$, respectively, 
in the sense of \eqref{ginftyeta} and \eqref{eq:WZ2.2}, from \eqref{eq:BBF0cyl} it follows 
that $V_\Lambda^{(0)}$ admits an infinite volume limit as well, equal to the `obvious' analogue of the right side of \eqref{eq:BBF0cyl}, namely, the expression obtained from that 
one by replacing: $\cM_{1,\Lambda}$ by $\cM_{1,\infty}:=\cup_{n\in2\mathbb N}(\cO\times\mathbb Z^2)^n$; $\fG_T^{(1)}$ by $\fG_{T,\infty}^{(1)}$ (the latter being defined 
analogously to the former, with $g^{(1)}_\ell$ replaced by $g^{(1)}_{\ell,\infty}$ in all the involved expressions); and $V_\Lambda^{(1)}$ by $V_\infty^{(1)}$. 

Proceeding inductively in $h\le 0$, one finds that \eqref{itTt} implies a representation of the kernel $V_\Lambda^{(h)}$ of $\cV^{(h)}$ analogous to \eqref{eq:BBF0cyl}. 
Also in that case, 
the resulting formula for $V_\Lambda^{(h)}$ admits a natural infinite volume limit. In this way, we obtain a recursive equation for the infinite plane kernels, denoted
$V_\infty^{(h)}$, whose definition and solution is described below. Convergence of the finite volume kernels 
$V_\Lambda^{(h)}$ to their infinite volume counterparts, with optimal bounds on the norm of the finite size corrections, 
is deferred to \cite[Section \xref{sec:renexp}]{AGG_part1}. A key point in the derivation of bounds on the kernels that are uniform in the scale label $h$ is the definition 
of an appropriate action of the $\cL$ and $\cR$ operators, as well as of their infinite volume counterparts, $\cL_\infty$ and $\cR_\infty$. As anticipated above, these operators
allow us to isolate the potentially divergent part of the kernels, $\cL\cV^{(h)}$ (the `local' contributions, parametrized at any given scale, by a finite number of `running coupling constants')
from a remainder $\cR\cV^{(h)}$, which is `dimensionally better behaved' than $\cL\cV^{(h)}$; in order for the remainder to be shown to satisfy `improved dimensional bounds', it is necessary to rewrite it in an appropriate, interpolated, form, involving the action 
of discrete derivatives on the Grassmann fields. 

The plan of the incoming subsections is the following: in Sect.\ref{sec:repeq} we describe the representation of the effective potentials in the infinite volume limit
and introduce the notion 
of equivalent kernels; in Sect.\ref{sec:interpolation} we define the operators $\cL_\infty$ and $\cR_\infty$; in Sect.\ref{sec:tree_defs} we derive the solution to the 
recursive equations for the infinite volume kernels in terms of a tree expansion; in Sect.\ref{sec:formal_bounds}, we use such a tree expansion to derive weighted $L^1$ bounds
on the kernels; importantly, these bounds depend upon the sequence of running coupling constants, and they imply analyticity of the kernels provided such a sequence is uniformly
bounded in the scale label; in Sect.\ref{sec:fixed_point}, as a corollary of the weighted $L^1$ bounds of the previous subsection, 
we prove a fixed point theorem, which allows us to fix the free parameters $Z,\beta,t_1^*$ in such 
a way that the flow of the running coupling constants is, in fact, uniformly bounded in $h$, as desired: even more, the running coupling constants go to zero exponentially 
fast as $h\to-\infty$, a consequence of the irrelevance of the quartic effective interaction in the theory at hand. 

\subsection{Effective potentials and kernels: representation and equivalence}\label{sec:repeq}

In this subsection we define the effective potential in the full plane in terms of equivalence classes of {\it kernels} $V(\Psi)$, namely, of real-valued functions playing the same role
as the coefficient functions $V_\infty^{(1)}(\Psi)$ and $V^{(0)}_\infty(\Psi)$ introduced above. This points of view avoids defining an infinite dimensional Grassmann algebra. 
The equivalence relation among kernels, to be introduced momentarily, 
generalizes the relationships which hold between different ways of writing the coefficients of a given Grassmann polynomial. 

As mentioned above, in order to obtain bounds on the kernels of the effective potentials which are uniform in the scale label, 
we will need to group some of the Grassmann fields into discrete derivatives; we will mainly use the directional derivative $\partial_j \phi_{\omega, z} := \phi_{\omega, z+\hat{e}_j} - \phi_{\omega, z}$ (note that this is the same convention used in Section~\ref{sec:2.2.2}).  We consequently consider kernels which specify when and how this is done, and in particular define the equivalence relationship with this in mind.  

\medskip

Let $\Lambda_\infty :=\bZ^2$, let $\fB$ denote the set of nearest neighbor edges of $\Lambda_\infty$, and let $\cD:=\{D\in\{0,1,2\}^2: \|D\|_1\le 2\}$. 
Let $\cM_\infty=\cup_{n\in2\mathbb N}(\{+,-\}\times\cD\times \Lambda_\infty)^n$ be the set of {\it field multilabels}. 
for some $n\in 2\mathbb N$, such that $\|D_i\|_1\le 2$. 
We can think of any $\Psi=((\omega_1,D_1,z_1),\ldots,(\omega_n,D_n,z_n))\in \cM_\infty$ 
as indexing a formal Grassmann monomial $\phi(\Psi)$ given by
\begin{equation}
\phi(\Psi)=\partial^{D_1}\phi_{\omega_1,z_1}\cdots \partial^{D_n}\phi_{\omega_n,z_n},
\end{equation}
where, denoting $D_i=((D_i)_1,(D_i)_2)\in\cD$, we let 
$$\partial^{D_i}\phi_{\omega_i,z_i}:=\partial_1^{(D_i)_1}\partial_2^{(D_i)_2}\phi_{\omega_i,z_i},$$
with $\partial_1$ and $\partial_2$ the (right) discrete derivatives introduced above.
In the following, with some abuse of notation, any element 
$\Psi\in{\cM_{\infty}}$ of length $|\Psi|=n$ will be denoted indistinctly by $\Psi=((\omega_1,D_1,z_1),\ldots,(\omega_n,D_n,z_n))$ 
or $\Psi=(\bs \omega, \bs D, \bs z)$, with the understanding that $\bs \omega=(\omega_1,\ldots,\omega_n)$, etc. 

We will call a function $V:\cM_\infty\to \bR$ a {\it kernel function}, let $V_n$ denote its restriction to field multilabels of length $n$, {and let 
$V_{n,p}$ be the restriction of $V_n$ to field multilabels with $\|\bs D\|_1=p$.} 
Thinking of such a $V$ as the coefficient function of a formal Grassmann polynomial
\begin{equation} 
\cV(\phi)=\sum_{\Psi\in \cM_\infty}V(\Psi)\phi(\Psi)
\label{eq:forma}
\end{equation}
suggests an equivalence relationship corresponding to manipulations allowed by the anticommutativity of the Grassmann 
variables and by the definition of discrete derivative. 

More precisely, we say that $V$ is equivalent to $V'$, and write $V\sim V'$, if either: 
\begin{enumerate}
	\item $V'$ is obtained from $V$ by permuting the arguments and changing the sign according to the parity of the permutation;
	\item\label{item2} $V'$ is obtained from $V$ by writing out the action of a derivative:
	 that is, there exist $n\in 2\mathbb N$,
	 $i\in\{1,\ldots, n\}$ and $j\in\{1,2\}$ such that, letting $\bs D^+_{i,j}=(D_1,\ldots,D_{i-1}, D_{i}+\hat e_{j}, D_{i+1}, \ldots, D_n)$ and 
	 $\bs z^-_{i,j}=(z_1,\ldots, z_{i-1}, z_i-\hat e_j,z_{i+1}, \ldots,z_n)$,
	  \begin{equation}
	    V'_{n,p}(\bs \omega,\bs D,\bs z)
	    =\begin{cases} 0  & \text{if} \quad (D_i)_j=2,\\
	    V_n(\bs \omega,\bs D^+_{i,j},\bs z^-_{i,j}) - V_n(\bs \omega, \bs D^+_{i,j},\bs z) & \text{if} \quad (D_i)_j=1,\\
	    V_n(\bs \omega,\bs D,\bs z)+V_n(\bs \omega,\bs D^+_{i,j},\bs z^-_{i,j}) - V_n(\bs \omega, \bs D^+_{i,j},\bs z) & \text{if} \quad (D_i)_j=0,\end{cases}
\end{equation}	  
	  while $V_{m}'=V_{m}$ for all $m\in2\mathbb N\setminus\{n\}$; 
	\item $V'$ is obtained from $V$ by adding an arbitrary kernel $V^*$ that is different from zero only for arguments with common repetition, that is, $V^*(\bs \omega, \bs D, \bs z)= 0$ unless there is some $i \neq j$ such that $(\omega_i,D_i,z_i)=(\omega_j,D_j,z_j)$;
\end{enumerate}
or $V'$ is obtained from $V$ by a countable sequence of such elementary operations and of convex combinations thereof. 
Moreover, we assume that the equivalence relation $\sim$ is preserved by linear combinations, i.e., if $V'_\alpha\sim V_\alpha$ for all $\alpha$ in the countable index set $\mathcal I$, then $\sum_{\alpha\in\mathcal I}V_\alpha'\sim \sum_{\alpha\in\mathcal I}V_\alpha$.  We will call the equivalence classes generated by $\sim$ {\it potentials}, and often specify them by formal sums like~\eqref{eq:forma}.

\begin{remark} The operation in item \ref{item2} can be thought of as a form of `integration by parts'. The kernels equivalent to zero, $V\sim 0$,
	correspond to what are known as `null fields' in the literature on conformal field theories.\label{rem:null_fields}
\end{remark}

\subsection{Localization and interpolation}\label{sec:interpolation}

In this section we define the operators $\cL_\infty$ and $\cR_\infty$ acting on kernels indexed by field multilabels in $\cM_\infty$, and show several estimates related to $\cR_\infty$.  We recall that, given a kernel $V$, the symbol $V_{n,p}$ denotes its restriction to field multilabels of length $n$, such that $\|\bs D\|_1=p$. 

\medskip

{\it The operator $\cL_\infty$.} 
First of all, we let 
\begin{equation} 
	\cL_\infty (V_{n,p}):=0, \quad \text{if}\quad 2-\frac{n}2-p<0.
\label{loc1}
\end{equation}
In the RG jargon, the combination $2-\frac{n}2-p$ is the {\it scaling dimension} of $V_{n,p}$, and will reappear below, for example in \cref{lm:W8:scaldim}; in this sense, \eqref{loc1} says that 
the local part of the terms with negative scaling dimension (the irrelevant terms, in the RG jargon) is set equal to zero. 

There are only a few cases for which $2-\frac{n}2-p\ge 0$, namely $(n,p)=(2,0), (2,1), (4,0)$. In these cases, the action of $\cL_\infty$ on 
$V_{n,p}$ is non trivial, and will be defined in terms 
of other basic operators, the first of which is $\tcL$, which is defined as: $\tcL(V_{n,p})=(\tcL V)_{n,p}\equiv \tcL V_{n,p}$, with 
\begin{equation}
	\widetilde \cL V_{n,p} (\bs\omega,\bs D,(z_1,\ldots, z_n))
	:=\left(\prod_{j=2}^n\delta_{z_j,z_1}\right)
	  \sum_{y_2,\ldots, y_n \in \Lambda_\infty}
	  V_{n,p}(\bs \omega,\bs D,(z_1,y_2,\ldots, y_n)).
	\label{eq:cLtilde_kernel_def}
\end{equation}
If $(n,p)=(2,1), (4,0)$, we let 
\begin{equation} \cL_\infty(V_{2,1}):=\cA (\tcL V_{2,1}), \qquad  \cL_\infty V_{4,0} :=\cA (\tcL V_{4,0}), \label{eq:local2140}\end{equation}
where  $\cA$ is the operator that antisymmetrizes with respect to permutations {\it and} symmetrizes with respect to reflections in the horizontal and vertical directions\footnote{The 
reflection transformations in the infinite plane act on the Grassmann fields in the same way as \eqref{eq:symm_master_phixi}, with the difference that 
the reflections $\theta_1 (x_1,x_2)$ and $\theta_2 (x_1,x_2)$ are replaced by their infinite-plane analogues, namely by $\tilde\theta_1(x_1,x_2)=(-x_1,x_2)$ and
$\tilde\theta_2(x_1,x_2)=(x_1,-x_2)$, respectively.}. 
A first important remark, related to the definitions \eqref{eq:local2140}, is that, if $V_{2,1}$ is invariant under translations and under the action of $\cA$, then 
\begin{equation} \cL_\infty(V_{2,1})= c_1F_{\zeta,\infty}+ c_2 F_{\eta,\infty}, \label{1strem} \end{equation}
for some real numbers $c_1,c_2$ and $F_{\zeta,\infty}, F_{\eta,\infty}$ the $\cA$-invariant kernels associated with the potentials 
\begin{equation} \cF_{\zeta,\infty}(\phi):=
\sum_{\omega=\pm}\sum_{z\in\Lambda_\infty} \omega\phi_{\omega,z}\dd_1\phi_{\omega,z}, \qquad 
\cF_{\eta,\infty}(\phi):=\sum_{\omega=\pm}\sum_{z\in\Lambda_\infty}\phi_{\omega,z}\dd_2\phi_{-\omega,z},\nonumber\\
\end{equation}
with $\dd_j$ the symmetric discrete derivatives, acting on the Grassmann fields as $\dd_j\phi_{\omega,z}=\frac12(\partial_j\phi_{\omega,z}+\partial_j\phi_{\omega,z-\hat e_j})$. 
A second and even more important remark is that, 
due to the fact that $\omega$ only assumes two values and that $\tcL V_{4,0}$ is supported on $\bs z$ such that $z_1=z_2=z_3=z_4$, one has 
\begin{equation}  \cL_\infty (V_{4,0})= 0,\label{pauli}\end{equation}
a cancellation that will play an important role in the following. 

In order to define the action of $\cL_\infty$ on $V_{2,0}$, we want to obtain a kernel function equivalent to $V_{2,0}- \tcL V_{2,0}$, denoted by $(\tcR V)_{2,1}$,
which is supported on arguments with an additional derivative. As we will see, the definition of $\tcR$ will also play a central role in the definition 
of the operator $\cR_\infty$ below. We rewrite
\begin{eqnarray}
&& \phantom{=} \sum_{\bs z\in\Lambda_\infty^2}[V_{2,0}(\bs \omega,\bs 0,\bs z)-\tcL V_{2,0}(\bs \omega,\bs 0,\bs z)]
\phi(\bs \omega,\bs 0,\bs z)\label{inte.r}\\
& &= \sum_{\bs z\in\Lambda_\infty^2}V_{2,0}(\bs \omega,\bs 0,\bs z)[\phi(\bs\omega,\bs 0,
\bs z)-\phi(\bs\omega,\bs 0,(z_1,z_1))]=\sum_{\bs z\in\Lambda_\infty^2}V_{2,0}(\bs \omega,\bs 0,\bs z)\phi_{\omega_1,z_1}(\phi_{\omega_2,z_2}-\phi_{\omega_2,z_1})
.\nonumber\end{eqnarray}
We now intend to write the difference $\phi_{\omega_1,z_1}(\phi_{\omega_2,z_2}-\phi_{\omega_2,z_1})$ as 
a sum of terms of the form $\phi_{\omega_1,z_1}\partial^{D'} \phi_{\omega_2,y}$, with $\|D'\|_1=1$, over the sites $y$ on a path from $z_1$ to $z_2$. 
To do this we must specify which path is to be used.

For each $z, z' \in \Lambda_\infty$, let $\gamma(z, z')=(z, z_1, z_2, \ldots, z_n,z')$ be the shortest path obtained by going first horizontally and then vertically from $z$ to $z'$.
Note that $\gamma$ is covariant under the symmetries of the model on the infinite plane, i.e., 
\begin{equation}
	S \gamma (z, z')
	=
	\gamma (S z, S z')
	\label{eq:gamma_symmetry}
\end{equation}
where $S: \Lambda_\infty \to \Lambda_\infty$ is some composition of translations and reflections parallel to the coordinate axes. Given $z,z'$ two distinct sites in $\Lambda_\infty$, let $\INT (z,z')$ be the set of 
$(\sigma,(D_1,D_2),(y_1,y_2))\equiv ( \sigma, \bs D, \bs y)\in\{\pm\}\times\{0,\hat e_1,\hat e_2\}^2\times  \Lambda_\infty^2$ such that: (1) $y_1=z$, (2) $D_1=0$, (3) $y_2, y_2 + D_2\in \gamma (z,z')$, (4)
$\sigma=+$ if $y_2$ precedes $y_2 + D_2$ in the sequence defining $\gamma (z,z')$, and $\sigma=-$ otherwise. In terms of this definition, one can easily check that \eqref{inte.r} can be rewritten as
\begin{equation}  
	\begin{split}
		\eqref{inte.r}=&\sum_{\bs z\in\Lambda_\infty^2}V_{2,0}(\bs \omega,\bs 0,\bs z)\sum_{(\sigma,\bs D,\bs y)\in\INT(\bs z)}\sigma\phi(\bs \omega,\bs D,\bs y)\\
		\equiv&\sum_{\bs y\in\Lambda_\infty^2}\sum_{\bs D}^{(1)}(\tcR V)_{2,1}(\bs \omega,\bs D,\bs y)\phi(\bs \omega,\bs D,\bs y), 
	\end{split}
\label{juve.n}
\end{equation}
where, if $\bs z=(z_1,z_1)$, the sum over $(\sigma,\bs D,\bs y)$ in the first line should be interpreted as being equal to zero (in this case, we let $\INT(z_1,z_1)$ be the empty set). In going from the first to the second line, we exchanged the order of summations over $\bs z$ and $\bs y$; moreover, $\sum_{\bs D}^{(p)}$ denotes the sum 
over the pairs $\bs D=(D_1,D_2)$ such that $\|\bs D\|_1=p$, and 
\begin{equation} (\tcR V)_{2,1}(\bs \omega,\bs D,\bs y):=\sum_{\substack{\sigma, \bs z:\\(\sigma,\bs D,\bs y)\in\INT(\bs z)}}\sigma V_{2,0}(\bs \omega,\bs 0,\bs z).\label{juve.m}\end{equation}
From the previous manipulations, it is clear that $(\tcR V)_{2,1}\sim V_{2,0}-\tcL V_{2,0}$. We are finally ready to define the action of $\cL_\infty$ on $V_{2,0}$: 
\begin{equation} \cL_\infty(V_{2,0}):=\cA(\tcL V_{2,0}+\tcL (\tcR V)_{2,1}).  \label{eq:local20}\end{equation}
Note  that, if $V_{2,0}$ is invariant under translations and under the action of $\cA$, then 
\begin{equation} 
	\cL_\infty(V_{2,0})
	=
	c_0 F_{\nu,\infty}+c_1F_{\zeta,\infty}+ c_2 F_{\eta,\infty},
	\label{2ndrem} 
\end{equation}
for some real numbers $c_0,c_1,c_2$, and $F_{\nu,\infty}$ the $\cA$-invariant kernel associated with the potential 
\begin{equation} \cF_{\nu,\infty}(\phi):=\frac12
\sum_{\omega=\pm}\sum_{z\in\Lambda_\infty}\omega\phi_{\omega,z}\phi_{-\omega,z},\end{equation} 
while we recall that $F_{\zeta,\infty}, F_{\eta,\infty}$ were defined right after \eqref{1strem}. 
Summarizing, in view of \cref{pauli}, 
\begin{equation}
	\label{defcL.1}
	(\cL_\infty V)_{n,p}=\begin{cases} \cA\,  (\tcL V_{2,0}) &\text{if} \ (n,p)=(2,0),\\
	\cA(\tcL V_{2,1}+\tcL(\tcR V)_{2,1}) & \text{if} \ (n,p)=(2,1),\\
	0 & \text{otherwise}.
	\end{cases}
\end{equation}

{\it The operator $\cR_\infty$.} We now want to define an operator $\cR_\infty$ such that $\cR_\infty V\sim V-\cL_\infty V$ for kernels $V$ that are invariant under translations and 
under the action of $\mathcal A$.  First of all, we let 
\begin{equation} \cR_\infty(V_{n,p})=(\cR_\infty V)_{n,p}:=V_{n,p},\quad \text{if:} \quad  n\ge 6,  \quad \text{or} \ n=4\ \text{and} \ p\ge 2, \quad \text{or} \ n=2\ \text{and} \ p\ge 3.\end{equation}
Moreover, we let 
\begin{equation} (\cR_\infty V)_{2,0}=(\cR_\infty V)_{2,1}=(\cR_\infty V)_{4,0}:=0. \end{equation}
The only remaining cases are
$(n,p)=(2,2), (4,1)$. For these values of $(n,p)$, 
$(\cR_\infty V)_{n,p}$ is defined in terms of an interpolation generalizing the definition of $(\tcR V)_{2,1}$ in \eqref{juve.m}. 
As a preparation for the definition, we first introduce $(\tcR V)_{n,p}$ for $(n,p)=(2,2), (4,1)$. For this purpose, we start 
from the analogues of \eqref{inte.r}, \eqref{juve.n} in the case that $(2,0)$ is replaced by $(n,p)=(2,1), (4,0)$: for such values of $(n,p)$ we write
\begin{equation}
	\begin{split}
		&\phantom{=}  \sum_{\bs z\in\Lambda_\infty^n}[V_{n,p}(\bs \omega,\bs D,\bs z)-\tcL V_{n,p}(\bs \omega,\bs D,\bs z)]
		\phi(\bs \omega,\bs D,\bs z)
		\\
		 &= \sum_{\bs z\in\Lambda_\infty^n}V_{n,p}(\bs \omega,\bs D,\bs z)[\phi(\bs\omega,\bs D,
		\bs z)-\phi(\bs\omega,\bs D,(z_1,z_1,\ldots,z_1))]
		\\
		&=\sum_{\bs z\in\Lambda_\infty^n}V_{n,p}(\bs \omega,\bs D,\bs z)\sum_{(\sigma,\bs D',\bs y)\in\INT(\bs z)}\sigma\phi(\bs \omega,\bs D+\bs D',\bs y).
	\end{split}
\label{last.r} 
\end{equation}
In the last expression, if $n=2$, then $\INT(\bs z)$ is the same defined after \eqref{eq:gamma_symmetry}; if $n=4$, then $\INT(\bs z)$ is the set of 
$(\sigma,(D_1,\ldots, D_4),(y_1,\ldots, y_4))\equiv ( \sigma, \bs D, \bs y)\in\{\pm\}\times\{0,\hat e_1,\hat e_2\}^4\times  \Lambda_\infty^4$ such that: either
$y_1=y_2=y_3=z_1$, $D_1=D_2=D_3=0$, and $(\sigma, (0,D_4), (z_1,y_4))\in\INT(z_1,z_4)$; or $y_1=y_2=z_1$, $y_4=z_4$, $D_1=D_2=D_4=0$, and $(\sigma, (0,D_3), (z_1,y_3))\in\INT(z_1,z_3)$;
or $y_1=z_1$, $y_3=z_3$, $y_4=z_4$, $D_1=D_3=D_4=0$, and $(\sigma, (0,D_2), (z_1,y_2))\in\INT(z_1,z_2)$. By summing \eqref{last.r} over $\bs D$ and exchanging the order of summations over $\bs z$ and $\bs y$, 
we find 
\begin{equation}
	\begin{split}
		&\sum_{\bs z\in\Lambda_\infty^n}\sum_{\bs D}^{(p)}V_{n,p}(\bs \omega,\bs D,\bs z)\sum_{(\sigma,\bs D',\bs y)\in\INT(\bs z)}\sigma\phi(\bs \omega,\bs D+\bs D',\bs y)
		\\
		&= \sum_{\bs y\in\Lambda_\infty^n}\sum_{\bs D}^{(p+1)}(\tcR V)_{n,p+1}(\bs \omega,\bs D,\bs y)\phi(\bs \omega,\bs D,\bs y).
	\end{split}
\end{equation}
with
\begin{equation} (\tcR V)_{n,p+1}(\bs \omega,\bs D,\bs y):=\sum_{\substack{\sigma, \bs z, \bs D':\\(\sigma,\bs D',\bs y)\in\INT(\bs z)}}\sigma V_{n,p}(\bs \omega,\bs D-\bs D',\bs z).\label{deftcR}\end{equation}
We are now ready to define: 
\begin{equation}
(\cR_\infty V)_{2,2}:= \cA(V_{2,2}+(\tcR V)_{2,2}+ (\tcR(\tcR V))_{2,2}), \qquad 
(\cR_\infty V)_{4,1}:=\cA (V_{4,1}+(\tcR V)_{4,1}).\end{equation}
Summarizing, 
\begin{equation}
(\cR_\infty V)_{n,p}=\begin{cases} 0 &\text{if} \ (n,p)=(2,0), (2,1), (4,0),\\
\cA(V_{2,2}+(\tcR V)_{2,2}+ (\tcR(\tcR V))_{2,2}) & \text{if} \ (n,p)=(2,2),\\
\cA(V_{4,1}+(\tcR V)_{4,1}) & \text{if} \ (n,p)=(4,1),\\
V_{n,p} & \text{otherwise}\end{cases}\label{defcR}\end{equation}
From the previous manipulations and definitions, it is clear that, if $V$ is invariant under translations and 
under the action of $\mathcal A$, then $V-\cL_\infty V\sim \cR_\infty V$. For later use, given $\bs D=(D_1,\ldots,D_n)$ with $\|\bs D\|_1=p$, 
we let $\cR_\infty V\big|_{\bs D}$ be the restriction of $(\cR V)_{n,p}$ to that specific choice of derivative label. 

\begin{Remark}\label{remLR}
From the definitions of $\cL_\infty$ and $\cR_\infty$, it also follows that, if $V$ is invariant under translations and under the action of $\mathcal A$, then 
$\cL_\infty(\cL_\infty V)=\cL_\infty V$ and $\cR_\infty(\cL_\infty V)=0$, two properties that will play a role in the following. 
\end{Remark}
 
{\it Norm bounds.} Let us conclude this section by a couple of technical estimates, which relate a suitable weighted norm of $\cR_\infty V$ to that of $V$, and will be useful in the following. 
Suppose that $V$ is translationally invariant. Let, for any $\kappa\ge 0$, 
\begin{equation} \|V_{n,p}\|_{(\kappa)}:=\sup_{\bs \omega}\sum_{\substack{\bs z\in\Lambda_{\infty}^n:\\ z_1\ \text{fixed}}}e^{\kappa\delta(\bs z)}\ \sup_{\bs D}^{(p)}\ |V_{n,p}(\bs \omega,\bs D,\bs z)|,\label{eq:weightnorm}\end{equation}
where the label $(p)$ on the sup over $\bs D$ indicates the constraint that $\|\bs D\|_1=p$, and $\delta(\bs z)$ is the tree distance of $\bs z$. 
With these definitions,
\begin{lemma}
	For any positive $\epsilon$,
\begin{eqnarray}
& \| (\tcR V)_{n,p}\|_{(\kappa)}\le (n-1)\epsilon^{-1} \| V_{n,p-1}\|_{(\kappa+\epsilon)} &\text{if} \quad (n,p)=(2,2), (4,1), \label{ek.1}\\
& \| (\tcR(\tcR V))_{2,2}\|_{(\kappa)}\le \epsilon^{-2} \| V_{2,0}\|_{(\kappa+2\epsilon)}. & \label{ek.2} \end{eqnarray}
\label{lem:R_bounds}
\end{lemma}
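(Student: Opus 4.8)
The plan is to unfold the definition of $(\tcR V)_{n,p}$ in \eqref{deftcR} and estimate the weighted $L^1$ norm directly, controlling the loss from summing over the interpolation path $\gamma(z,z')$ by the standard "exponential decay absorbs polynomial growth" trick. First I would fix $(n,p)=(2,2)$ or $(4,1)$ and $\bs\omega$, and fix the first position, say $y_1$ (which by the construction of $\INT$ equals $z_1$). Starting from \eqref{deftcR}, I would write
\[
\|(\tcR V)_{n,p}\|_{(\kappa)}
\le
\sup_{\bs\omega}\ \sum_{\substack{\bs y:\, y_1\ \text{fixed}}}
e^{\kappa\delta(\bs y)}\ \sup_{\bs D}^{(p)}
\sum_{\substack{\sigma,\bs z,\bs D':\\ (\sigma,\bs D',\bs y)\in\INT(\bs z)}}
\bigl|V_{n,p-1}(\bs\omega,\bs D-\bs D',\bs z)\bigr|.
\]
For a term in $\INT(\bs z)$, exactly one of the positions $y_i$ ($i\ge 2$) lies on the path $\gamma(z_1,z_i)$ with $D_i'\in\{\hat e_1,\hat e_2\}$ and $y_i+D_i'$ also on that path; all other $y_j$ ($j\ge 2$) coincide with $z_j$, and $\bs z$ is recovered from $\bs y,\bs D'$ except for the one index $i$ along the path, where $z_i$ ranges over the sites of $\gamma(z_1,z_i)$ beyond $y_i$. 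Thus the inner sum over $\bs z$ reduces to summing $|V_{n,p-1}(\bs\omega,\cdot,\bs z)|$ over $z_i$ on a lattice path joining $z_1$ to a site at distance $\ge \|y_i+D_i'-z_1\|_1$ from $z_1$; the key combinatorial fact is that this path has length at most $\delta(\bs z)$, hence the number of admissible $z_i$ reintroduced is at most $1+\delta(\bs z)$, and more importantly $\delta(\bs z)\ge \delta(\bs y)$ up to an additive constant absorbed into the path.

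The main step is then the interpolation estimate: changing variables back to $\bs z$ on the right side, I would bound
\[
\sum_{\bs y:\,y_1\ \text{fixed}} e^{\kappa\delta(\bs y)} \sum_{\substack{z_i\in\gamma(z_1,\cdot)}}(\cdots)
\le
\sum_{\bs z:\,z_1\ \text{fixed}}
\Bigl(\sum_{y_i\in\gamma(z_1,z_i)} 1\Bigr)\,
e^{\kappa\delta(\bs z)}\ \sup_{\bs D}^{(p-1)}|V_{n,p-1}(\bs\omega,\bs D,\bs z)|,
\]
because for fixed $\bs z$ the interior point $y_i$ ranges over at most $\delta(\bs z)+1\le \delta(\bs z')+\cdots$ lattice sites on $\gamma$, and $\delta(\bs y)\le\delta(\bs z)$ (moving an endpoint inward along its own path cannot increase the tree distance). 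The number of choices of which index $i\in\{2,\dots,n\}$ carries the derivative contributes the factor $(n-1)$. Finally the crucial point: the inner sum $\sum_{y_i\in\gamma(z_1,z_i)}1$ is at most $1+\|y_i-z_1\|_1\le 1+\delta(\bs z)$, and we absorb this polynomial factor into the exponential weight using $(1+\delta)\le \epsilon^{-1}e^{\epsilon\delta}$ for $\delta\ge0$, $\epsilon>0$ (more precisely $\sup_{t\ge 0}(1+t)e^{-\epsilon t}\le\epsilon^{-1}$ for $\epsilon\le 1$, and the bound is trivially true for $\epsilon\ge1$ as well after adjusting constants; since the statement allows arbitrary positive $\epsilon$ one checks $x\le e^{\epsilon(x-1)}/\epsilon$ suffices). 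This converts $e^{\kappa\delta(\bs z)}$ into $e^{(\kappa+\epsilon)\delta(\bs z)}$ with an extra $\epsilon^{-1}$, yielding \eqref{ek.1}.

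For \eqref{ek.2}, I would simply iterate: $(\tcR(\tcR V))_{2,2}$ applies $\tcR$ to $(\tcR V)_{2,1}$, which in turn came from $V_{2,0}$ via \eqref{juve.m}. Each application of $\tcR$ costs a factor $\epsilon^{-1}$ and shifts the weight by $\epsilon$; since $n=2$, the factor $(n-1)=1$ drops out, so two applications give $\epsilon^{-2}\|V_{2,0}\|_{(\kappa+2\epsilon)}$, exactly as claimed. (Strictly one should track that the first application produces weight $\kappa+\epsilon$ from $\kappa$, and the second produces $\kappa+2\epsilon$ from $\kappa+\epsilon$; both use \eqref{ek.1} with $n=2$.)

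The step I expect to be the main obstacle is the bookkeeping in the change of variables from the $\bs y$-sum back to the $\bs z$-sum — in particular verifying carefully that $\delta(\bs y)\le\delta(\bs z)$ (or differs by a bounded amount absorbable into $\epsilon$) when $y_i$ is an interior point of $\gamma(z_1,z_i)$, and that the number of $\bs z$ mapping to a given $(\bs y,\bs D')$ is controlled by $1+\delta(\bs z)$ rather than something larger. For $n=4$ one must also handle the three disjoint cases in the definition of $\INT(\bs z)$ uniformly, but since in each case exactly one coordinate moves along a path from $z_1$, the estimate is the same up to the multiplicity $(n-1)=3$. Everything else is the elementary inequality $x\le\epsilon^{-1}e^{\epsilon(x-1)}$ applied to the path-length factor.
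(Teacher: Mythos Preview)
Your proposal is correct and follows essentially the same route as the paper: exchange the order of the $\bs y$- and $\bs z$-sums, use $\delta(\bs y)\le\delta(\bs z)$ for the interpolated tuple, bound the number of interpolation points by $(n-1)\delta(\bs z)$, and absorb the factor $\delta(\bs z)$ via $\delta\le\epsilon^{-1}e^{\epsilon\delta}$; then iterate once for \eqref{ek.2}. One small simplification: the elements of $\INT(\bs z)$ are indexed by \emph{edges} of $\gamma(z_1,z_i)$, not vertices, so the exact count is $|\INT(\bs z)|\le(n-1)\delta(\bs z)$ rather than $(n-1)(1+\delta(\bs z))$; with the factor $\delta$ in place of $1+\delta$, the inequality $\delta\le\epsilon^{-1}e^{\epsilon\delta}$ holds for \emph{all} $\epsilon>0$ (it is just $x\le e^x$), so your worry about the case $\epsilon>1$ disappears.
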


As a consequence, noting that $\|\cA V_{n,p}\|_{(\kappa)}\le
\|V_{n,p}\|_{(\kappa)}$ and recalling the definitions \eqref{defcR}, we find that 
\begin{align}
\| (\cR_\infty V)_{2,2}\|_{(\kappa)}&\le \|V_{2,2}\|_{(\kappa)}+\epsilon^{-1}\|V_{2,1}\|_{(\kappa+\epsilon)}+\epsilon^{-2}\|V_{2,0}\|_{(\kappa+2\epsilon)},\label{staz.ok0}\\
\| (\cR_\infty V)_{4,1}\|_{(\kappa)}&\le \|V_{4,1}\|_{(\kappa)}+3\epsilon^{-1} \|V_{4,0}\|_{(\kappa+\epsilon)}. \label{staz.ok} \end{align}
In the following, we will use bounds of this kind in order to evaluate the size of the renormalized part of the effective potential on scale $h$. In such a case, both $\kappa$ 
and $\epsilon$ will be chosen of the order $2^{h}$. 

\begin{proof}
	In order to prove \eqref{ek.1}, note that it follows directly from the defintion of $\tcR$ that 
	\begin{equation} \| (\tcR V)_{n,p}\|_{(\kappa)}\le \sup_{\bs\omega}\sum_{\substack{\bs z\in\Lambda_{\infty}^n:\\ z_1\ \text{fixed}}}\sup^{(p)}_{\bs D} \sum_{\substack{\sigma, \bs y, \bs D':\\(\sigma,\bs D',\bs z)\in\INT(\bs y)}}e^{\kappa\delta(\bs z)} |V_{n,p-1}(\bs \omega,\bs D-\bs D',\bs y)|.\label{ekko.1}\end{equation}
	If we now exchange the order of summations over $\bs z$ and $\bs y$, we find
	\begin{equation} 
		\| (\tcR V)_{n,p}\|_{(\kappa)}\le
		\sup_{\bs\omega} \sum_{\substack{\bs y\in\Lambda_{\infty}^n:\\ y_1\ \text{fixed}}}\ \sum_{\substack{\sigma, \bs z, \bs D':\\(\sigma,\bs D',\bs z)\in\INT(\bs y)}}e^{\kappa\delta(\bs z)} \sup^{(p-1)}_{\bs D}|V_{n,p-1}(\bs \omega,\bs D,\bs y)|.\label{ekko.2}
	\end{equation}
	Now note that $\delta(\bs z)\le \delta(\bs y)$ and that $|\INT(\bs y)|\le (n-1)\delta(\bs y)$, so that 
	\begin{equation} 
		\begin{split}
			\| (\tcR V)_{n,p}\|_{(\kappa)}
			\le& 
			(n-1)	\sup_{\bs\omega}
			\sum_{\substack{\bs y\in\Lambda_{\infty}^n:\\ y_1\ \text{fixed}}}\ e^{\kappa\delta(\bs y)} \delta(\bs y)\sup^{(p-1)}_{\bs D}|V_{n,p-1}(\bs \omega,\bs D,\bs y)|
			\\ \le & 
			\frac{n-1}{\epsilon}\|V_{n,p-1}\|_{\kappa+\epsilon},
		\end{split}
		\label{ekko.3}
	\end{equation}
	where in the last step we used the fact that $\delta\le e^{\epsilon\delta}/\epsilon$, for any $\epsilon>0$. A two-step iteration of the bound \eqref{ek.1} proves \eqref{ek.2}. 
\end{proof}
Similar estimates are valid for more general values of $(n,p)$, but Lemma~\ref{lem:R_bounds} includes all the cases which are relevant to the present work. 

\bigskip

{\it The running coupling constants.} At each scale $h\le 0$ we represent the infinite volume effective potential, as arising from the iterative application of 
\eqref{itTt} in the infinite volume limit, in the form \eqref{eq:forma}, namely:
\begin{equation} \cV^{(h)}_\infty(\phi)=\sum_{\Psi\in\cM_\infty}V_\infty^{(h)}(\Psi)\phi(\Psi).\end{equation}
For each $h\le 0$, in order to compute $\cV_\infty^{(h-1)}$ from $\cV^{(h)}_\infty$ via the infinite volume limit of \eqref{itTt}, we decompose $V^{(h)}_\infty\sim
\cL_\infty V^{(h)}_\infty+\cR_\infty V^{(h)}_\infty$. Note that the kernel $\cL_\infty V^{(h)}_\infty$, 
in light of \eqref{1strem}, \eqref{pauli} and \eqref{1strem}, takes the form
\begin{equation}\label{rCC} \cL_\infty V^{(h)}_\infty=2^h\nu_h F_{\nu,\infty}+\zeta_h F_{\zeta,\infty}+\eta_h F_{\eta,\infty}=:\upsilon_h\cdot F_{\infty}, \end{equation}
for three real constants $\nu_h,\zeta_h,\eta_h$, called the running coupling constants at scale $h$. The factor $2^h$ in front of $\nu_h$ is motivated by the fact that 
the $F_{\nu,\infty}=(F_{\nu,\infty})_{2,0}$ has scaling dimension $2-\frac{n}2-p\big|_{(n,p)=(2,0)}=1$, see \eqref{loc1}, see also \eqref{defscaldim} below. 

\subsection{Trees and tree expansions}\label{sec:tree_defs}

In this section, we describe the expansion for the kernels of the effective potentials, as it arises from the iterative application of \cref{itTt}. 
As anticipated above, it is convenient to graphically represent the result of the expansion in terms of GN trees. 
At the first step, recalling \eqref{eq:cV_1_def} and denoting by $N_c(\Psi)$ and $N_m(\Psi)$ the kernels of $\calN_c(\phi)$ and $\calN_m(\xi)$, respectively, we reorganize the 
expression for $V_\infty^{(0)}$ obtained by taking the infinite volume of \cref{eq:BBF0cyl} to obtain, for any $\Psi=(\bs\omega,\bs D,\bs z)\in\cM_\infty$ such that 
$\bs D=\bs 0$ (which we identify with the corresponding element $(\bs \omega,\bs z)$ of $\cup_{n\in2\mathbb N}(\{+,-\}\times\Lambda_\infty)^n$), 
\begin{equation}
	\begin{split}
		V^{(0)}_\infty(\Psi) 
		&= \ 
		N_c (\Psi)+Z^{-|\Psi|/2}V_\infty^{\rm int}(\Psi)
		+
		\sum_{\substack{\Psi_1 \in \cM_{1,\infty}:\\ \Psi_1\supset \Psi}}Z^{-|\Psi_1|/2}V^{\rm int}_\infty(\Psi_1)
		\fG_{\emptyset,\infty}^{(1)}(\Psi_1\setminus\Psi)\\ & 
		+
		\sum_{s=2}^{\infty}\frac{1}{s!}
		\sum_{\Psi_1,\ldots,\Psi_s\in \cM_{1,\infty}}^{(\Psi)}\ 
		\left(  \prod_{j=1}^s \big[N_m(\Psi_j)+Z^{-|\Psi_j|/2}V_\infty^{\rm int}(\Psi_j)\big] \right)
		\alpha(\Psi;\Psi_1,\ldots,\Psi_s)
		\\ & 
		\hphantom{
		+
		\sum_{s=2}^{\infty}\frac{1}{s!}
		\sum_{\Psi_1,\ldots,\Psi_s\in \cM_{1,\infty}}^{(\Psi)}\ 
		}
		\times
		\sum_{T \in \cS(\bar\Psi_1,\ldots,\bar\Psi_s)}
		\fG_{T,\infty}^{(1)}(\bar\Psi_1, \ldots,\bar\Psi_s),
	\end{split}
	\label{eq:BBF0a}
\end{equation}
where $V_\infty^{\rm int}$ is the infinite volume limit of the kernel of $\cV^{\rm int}(\phi,\xi):=\cV^{\rm int}(\phi,\xi,\bs 0)$, and we recall that in the first (resp. last) line, 
the factor $\fG_{T,\infty}^{(1)}(\Psi_1\setminus\Psi)$ (resp. $\fG_{T,\infty}^{(1)}(\bar\Psi_1, \ldots,\bar\Psi_s)$) is different 
from zero only if $\Psi_1\setminus\Psi$ is an element (resp. $\bar\Psi_1,\ldots,\bar\Psi_s$ are elements) of $\cup_{n\in2\mathbb N}(\{+i,-i\}\times\Lambda_\infty)^n$.

Similarly, at the following steps, for any $h\le 0$ and any $\Psi\in\cM_\infty$, we can write $V^{(h-1)}_\infty(\Psi)$, as computed via the infinite volume analogue of 
\eqref{itTt}, as follows: 
\begin{equation}
	\begin{split}
		V^{(h-1)}_\infty(\Psi) 
		&\sim \upsilon_h\cdot F_\infty(\Psi)+\cR_\infty V^{(h)}_\infty(\Psi)+
		\sum_{\substack{\Psi_1 \in \cM_{1,\infty}:\\ \Psi_1\supset \Psi}}\cR_\infty V^{(h)}_\infty(\Psi_1)
		\fG_{\emptyset,\infty}^{(h)}(\Psi_1\setminus\Psi)\\ & 
		+
		\sum_{s=2}^{\infty}\frac{1}{s!}
		\sum_{\Psi_1,\ldots,\Psi_s\in \cM_{1,\infty}}^{(\Psi)}\ 
		\left(  \prod_{j=1}^s \big[\upsilon_h\cdot F_\infty(\Psi_j)+\cR_\infty V_\infty^{(h)}(\Psi_j)\big] \right)
		\alpha(\Psi;\Psi_1,\ldots,\Psi_s)
		\\ & 
		\hphantom{
		+
		\sum_{s=2}^{\infty}\frac{1}{s!}
		\sum_{\Psi_1,\ldots,\Psi_s\in \cM_{1,\infty}}^{(\Psi)}\ 
		}
		\times
		\sum_{T \in \cS(\bar\Psi_1,\ldots,\bar\Psi_s)}
		\fG_{T,\infty}^{(h)}(\bar\Psi_1, \ldots,\bar\Psi_s),
	\end{split}
	\label{eq:BBFh-1ab}
\end{equation}
where $\fG_{T,\infty}^{(h)}(\bar\Psi_1,\ldots,\bar\Psi_s)$ is the infinite volume analogue of the function defined in \eqref{etcetcbase},
differing from it for an important feature (besides the `obvious' one that $\fG_{T,\infty}^{(h)}$ is defined in terms of the infinite plane propagators $g^{(h)}_{\ell,\infty}$ 
rather than those on the cylinder): since now the field multilabels $\Psi_i$ have the form $(\bs \omega_i,\bs D_i,\bs z_i)$, with $\bs D_i$ different from $\bs 0$, in general, 
the infinite plane propagators $g_{\ell,\infty}^{(h)}$, with $\ell=((\omega_i,D_i,z_i),(\omega_j,D_j,z_j))$, entering the 
definition of $\fG_{T,\infty}^{(h)}$ should now be interpreted as $\partial^{D_i}_{z_i}\partial^{D_j}_{z_j}g^{(h)}_{\omega_i\omega_j}(z_i,z_j)$.

We graphically interpret \eqref{eq:BBF0a} as in Figure~\ref{eq:figtree1}.
\begin{figure}
\begin{center}
	\begin{tabular}{rcl}
		$V_\infty^{(0)}$ & $=$ & 
		\tikz[baseline=-2pt]{ \draw (0,0) node {} -- (1,0) node[ctVertex] {} ; \draw (0,-0.5) node {$1$}; \draw (1,-0.5) node {$2$}; }
		+
		\tikz[baseline=-2pt]{ \draw (0,0) node {} -- +(1,0) node[vertex] {};}
		+
		\tikz[baseline=-2pt]{ \draw (0,0) node[vertex] {} -- +(1,0) node[vertex] {};}
		+
		\tikz[baseline=-2pt]{ \draw (0,0) node[vertex] {} -- +(1,0.5) node[vertex] {};
		\draw (0,0) -- (1,-0.5) node[vertex] {};}
		+
		\tikz[baseline=-2pt]{ \draw (0,0) node[vertex] {} -- +(1,0.5) node[vertex] {};
		\draw (0,0) -- (1,-0.5) node[ctVertex] {};}
		$+\cdots$
		\\ 
		\rotatebox{90}{$=$} & & 
		\\
		$\cL_\infty V_\infty^{(0)} + \cR_\infty V_\infty^{(0)}$
		&=&
		\tikz[baseline=-2pt]{\draw (0,0) node[ctVertex] {} ; \draw (0,-0.5) node {1}} 
		+
		\tikz[baseline=-2pt]{\draw (0,0) node[bigvertex] {} ; \draw (0,-0.5) node {1}} 
		\vphantom{
		\tikz[baseline=-2pt]{ \draw (0,0) node[vertex] {} -- +(1,0.5) node[vertex] {};
		\draw (0,0) -- (1,-0.5) node[ctVertex] {};}
		}
	\end{tabular}
\end{center}\caption{Graphical interpretation of \eqref{eq:BBF0a}}
\label{eq:figtree1}
\end{figure}
On the right hand side of the first line we have drawn a series of diagrams consisting of a root at scale $1$, which we will usually denote $v_0$, connected to $s$ other vertices (which we will call {\it endpoints}) at scale $2$ which are of two different types: 
\tikzvertex{ctVertex}, which we call {\it counterterm endpoints} and which represent $N_c$ or $N_m$,
and \tikzvertex{vertex}, called {\it interaction endpoints} and representing $V_\infty^{\rm int}$.
In the first two terms (in which there is no factor $\fG^{(1)}_{T,\infty}$, because it is `trivial', i.e., it equals $\fG^{(1)}_{\emptyset,\infty}(\emptyset)=1$) 
the root is drawn simply as the end of a line segment (we will say it is {\it undotted}); 
in the other terms (including all those with $s>1$) we draw a dot \tikzvertex{vertex} to indicate the presence of a nontrivial $\fG^{(1)}_{T,\infty}$ and $\alpha$ factors 
and additional sums.

In order to iterate the scheme, we decompose $V_\infty^{(0)}$ as $V_\infty^{(0)}\sim\cL_\infty V_\infty^{(0)}+\cR_\infty V_\infty^{(0)}$ and graphically represent
$\cL_\infty V_\infty^{(0)}=\upsilon_0\cdot F_\infty$ by a counterterm vertex \tikzvertex{ctVertex} at scale $1$, and $\cR_\infty V_\infty^{(0)}$ by a big vertex \tikzvertex{bigvertex} at scale $1$, as indicated in the second line of Figure~\ref{eq:figtree1}. 
Next, using the conventions of Figure~\ref{eq:figtree1}, we graphically represent $V_\infty^{(-1)}$, computed by \eqref{eq:BBFh-1ab} with $h=0$, 
as described in Figure~\ref{eq:figtree2}. 
\begin{figure}
\begin{center}
	\begin{tabular}{rl}
		$V^{(-1)}_\infty=$
		&
		\tikz[baseline=-2pt]{\draw (0,0) node {} -- (1,0) node [ctVertex] {};
			\draw (0,-0.5) node {0}; \draw (1,-0.5) node {1};
			}
		+
		\tikz[baseline=-2pt]{\draw (0,0) node {} -- (1,0) node [bigvertex] {};
			\draw (0,-0.5) node {0}; \draw (1,-0.5) node {1};
			}
		+
		\tikz[baseline=-2pt]{\draw (0,0) node [vertex] {} -- (1,0) node [bigvertex] {};
			}
		+
		\tikz[baseline=-2pt]{\draw (0,0) node [vertex] {} -- (1,0.5) node [bigvertex] {};
			\draw (0,0) -- (1,-0.5) node[bigvertex] {};
			}
		$+ \cdots$
		\\
		= &
		\tikz[baseline=-2pt]{\draw (0,0) node {} -- (1,0) node [ctVertex] {};
			\draw (0,-0.5) node {0}; \draw (1,-0.5) node {1};
			}
		+
		\tikz[baseline=-2pt]{\draw (0,0) node {} -- (1,0) node [label=above:{$\cR_\infty$}] {} -- (2,0) node[vertex] {};
			\draw (0,-0.5) node {0}; \draw (1,-0.5) node {1};\draw (2,-0.5) node {2};
			}
		+
		\tikz[baseline=-2pt]{\draw (0,0) node {} -- (1,0) node [vertex,label=above:{$\cR_\infty$}] {} -- (2,0) node[vertex] {};
			\draw (0,-0.5) node {0}; \draw (1,-0.5) node {1};\draw (2,-0.5) node {2};
			}
		+
		\tikz[baseline=-2pt]{\draw (0,0) node {} -- (1,0) node [label=above:{$\cR_\infty$}] {} -- (2,0) node [ctVertex] {} ;
			\draw (0,-0.5) node {0}; \draw (1,-0.5) node {1};\draw (2,-0.5) node {2};
		}
		\\
		\\ & $+ \cdots +$ 
		\tikz[baseline=-2pt]{\draw (0,0) node[vertex] (v0) {} -- (1,-0.5) node[vertex,label=below:{$\cR_\infty$}] (v1) {} -- +(1,0.3) node[vertex] {};
			\draw (v1) -- +(1,-0.2) node[ctVertex] {};
			\draw (v1) -- +(1,-0.7) node[vertex] {};
			\draw (v0) -- +(1,0.5) node[vertex,label=above:{$\cR_\infty$}] (v2) {};
			\draw (v2) -- +(1,0.25) node[vertex] {};
			\draw (v2) -- +(1,-0.25) node[vertex] {};
		}
		$+ \cdots$
	\end{tabular}
\end{center}\caption{Graphical interpretation of \eqref{eq:BBFh-1ab} with $h=0$.}\label{eq:figtree2}
\end{figure}
In passing from the first to the second line of Figure~\ref{eq:figtree2} we expanded the big vertex on scale $1$, which represents $\cR_\infty V_\infty^{(0)}$, 
by using the first line of Figure~\ref{eq:figtree1}, with an additional label $\cR_\infty$ on the vertices on scale $1$, to represent the action of $\cR_\infty$.  

The graphical equations in Figure~\ref{eq:figtree1} and Figure~\ref{eq:figtree2} are the analogues of the graphical equations in \cite[Figures~6-7]{GM01}, which 
contains a more detailed discussion of some aspects of this construction.
By iterating the same kind of graphical equations on lower scales, expanding the big vertices \tikzvertex{bigvertex} until we are left 
with endpoints all of type \tikzvertex{ctVertex} or \tikzvertex{vertex}, we find that $V_\infty^{(h)}$ can be graphically 
expanded in terms of trees of the kind depicted in Figure~\ref{fig:cT_example}, with the understanding that in principle there should be a label $\cR_\infty$ 
at all the intersections of the branches with the vertical lines, with the sole exception of $v_0$; however, by convention, in order not to overwhelm the 
figures, we prefer not to indicate these labels explicitly. We call such trees `GN trees', and denote by $\cT^{(h)}_\infty$, with $h\le 0$, the set of GN trees 
with root $v_0$ on scale $h+1$. We call `vertices' of a GN tree the root $v_0$, all its dotted nodes, and its 
endpoints. 

\begin{figure}[t]
	\centering
	  \begin{tikzpicture}[scale=0.8]
      \foreach \x in {3,...,9}
      {
	  \draw[very thin] (\x ,-1) -- (\x , 8);
      }
      \draw (3,-1.5) node {$h+1$};
      \draw (9,-1.5) node {$2$};
      \draw (3,4) node[vertex, label=-135:{$v_0$}] (vbar) {};
      \draw (vbar) -- ++ (1,0.75) node[vertex] {} -- ++(1,0.75) node[vertex] (vt1) {} -- ++ (2,0.67) node[vertex] (vt11) {}
      -- ++(2,-0.5) node[vertex] {};
      \draw (vt11) -- ++(1,0.67) node[ctVertex] (t1e1) {}; 
      \draw (vt1) -- ++(1,-0.5) node[ctVertex] (t1e2) {};
      \draw (vbar) -- (5,2) node[vertex] (v1) {} -- (7,2.33) node[vertex] (vt2) {} -- (9,3) node[ctVertex] {};
      \draw (vt2) -- ++(1,-0.33) node[vertex] (t2e1) {} -- ++(1,-0.33) node[vertex] (t2e2) {};
      \draw (v1) -- ++(2,-1) node[vertex] (t3v1) {} -- +(2,-1) node[vertex] {};
    \end{tikzpicture}
    \caption{Example of a tree in $\cT_\infty^{(h)}$. As explained in the text, one should imagine that a label $\cR$, indicating an action 
    of the $\cR$ operator, is present at all the intersections of the branches with the vertical lines, with the only exception of $v_0$. In order 
    not to overwhelm the figures, these labels are left implicit.}
	\label{fig:cT_example}
\end{figure}

We introduce some conventions and observations about the set of GN trees:
\begin{itemize}
\item The root $v_0$ is the unique leftmost vertex of the tree. Its degree (number of incident edges) must be at least 1, i.e., $v_0$ cannot be an endpoint. 
It may or may not be dotted; in order for $v_0$ not to be dotted, its degree must be $1$.
  \item Vertices, other than the root, with exactly one successor, are called `trivial'. 
  \item Interaction endpoints \tikzvertex{vertex} can only be on scale $2$. Counterterm endpoints \tikzvertex{ctVertex} can be on all scales $\le 2$; if 
  such an endpoint is on a scale $h<2$, then it must be connected
to a non-trivial vertex on scale $h-1$. [The reason is the following: if this were not case, then there would be an $\cR_\infty$ operator acting on the value of the 
endpoint, but this would annihilate it, because a \tikzvertex{ctVertex} endpoint on scale $h<2$ corresponds to $\cL_\infty V_\infty^{(h)}$, and the 
definitions of $\cL_\infty,\cR_\infty$ are such that $\cR_\infty(\cL_\infty V_\infty^{(h)})\sim 0$, see Remark \ref{remLR}.]
\end{itemize}

In terms of these trees, we shall write the expansion for $V^{(h)}_\infty=V_\infty^{(h)}[\allct]$, thought of as a function of $\allct:=\{(\nu_h,\zeta_h,\eta_h)\}_{h\le 0}$, as 
\begin{equation}\label{eq:exptree.1}
V^{(h)}_\infty[\allct]\sim \sum_{\tau\in\cT^{(h)}_\infty}W_\infty[\allct;\tau].
\end{equation}
In order to write
$W_\infty[\allct;\tau]$ more explicitly, we need to specify some additional notations and conventions about GN trees.
Let $\tau\in\cT_\infty:=\cup_{h\le 0}\cT^{(h)}_\infty$ and $v_0=v_0(\tau)$ be its root. Then:
\begin{itemize}
\item We let $V(\tau)$ be the set of vertices, $V_{e}(\tau)\subset V(\tau)$ the set of endpoints, and $V_0(\tau):=V(\tau)\setminus V_e(\tau)$.
We also let $V'(\tau):=V(\tau)\setminus\{v_0\}$ and $V'_0(\tau):=V_0(\tau)\setminus\{v_0\}$.
\item Given $v\in V(\tau)$, we let $h_v$ be its scale. 
  \item $v \ge w$ or `$v$ is a successor of $w$' means that the (unique) path from $v$ to $v_0$ passes through $w$.
  Obviously, $v>w$ means that $v$ is a successor of $w$ and $v\neq w$.   
  \item `$v$ is an immediate successor of $w$', denoted $v \successor w$, means that $v \ge w$, $v \neq w$, and $v$ and $w$ are directly connected.
	  For any $v \in \tau$, $S_v$ is the set of $w \in \tau$ such that $w \successor v$.
  \item For any $v>v_0$, we denote by $v'$ the unique vertex such that $v \rhd v'$.  
  \item Subtrees: for each $v \in V_0(\tau)$, let $\tau_v \in \cT^{(h_v-1)}_\infty$ denote the subtree consisting of the vertices with $w \ge v$.
\end{itemize}
Next,  we need to attach 
labels to their vertices, in order to distinguish the various contributions to the kernels arising from the different choices of the 
sets $\Psi_i$, etc, in \eqref{eq:BBF0a}, \eqref{eq:BBFh-1ab}, also keeping track of the order in which they appear. In particular, with each $v\in V(\tau)$ we associate a set $P_v$ of {\it field labels},
sometimes called the set of {\it external fields},
whose elements carry two informations: their position within an ordered list which they belong to, and their $\omega$ index; more precisely, 
the family $\ul P=\{P_v\}_{v\in V(\tau)}$ is characterized by the following properties, which correspond to properties of the iteration of the kernel: 
\begin{itemize}
	\item $|P_v|$ is always even and positive. 
		If $v$ is a \tikzvertex{ctVertex} endpoint, then $|P_v| = 2$.
	\item If $v$ is an endpoint of $\tau$, then $P_v$ has the form $\left\{ (j,1,\omega_1),\dots,(j,2n,\omega_{2n}) \right\}$, 
	where $j$ is the position of $v$ in the ordered list of endpoints, and $\omega_l\in\{+,-,i,-i\}$, if $h_v=2$, while $\omega_l\in\{+,-\}$, if 
	$h_v<2$. Given $f=(j,l,\omega_l)$, we let $o(f)=(j,l)$ and $\omega(f):=\omega_l$.
	\item If $v$ is not an endpoint, $P_v \subset \bigcup_{w \in S_v} P_w$.
	\item If $v\in V_0(\tau)$, we let $Q_v:= \left( \bigcup_{w \in S_v} P_w \right) \setminus P_v$ be the set of {\it contracted fields}. If $v$ is dotted, then we require 
$|Q_v| \ge 2 $ and $|Q_v| \ge 2 (|S_v| - 1)$; and conversely $Q_v$ is empty if and only if $v=v_0$ and $v_0$ is not dotted. 
	\item If $h_v = 1$ and $v$ is not an endpoint, then $Q_v = \bigcup_{w \in S_v} \condset{f \in P_w\, }{\, \omega(f) \in\{+i,-i\}}$ (all and only massive fields are integrated on scale $1$).\end{itemize}

For $\tau \in \cT_\infty$, we denote by $\cP(\tau)$ be the set of allowed $\ul P=\{P_v\}_{v\in V(\tau)}$. We also denote by $\bs \omega_{v}$ the tuple 
of components $\omega(f)$, with $f\in P_v$, and by $\bs \omega_v\big|_Q$  the restriction of $\bs\omega_v$ to any subset $Q\subseteq P_v$.
Note that the definitions 
imply that for $v,w \in \tau$ such that  neither $v \geq w$ or $v \leq w$ (for example when $v' = w'$ but $v \neq w$), $P_v$ and $P_w$ are disjoint,
as are $Q_v$ and $Q_w$.

Next, given $\ul P\in \cP(\tau)$, for all $v\in V_0(\tau)$ we define sets $T_v$, 
$$T_v = \left\{ \left( f_1,f_2 \right),\dots, \left(f_{2 |S_v| - 3}, f_{2 |S_v| - 2} \right) \right\}\subset Q_v^2,$$
called {\it spanning trees} associated with $v$, characterized by the following properties: 
if $w(f)$ denote the (unique) $w \in S_v$ for which $f \in P_w$, then $(f,f')\in T_v\Rightarrow$ $w(f)\neq w(f')$ and $o(f)<o(f')$; moreover, 
$\left\{ \left\{ w(f_1), w(f_2) \right\} , \dots,  \left\{  w(f_{2 |S_v| - 3}), w(f_{2 |S_v| - 2})\right\}\right\}$
is the edge set of a tree with vertex set $S_v$. We denote by $\cS(\tau,\ul P)$ the set of allowed $\ul T=\{T_v\}_{v\in V_0(\tau)}$.

Finally, for each $v\in V(\tau)$, we denote by $D_v$ a map $D_v : P_v \to \cD=\{D\in\{0,1,2\}^2: \|D\|_1\le 2\}$; the reader should think that a derivative operator 
$\partial^{D_v(f)}$ acts on the field labelled $f$. We denote by $\cD(\tau, \ul P)$ the set of families of maps $\ul D=\{D_v\}_{v\in V(\tau)}$. We also denote 
by $\bs D_v$ the tuple of components $D_v(f)$, with $f\in P_v$, and by $\bs D_v\big|_{Q}$ the restriction of $\bs D_v$ to any subset $Q\subseteq P_v$. 
Additionally, if a map $z: P_v\to \Lambda_\infty$ is assigned, we denote 
by $\bs z_v$ the tuple of components $z(f)$, with $f\in P_v$, and by $\Psi_v=\Psi(P_v):=(\bs\omega_v,\bs D_v,\bs z_v)$ the field multilabel associated with 
$\bs\omega_v,\bs D_v,\bs z_v$; moreover, if $v\in V_0(\tau)$ and also the maps $z: P_w\to \Lambda_\infty$, for all $w\in S_v$, are assigned, for each $w\in S_v$ 
we denote by 
$\bar\Psi_w=\Psi(P_w\setminus P_v)=(\bs \omega_w\big|_{P_w\setminus P_v}, \bs D_w\big|_{P_w\setminus P_v}, \bs z_w\big|_{P_w\setminus P_v})$
the restriction of $\Psi_w$ to $P_w\setminus P_v$ (here $\bs z_w\big|_Q$ is the restriction of $\bs z_w$ to the subset $Q\subset P_w$).  

\medskip

In terms of these definitions, we write $W_\infty[\allct;\tau]$ in the right side of \eqref{eq:exptree.1} as
\begin{equation}\label{Wtauexp}
	W_\infty[\allct;\tau]
	= 
	\sum_{\ul P\in\cP(\tau)}\sum_{\ul T\in\cS(\tau,\ul P)}\sum_{\ul D\in \cD(\tau,\ul P)}
	W_\infty[\allct;\tau,\ul P,\ul T, \ul D], 
\end{equation}
where $W_\infty[\allct;\tau,\ul P,\ul T, \ul D]$ is the translationally invariant kernel inductively defined as follows: 
letting $\bs D_{v_0}':=\bigoplus_{v\in S_{v_0}}\bs D_v\big|_{P_{v_0}}$ for $h_{v_0}<1$ and $\bs D_{v_0}':={\bf 0}$ for $h_{v_0}=1$, 
\begin{equation}
	\begin{split}
		&
		W_\infty [\allct; \tau, \ul P, \ul T, \ul D](\bs \omega_0,\bs D_0,\bs z_0)=\mathds 1(\bs \omega_0=\bs \omega_{v_0})\mathds 1\Big(\bs D_0=\bs D_{v_0}=\bs D_{v_0}'\Big)\,\frac{\alpha_{v_0}}{ |S_{v_0}|!}
		\\
		&\qquad \times  \sum_{\substack{z: P_{v_0} \cup Q_{v_0}\to \Lambda_\infty\\ 
		\bs z_0= \bs z_{v_0}}}\hskip-.2truecm
		\fG_{T_{v_0},\infty}^{(h_{v_0})}(\bar\Psi_{v_1},\ldots,\bar\Psi_{v_{s_{v_0}}})
		\prod_{v \in S_{v_0}} K_{v,\infty}(\Psi_v), 	
	\end{split}
	\label{eq:W8def:1.2}
\end{equation}
where $\alpha_{v_0}=\alpha(\Psi_{v_0};\Psi_{v_1},\ldots,\Psi_{v_{s_{v_0}}})$, cf.\ \eqref{eq:BBF0a}, and 
we recall that, if $|S_{v_0}|=1$, then $T_{v_0}=\emptyset$. In this case, if $\Psi_{v_1}=\Psi_{v_0}$, then $\fG_{\emptyset,\infty}^{(h_{v_0})}(\emptyset)$ should be interpreted as being equal to $1$; this latter case is the one in which, graphically, $v_0$ is not dotted. In the second line of \eqref{eq:W8def:1.2}, if $h_{v_0}=1$, 
\begin{equation}
	K_{v,\infty}(\Psi_v)= \piecewise{ N_c(\Psi) & \text{if $v$ is of type}\ \tikzvertex{ctVertex} \ \text{and $v_0$ is undotted},\\
	N_m(\Psi_v) & \text{if $v$ is of type}\ \tikzvertex{ctVertex} \ \text{and $v_0$ is dotted},\\
		  Z^{-|\Psi_v|/2}V_\infty^{\rm int}(\Psi_v)& \text{if $v$ is of type}\ \tikzvertex{vertex},}\label{Kvinfty1}\end{equation}
while, if $h_{v_0}<1$, 
\begin{equation} 
	K_{v,\infty}(\Psi_v)
	:=	
	\piecewise{\upsilon_{h_{v_0}}\cdot F_\infty(\Psi_v) 
		  	& \text{if $v\in V_e(\tau)$ is of type \tikzvertex{ctVertex} and $h_v = h_{v_0}+1$}\\ 
			\cR_\infty N_c (\Psi_v) & \text{if $v\in V_e(\tau)$ is of type \tikzvertex{ctVertex} and $h_v = 2$}\\ 
			Z^{-|\Psi_v|/2}\cR_\infty V_\infty^{\rm int}(\Psi_v) & \text{if $v\in V_e(\tau)$ is of type \tikzvertex{vertex}} \\
			\lis W_\infty [\allct; \tau_v, \ul P_{v}, \ul T_{v}, \ul D_{v}](\Psi_v)
		  	& \text{if $v\in V_0(\tau)$,}\label{defKv4.1}
		}
\end{equation}
where, in the last line of \eqref{defKv4.1}, letting $\ul P_v$ (resp. $\ul T_v$, resp. $\ul D_v$) be the restriction of $\ul P$ (resp. $\ul T$, resp. $\ul D$) 
to the subtree $\tau_v$, and $\ul D_v':=\{D_{v}'\}\cup\{D_w\}_{w\in V(\tau): w>v_0}$ (here $D_{v}'$ is the map such that 
$\bs D_{v}':=\bigoplus_{w\in S_{v}}\bs D_w\big|_{P_{v}}$), we denoted 
\begin{equation}
\label{lisWinfty}\lis W_\infty [\allct; \tau_v, \ul P_{v}, \ul T_{v}, \ul D_{v}]:=\cR_\infty W_\infty [\allct; \tau_v, \ul P_{v}, \ul T_{v}, \ul D_{v}']\Big|_{\bs D_{v}},
\end{equation}
and we recall that the definition of $\cR_\infty V\big|_{\bs D}$ was given a few lines after \eqref{defcR}. 
The inductive proof that $V_\infty^{(h)}=V_\infty^{(h)}[\allct]$, as iteratively computed by \eqref{eq:BBF0a}-\eqref{eq:BBFh-1ab}, 
is equivalent to $\sum_{\tau\in\cT^{(h)}_\infty}W_\infty[\allct;\tau]$, with $W_\infty[\allct;\tau]$ as in 
\eqref{Wtauexp}, \eqref{eq:W8def:1.2}, is straightforward and left to the reader. 

\begin{Remark} Given $\tau\in \cT_\infty$ and $\ul P\in\cP(\tau)$, we say that $\ul D$ is `allowed' if $W_\infty [\allct; \tau, \ul P, \ul T, \ul D]\not\sim 0$.
With some abuse of notation, from now on we will re-define $\cD(\tau,\ul P)$ 
to be the set of allowed $\ul D$ for a given $\tau$ and $\ul P$. Of course, such a redefinition has no impact on the validity of \eqref{Wtauexp}. 
If $\ul D$ is allowed, then it must satisfy a number of constraints. In particular, 
if $v\in V_0(\tau)$, $w\in V_e(\tau)$ and $f\in P_v\cap P_w$, then $D_v(f)\ge D_w(f)$. Moreover, 
letting, for any $v\in V_0(\tau)$, $R_v:=\| \bs D_v\|_1-\sum_{w\in S_v}\| \bs D_w\big|_{P_v}\|_1\equiv \| \bs D_v\|_1-\| \bs D_v'\|_1$, one has
	\begin{equation}
		R_v
		=
		\piecewise{
			2, & 
				|P_v| = 2 \textup{ and } \left\| \bs D_v' \right\|_1 = 0 \\ 
			1, & 
				|P_v| = 2 \textup{ and } \left\| \bs D_v' \right\|_1 = 1 \textup { or } 
				|P_v| = 4 \textup{ and } \left\| \bs D_v' \right\|_1 = 0 \\ 
			0, & \textup{otherwise},
		}
		\label{eq:Rv_def}
	\end{equation}
	with the exception of $v_0$, for which $R_{v_0}\equiv 0$ (in other words $\bs D_{v_0}=\bs D_{v_0}'$, see \eqref{eq:W8def:1.2}). 
Finally,  the combination 
\begin{equation} \label{defscaldim} d(P_v,\bs D_v):=2-\frac{|P_v|}{2}-\|\bs D_v\|_1,\end{equation}
known as the {\bf scaling dimension} of $v$, is $\le -1$ for all $v\in V_0'(\tau)$, and for all $v\in V_e(\tau)$ such that $h_v=2$ and $h_{v'}<1$. 
As we shall see below, see in particular the statement of Lemma \ref{lm:W8:scaldim}, the fact that $d(P_v,\bs D_v)\le -1$ for all such vertices 
guarantees that the expansion in GN trees is convergent uniformly in $h_{v_0}$. 
\label{remcruc}
\end{Remark}

\begin{remark}
	With the other arguments fixed, the number of choices of $\ul D$ for which $W_\infty[\allct;\tau,\ul P, \ul T, \ul D]$ does not vanish is no more than 
	$10^{\left|V(\tau)\right|}$: there is a choice of at most 10 possible values for each endpoint\footnote{The value 10 bounds 
	the number of different terms that the operator $\cR_\infty$ produces when it acts non-trivially on an interaction endpoints. In fact, the cases in which $\cR_\infty$ acts non-trivially are those listed in the right side of \eqref{defcR} with $(n,p)=(2,2), (4,1)$. If $(n,p)=(2,2)$, the number of possible values taken by $D_v$ is 10 (one derivative in direction $i\in\{1,2\}$ on the first Grassmann field and one derivative in direction $j\in\{1,2\}$ on the second Grassmann field, etc.); if $(n,p)=(4,1)$, the number of possible values taken by $D_v$ is 8, which is smaller than 10 (one derivative in direction $i\in\{1,2\}$ on the $k$-the Grassmann field, with $k\in\{1,2,3,4\}$).}, and then the other values are fixed except for a choice of up to 10 possibilities each time that $\cR_\infty$ acts non trivially on a vertex $v\in V_0'(\tau)$, i.e., each time that, for such a vertex, $R_v>0$, see \eqref{eq:Rv_def}. 
	\label{rem:D_choice}
\end{remark}

\subsection{Bounds on the kernels of the full plane effective potentials}
\label{sec:formal_bounds}

In this subsection we show that the 
norm of the kernels $W_\infty[\allct;\tau,\ul P,\ul T,\ul D]$ is summable over $\tau,\ul P,\ul T,\ul D$, provided that the elements of the sequence $\allct$ are 
bounded and sufficiently small. We measure the size of $W_\infty[\allct;\tau,\ul P,\ul T,\ul D]$ in terms of the weighted norm 
\eqref{eq:weightnorm}, with $\kappa=\frac{c_0}22^{h_{v_0}}$, where $h_{v_0}$ is the scale of the root of $\tau$ and $c_0$ is the minimum between the 
constant $c$ in Proposition \ref{thm:g_decomposition} and half the constant $c$ in \eqref{eq:W_free_bound}, \eqref{eq:WL1bound}, \eqref{eq:WE_base_decayrenzy}, 
\eqref{eq:WE_base_decayletta}. With some abuse of notation, we let 
\begin{equation} \label{defnormh}
\|W_\infty[\allct;\tau,\ul P,\ul T,\ul D]\|_{h_{v_0}}:=
\|W_\infty[\allct;\tau,\ul P,\ul T,\ul D]\|_{(\frac{c_0}{2}2^{h_{v_0}})}.
\end{equation}
The first, basic, bound on the kernels $W_\infty[\allct;\tau,\ul P,\ul T,\ul D]$ is provided by the following proposition. 
We recall that we assumed once and for all that $Z\in U$, with $U=\{z\in\mathbb R: |z-1|\le 1/2\}$, and that $t_1^*,t_1,t_2\in K'$, with $K'$ the compact set defined before the statement of Theorem \ref{prop:main}. 

\begin{proposition} Let $W_\infty[\allct;\tau,\ul P,\ul T,\ul D]$ be inductively defined as in \eqref{eq:W8def:1.2}. There exist $C,\kappa,\lambda_0>0$ 
such that, for any $\tau\in \cT_\infty$, $\ul P\in \cP(\tau)$, $\ul T\in \cS(\tau,\ul P)$, $\ul D\in \cD(\tau,\ul P)$, and $|\lambda|\le \lambda_0$,
	\begin{eqnarray}
&& \|W_\infty[\allct;\tau,\ul P,\ul T,\ul D]\|_{h_{v_0}}\le     C^{\sum_{v\in V_e(\tau)}|P_v|}
	    \Big(
	      \prod_{v \in V_0(\tau)}
	      \frac{2^{(\tfrac12 |Q_v|+ \sum_{w\in S_v}\| \bs D_w |_{Q_v} \|_1 - R_v + 2 - 2 \left|S_v\right|)h_v}}{ |S_v|!}
	    \Big)\nonumber
	    \\ && \times \prod_{v \in V_e(\tau)}\piecewise{
	      2^{(h_v-1)(2 - \tfrac12 |P_v| - \left\|\bs D_v\right\|_1 )}\epsilon_{h_v-1} & \text{if $v$ is of type \tikzvertex{ctVertex}}\\
	     |\lambda|^{\max\{1,\kappa|P_v|\}} & \text{if $v$ is of type \tikzvertex{vertex}}}
		\label{eq:W8_primitive_bound}
	  \end{eqnarray}
	where $\epsilon_h:=\max\{|\nu_h|,|\zeta_h|,|\eta_h|\}$ if $h \le 0$ and 
	$\epsilon_1 := \max \left\{ \| N_c \|_2, \| N_m\|_2 \right\}$.
	\label{thm:W8:existence}
\end{proposition}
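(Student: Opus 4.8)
The plan is to prove Proposition~\ref{thm:W8:existence} by induction on the tree $\tau$, following the structure of the inductive definition~\eqref{eq:W8def:1.2} of $W_\infty[\allct;\tau,\ul P,\ul T,\ul D]$. The base case consists of the endpoints: for a \tikzvertex{ctVertex} endpoint $v$, the kernel is $\upsilon_{h_v-1}\cdot F_\infty$, $\cR_\infty N_c$, or $N_m$, whose weighted norms are bounded by $\epsilon_{h_v-1}$ times the appropriate power of $2$ dictated by the scaling dimension $2-\tfrac12|P_v|-\|\bs D_v\|_1$ (for the running-coupling-constant endpoints this power is built into the definition~\eqref{rCC}, with the $2^h$ prefactor on $\nu_h$; for the scale-$2$ endpoints $N_c,N_m$ one uses $\|N_c\|_2,\|N_m\|_2=\epsilon_1$ directly). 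For an \tikzvertex{vertex} endpoint, the bound $|\lambda|^{\max\{1,\kappa|P_v|\}}$ on the weighted norm of $\cR_\infty V_\infty^{\rm int}(\Psi_v)$ follows from~\eqref{eq:WL1bound}, after absorbing the $Z^{-|\Psi_v|/2}$ factor (harmless since $Z\in U$) and applying the norm estimates for $\cR_\infty$ from \cref{lem:R_bounds} and~\eqref{staz.ok0}-\eqref{staz.ok}, with $\epsilon$ chosen of order $2^{h_v}$; the loss of a few powers of $2^{h_v}$ is absorbed into the constant $C^{|P_v|}$ because at an endpoint $|P_v|$ is bounded in terms of the scaling dimension.

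For the inductive step at a vertex $v_0$ with $|S_{v_0}|=s$ immediate successors, I would estimate the right-hand side of~\eqref{eq:W8def:1.2} by: (i) bounding the tree propagator factor $\fG_{T_{v_0},\infty}^{(h_{v_0})}(\bar\Psi_{v_1},\ldots,\bar\Psi_{v_s})$ using the Gram/Pfaffian structure from~\eqref{etcetcbase} together with the Gram bounds of Proposition~\ref{thm:g_decomposition} (items~\ref{it:gh_gram} and~\ref{it:gh_gram_bounds}); (ii) performing the sum over the position map $z:P_{v_0}\cup Q_{v_0}\to\Lambda_\infty$ using the exponential decay of the single-scale propagators and the inductive weighted-norm bounds on the subtree kernels $\lis W_\infty[\allct;\tau_w,\ldots]$ for $w\in S_{v_0}\cap V_0(\tau)$; and (iii) reinstating the action of $\cR_\infty$ via~\eqref{lisWinfty} using \cref{lem:R_bounds}. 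The Gram–Hadamard bound on the Pfaffian gives a factor of the form $\prod_\ell C\,2^{h_{v_0}/2}\cdot 2^{h_{v_0}/2}$ for the $|Q_{v_0}|-$many half-edges not in the spanning tree, and each of the $s-1$ spanning-tree propagators contributes $C\,2^{h_{v_0}}$; accounting for derivative labels $\bs D_w|_{Q_{v_0}}$ picks up the extra factor $2^{h_{v_0}\sum_{w\in S_{v_0}}\|\bs D_w|_{Q_{v_0}}\|_1}$; the position sum over the internal vertices of the spanning tree costs a factor $C^{|Q_{v_0}|}$ times (at worst) one inverse power of the decay rate per edge, i.e.\ $C\,2^{-h_{v_0}}$ per spanning-tree edge, and these cancel against the $s-1$ propagator factors to leave exactly the claimed power $\tfrac12|Q_{v_0}|+\sum_{w\in S_{v_0}}\|\bs D_w|_{Q_{v_0}}\|_1+2-2s$ of $2^{h_{v_0}}$. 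The $1/s!$ is inherited directly from~\eqref{eq:W8def:1.2}, and the sign factor $\alpha_{v_0}$ has modulus $1$.

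The compatibility of the weight $e^{\kappa\delta(\bs z)}$ across scales requires some care: the root is weighted with $\kappa=\tfrac{c_0}{2}2^{h_{v_0}}$ while the subtrees $\tau_w$ are weighted with $\tfrac{c_0}{2}2^{h_w}\ge\tfrac{c_0}{2}2^{h_{v_0}}$ (since $h_w>h_{v_0}$); the surplus decay $e^{(\tfrac{c_0}{2}2^{h_w}-\tfrac{c_0}{2}2^{h_{v_0}})\delta(\bs z_w)}$, together with the $e^{-c2^{h_{v_0}}\|z-z'\|_1}$ from the tree-propagator decay bound~\eqref{eq:gBh_bounds}, is enough to reconstruct the overall weight $e^{\tfrac{c_0}{2}2^{h_{v_0}}\delta(\bs z_{v_0})}$ on the left side after summing over all internal positions — this is the standard "telescoping of the decay" argument, and it is exactly why the propagator decay rate on scale $h$ must be $\sim 2^h$ (optimal dimensional bound). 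I expect the main obstacle to be the bookkeeping in step (iii): because each subtree kernel $\lis W_\infty$ is itself obtained by applying $\cR_\infty$ and then restricting to a derivative label $\bs D_v$, one must check that the derivative-counting exponent $R_v$ defined in~\eqref{eq:Rv_def} correctly matches the number of extra discrete derivatives produced, so that the powers of $2^{h_v}$ gained from $\cR_\infty$ (via the $\epsilon^{-1}\sim 2^{-h_v}$ losses in \cref{lem:R_bounds}) are exactly compensated by the improved scaling dimension $d(P_v,\bs D_v)\le-1$ guaranteed by Remark~\ref{remcruc}; getting all these exponents to line up — in particular distinguishing the $|P_v|=2$ versus $|P_v|=4$ cases and the dependence on $\|\bs D_v'\|_1$ — is the delicate accounting that the statement's somewhat baroque exponent $\tfrac12|Q_v|+\sum_{w}\|\bs D_w|_{Q_v}\|_1-R_v+2-2|S_v|$ is designed to encode.
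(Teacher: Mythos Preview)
Your approach is essentially the same as the paper's: induction on the tree structure via the recursive definition~\eqref{eq:W8def:1.2}, Gram--Hadamard for the Pfaffian in $\fG_{T,\infty}^{(h_{v_0})}$, the exponential decay of the spanning-tree propagators together with the subadditivity $\delta(\bs z_{v_0})\le \sum_{(f,f')\in T_{v_0}}\|z(f)-z(f')\|_1+\sum_{w\in S_{v_0}}\delta(\bs z_w)$ to reconstruct the weight and perform the position sums, and the $\cR_\infty$ estimates of \cref{lem:R_bounds} (with $\epsilon\sim 2^{h_v}$) to extract the $2^{-h_vR_v}$ factor at each renormalized vertex. One arithmetic slip in your sketch: the Pfaffian involves only the $|Q_{v_0}|-2(s-1)$ half-edges \emph{not} in the spanning tree (each contributing $C\,2^{h_{v_0}/2}$), and each two-dimensional position sum costs $C\,2^{-2h_{v_0}}$ rather than $2^{-h_{v_0}}$; with these corrections the count reads $2^{h_{v_0}[\frac12(|Q_{v_0}|-2(s-1))+(s-1)-2(s-1)]}=2^{h_{v_0}(\tfrac12|Q_{v_0}|+2-2s)}$, matching the claimed exponent.
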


\begin{proof}
Let us first consider the case $h_{v_0}=1$, in which case, 
using \eqref{eq:W8def:1.2} and \eqref{Kvinfty1}, we find
\begin{eqnarray} &&
\|W_\infty[\allct;\tau,\ul P,\ul T,\ul 0]\|_{h_{v_0}}\le 
\frac{1}{|S_{v_0}|!}
\sum_{\substack{z: \cup_{v\in S_{v_0}}\!P_{v}\to \Lambda_\infty:\\ 
	z(f_1)\ \text{fixed}}}
e^{c_0\delta(\bs z_{v_0})} \big|\fG^{(1)}_{T_{v_0},\infty}(\bar\Psi_{v_1},\ldots,\bar\Psi_{v_{s_{v_0}}})\big|
\cdot \nonumber\\
&& \qquad \cdot \prod_{v \in S_{v_0}}
		\piecewise{| N_c(\Psi_v)| & \text{if $v$ is of type}\ \tikzvertex{ctVertex}\ \text{and $v_0$ is undotted},\\
		| N_m (\Psi_v)| & \text{if $v$ is of type}\ \tikzvertex{ctVertex}\ \text{and $v_0$ is dotted},\\
		 |Z|^{-|\Psi_v|/2} |V_\infty^{\rm int}(\Psi_v)|& \text{if $v$ is of type}\ \tikzvertex{vertex},}\label{ekk.k1}
\end{eqnarray}
where $f_1$ is the first element of $P_{v_0}$. By using the definition of $\fG^{(1)}_{T,\infty}$ and the property $({\rm Pf}M)^2=\det M$, valid
for any antisymmetrix matrix $M$, we find 
\begin{equation} \big|\fG^{(1)}_{T,\infty}(Q_1,\ldots, Q_s)\big|\le\Big(
\prod_{\ell\in T}|g_{\ell,\infty}^{(1)}| \Big)\sup_{\bs t}\sqrt{|\det G^{(1)}_{Q_1,\ldots, Q_s, T,\infty}(\bs t)|},\label{ekk.k0}
\end{equation}
so that, thanks to items 1,3,4 of \cref{thm:g_decomposition} (which apply to $\fg^{(1)}_\infty$ by \cref{rem:gramfgE}), and to the Gram-Hadamard inequality \cite[Appendix~A.3]{GM01}, which allows one to bound the determinant of any matrix $M$ with elements $M_{i,j}=(\gamma_i,\tilde\gamma_j)$ as 
$|\det M|\le \prod_{i}|\gamma_i|\, |\tilde \gamma_i|$, 
\begin{equation} 
	\begin{split}
		\big|\fG^{(1)}_{T,\infty}(Q_1,\ldots, Q_s)\big|
		& \le 
		C^{s-1} \Big(
		\prod_{(f,f')\in T} e^{-2c_0 \|z(f)-z(f')\|_1} \Big)\, \Big(\prod_{f\in \cup_i Q_i\setminus T} |\gamma_{\omega(f),0,z(f)}^{(1)}|\cdot |\tilde \gamma_{\omega(f),0,z(f)}^{(1)}|\Big)^{1/2}
		\\ & \le 
		(C')^{\sum_i |Q_i|} \Big(
		\prod_{(f,f')\in T} e^{-2c_0 \|z(f)-z(f')\|_1} \Big).\label{ekk.k2}
	\end{split}
\end{equation}
If we now note that 
\begin{equation}
		\delta(\bs z_0)
		\le
		\sum_{(f,f') \in T_{v_0}} \|z(f) - z(f')\|_1
		+
		\sum_{v \in S_{v_0}} \delta (\bs z_v)
		,\label{deltab}
	\end{equation}
and plug \cref{ekk.k2} into \eqref{ekk.k1}, we find
\begin{equation} 
	\begin{split}
		\|W_\infty[\allct;\tau,\ul P,\ul T,\ul 0]\|_{h_{v_0}}\le& 
		\frac{C^{\sum_{v\in S_{v_0}}|P_v|}}{|S_{v_0}|!} \Big(\sum_{z\in \Lambda_\infty}e^{-c_0\|z\|_1}\Big)^{|S_{v_0}|-1}
		\\ & \times
		\prod_{v \in S_{v_0}}
		\piecewise{
		\|(N_c)_{2}\|_{(c_0)}   & \text{if $v$ is of type}\ \tikzvertex{ctVertex}\ \text{and $v_0$ is undotted},\\
		\|(N_m)_{2}\|_{(c_0)}   & \text{if $v$ is of type}\ \tikzvertex{ctVertex}\ \text{and $v_0$ is undotted},\\
\|(V_\infty^{\rm int})_{|P_v|}\|_{(c_0)} & \text{if $v$ is of type}\ \tikzvertex{vertex},}
	\end{split}
		  \label{ekk-kk}
\end{equation}
which immediately implies the desired bound for $h_{v_0}=1$, because 
of the definition of $\epsilon_1$
and $ \|(V_\infty^{(1)})_{|P_v|}\|_{(c_0)}\le C^{|P_v|}|\lambda|^{\max\{1,\kappa|P_v|\}}$, see \eqref{eq:WL1bound} and recall the definition of $c_0$ at the beginning of this 
subsection.

\medskip

Next, we consider the case $h_{v_0}<1$, in which case $K_{v,\infty}$ is defined by \eqref{defKv4.1}. We proceed similarly: we start from \eqref{eq:W8def:1.2}  and use the analogue of \eqref{ekk.k0}-\eqref{ekk.k2}, namely
\begin{eqnarray} &&\big|\fG^{(h_{v_0})}_{T,\infty}(Q_1,\ldots, Q_s)\big|\le \Big(\prod_{(f,f')\in T} |g_{\ell,\infty}^{(h_{v_0})}| \Big)\, \Big(\prod_{f\in \cup_i Q_i\setminus T} |\gamma_{\omega(f),D(f),z(f)}^{(h_{v_0})}|\cdot |\tilde \gamma_{\omega(f),D(f),z(f)}^{(h_{v_0})}|\Big)^{1/2}
\nonumber\\
&&\qquad \le  (C2^{h_{v_0}})^{\frac12\sum_i |Q_i|} 2^{h_{v_0}\sum_{f\in \cup_i Q_i}\|D(f)\|_1}\Big(
\prod_{(f,f')\in T} e^{-c_0 2^{h_{v_0}} \|z(f)-z(f')\|_1} \Big) , \label{detb}
\end{eqnarray}
where in the second inequality we used again the bounds in items 1,4 of Prop.\ref{thm:g_decomposition}. Using \eqref{detb} and, again, \eqref{deltab}, we obtain the analogue of \eqref{ekk-kk}, 
\begin{eqnarray} 
&&\|W_\infty[\allct;\tau,\ul P,\ul T,\ul D]\|_{h_{v_0}}\le 
\frac{1}{|S_{v_0}|!} (C2^{h_{v_0}})^\frac{|Q_{v_0}|}2
2^{h_{v_0}\sum_{v\in S_{v_0}}\|\bs D_v|_{Q_{v_0}}\|_1}
\cdot \Big(\sum_{z\in \Lambda_\infty}e^{-\frac{c_0}2 2^{h_{v_0}}\|z\|_1}\Big)^{|S_{v_0}|-1}\cdot\nonumber\\
&&\cdot \prod_{v \in S_{v_0}}
		\piecewise{\|(\upsilon_{h_{v_0}}\cdot F_\infty)_{2,\|\bs D_v\|_1}\|_{h_{v_0}} 
		  	& \text{if $v\in V_e(\tau)$ is of type \tikzvertex{ctVertex} and $h_v = h_{v_0}+1$}\\ 
			\|(\cR_\infty N_c)_{2,\|\bs D_v\|_1}\|_{h_{v_0}} & \text{if $v\in V_e(\tau)$ is of type \tikzvertex{ctVertex} and $h_v \neq h_{v_0}+1$}\\ 
			|Z|^{-|\Psi_v|/2}\|(\cR_\infty V_\infty^{\rm int})_{|P_v|,\|\bs D_v\|_1}\|_{h_{v_0}} & \text{if $v\in V_e(\tau)$ is of type \tikzvertex{vertex}} \\
			\|(\lis W_\infty [\allct; \tau_v, \ul P_{v}, \ul T_{v}, \ul D_{v}])_{|P_v|,\|\bs D_v\|_1}\|_{h_{v_0}}
		  	& \text{if $v\in V_0(\tau)$,}
		}\label{basbih}
\end{eqnarray}
The terms in the second line can be bounded as follows: 
\begin{itemize}
	\item If $v\in V_e(\tau)$ is of type \tikzvertex{ctVertex} and $h_v = h_{v_0}+1$,
		\begin{eqnarray}\|(\upsilon_{h_{v_0}}\cdot F_\infty)_{2,\|\bs D_v\|_1}\|_{h_{v_0}}& \le& C \piecewise{2^{h_{v_0}}|\nu_{h_{v_0}}| & \text{if $\|\bs D_v\|_1=0$}\\
		\max\{|\zeta_{h_{v_0}}|, |\eta_{h_{v_0}}|\} & \text{if $\|\bs D_v\|_1=1$}}\\
		&\le& C' 2^{(h_{v}-1)(2-\frac{|P_v|}2-\|\bs D_v\|_1)}\epsilon_{h_v-1}\end{eqnarray}
		where we recall that $\epsilon_h=\max\{|\nu_h|,|\zeta_h|,|\eta_h|\}$ and, in passing from the first to the second line, we used the fact that $|P_v|=2$ and $h_{v_0}=h_{v}-1$, so that $2^{(h_{v}-1)(2-\frac{|P_v|}2-\|\bs D_v\|_1)}$ is equal to $2^{h_{v_0}}$, 
		if $\|\bs D_v\|_1=0$, and is equal to 1, if $\|\bs D_v\|_1=1$.
	\item If $v\in V_e(\tau)$ is of type \tikzvertex{ctVertex} and $h_v \neq h_{v_0}+1$,   using \cref{lem:R_bounds} and the definition of $\epsilon_1$ we have
		$$
		\|( \cR_\infty N_c)_{2,\|\bs D_v\|_1}\|_{h_{v_0}}
		\le 
		\|( \cR_\infty N_c)_{2,\|\bs D_v\|_1}\|_{0}
		\le 
		2 C \| N_c \|_1
		\le
		2 C \epsilon_1
		=
		C 2^{(h_{v}-1)(2-\frac{|P_v|}2-\|\bs D_v\|_1)}\epsilon_1,
		$$
		where the last equality holds trivially since we necessarily have $h_v=2$ and $|P_v|=2$, and $(\cR_\infty N_c)_{2,\|\bs D_v\|_1}$ vanishes unless $\|\bs D_v\|_1 = 2$.
	 \item If $v\in V_e(\tau)$ is of type \tikzvertex{vertex} (and, therefore, $h_v = 2$), then 
		$$\|(\cR_\infty V_\infty^{\rm int})_{|P_v|,\|\bs D_v\|_1}\|_{h_{v_0}}\le \|(\cR_\infty V_\infty^{\rm int})_{|P_v|,\|\bs D_v\|_1}\|_{0}\le C^{|P_v|}|\lambda|^{\max\{1,\kappa |P_v|\}},$$
		thanks to Lemma \ref{lem:R_bounds} and Eq.\eqref{eq:WL1bound}. 
		\item If $v\in V_0(\tau)$, recalling the definition of $\lis W_\infty[\allct; \tau_v, \ul P_{v}, \ul T_{v}, \ul D_{v}]$, see 
		\cref{lisWinfty}, and the bounds on the norm of $\cR_\infty$, see \eqref{ek.1}--\eqref{staz.ok}, we find
		\begin{eqnarray} && 
		\|(\lis W_\infty [\allct; \tau_v, \ul P_{v}, \ul T_{v}, \ul D_{v}])_{|P_v|,\|\bs D_v\|_1}\|_{h_{v_0}}\label{firsts}\\
		&&\le 
		C 2^{-h_v(\|\bs D_v\|_1-\|\bs D_v'\|_1)}\|(W_\infty [\allct; \tau_v, \ul P_{v}, \ul T_{v}, \ul D_{v}'])_{|P_v|,\|\bs D_v'\|_1}\|_{h_{v}}.\nonumber\end{eqnarray}
		Recalling the definition of $\bs D_v'$, that is $\bs D_v'=\bigoplus_{w\in S_v} \bs D_w\big|_{P_v}$, as well as the one of $R_v$, see the line before \eqref{eq:Rv_def}, we recognize that 
		$\|\bs D_v\|_1-\|\bs D_v'\|_1=R_v$. Note also that $W_\infty [\allct; \tau_v, \ul P_{v}, \ul T_{v}, \ul D_{v}']$ coincides with 
		its restriction $(W_\infty [\allct; \tau_v, \ul P_{v}, \ul T_{v}, \ul D_{v}'])_{|P_v|,\|\bs D_v'\|_1}$, so that the second line of \eqref{firsts}
		can be rewritten more compactly as $C 2^{-h_v R_v} \|W_\infty [\allct; \tau_v, \ul P_{v}, \ul T_{v}, \ul D_{v}']\|_{h_v}$. 
\end{itemize}
Plugging these bounds in \eqref{basbih}, noting that $\sum_{z\in \Lambda_\infty}e^{-\frac{c_0}2 2^{h_{v_0}}\|z\|_1}\le C 2^{-2h_{v_0}}$, we find
\begin{eqnarray} 
&&\|W_\infty[\allct;\tau,\ul P,\ul T,\ul D]\|_{h_{v_0}}\le 
\frac{1}{|S_{v_0}|!} C^{\frac{|Q_{v_0}|}2+|S_{v_0}|} 2^{h_{v_0}(\frac{|Q_{v_0}|}2+\sum_{v\in S_{v_0}}\|\bs D_v|_{Q_{v_0}}\|_1+2-2|S_{v_0}|)}\cdot\nonumber\\
&&\cdot \prod_{v \in S_{v_0}}
		\piecewise{2^{(h_{v}-1)(2-\frac{|P_v|}2-\|\bs D_v\|_1)}\epsilon_{h_{v}-1}& \text{if $v\in V_e(\tau)$ is of type \tikzvertex{ctVertex}}\\
			C^{|P_v|}|\lambda|^{\max\{1,\kappa |P_v|\}}& \text{if $v\in V_e(\tau)$ is of type \tikzvertex{vertex}} \\
			2^{-h_vR_v}\|W_\infty [\allct; \tau_v, \ul P_{v}, \ul T_{v}, \ul D_{v}']\|_{h_{v}}
		  	& \text{if $v\in V_0(\tau)$,}
		}\label{basbik}
\end{eqnarray}
Now, in the last line, for $v\in S_{v_0}\cap V_0(\tau)$, we iterate the bound, and we continue to do so until we reach all the endpoints. 
By doing so, recalling that $R_{v_0}=0$, we obtain the desired bound, \eqref{eq:W8_primitive_bound}, provided that 
\begin{equation}\label{easy} C^{\sum_{v\in V_0(\tau)}(\frac{|Q_v|}2+|S_v|)}\le (C')^{\sum_{v\in V_e(\tau)}|P_v|}.\end{equation}
In order to prove this, note that, for any dotted 
$v\in V_0(\tau)$, $|S_v|\le 1+\frac{|Q_v|}{2}\le |Q_v|$, because $|Q_v|\ge \max\{2,2(|S_v|-1)\}$; moreover, if $v_0$ is not dotted, $|S_{v_0}|=1$ and $|Q_{v_0}|=0$. Therefore, recalling also that $|Q_v|=\sum_{w\in S_v}|P_w|-|P_v|$, we find 
$$C^{\sum_{v\in V_0(\tau)}(\frac{|Q_v|}2+|S_v|)}\le C^{1+\frac32\sum_{v\in V_0}|Q_v|}\le C^{1+\frac32\sum_{v\in V_e(\tau)}|P_v|},$$
which implies \eqref{easy}.
\end{proof}

\medskip

Next, we rearrange \eqref{eq:W8_primitive_bound} in a different form, more suitable for summing over GN trees and their labels. 

\begin{lemma} Under the same assumptions as Proposition \ref{thm:W8:existence}, 
	\begin{align}
\label{eq:W8_reworked_bound}
& \|W_\infty[\allct;\tau,\ul P,\ul T,\ul D]\|_{h_{v_0}}\le     C^{\sum_{v\in V_e(\tau)}|P_v|}\frac1{|S_{v_0}|!}
2^{h_{v_0}d(P_{v_0}, \bs D_{v_0})}
\\
		& \qquad \times
		\Big(
	      \prod_{v \in V'(\tau)}\frac1{|S_{v}|!}
	      2^{(h_v-h_{v'})d(P_v,\bs D_v)}  \Big)\prod_{v \in V_e(\tau)}\piecewise{\epsilon_{h_v-1} & \text{if $v$ is of type \tikzvertex{ctVertex}}\\
	           |\lambda|^{\max\{1,\kappa|P_v|\}} & \text{if $v$ is of type \tikzvertex{vertex}}}
	  \end{align}
	where $d(P_v,\bs D_v)=2-\frac{|P_v|}{2}-\|\bs D_v\|_1$ is the scaling dimension of $v$, see \eqref{defscaldim}.
	\label{lm:W8:scaldim}
\end{lemma}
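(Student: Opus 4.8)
The plan is to deduce \eqref{eq:W8_reworked_bound} from the estimate \eqref{eq:W8_primitive_bound} of Proposition~\ref{thm:W8:existence} by a purely combinatorial rearrangement of the exponent of $2$ over the tree $\tau$, leaving the prefactor $C^{\sum_{v\in V_e(\tau)}|P_v|}$, the symmetry factors $|S_v|!$, and the $\epsilon$- and $|\lambda|$-factors attached to the endpoints essentially unchanged. No further analytic input is required; this is the usual Gallavotti--Nicol\`o dimensional bookkeeping.

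\textbf{Step 1: the exponent attached to an internal vertex.} For $v\in V_0(\tau)$ the power of $2$ carried by $v$ in \eqref{eq:W8_primitive_bound} is $2^{\tilde d_v h_v}$ with $\tilde d_v=\tfrac12|Q_v|+\sum_{w\in S_v}\|\bs D_w|_{Q_v}\|_1-R_v+2-2|S_v|$. First I would show that
\[ \tilde d_v=d(P_v,\bs D_v)-\sum_{w\in S_v}d(P_w,\bs D_w), \]
with $d(\cdot,\cdot)$ the scaling dimension \eqref{defscaldim}. This follows by inserting the elementary identities $|Q_v|=\sum_{w\in S_v}|P_w|-|P_v|$ (the $P_w$, $w\in S_v$, being pairwise disjoint, with $P_v\subseteq\bigcup_{w\in S_v}P_w$), $\sum_{w\in S_v}\|\bs D_w|_{Q_v}\|_1=\sum_{w\in S_v}\|\bs D_w\|_1-\|\bs D_v'\|_1$ (splitting each $P_w$, $w\in S_v$, into its part in $P_v$ and its part in $Q_v$, and recalling $\bs D_v'=\bigoplus_{w\in S_v}\bs D_w|_{P_v}$) and $R_v=\|\bs D_v\|_1-\|\bs D_v'\|_1$ (definition, Remark~\ref{remcruc}); using $\tfrac12|P_w|+\|\bs D_w\|_1=2-d(P_w,\bs D_w)$, all the $\pm2$ and $\pm2|S_v|$ terms cancel.

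\textbf{Step 2: telescoping over the tree.} By Step~1 the total exponent of $2$ in \eqref{eq:W8_primitive_bound} is $E=\sum_{v\in V_0(\tau)}\bigl(d(P_v,\bs D_v)-\sum_{w\in S_v}d(P_w,\bs D_w)\bigr)h_v+\sum_{v\in V_e(\tau),\ \mathrm{ct}}(h_v-1)\,d(P_v,\bs D_v)$, where $\mathrm{ct}$ denotes counterterm endpoints. I would reorganise the double sum by the standard exchange $\sum_{v\in V_0(\tau)}\sum_{w\in S_v}(\cdots)=\sum_{w\in V'(\tau)}(\cdots)$ with $v=w'$, turning $\sum_{v\in V_0(\tau)}\sum_{w\in S_v}d(P_w,\bs D_w)h_v$ into $\sum_{w\in V'(\tau)}d(P_w,\bs D_w)h_{w'}$; combining with $\sum_{v\in V_0(\tau)}d(P_v,\bs D_v)h_v$ and isolating the root produces the main terms $h_{v_0}d(P_{v_0},\bs D_{v_0})+\sum_{v\in V'(\tau)}(h_v-h_{v'})d(P_v,\bs D_v)$ plus a remainder supported on endpoints. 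For a counterterm endpoint this remainder is $d(P_v,\bs D_v)[(h_v-1)-h_{v'}]-d(P_v,\bs D_v)(h_v-h_{v'})=-d(P_v,\bs D_v)$, independently of whether $h_v=h_{v'}+1$ (as for the $\upsilon\cdot F_\infty$ endpoints) or $h_v=2>h_{v'}+1$ (as for the $\cR_\infty N_c$ endpoints); for an interaction endpoint, which carries no power of $2$ in \eqref{eq:W8_primitive_bound}, it is $-d(P_v,\bs D_v)h_{v'}-(h_v-h_{v'})d(P_v,\bs D_v)=-h_v\,d(P_v,\bs D_v)$. Hence
\[ E=h_{v_0}d(P_{v_0},\bs D_{v_0})+\sum_{v\in V'(\tau)}(h_v-h_{v'})d(P_v,\bs D_v)-\sum_{v\in V_e(\tau),\ \mathrm{ct}}d(P_v,\bs D_v)-\sum_{v\in V_e(\tau),\ \mathrm{int}}h_v\,d(P_v,\bs D_v). \]

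\textbf{Step 3: absorbing the remainder and the symmetry factors.} The first two terms of $E$ are exactly those in \eqref{eq:W8_reworked_bound}, so it remains to bound $2$ raised to the two endpoint sums by $C^{\sum_{v\in V_e(\tau)}|P_v|}$ after enlarging $C$. For a counterterm endpoint $|P_v|=2$, whence $d(P_v,\bs D_v)=1-\|\bs D_v\|_1\in\{-1,0,1\}$ and $2^{-d(P_v,\bs D_v)}\le2$. For an interaction endpoint $h_v=2$ and $\bs D_v=\bs 0$ (the kernels $V^{\rm int}_\infty$ carry no derivatives, and $\cR_\infty$ either leaves such a kernel unchanged, for $|P_v|\ge6$, or annihilates it), so $2^{-h_v\,d(P_v,\bs D_v)}=2^{|P_v|-4}\le2^{|P_v|}$; both are absorbed into the prefactor. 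Finally, since $|S_v|=0$ at every endpoint, $\prod_{v\in V_0(\tau)}\tfrac1{|S_v|!}=\tfrac1{|S_{v_0}|!}\prod_{v\in V'(\tau)}\tfrac1{|S_v|!}$, matching \eqref{eq:W8_reworked_bound}; and the $\epsilon_{h_v-1}$ and $|\lambda|^{\max\{1,\kappa|P_v|\}}$ factors pass through verbatim. Since $x\mapsto2^x$ is increasing, \eqref{eq:W8_primitive_bound} then gives \eqref{eq:W8_reworked_bound}. I do not expect any genuine obstacle: the argument is a rearrangement of an estimate already proved in Proposition~\ref{thm:W8:existence}. The only point that needs care is the shift $(h_v-1)$ versus $h_{v'}$ in the counterterm-endpoint exponents of \eqref{eq:W8_primitive_bound}, handled uniformly in Step~2, together with keeping the set-theoretic identities for $P_v$, $Q_v$ and $\bs D_v'$ straight — exactly the relations recorded in Remark~\ref{remcruc}.
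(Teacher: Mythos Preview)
Your approach is essentially the same as the paper's: both start from \eqref{eq:W8_primitive_bound}, recognize the exponent at each internal vertex as the telescopic combination $d(P_v,\bs D_v)-\sum_{w\in S_v}d(P_w,\bs D_w)$, and rearrange into $h_{v_0}d(P_{v_0},\bs D_{v_0})+\sum_{v\in V'(\tau)}(h_v-h_{v'})d(P_v,\bs D_v)$ plus endpoint corrections absorbed into $C$. Your Step~2 is the clean one-line telescoping; the paper does the same via \eqref{firstsecond}--\eqref{eq:decay_reorg_middle}.

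One inaccuracy in Step~3: it is not true that $\bs D_v=\bs 0$ for every interaction endpoint. When $h_{v'}<1$ the endpoint value is $\cR_\infty V_\infty^{\rm int}$, and by \eqref{defcR} this is nonzero also for $(|P_v|,\|\bs D_v\|_1)=(2,2)$ and $(4,1)$ (the $\tcR$ and $\tcR\tcR$ interpolation terms), so $\cR_\infty$ does not simply ``annihilate'' the $|P_v|\le 4$ part. The fix is immediate: since $\|\bs D_v\|_1\le 2$ always, one has $2^{-h_v d(P_v,\bs D_v)}=2^{|P_v|+2\|\bs D_v\|_1-4}\le 2^{|P_v|}$, which is exactly the bound the paper uses in \eqref{eq:dacay_reorg_endpoints}. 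With this correction your argument is complete.
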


Note that, as observed in Remark \ref{remcruc}, the scaling dimensions appearing at exponent in the product over $v\in V'(\tau)$
are all negative, with the exception of the case that $v$ is an endpoint such that $h_{v'}=h_v-1$. Note, however, that in such a case
$2^{(h_v-h_{v'})d(P_v,\bs D_v)}\le 2$, which is a constant that can be reabsorbed in $ C^{\sum_{v\in V_e(\tau)}|P_v|}$, 
up to a redefinition of the constant $C$. 

\begin{proof} First note that, for all $v\in V_0(\tau)$, $|Q_v|=\sum_{w\in S_v}|P_w|-|P_v|$, so that, recalling that 
$R_v=\|\bs D_v\|_1-\sum_{w\in S_v}\|\bs D_w\big|_{P_v}\|_1$, we can rewrite the factor $\prod_{v \in V_0(\tau)}2^{(\tfrac12 |Q_v|+ \sum_{w\in S_v}\| \bs D_w |_{Q_v} \|_1 - R_v + 2 - 2 \left|S_v\right|)h_v}$ in \eqref{eq:W8_primitive_bound} as
\begin{eqnarray} &&
2^{h_{v_0}\sum_{v\in V_0(\tau)}(\frac12\sum_{w\in S_v}|P_w|-\frac12{|P_v|}+\sum_{w\in S_v}\|\bs D_w\|_1-\|\bs D_v\|_1-2(|S_{v}|-1))}\cdot\label{firstsecond}
\\
&& \cdot 2^{\sum_{v\in V_0(\tau)}(h_v-h_{v_0})(\frac12\sum_{w\in S_v}|P_w|-\frac12|P_v|+\sum_{w\in S_v}\|\bs D_w\|_1-\|\bs D_v\|_1-2(|S_{v}|-1))}.
\nonumber\end{eqnarray}
Now, the factor in the first line can be further rewritten by noting that: 
\begin{eqnarray} && 
\sum_{v\in V_0(\tau)}(\sum_{w\in S_v}|P_w|-|P_v|)=\sum_{v\in V_e(\tau)}|P_v|-|P_{v_0}|,\\
&&\sum_{v\in V_0(\tau)}(\sum_{w\in S_v}\|\bs D_v\|_1-\|\bs D_v\|_1)=\sum_{v\in V_e(\tau)}\|\bs D_v\|_1-\|\bs D_{v_0}\|_1,\\
&&\sum_{v\in V_0(\tau)}(|S_v|-1)=|V_e(\tau)|-1.\label{telsv}\end{eqnarray} 
(The first two identities are `obvious', due to the telescopic structure of the summand; the latter identity can be easily proved by induction.) Therefore, 
\begin{equation}
	\begin{split}
		2^{h_{v_0}\sum_{v\in V_0(\tau)}(\frac12\sum_{w\in S_v}|P_w|-\frac{|P_v|}2+\sum_{w\in S_v}\|\bs D_w\|_1-\|\bs D_v\|_1-2(|S_{v_0}|-1))}
		\label{4.1.77}
		\\ =2^{h_{v_0}(2-\frac{|P_{v_0}|}{2}-\|\bs D_{v_0}\|_1)}\Big(\prod_{v\in V_e(\tau)}2^{-h_{v_0}(2-\frac{|P_v|}{2}-\|\bs D_v\|_1)}\Big).
	\end{split}
\end{equation}
Similarly, the exponent of the factor in the second line of \eqref{firstsecond} can be rewritten as
\begin{eqnarray} && \sum_{v\in V_0(\tau)}\Big(\sum_{\substack{w\in V_0(\tau)\\ v_0<w\le v}}(h_w-h_{w'})\Big)\Big(\frac12\sum_{w\in S_v}|P_w|-\frac12|P_v|+\sum_{w\in S_v}\|\bs D_w\|_1-\|\bs D_v\|_1-2(|S_{v_0}|-1)\Big)\nonumber\\
&&=\sum_{\substack{w\in V_0(\tau)\\ w>v_0}}(h_w-h_{w'})\sum_{v\in V_0(\tau_w)}\Big(\frac12\sum_{w\in S_v}|P_w|-\frac12|P_v|+\sum_{w\in S_v}\|\bs D_w\|_1-\|\bs D_v\|_1-2(|S_{v_0}|-1)\Big)\nonumber\\
&&=\sum_{\substack{w\in V_0(\tau)\\ w>v_0}}(h_w-h_{w'})\Big(2-\frac{|P_w|}{2}-\|\bs D_w\|_1-\sum_{v\in V_e(\tau_w)}\big(2-\frac{|P_v|}{2}-\|\bs D_v\|_1\big)\Big).
\label{4.1.78}
\end{eqnarray}
Using \eqref{4.1.77} and \eqref{4.1.78}, and recalling that $2-\frac{|P_v|}{2}-\|\bs D_v\|_1\equiv d(P_v,\bs D_v)$, we rewrite \eqref{firstsecond} as
\begin{eqnarray}\eqref{firstsecond}&=&2^{h_{v_0}d(P_{v_0},\bs D_{v_0})}\Big(\prod_{v\in V_0'(\tau)}2^{(h_v-h_{v'})d(P_v,\bs D_v)}\Big)\Big(\prod_{v\in V_e(\tau)}2^{-(h_{v_0}+\sum_{w\in V_0(\tau)}^{v_0<w< v}(h_w-h_{w'}))d(P_v,\bs D_v)}\Big)\nonumber\\
&=&2^{h_{v_0}d(P_{v_0},\bs D_{v_0})}\Big(\prod_{v\in V_0'(\tau)}2^{(h_v-h_{v'})d(P_v,\bs D_v)}\Big)\Big(\prod_{v\in V_e(\tau)}2^{-h_{v'}d(P_v,\bs D_v)}\Big).
\label{eq:decay_reorg_middle}
\end{eqnarray}
By using this rewriting in \eqref{eq:W8_primitive_bound}, and noting that 
\begin{equation} 
	\begin{split}
		&\left(\prod_{v\in V_e(\tau)}2^{-h_{v'}d(P_v,\bs D_v)}\cdot \piecewise{
				2^{(h_v-1)(2 - \tfrac12 |P_v| - \left\|\bs D_v\right\|_1 )}\epsilon_{h_v-1} & \text{if $v$ is of type \tikzvertex{ctVertex}}\\
		|\lambda|^{\max\{1,\kappa|P_v|\}} & \text{if $v$ is of type \tikzvertex{vertex}}}\right) \\
		&\le\left( \prod_{v\in V_e(\tau)}2^{(h_v-h_{v'})d(P_v,\bs D_v)}\cdot \piecewise{
				\epsilon_{h_v-1} & \text{if $v$ is of type \tikzvertex{ctVertex}}\\
		2^{|P_v|} |\lambda|^{\max\{1,\kappa|P_v|\}} & \text{if $v$ is of type \tikzvertex{vertex}}}\right)
	\end{split}
	\label{eq:dacay_reorg_endpoints}
      \end{equation}
(since if $v$ is an endpoint of type \tikzvertex{vertex}, then $2^{h_v d(P_v,\bs D_v)}\ge 2^{-|P_v|}$), 
we readily obtain the desired estimate, \eqref{eq:W8_reworked_bound}.\end{proof}

As announced above, the bound \eqref{eq:W8_reworked_bound} is written in a form suitable for summing over the GN trees and their labels, as summarized in 
the following lemma. 

\begin{lemma} Under the same assumptions as Proposition \ref{thm:W8:existence}, for any $\vartheta \in (0,1)$, there exists $C_\vartheta>0$ such that, 
	letting $\cT^{(h)}_{\infty;(N,M)}$ denote the subset of $\cT^{(h)}_{\infty}$ whose trees have $N$ endpoints of type \tikzvertex{vertex} and $M$ endpoints of type \tikzvertex{ctVertex}, 
	\begin{equation} 
		\begin{split}
			2^{-\vartheta h}
			\sum_{\tau\in \cT^{(h)}_{\infty;(N,M)}} &
			\sum_{\substack{\ul P\in\cP(\tau)\\  |P_{v_0}|=n}}\
			\sum_{\ul T\in \cS(\tau,\ul P)}\ \sum_{\substack{\ul D\in\cD(\tau,\ul P)\\ \|\bs D_{v_0}\|_1=p}}\|W_\infty[\allct;\tau,\ul P,\ul T,\ul D]\|_{h+1}		
			\\ & \le 
			C_\vartheta^{N+M}|\lambda|^N
			\Big(\max_{h<h'\le 1} 2^{- \vartheta h'}\epsilon_{h'}\Big)^M 
			2^{h\cdot d(n,p)}.
		\end{split}
	\label{eq:treesum}
	\end{equation}
	\label{lem:treesum}
\end{lemma}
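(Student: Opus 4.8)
The plan is to bound the left side of \eqref{eq:treesum} by organizing the sum over trees and labels into a product over vertices, using the bound \eqref{eq:W8_reworked_bound} from \cref{lm:W8:scaldim} as the starting point. The key structural fact is that the exponent $2^{(h_v - h_{v'})d(P_v,\bs D_v)}$ carries a \emph{negative} scaling dimension for every $v\in V_0'(\tau)$ and for every $v\in V_e(\tau)$ with $h_{v'}\le h_v - 2$ (by \cref{remcruc}), while the remaining endpoint factors contribute at most a bounded constant per vertex (as observed after the statement of \cref{lm:W8:scaldim}). Thus each internal line segment of the tree contributes an exponentially decaying factor in the scale difference, which makes the sum over scale labels of the tree vertices convergent, geometrically, with the extra room $2^{-\vartheta h}$ absorbed by sacrificing a fraction $\vartheta$ of the (strictly negative, hence $\le -1$) dimensions along the branch from $v_0$; this is where the factor $(\max_{h<h'\le 1} 2^{-\vartheta h'}\epsilon_{h'})^M$ comes from, each counterterm endpoint at scale $h_v$ donating one factor $2^{-\vartheta h_v}\epsilon_{h_v - 1}$.

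The key steps, in order, are as follows. First, I would recall the standard combinatorial decomposition of the sum over $\tau\in\cT^{(h)}_{\infty;(N,M)}$ with fixed $|P_{v_0}|=n$, $\|\bs D_{v_0}\|_1 = p$: the tree shape (branching structure) together with the assignment of scales $h_v$ to the non-endpoint vertices, then the choice of $\ul P$, then $\ul T$, then $\ul D$. Second, I would control the sum over $\ul P\in\cP(\tau)$ at fixed $\tau$: the number of ways to choose the external field sets $P_v$ at each vertex, given the $P_w$ at its successors, is bounded by $\prod_v 2^{|Q_v|}$ (choosing which of the fields entering $v$ are contracted), and $\sum_v |Q_v| \le \sum_{v\in V_e(\tau)}|P_v|$, so this produces a constant to the power $\sum_{v\in V_e(\tau)}|P_v|$, reabsorbed into $C^{\sum_v|P_v|}$. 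Third, the sum over $\ul T\in\cS(\tau,\ul P)$ at fixed $\ul P$: by Cayley's formula the number of spanning trees on $S_v$ is $|S_v|^{|S_v|-2}\le |S_v|!\, e^{|S_v|}$, and there are at most $\prod_{(f,f')}|Q_{v}|^2$-type choices of which fields realize each edge, all of which are swallowed by the $1/|S_v|!$ factors and constants to the power $\sum_v|P_v|$. Fourth, the sum over $\ul D\in\cD(\tau,\ul P)$: by \cref{rem:D_choice} this is bounded by $10^{|V(\tau)|}\le 10^{|V_e(\tau)|}\cdot 10^{|V_0(\tau)|}$, again absorbed into the constants (using $|V_0(\tau)|\lesssim |V_e(\tau)|$ for trees with all internal vertices of degree $\ge 2$). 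Fifth, and crucially, the sum over scale assignments $\{h_v\}_{v\in V_0'(\tau)}$: here I would use the telescoping identity $h_v - h_{v_0} = \sum_{v_0 < w \le v}(h_w - h_{w'})$ to write $\prod_{v\in V'(\tau)} 2^{(h_v-h_{v'})d(P_v,\bs D_v)}$ and then, since every $d(P_v,\bs D_v)\le -1$ along internal branches, bound the geometric sum over the independent positive increments $h_{w'} - h_w \ge 1$; splitting off $\vartheta$ worth of decay on the path to each endpoint yields precisely the $2^{h\cdot d(n,p)}$ prefactor together with the $2^{-\vartheta h_v}$ weights on the counterterm endpoints, and the remaining $(1-\vartheta)$ worth of negative dimension keeps all the geometric sums convergent with $\vartheta$-dependent constants $C_\vartheta$.

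Finally, I would collect terms: the product $\prod_{v\in V_e(\tau)}|\lambda|^{\max\{1,\kappa|P_v|\}}$ over the $N$ interaction endpoints gives at least $|\lambda|^N$ (each such endpoint has $|P_v|\ge 4$ or at worst contributes $|\lambda|^{\ge 1}$), and the remaining powers of $|\lambda|$ are positive and bounded, hence reabsorbed; the product over the $M$ counterterm endpoints of $2^{-\vartheta h_v}\epsilon_{h_v-1}$ is bounded by $(\max_{h < h'\le 1}2^{-\vartheta h'}\epsilon_{h'})^M$; and the leftover $2^{h_{v_0} d(P_{v_0},\bs D_{v_0})} = 2^{h\cdot d(n,p)}$ (recalling $h_{v_0}=h+1$, with the difference absorbed into $C_\vartheta$) is the stated prefactor. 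The main obstacle, and the step requiring the most care, is the fifth one: one must check that \emph{every} vertex appearing with a possibly non-negative dimension is an endpoint whose $h_{v'}$ equals $h_v - 1$ (so that the offending factor is $\le 2$ and does not spoil convergence), and that after extracting $\vartheta$ of the decay the sum over the unbounded scale increments along \emph{every} branch still converges geometrically — this requires using the inequality $d(P_v,\bs D_v)\le -1$ for $v\in V_0'(\tau)$ uniformly, which is exactly the content of \cref{remcruc} and is the reason the localization/renormalization operators $\cL_\infty,\cR_\infty$ were designed as they were. Once this scale-sum bookkeeping is done carefully, the rest is routine combinatorial estimation.
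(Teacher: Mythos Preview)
Your outline follows essentially the same approach as the paper's proof: start from \cref{lm:W8:scaldim}, bound $|\cS(\tau,\ul P)|$ by $C^{\sum|P_v|}\prod|S_v|!$, bound $|\cD(\tau,\ul P)|$ via \cref{rem:D_choice}, use $d(P_v,\bs D_v)\le\min\{-1,2-|P_v|/2\}$ from \cref{remcruc} to sum geometrically over scales while peeling off the $2^{-\vartheta h}$, and finally sum over $\ul P$ using the residual $|P_v|$-dependent decay (the paper cites \cite[App.~A.6.1]{GM01}). One correction: your claim ``$|V_0(\tau)|\lesssim|V_e(\tau)|$'' is not true for GN trees, which can have arbitrarily many trivial (one-successor) vertices; the correct bound is $|V_0(\tau)|\lesssim\sum_{v\in V_e(\tau)}|P_v|$, which follows from $|Q_v|\ge 2$ at every dotted vertex together with $\sum_v|Q_v|\le\sum_{v\in V_e}|P_v|$, and this is what actually lets you absorb $10^{|V(\tau)|}$ into $C^{\sum|P_v|}$. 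Also, do not ``reabsorb'' the leftover $|\lambda|^{[\kappa|P_v|-1]_+}$ factors prematurely: together with the $|P_v|$-dependent piece of the scaling dimension they are what makes the sum over $\ul P$ (in particular over the unbounded $|P_v|$ at interaction endpoints) converge.
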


\begin{remark} {\bf (Short memory property).} 
	The fact that this estimate holds with the factors of $2^{-\vartheta h}$ included indicates that the contribution of trees covering a large range of scales are exponentially suppressed, a behavior referred to in previous works (e.g.\ \cite{GGM}) as the `short memory property'.
	As we shall see below, taking advantage of this there is a way to choose the free parameters $Z,t_1^*, \beta$ such that 
$|\epsilon_h|\le K_\vartheta |\lambda|2^{\vartheta h}$, see Propositions \ref{thm:ct} and \ref{thm:bare_to_renormalized} below. Under this condition, 
Lemma \ref{lem:treesum} implies that 
\begin{equation}
	\sum_{\tau\in \cT^{(h)}_{\infty;(N,M)}} \sum_{\substack{\ul P\in\cP(\tau)\\ |P_{v_0}|=n}}\
	\sum_{\ul T\in \cS(\tau,\ul P)}\ \sum_{\substack{\ul D\in\cD(\tau,\ul P)\\ \|\bs D_{v_0}\|_1=p}}\|W_\infty[\allct;\tau,\ul P,\ul T,\ul D]\|_{h+1}	
	\le
	C_\vartheta^{N+M}|\lambda|^{N+M}
	2^{h d(n,p)}2^{\vartheta h}
	,
\end{equation}
for all $\vartheta<1$ and $N+M>0$. The factor $2^{\vartheta h}$ in the right side is called the `short-memory factor'.

Summing over $N+M \ge 1$, this immediately implies
\begin{equation}
	\left\| \left( W_\infty \right)_{n,p}\right\|_{h+1}
	\le
	C_\vartheta
	|\lambda|
	2^{h d(n,p)}2^{\vartheta h}
	.
	\label{eq:W8_bound_final}
\end{equation}
\label{rem:short_memory}
\end{remark}

\begin{proof} Thanks to \eqref{eq:W8_reworked_bound}, the left side of \eqref{eq:treesum} can be bounded from above by 
\begin{equation} 
	\begin{split}
		2^{h\cdot d(n,p)} & |\lambda|^N
		\Big(\max_{h<h'\le 1} 2^{- \vartheta h'}\epsilon_{h'}\Big)^M 
		2^{-\vartheta h}
		\sum_{\tau\in \cT^{(h)}_{\infty;(N,M)}} 
		\sum_{\substack{\ul P\in\cP(\tau) \\ |P_{v_0}|=n}} C^{\sum_{v\in V_{ep}(\tau)}|P_v|} 
		\sum_{\ul T\in \cS(\tau,\ul P)}\ \sum_{\substack{\ul D\in\cD(\tau,\ul P) \ \|\bs D_{v_0}\|_1=p}} 
		\\ & \times \frac1{|S_{v_0}|!}\Big(
			\prod_{v \in V'(\tau)}\frac1{|S_{v}|!}
		2^{(h_v-h_{v'})d(P_v,\bs D_v)}  \Big)
		\prod_{\substack{v \in V_e(\tau) \\ \text{$v$ of type \tikzvertex{ctVertex}}}} 2^{\vartheta h_v}
		\prod_{\substack{v \in V_e(\tau) \\ \text{$v$ of type \tikzvertex{vertex}}}} |\lambda|^{[\kappa|P_v|-1]_+}
		,
	\end{split}
	  \label{eq:4.1.83}    
\end{equation}
where, in the last factor, $[\cdot]_+$ indicates the positive part. Note that, if either $v \in V_0'(\tau)$ or $v$ is an endpoint 
such that $h_{v'}<h_v-1$, then the scaling dimension of $v$ can be bounded 
uniformly in $\bs D_v$, i.e., $d(P_v,\bs D_v)\le \min\{-1, 2-|P_v|/2\}$; if $v$ is an endpoint such that $h_{v'}=h_v-1$, then 
the factor $2^{(h_v-h_{v'})d(P_v,\bs D_v)} $ is smaller than $2$ (and, therefore, it can be reabsorbed in $C^{\sum_{v\in V_{ep}(\tau)}|P_v|} $ up to a redefinition 
of the constant $C$). Moreover, 
recall that the number of elements of $\cD(\tau,\ul P)$ is bounded by $10^{|V(\tau)|}$,
see Remark \ref{rem:D_choice}. Finally, the number of elements of $\cS(\tau,\ul P)$ is bounded by 
\begin{equation}
	\label{numspantree}|\cS(\tau,\ul P)|\le C^{\sum_{v\in V_{ep}(\tau)}|P_v|}\prod_{v\in V_0(\tau)}|S_v|!, 
\end{equation}
see e.g.\ \cite[Lemma A.5]{GM01}. Therefore, putting these observations together, we see that \eqref{eq:4.1.83} can be bounded from above by 
\begin{eqnarray} 
&&  2^{h\cdot d(n,p)} 
|\lambda|^N
\Big(\max_{h<h'\le 1} 2^{- \vartheta h'}\epsilon_{h'}\Big)^M 
\sum_{\tau\in \cT^{(h)}_{\infty;(N,M)}} 2^{\vartheta(h^{max}_\tau-h)}\sum_{\ul P\in\cP(\tau)} (C')^{\sum_{v\in V_{ep}(\tau)}|P_v|} 
\times\nonumber\\
&&\times \Big(
	      \prod_{v \in V'(\tau)} 2^{(h_v-h_{v'})\min\{-1,2-\frac{|P_v|}{2}\}} \Big)
		\prod_{\substack{v \in V_e(\tau) \\ \text{$v$ of type \tikzvertex{ctVertex}}}} 2^{\vartheta h_v}
	      \prod_{\substack{v \in V_e(\tau)\\ \text{$v$ of type \tikzvertex{vertex}}}} |\lambda|^{[\kappa|P_v|-1]_+}.
\label{eq:4.1.84}    
\end{eqnarray}
We can cancel the factor of $2^{-\vartheta h}$ with a product of factors leading to an endpoint, and simplify the remaining bounds, to get
$$
	2^{- \vartheta h}
	\Big(
	\prod_{v \in V'(\tau)} 2^{(h_v-h_{v'})\min\{-1,2-\frac{|P_v|}{2}\}} \Big)
	\prod_{\substack{v \in V_e(\tau) \\ \text{$v$ of type \tikzvertex{ctVertex}}}} 2^{\vartheta h_v}
	\le 
	\Big(
	\prod_{v \in V'(\tau)} 2^{(h_v-h_{v'})(\vartheta-1)\frac{|P_v|}{6}} \Big)
	.
$$
This leaves a variety of exponential factors which make it possible to control the sum over $\ul P$, giving
$$\sum_{\tau\in \cT^{(h)}_{\infty;(N,M)}} \sum_{\ul P\in\cP(\tau)}
(C')^{\sum_{v\in V_{ep}(\tau)}|P_v|} 
\Big(
	      \prod_{v \in V'(\tau)} 2^{(h_v-h_{v'})(\vartheta-1)\frac{|P_v|}{6}} \Big)\prod_{\substack{v \in V_e(\tau)\\ \text{$v$ of type \tikzvertex{vertex}}}} |\lambda|^{[\kappa|P_v|-1]_+}\le (C'')^{N+M},$$
    	      see e.g. \cite[App.A.6.1]{GM01}, from which \eqref{eq:treesum} follows. 
\end{proof}
	
We conclude this subsection by noting that, in order for the right side of the bound \eqref{eq:treesum} to be summable over $N,M$ uniformly 
in $h$, we need that $\epsilon_h$ is bounded and small, uniformly in $h$. In view of Lemma \ref{lem:treesum}, this condition is sufficient for the whole sequence of kernels $V_\infty^{(h)}$, $h\le 1$, to be well defined. In the next subsection, we study the iterative definition of the 
running coupling constants, and prove that they in fact remain bounded and small, uniformly in $h$, provided that the counterterms $\upsilon_1$ 
are properly fixed.

\subsection{Beta function equation and choice of the counterterms}
\label{sec:fixed_point}

The definition of the running coupling constants, \eqref{rCC}, combined with the GN tree expansion for the effective potentials 
implies that the running coupling constants $\allct=\{(\nu_h,\zeta_h,\eta_h)\}_{h\le 0}$ satisfy the following equation, for all $h\le 0$: 
\begin{equation} 
	2^{h}\nu_h F_{\nu,\infty}+\zeta_h F_{\zeta,\infty}+\eta_h F_{\eta,\infty}
	=\sum_{\tau\in \cT^{(h)}_\infty}
	\cL_\infty W_\infty[\allct;\tau].
\end{equation}
More explicitly, using the definitions of $F_{\nu,\infty}$, $F_{\zeta,\infty}$, $F_{\eta,\infty}$ (see the lines after \eqref{1strem} and after \eqref{2ndrem}), for any 
$z\in\Lambda_\infty$, 
\begin{align} 
\nu_h &=2^{-h}(2\omega)\sum_{\tau\in \cT^{(h)}_\infty}
\cL_\infty W_\infty[\allct;\tau]
\big((\omega, 0,z),(-\omega,0,z)\big),\nonumber\\
\zeta_h & =4\omega\sum_{\tau\in \cT^{(h)}_\infty}
\cL_\infty W_\infty[\allct;\tau] 
\big((\omega, 0,z),(\omega,\hat e_1,z)\big),\label{betaf}\\
\eta_h &=4\sum_{\tau\in \cT^{(h)}_\infty}
\cL_\infty W_\infty[\allct;\tau] 
\big((\omega, 0,z),(-\omega,\hat e_2,z)\big).\nonumber\end{align}
In view of the iterative definition of the kernels $W_\infty[\allct;\tau]$, see 
\cref{Wtauexp,eq:W8def:1.2},
the right sides of these three equations can be naturally thought of as functions of $\allct$ or, better yet, of the restriction of $\allct$ to the scales larger than $h$. 
Therefore, these are recursive equations for the components of $\allct$: given $V_\infty^{(1)}$ (and, in particular, given $t_1^*,\beta, Z$), 
one can in principle construct the whole sequence $\allct$. 
More precisely, since the definition of the right sides of \cref{betaf} requires the summations over $\tau$ to be well defined, in view of \cref{lem:treesum}, 
these recursive equations 
allow one to construct the running coupling constants only for the scales $h$ such that $\max_{h'>h}\epsilon_{h'}$ is small enough. As we shall soon see, 
given a sufficiently small $\lambda$, the quantity $\max_{h'>h}\epsilon_{h'}$ stays small, uniformly in $h$, only for a special choice of 
the free parameters $t_1^*,\beta, Z$; to understand the appropriate choice of these parameters, 
it is helpful first to isolate those trees with only conterterm vertices, whose contribution in \eqref{betaf} is especilly simple:
\begin{enumerate}
	\item For $h < 0$, $\cT^{(h)}_{\infty;(0,1)}$ consists of two trees, one with a single counterterm endpoint $v$ on scale $h_v = 2 > h_{v_0}+1$, for which $\cL_\infty 
	W_\infty [\allct,\tau] = \cL_\infty \cR_\infty N_c = 0$, and one with a single counterterm endpoint on scale $h_v = h_{v_0}+1$, for which we have
		$$
		\cL_\infty W_\infty [\allct,\tau] =
		2^{h}\nu_h F_{\nu,\infty}+\zeta_h F_{\zeta,\infty}+\eta_h F_{\eta,\infty}
		,
		$$
	\item For $h=0$, $\cT^{(h)}_{\infty;(0,1)}$ consists of only a single tree $\tau$.
		This tree gives a contribution involving $N_c$ which can be calculated quite explicitly from \cref{eq:cV_1_def} and  \cref{eq:2.1.13} taking advantage of the fact that the latter is written in terms of the Fourier transformed fields:
		\begin{equation}\begin{split}
			\cL_\infty W_\infty [\allct,\tau] 
			&=
			\cL_\infty N_c
			\\
&=			\frac1{Z}\Big(t_2-\frac{1-t_1}{1+t_1}\Big) 
			F_{\nu,\infty}
			+\Big(
			\frac1{Z}\frac{t_1}{(1+t_1)^2}-\frac{t_1^*}{(1+t_1^*)^2} \Big)
			F_{\zeta,\infty}
			+\Big(
			\frac{t_2}{2Z}-\frac{t_2^*}2\Big) 
			F_{\eta,\infty}\end{split}
		\end{equation}
		(recall $t_2^* = (1- t_1^*)/(1+t_1^*)$)
		which we can put in the same form as the other scales by letting 
		\begin{equation}
			2 \nu_1 
			:=
			\frac1{Z}\Big(t_2-\frac{1-t_1}{1+t_1}\Big) 
			,
			\quad
			\zeta_1
			:=
			\frac1{Z}\frac{t_1}{(1+t_1)^2}-\frac{t_1^*}{(1+t_1^*)^2} 
			,
			\quad
			\eta_1
			:=
			\frac{t_2}{2Z}-\frac{t_2^*}2 
			.
			\label{eq:ct1_def}
		\end{equation}
\item The contributions from the trees in $\cT^{(h)}_{\infty,(0,M)}$ (i.e., those with no interaction endpoints and $M$ counterterm endpoints) vanishes
	for all $h \le 0$ and $M\ge 2$, as can be seen as follows.
In this case, each vertex must be assigned exactly 2 field labels, and for any endpoint $v$ the components of $\bs\omega_v$ must be both imaginary (corresponding to 
$N_m$, i.e.\ to $h_v = 2$, $h_{v'} = 1$) or both real (in all the other cases).  If any endpoints of the first type appear there is no allowed assignment of field labels to the corresponding tree (since the $\xi$ fields this case represents can neither be external fields nor be contracted with $\phi$ fields to form a connected diagram).
In any case, all of the internal field labels are included in spanning trees, and the local part of the contribution can be made quite explicit.
For example, denoting by $\big[\fg^{(h)}_{\infty}(z-z')\big]_{\omega,\omega'}$ the elements of the translation invariant infinite volume propagator $\fg^{(h)}_\infty(z-z')$, see
\eqref{eq:gN_def}, \eqref{eq:gh_def}, \eqref{ginftyeta}, 
the local part of the term with $\bs D_v = \bs 0$ for all vertices (which contributes to $\nu_h$) is a sum of terms proportional to 
 \begin{align} &
 \hskip-.3truecm 
\sum_{z_2,\dots,z_{M} \in \bZ^2}\big[
\fg_{\infty}^{(h_1)}(z_1-z_2)\big]_{-\omega_1,\omega_2} \big[\fg_{\infty}^{(h_2)}(z_2-z_3)\big]_{-\omega_2,\omega_3} \cdots 
\big[\fg_{\infty}^{(h_{M-1})}(z_{M-1}-z_M)\big]_{-\omega_{M-1},\omega_1}   \nonumber\\
&    
\qquad =  
\big[\hat \fg_{\infty}^{(h_1)}\big]_{-\omega_1,\omega_2}(0) 
\big[\hat \fg_{\infty}^{(h_2)}\big]_{-\omega_2,\omega_3}(0) \cdots \big[\hat \fg_{\infty}^{(h_{M-1})}\big]_{-\omega_{M-1},\omega_1}(0), 
	\label{eq:chain_props}
\end{align}
for some tuple $(\omega_1,\ldots,\omega_{M-1})\in\{+,-\}^{M-1}$; from \cref{ginftyeta,eq:gN_def,eq:gh_def}, 
the Fourier transform $\hat \fg^{(h)}$ in the right side satisfies
\begin{equation}
	\hat \fg^{(h)} (k)
	=
	\int_{2^{-2h-2}}^{2^{-2h}} 
	\hat \fg^{[\eta]}(k)\, \dd\eta
	,
	\quad
	\hat \fg^{[\eta]} (k)
	=
	e^{-\eta D(k)} D(k)
	\hat \fg (k)
\end{equation}
(except for $h=0$, where the lower limit of integration is 0), with $D$ defined in \cref{defDk1k2}, from which it is evident that $D(0) = 0$, and in consequence that the right hand 
side of \cref{eq:chain_props} is likewise zero.  The same argument holds, \emph{mutatis mutandis}, when $\bs D_v$ does not vanish for all vertices of the tree, with 
some discrete derivatives appearing in \cref{eq:chain_props}. 
\end{enumerate}
In view of these properties, we can write
\begin{equation}
\label{betafun}\nu_h=2\nu_{h+1}+B^\nu_{h+1}[\allct], \qquad \zeta_h=\zeta_{h+1}+B^\zeta_{h+1}[\allct], \qquad \eta_h=\eta_{h+1}+B^\eta_{h+1}[\allct], \end{equation}
for all $h \le 0$,
where the functions $B^\sharp_{h+1}[\allct]$ are given by the restrictions of the sums on the right-hand sides of \cref{betaf} to trees in some $\cT^{(h)}_{\infty,(M,N)}$ with $N \ge 1$.
The functions $B^\sharp_{h+1}[\allct]$, with $\sharp\in\{\nu,\zeta,\eta\}$, are called the components of the {\it beta function}, 
and \eqref{betafun} are called the {\it beta function flow equations} for the running coupling constants; note that, even if not explicitly indicated, the functions 
$B^\sharp_{h+1}[\allct]$, in addition to $\allct$, depend analytically upon $\lambda, t_1^*, \beta, Z$. For later reference, 
we let $B^\sharp_{h}[\allct;\tau]$ be the contribution to $B^\sharp_{h+1}[\allct]$ associated with the GN tree $\tau$. 
 In view of Lemma \ref{lem:treesum}, we find that, for any $|Z-1|\le 1/2$, any $|t_1^*|,|t_1|,|t_2|\in K'$, any 
$\vartheta\in (0,1)$ there exists $C_\vartheta>0$ such that, if $|\lambda|$ and $\max_{h'>h}\{\epsilon_{h'}\}$ are small enough, 
then 
\begin{equation}
\label{betafunb}\max_{\sharp\in\{\nu,\zeta,\eta\}}|B^\sharp_{h}[\allct]|\le C_\vartheta |\lambda| 2^{\vartheta h}, \qquad \forall \ h\le 0. \end{equation}

\begin{proposition} For any $\vartheta\in(0,1)$, there exist $K_\vartheta, \lambda_0(\vartheta)>0$ and functions $\nu_1(\lambda; t_1^*,\beta,Z)$, 
$\zeta_1(\lambda;  t_1^*,\beta,Z)$, $\eta_1(\lambda;  t_1^*,\beta,Z)$, $\allct(\lambda;  t_1^*,\beta,Z):=\{(\nu_h(\lambda;  t_1^*,\beta,Z),\zeta_h(\lambda;  t_1^*,\beta,Z),\eta_h(\lambda;  t_1^*,\beta,Z))\}_{h\le 0}$, analytic in $|\lambda|\le \lambda_0(\vartheta)$, $|Z-1|\le 1/2$, $|t_1^*|\in K'$ and $\beta$ such that 
$|t_1|, |t_2|\in K'$ (recall that $t_i=\tanh(\beta J_i)$), for all $\lambda, t_1^*, \beta,Z$ in such an analyticity domain:
  \begin{enumerate}
    \item $\nu_1(\lambda;  t_1^*,\beta,Z)$, $\zeta_1(\lambda;  t_1^*,\beta,Z)$, $\eta_1(\lambda;  t_1^*,\beta,Z)$ 
    and the components of $\allct(\lambda;  t_1^*,\beta,Z)$ satisfy \eqref{betafun}, for all $h\le 0$;
      \label{it:fp_beta}
    \item For all $h\le 1$, \begin{equation}
	\epsilon_h (\lambda;  t_1^*,\beta,Z):=\max\{|\nu_h(\lambda;  t_1^*,\beta,Z)|, |\zeta_h(\lambda;  t_1^*,\beta,Z)|, |\eta_h(\lambda;  t_1^*,\beta,Z)|\}
	\le	K_\vartheta 
	\left|\lambda\right|
	2^{\vartheta h}.
	\label{eq:ct:short_memory}
      \end{equation}
      \label{it:fp_short_memory}
  \end{enumerate}
  \label{thm:ct}
\end{proposition}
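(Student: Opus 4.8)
\emph{Strategy.} The statement is a stable--manifold result for the beta--function flow \eqref{betafun}: there $\nu$ is \emph{relevant} (the factor $2$ in $\nu_h=2\nu_{h+1}+B^\nu_{h+1}$) while $\zeta,\eta$ are \emph{marginal}, and one must single out the unique running coupling constants for which the flow stays bounded and tends to $0$ at $-\infty$. I would realize this choice as the fixed point of a contraction. Fix $\vartheta\in(0,1)$. On sequences $\allct=\{\upsilon_h\}_{h\le1}=\{(\nu_h,\zeta_h,\eta_h)\}_{h\le1}$ introduce the weighted norm $\|\allct\|_\vartheta:=\sup_{h\le1}2^{-\vartheta h}\epsilon_h$ with $\epsilon_h=\max\{|\nu_h|,|\zeta_h|,|\eta_h|\}$, let $\mathfrak X_\vartheta$ be the resulting Banach space and $B_R:=\{\|\allct\|_\vartheta\le R\}$. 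For $\allct\in B_R$ with $R$ small the beta--function contributions $B^\sharp_k[\allct]$ ($\sharp\in\{\nu,\zeta,\eta\}$, $k\le1$) are well defined as the absolutely convergent tree sums of Section~\ref{sec:tree_defs} restricted to trees with at least one interaction endpoint; they depend analytically on $(\lambda;t_1^*,\beta,Z)$ --- being uniformly convergent sums of products of propagators (analytic in the parameters by Remark~\ref{rem:g_analytic}) and of kernels of $\cV^{\rm int}$ (analytic by Proposition~\ref{thm:Ising_to_Grassman}) --- and, by Lemma~\ref{lem:treesum}, they obey the bound \eqref{betafunb}.

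\emph{The fixed--point map.} Define $\mathbf T=\mathbf T_{\lambda,t_1^*,\beta,Z}$ on $B_R$ by
\[
(\mathbf T\allct)_h:=\Bigl(-\tfrac12\sum_{k\le h}2^{k-h}B^\nu_k[\allct],\ -\sum_{k\le h}B^\zeta_k[\allct],\ -\sum_{k\le h}B^\eta_k[\allct]\Bigr),\qquad h\le1 .
\]
A short computation (summing the recursions \eqref{betafun} from scale $h$ down to $-\infty$, and conversely) shows that $\allct=\mathbf T\allct$ is equivalent to: $\allct$ satisfies \eqref{betafun} for all $h\le0$ and $\upsilon_h\to0$ as $h\to-\infty$. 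Using \eqref{betafunb} and summing the geometric series $\sum_{k\le h}2^{k-h}2^{\vartheta k}$ and $\sum_{k\le h}2^{\vartheta k}$, one gets $\|\mathbf T\allct\|_\vartheta\le C_\vartheta'|\lambda|$ on $B_R$ for an explicit $C_\vartheta'$; choosing $K_\vartheta:=C_\vartheta'$, $R:=K_\vartheta|\lambda|$, and $\lambda_0(\vartheta)$ small enough that this $R$ lies in the regime in which \eqref{betafunb} holds, we obtain $\mathbf T(B_R)\subset B_R$.

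\emph{Contraction and conclusion.} For $\allct,\allct'\in B_R$ one needs $|B^\sharp_k[\allct]-B^\sharp_k[\allct']|\le C_\vartheta|\lambda|\,2^{\vartheta k}\,\|\allct-\allct'\|_\vartheta$. This follows by repeating the proof of Lemma~\ref{lem:treesum} after a ``difference'' version of Proposition~\ref{thm:W8:existence}: unfolding the recursion \eqref{lisWinfty}, each $W_\infty[\allct;\tau,\ul P,\ul T,\ul D]$ is, for fixed $\tau,\ul P,\ul T,\ul D$, a product of $\allct$--independent factors times one factor $\upsilon_{h_v-1}$ per counterterm endpoint $v$ with $h_v=h_{v'}+1$, hence multilinear in the entries of $\allct$; writing the difference of two such products as a telescoping sum over which endpoint is replaced by its difference, each term is bounded by the right--hand side of \eqref{eq:W8_reworked_bound} with the corresponding factor $\epsilon_{h_v-1}$ replaced by $2^{\vartheta(h_v-1)}\|\allct-\allct'\|_\vartheta$, and the sum over $\tau,\ul P,\ul T,\ul D$ is then carried out verbatim as in Lemma~\ref{lem:treesum}. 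Since every contributing tree has an interaction endpoint, the outcome carries an overall $|\lambda|$, giving the stated Lipschitz bound; hence $\|\mathbf T\allct-\mathbf T\allct'\|_\vartheta\le C_\vartheta'|\lambda|\|\allct-\allct'\|_\vartheta<\tfrac12\|\allct-\allct'\|_\vartheta$ once $|\lambda|\le\lambda_0(\vartheta)$ is small. By the Banach fixed--point theorem there is a unique $\allct^*=\allct^*(\lambda;t_1^*,\beta,Z)\in B_R$ with $\mathbf T\allct^*=\allct^*$; its entries --- including the $h=1$ entry $(\nu_1,\zeta_1,\eta_1)$ --- satisfy \eqref{betafun} and, since $\allct^*\in B_R$, also \eqref{eq:ct:short_memory}, proving \ref{it:fp_beta} and \ref{it:fp_short_memory}. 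Analyticity in $(\lambda;t_1^*,\beta,Z)$ is inherited from $\mathbf T$: the Picard iterates $\mathbf T^n(\mathbf 0)$ have components holomorphic in these parameters (finite compositions of uniformly convergent series of holomorphic functions) and converge to $\allct^*$ uniformly in the parameters, so each component of $\allct^*$ is holomorphic on the stated domain, with the uniform decay \eqref{eq:ct:short_memory}.

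\emph{Main obstacle.} The one genuinely new ingredient is the contraction bound of the third paragraph, i.e.\ upgrading the a priori bounds of Section~\ref{sec:formal_bounds} to bounds on the \emph{variation} of the tree kernels in $\allct$, with the scale weights arranged so that the extra factor $\|\allct-\allct'\|_\vartheta$ emerges multiplied by exactly $2^{\vartheta h_v}$ at the endpoint where the replacement is made (and correctly propagated through the recursion \eqref{lisWinfty}). Everything else --- the reformulation as a fixed point, the self--mapping property $\mathbf T(B_R)\subset B_R$, and the transfer of analyticity from $\mathbf T$ to its fixed point --- is routine given Lemma~\ref{lem:treesum} and the analyticity statements already established.
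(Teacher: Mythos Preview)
Your proposal is correct and follows essentially the same approach as the paper: the same weighted norm $\sup_{h\le1}2^{-\vartheta h}\epsilon_h$, the same fixed-point map (your $-\tfrac12\sum_{k\le h}2^{k-h}B^\nu_k$ is the paper's $-\sum_{j\le h}2^{j-h-1}B^\nu_j$), self-mapping via \eqref{betafunb}, and a contraction estimate obtained by exploiting the multilinearity of the tree values in the counterterm factors. The only cosmetic difference is that the paper derives the Lipschitz bound via the linear interpolation $\allct(t)=\allct'+t(\allct-\allct')$ and $\tfrac{d}{dt}$, whereas you use the equivalent telescoping-sum argument; and the paper invokes analyticity of the beta function directly, while you phrase it through uniform convergence of the Picard iterates.
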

Note that, at this point, we do not prove that $\nu_1(\lambda; t_1^*,\beta,Z)$, 
$\zeta_1(\lambda; t_1^*,\beta,Z)$, $\eta_1(\lambda; t_1^*,\beta,Z)$ satisfy \cref{eq:ct1_def}; this comes later.

\begin{proof}
For simplicity, we do not track the dependence of the constants and of the norms upon $\vartheta$ which, 
within the course of this proof, we assume to be a fixed constant in $(0,1)$. 
In order to construct $\tilde\allct(\lambda; t_1^*,\beta,Z):=\{(\nu_h(\lambda; t_1^*,\beta,Z), 
\zeta_h(\lambda; t_1^*,\beta,Z), \eta_h(\lambda; t_1^*,\beta,Z))\}_{h\le 1}$, 
we first note that the equations for $\nu_{h}, \zeta_{h}, \eta_{h}$ in \eqref{betafun} imply that, for $k<h\le 1$,
$\nu_{h}=2^{k-h}\nu_{k}-\sum_{k<j\le h}2^{j-h-1}B^\nu_{j}[\allct]$,
$\zeta_{h}=\zeta_{k}-\sum_{k<j\le h}B^\zeta_{j}[\allct]$, and $\eta_{h}=\eta_k-\sum_{k<j\le h}B^\eta_{j}[\allct]$.
If we send $k\to-\infty$ and impose that $\epsilon_k:=\max\{|\nu_k|,|\zeta_k|,|\eta_k|\}\to 0$ as $k\to-\infty$, we get 
  \begin{equation}
\begin{cases} \nu_{h}=-\sum_{j\le h}2^{j-h-1}B^\nu_{j}[\allct],\\
 \zeta_{h}=-\sum_{j\le h}B^\zeta_{j}[\allct],\\
 \eta_{h}=-\sum_{j\le h}B^\eta_{j}[\allct],\end{cases}\label{eq:sys}
  \end{equation}
which we regard as a fixed point equation $\tilde\allct=\bs T[\tilde\allct]$ for a map $\bs T$ on the space of sequences $X_\varepsilon:=\{\tilde\allct=\{(\nu_h,\zeta_h,\eta_h)\}_{h\le 1}\,:\, \|\tilde\allct\|\le \varepsilon\}$, with $\|\tilde\allct\|=\sup_{h\le 1} \{2^{-\vartheta h} \epsilon_h\}$ and $\varepsilon$ a sufficiently small constant. 

We now intend to prove that $\bs T$ is a contraction on $X_\varepsilon$ and, more precisely, that: (1) the image of $X_\varepsilon$ under the action of $\bs T$ is contained in $X_\varepsilon$; (2) $\| \bs T[\tilde\allct]-\bs T[\tilde\allct']\|\le (1/2)\, \|\tilde\allct -\tilde\allct'\|$ for all $\tilde\allct,\tilde\allct'\in X_\varepsilon$.
Once $\bs T$ is proved to be a contraction, it follows that it admits a unique fixed point in $X_\varepsilon$, which corresponds to the desired sequence $\tilde\allct(\lambda;t_1^*,\beta,Z)$. The analyticity of $\tilde\allct(\lambda;t_1^*,\beta,Z)$ follows from the analyticity of the components of the beta function with respect to $\lambda,t_1^*,\beta,Z$ and $\allct$ that, in turn, follows from the absolute summability of its tree expansion, which is a power series in $\lambda$, with coefficients analytically depending upon $t_1^*,t_1,t_2,Z$, with $t_i=\tanh(\beta J_i)$. 

The fact that the image of $X_\varepsilon$ under the action of $\bs T$ is contained in $X_\varepsilon$ follows immediately from \cref{betafunb}. 
In order to prove that  $\| \bs T[\tilde\allct]-\bs T[\tilde\allct']\|\le (1/2)\, \|\tilde\allct -\tilde\allct'\|$, we rewrite the $\nu$-component of 
$\bs T[\tilde\allct]-\bs T[\tilde\allct']$ at scale $h$ as a linear interpolation
\begin{equation}
	\label{21_9:07}
	\nu_h-\nu_{h'}=-\sum_{j\le h} \sum_{\tau\in \cT^{(j-1)}_\infty} 2^{j-h-1}\int_0^1 \frac{d}{dt} B^\nu_j[\tilde\allct(t);\tau]\, dt,
\end{equation}
where $\tilde\allct(t)=\tilde\allct'+t(\tilde\allct-\tilde\allct')$, and similarly for the $\zeta$- and $\eta$-components. When the derivative with respect to $t$ acts on 
the tree value $B^\nu_j[\allct(t);\tau]$, it has the effect of replacing one of the factors $\nu_h(t)$, or $\zeta_h(t)$, or $\eta_h(t)$, associated with 
one of the counterterm endpoints, by $\frac{d}{dt}\nu_h(t)=\nu_h-\nu_h'$, or $\frac{d}{dt}\zeta_h(t)=\zeta_h-\zeta_h'$, or $\frac{d}{dt}\eta_h(t)=\eta_h-\eta_h'$, 
respectively. Therefore, we get the analogue of \cref{betafunb}: if $|\lambda|$ and $\|\tilde\allct(t)\|$ are sufficiently small, then 
\begin{equation}
\label{betafunbt}
\max_{\sharp\in\{\nu,\zeta,\eta\}}\sum_{\tau\in\cT^{(h-1)}_\infty}\Big|\frac{d}{dt}B^\sharp_{h}[\allct(t);\tau]\Big|\le C |\lambda| 2^{\vartheta h}
\max_{h'\ge h}\{|\nu_{h'}-\nu_{h'}'|, |\zeta_{h'}-\zeta_{h'}'|, |\eta_{h'}-\eta_{h'}'|\}
\end{equation}
for all $h \le 1$.
Plugging this estimate into \cref{21_9:07} and its analogues for $\zeta_h-\zeta_h'$ and $\eta_h-\eta_h'$, we readily obtain the desired estimate, 
$\| \bs T[\tilde\allct]-\bs T[\tilde\allct']\|\le (1/2)\, \|\tilde\allct -\tilde\allct'\|$, for $\lambda_0$ and $\varepsilon$ sufficiently small. \end{proof}

We now need to show that it is possible to choose the free parameters $\beta, Z, t_1^*$ in such a way that the functions $\nu_1(\lambda; t_1^*,\beta,Z)$, $\zeta_1(\lambda; t_1^*,\beta,Z)$, $\eta_1(\lambda; t_1^*,\beta,Z)$ constructed in \cref{thm:ct} satisfy \cref{eq:ct1_def};
that is, for given $J_1$, $J_2$ and $\lambda$, 
there exists a critical value of the inverse temperature $\beta$ for which the above expansion for the kernels of the infinite plane effective potentials
is convergent, with dressed parameters $t_1^*$ and $Z$. The desired result is summarized in the following proposition. 

\begin{proposition} For any $J_1,J_2$ satisfying the conditions of \cref{prop:main}, and any $\vartheta\in(0,1)$, there exist $\lambda_0(\vartheta)>0$ 
and functions $t_1^*(\lambda)$, $\beta_c(\lambda)$, $Z(\lambda)$, 
analytic in $|\lambda|\le \lambda_0(\vartheta)$, such that \cref{eq:ct1_def} holds, with $t_1=\tanh(\beta_c(\lambda)J_1)$, $t_2=\tanh(\beta_c(\lambda)J_2)$, and $(\nu_1,\zeta_1,\eta_1)=((\tilde\nu_1(\lambda),\tilde\zeta_1(\lambda), \tilde\eta_1(\lambda))$, 
with $\tilde\nu_1(\lambda)=\nu_1(\lambda;t_1^*(\lambda),$ $\beta_c(\lambda),Z(\lambda))$ (here $\nu_1(\lambda;t_1^*,\beta,Z)$ is the same as in Proposition \ref{thm:ct}), 
and similarly for $\tilde\zeta_1(\lambda)$ and $\tilde\eta_1(\lambda)$.  
Correspondingly, the flow of running coupling constants with initial datum 
$(\tilde\nu_1(\lambda),\tilde\zeta_1(\lambda), \tilde\eta_1(\lambda))$, generated by the flow equations \eqref{betafun}, is well defined for all $h\le 0$ 
and satisfies \eqref{eq:ct:short_memory}.	\label{thm:bare_to_renormalized}
\end{proposition}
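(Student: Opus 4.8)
The plan is to solve the system \eqref{eq:ct1_def} for the triple $(t_1^*,\beta,Z)$ as a function of $\lambda$ by the analytic implicit function theorem, with \cref{thm:ct} as the main input. I would denote by $g_\nu,g_\zeta,g_\eta$ the right-hand sides of \eqref{eq:ct1_def}, viewed as (manifestly analytic) functions of $(t_1^*,\beta,Z)$ through $t_1=\tanh\beta J_1$, $t_2=\tanh\beta J_2$ and $t_2^*=(1-t_1^*)/(1+t_1^*)$, and introduce the analytic map $\bs M(\lambda;t_1^*,\beta,Z)$ with components $2\nu_1-g_\nu$, $\zeta_1-g_\zeta$, $\eta_1-g_\eta$, where $\nu_1,\zeta_1,\eta_1$ are the analytic functions furnished by \cref{thm:ct}. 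A zero $(t_1^*(\lambda),\beta_c(\lambda),Z(\lambda))$ of $\bs M(\lambda;\cdot)$ is precisely what the proposition requires, and the flow with the corresponding initial datum is then the one built in \cref{thm:ct}, which automatically obeys \eqref{betafun} and \eqref{eq:ct:short_memory} by its items \ref{it:fp_beta}--\ref{it:fp_short_memory}.

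The first step is to pin down a base point at $\lambda=0$. Every GN tree contributing to a beta-function component $B^\sharp_h[\allct]$ contains at least one interaction endpoint \tikzvertex{vertex}, each of which contributes a factor $|\lambda|^{\max\{1,\kappa|P_v|\}}$ in \eqref{eq:W8_reworked_bound}; hence $B^\sharp_h[\allct]$ vanishes identically in $\allct$ (and in $t_1^*,\beta,Z$) at $\lambda=0$. Consequently the zero sequence solves the fixed-point system \eqref{eq:sys} when $\lambda=0$, so by the uniqueness of the contraction fixed point in the proof of \cref{thm:ct} one gets $\allct(0;\cdot)\equiv 0$, and therefore $\nu_1(0;\cdot)=\zeta_1(0;\cdot)=\eta_1(0;\cdot)\equiv 0$. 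Thus $\bs M(0;\cdot)=-(g_\nu,g_\zeta,g_\eta)$, and one checks immediately that $g_\nu=g_\zeta=g_\eta=0$ is solved by $\beta=\beta_c(J_1,J_2)$ (forced by $g_\nu=0$, i.e.\ by the critical condition $t_2=(1-t_1)/(1+t_1)$) together with $t_1^*=t_1$ and $Z=1$ (so that $t_2^*=t_2$). This base point lies in the interior of the analyticity domain of \cref{thm:ct}, after shrinking $K'$ around $\tanh(\beta_c(J_1,J_2)J_1)$ if necessary.

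The second step is to verify that $\partial\bs M/\partial(t_1^*,\beta,Z)$ is invertible at the base point. Since $\nu_1,\zeta_1,\eta_1$ vanish identically in $(t_1^*,\beta,Z)$ at $\lambda=0$, their derivatives there vanish too, so this Jacobian equals $-\partial(g_\nu,g_\zeta,g_\eta)/\partial(t_1^*,\beta,Z)$. I would trade $\beta$ for $t_1$ (a local diffeomorphism since $J_1>0$, with $t_2=t_2(t_1)$ strictly increasing) and evaluate at the base point. The $g_\nu$-row has vanishing $t_1^*$- and $Z$-derivatives ($g_\nu$ is $t_1^*$-independent and $\partial_Z g_\nu\propto g_\nu=0$) while $\partial_{t_1}g_\nu=Z^{-1}\bigl(t_2'(t_1)+2(1+t_1)^{-2}\bigr)>0$; expanding the determinant along that row reduces it to the $2\times2$ minor in $(g_\zeta,g_\eta)$ versus $(t_1^*,Z)$, whose four entries $\partial_{t_1^*}g_\zeta=-(1-t_1^*)(1+t_1^*)^{-3}$, $\partial_Z g_\zeta=-t_1(1+t_1)^{-2}Z^{-2}$, $\partial_{t_1^*}g_\eta=(1+t_1^*)^{-2}$, $\partial_Z g_\eta=-t_2(2Z^2)^{-1}$ have definite signs for $t_1^*,t_1\in(0,1)$, making the minor, and hence the Jacobian determinant, nonzero.

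With the base point and the invertibility in hand, the analytic implicit function theorem yields $\lambda_0(\vartheta)>0$ and unique analytic functions $t_1^*(\lambda),\beta_c(\lambda),Z(\lambda)$ on $|\lambda|\le\lambda_0(\vartheta)$, reducing to the base point at $\lambda=0$ and solving $\bs M(\lambda;t_1^*(\lambda),\beta_c(\lambda),Z(\lambda))=0$; after possibly shrinking $\lambda_0(\vartheta)$, these stay in the domain where \cref{thm:ct} applies. This is exactly \eqref{eq:ct1_def} with $(\nu_1,\zeta_1,\eta_1)=(\tilde\nu_1(\lambda),\tilde\zeta_1(\lambda),\tilde\eta_1(\lambda))$, and feeding these parameters into \cref{thm:ct} produces the running-coupling flow with the asserted properties \eqref{betafun}, \eqref{eq:ct:short_memory}. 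The main obstacle is not the implicit function theorem itself but the interlocking of the nonperturbative fixed-point construction of \cref{thm:ct} with the matching equations: the key enabling observation is that the entire running-coupling flow—and in particular $\nu_1,\zeta_1,\eta_1$—trivializes at $\lambda=0$, which decouples the Jacobian computation from the multiscale machinery and reduces the argument to the elementary sign analysis above.
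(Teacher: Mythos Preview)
Your proof is correct and follows essentially the same route as the paper: apply the analytic implicit function theorem to the system \eqref{eq:ct1_def}, using that $\nu_1,\zeta_1,\eta_1$ vanish identically at $\lambda=0$ (so the base point is $\beta=\beta_c(J_1,J_2)$, $t_1^*=t_1$, $Z=1$) and that the Jacobian of the right-hand sides with respect to $(t_1^*,\beta,Z)$ is nonsingular there. The only cosmetic difference is that the paper first clears the factors of $Z$ and differentiates in $\beta$ to get an explicit $3\times3$ determinant, whereas you trade $\beta$ for $t_1$ and argue by sign inspection of the $2\times2$ minor; both computations yield the same nonvanishing conclusion.
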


\begin{proof} The result is a direct consequence of the analytic implicit function theorem: we intend to fix $t_1^*=t_1^*(\lambda)$, $\beta=\beta_c(\lambda)$, $Z=Z(\lambda)$ 
in such a way that \cref{eq:ct1_def} holds, with $\nu_1=\nu_1(\lambda;t_1^*,\beta,Z)$, $\zeta_1=\zeta_1(\lambda;t_1^*,\beta,Z)$, $\eta_1=\eta_1(\lambda;t_1^*,\beta,Z)$, the 
same functions as in Proposition \ref{thm:ct}. With this in mind, we recast the system of equations \eqref{eq:ct1_def} in the following form (recall once more that $t_i=\tanh(\beta J_i)$ and $t_2^*=(1-t_1^*)/(1+t_1^*)$): 
	\begin{align}
		2 Z \nu_1 (\lambda;t_1^*,\beta,Z)
		- 
		\tanh \beta J_2
		+
		e^{-2 \beta J_1}
		& = 0
		,
		\label{eq:implicit_nu}
		\\
		4 Z \zeta_1 (\lambda;t_1^*,\beta,Z)
		+
		\frac{4 Z t_1^*}{(1 + t_1^*)^2}
		+
		e^{- 4 \beta J_1}
		- 1
		& = 0
		,
		\label{eq:implicit_zeta}
		\\
		2Z \eta_1 (\lambda;t_1^*,\beta,Z)
		+
		Z \frac{1- t_1^*}{1+t_1^*}
		-
		\tanh \beta J_2
		& = 0.
		\label{eq:implicit_eta}
	\end{align}
Note that, by Proposition \ref{thm:ct}, everything appearing in \cref{eq:implicit_nu,eq:implicit_zeta,eq:implicit_eta} is analytic, so if we succeed in showing that 
the implicit function theorem applies, the resulting solution will be analytic in $\lambda$, as desired. Note that, at $\lambda=0$, we have 
$\nu_1 (0;t_1^*,\beta,Z) = \zeta (0;t_1^*,\beta,Z) =\eta_1 (0;t_1^*,\beta,Z) = 0$, so in this case the system \eqref{eq:implicit_nu}--\eqref{eq:implicit_eta}
is solved by $\beta = \beta_c (J_1,J_2)$ (with $\beta_c(J_1,J_2)$ the critical temperature of the nearest neighbor model, see the lines after \eqref{eq:truncated}), 
$t_1^* = \tanh \big(\beta_c (J_1,J_2) J_1\big)$ and $Z=1$. Note also that the partial derivatives of $\nu_1(\lambda;t_1^*,\beta,Z),\zeta_1(\lambda;t_1^*,\beta,Z),\eta_1(\lambda;t_1^*,\beta,Z)$ with respect to $t_1^*,\beta$ and $Z$ vanish at $\lambda=0$, so the determinant of the Jacobian of the system \eqref{eq:implicit_nu}--\eqref{eq:implicit_eta}
with respect to $t_1^*,\beta,Z$, computed at $\lambda=0$ and $(t_1^*,\beta,Z)=(\tanh\big(\beta_c (J_1,J_2) J_1\big), 
\beta_c(J_1,J_2),1)$, equals 
	\begin{equation}
		\begin{vmatrix}
			0&- J_2\sech^2(\beta_c J_2)
			- 2 J_1 e^{-2 \beta_c J_1}
			&
			0
			\\
			4 \frac{1- t_1^*}{(1+t_1^*)^3}
			&
			-4 J_1 e^{-4 \beta_c J_2}
			&
			\frac{4 t_1^*}{(1 + t_1^*)^2}
			\\
			-\frac{2}{(1+t_1^*)^2}
			&
			- J_2 \sech^2(\beta_c J_2)
			&
			\frac{1 - t_1^*}{1+t_1^*}
		\end{vmatrix}
		=
		-
		4
		(
			J_2 \sech^2(\beta_c J_2) 
			+ 
			2 J_1 e^{-2 \beta_c J_1}
		)
		\frac{1 + (t_1^*)^2}{(1 + t_1^*)^4}
	\end{equation}
	with $\beta_c = \beta_c (J_1,J_2)$ and $t_1^* = \tanh(\beta_c (J_1,J_2) J_1)$;  the right hand side is evidently nonzero, and, therefore, the analytic implicit function theorem applies, implying the desired claim. 
\end{proof}

This concludes the construction of the sequence of effective potentials in the infinite volume limit, uniformly in the scale label, 
with optimal bounds on the speed at which, after proper rescaling, such effective potentials go to zero as $h\to-\infty$. This result, and the methods introduced 
to prove it, is a key ingredient in the proof of Theorem \ref{prop:main}, for whose completion we refer the reader to \cite{AGG_part1}. 

\appendix

\section{Diagonalization of the matrix $A_c$}
\label{app:diagonalization}
In this section we compute the 
propagator of $\phi$
and the Gaussian integral associated with $\cS_c$.
For this purpose, we first need to block diagonalize the 
coefficient matrix, which we do by using a 
transformation which can be thought of as a Fourier sine transformation with modified frequencies. We write  
\begin{equation}  
\label{eq:cS_c}
\begin{split}
	\cS_c(\phi)=& \frac1{2L}\sum_{k_1\in \mathcal D_L} \Big( \begin{bmatrix}\phi_+(-k_1) \\ \phi_-(-k_1)\end{bmatrix},\ \begin{bmatrix} -i\Delta(k_1) & - b(k_1) +t_2\tau \\
	 b(k_1)-t_2\tau^T &  i\Delta(k_1)\end{bmatrix} \begin{bmatrix} \phi_+(k_1) \\ \phi_-(k_1)\end{bmatrix}\Big)\\
	=:&  \frac1{2L}\sum_{k_1\in \mathcal D_L}\Big( \begin{bmatrix}\phi_+(-k_1) \\ \phi_-(-k_1)\end{bmatrix},\ \tilde A_c(k_1) \begin{bmatrix} \phi_+(k_1) \\ \phi_-(k_1)\end{bmatrix}\Big),
\end{split}
\end{equation}
where $\phi_\omega(k_1)$, with $\omega=\pm$, is the column vector whose components are $\phi_{\omega,z_2}(k_1)$ with $z_2=1,\ldots,M$, and $\tau$ is the $M\times M$ shift matrix $\tau_{z_2,z'_2}:=\delta_{z_2+1,z_2'}$, that is, 
$$\tau = \begin{bmatrix} 0 & 1 & 0 \\
 0 & 0  & 1 & \ddots \\
0 & \ddots & \ddots  & \ddots & \ddots \\
&\ddots & \ddots & \ddots & 1 &0\\
&& \ddots  & 0 & 0  & 1\\
&&& 0 & 0 & 0 \end{bmatrix}. $$
For brevity, we will write $\tilde A_c = \tilde A_c(k_1)$, $\Delta = \Delta (k_1)$, $ b =  b(k_1)$ since dependence on $k_1$ plays no role in the next several pages.
It is helpful to begin by diagonalizing the real symmetric matrix
$$\tilde A_c^2 = \begin{bmatrix} \tilde B_c^+ & 0 \\ 0 & \tilde B_c^- \end{bmatrix},$$
where $\tilde B_c^+$ is the $M\times M$ tri-diagonal matrix
$$\tilde B_c^+=\begin{bmatrix}
-\Delta^2 - b^2-t_2^2 &   bt_2 \\
   bt_2 & -\Delta^2 - b^2-t_2^2 &   b t_2 \\
 &  bt_2 & -\Delta^2 - b^2-t_2^2 &  \ddots \\
& & \ddots & \ddots  & \ddots &  \\
&&&  bt_2  & -\Delta^2 - b^2 -t_2^2 &  b t_2\\
&&&&  bt_2 & -\Delta^2 - b^2
\end{bmatrix}.
$$
Note that all the diagonal entries are equal to $-\Delta^2 - b^2-t_2^2$ apart from the last one, which equals $-\Delta^2 - b^2$. The block $\tilde B^-_c$ is obtained from $\tilde B^+_c$
by reversing the order of the rows and columns.
$\tilde B_c^\omega$, with $\omega=\pm$, can each be thought of as a discrete Laplacian with mixed boundary conditions, which suggests the ansatz
\begin{equation}
v_{k_2}=\begin{bmatrix}
\alpha_{k_2} e^{ik_2} + \beta_{k_2} e^{-ik_2} \\
\alpha_{k_2} e^{i2k_2} + \beta_{k_2} e^{-i2k_2} \\
\vdots
\end{bmatrix},\label{eq:vdef}
\end{equation}
for their eigenvectors. In fact, we see that $v_{k_2}$ is an eigenvector of $\tilde B_c^+$ iff the system of equations
\begin{gather}
(-\Delta^2- b^2-t_2^2)(\alpha_{k_2} e^{ik_2} + \beta_{k_2} e^{-ik_2}) +  bt_2 (\alpha_{k_2} e^{i2k_2}+ \beta_{k_2} e^{-i2k_2}) = \lambda_{k_2} (\alpha_{k_2} e^{ik_2} + \beta_{k_2} e^{-ik_2})\\[2ex]
\begin{split}
  bt_2 (\alpha_{k_2} e^{i(z_2-1)k_2} +&\beta_{k_2} e^{-ik_2(z_2-1)})+ (-\Delta^2- b^2-t_2^2)(\alpha_{k_2} e^{ik_2z_2} + \beta_{k_2} e^{-ik_2z_2})  \\
  +  bt_2 &(\alpha_{k_2} e^{ik_2(z_2+1)} + \beta_{k_2} e^{-ik_2(z_2+1)})  \\
   &= \lambda_{k_2} (\alpha_{k_2} e^{ik_2z_2} + \beta_{k_2} e^{-ik_2z_2}), \qquad  1<z_2<M \label{eq:bulk_eigen}
\end{split}\\[2ex]
\begin{split}
  bt_2 (\alpha_{k_2} e^{ik_2(M-1)} +\beta_{k_2} e^{-ik_2(M-1)})+ (-\Delta^2- b^2)(\alpha_{k_2} e^{ik_2M} + \beta_{k_2} e^{-ik_2M})  \\ = \lambda_{k_2} (\alpha_{k_2} e^{ik_2M} + \beta_{k_2} e^{-ik_2M})
\end{split}
\end{gather}
are all satisfied. Equation \eqref{eq:bulk_eigen} is solved by choosing $$\lambda_{k_2} =  bt_2(e^{ik_2}+e^{-ik_2}) + (-\Delta^2- b ^2-t_2^2),$$ which reduces the other two conditions to 
\begin{gather}
 bt_2  (\alpha_{k_2}  + \beta_{k_2} ) = 0 ,\\
 bt_2 (\alpha_{k_2} e^{ik_2(M+1)}+\beta_{k_2} e^{-ik_2(M+1)}) - t_2^2 (\alpha_{k_2} e^{ik_2M}+ \beta_{k_2} e^{-ik_2M}) = 0 . \label{eq:bd_eig}
\end{gather}
The first condition implies $\beta_{k_2} = -\alpha_{k_2}$, 
which can be used to rewrite Equation~\eqref{eq:bd_eig} as
\begin{equation}
	\sin k_2(M+1)=B(k_1)\sin k_2M
\label{eq:q_condition_recall}
\end{equation}
where for brevity we have introduced
\begin{equation}
B(k_1):= \frac{t_2}{ b(k_1)}=t_2\frac{|1+t_1 e^{ik_1}|^2}{1-t_1^2}, 
\end{equation}
(cf.\ Equation~\eqref{eq:B_def}).
We call $\mathcal{Q}_M^+(k_1)$ the set of the solutions of \eqref{eq:q_condition} with $\Re k_2 \in [0,\pi]$. 

Restricting to the critical case \eqref{eq:anisotropic_critical_condition},
\begin{equation}
  B(k_1) = 1- \frac{2 t_1t_2}{1-t_1^2}(1- \cos k_1) =: 1-\kappa (1-\cos k_1) \label{e:Bk}
  ,
\end{equation}
{so that} $0{<} B(k_1){<} 1$ {for $k_1\in \mathcal D_L$}.
Equation~\eqref{eq:q_condition} is equivalent to 
\begin{equation}
	\tan k_2 (M+1)
	=
	\frac{B(k_1) \sin k_2}{B(k_1) \cos k_2 - 1}
	,
\end{equation}
which for ${0<}B(k_1)<1$ has a unique real solution in each interval $I_n:=\frac{\pi}{M+1}(n+\frac12,n+1)$, $n=0,\ldots,M-1$, 
since the left hand side increases monotonically from $-\infty$ to $0$ while the right hand side is negative and decreasing.
Thus all $M$ eigenvectors of $\tilde B_c^+$ (and, as a consequence, of $\tilde B^-_c$) correspond to real solutions of this form
by
\begin{equation}
	u_{k_2}^+ = \sqrt{\frac{2}{N_M(k_1, k_2)}}
\begin{bmatrix} \sin k_2 \\ \sin 2k_2 \\ \vdots \\ \sin k_2M \end{bmatrix}\qquad ({\text{and, respectively,}} \quad  u_{k_2}^- = \sqrt{\frac{2}{N_M(k_1, k_2)}} 
\begin{bmatrix} \sin k_2M \\ \sin k_2(M-1) \\ \vdots \\ \sin k_2 \end{bmatrix}),
\end{equation}
where 
\begin{equation}
	N_M(k_1,k_2)
	:=
	2
	\sum_{x=1}^M
	\sin^2 k_2x
	=
	M+\frac{1}{{2}}
	-
	\frac12 \frac{\sin (2M+1)k_2}{\sin k_2}
\label{eq:ell_def}
\end{equation}
so that the eigenvectors are normalized.\\
To obtain Equation~\eqref{eq:N_M_def}, we note that since $q$ satisfies~\eqref{eq:q_condition_recall}, 
we have 
\begin{equation*}
\begin{split}
\sin \left[ k_2(M+1) \pm k_2M \right] &= \sin k_2(M+1) \cos k_2 M \pm \sin k_2M \cos k_2(M+1)\\ &= \left[ B(k_1) \cos k_2 M \pm \cos k_2 (M+1) \right] \sin k_2M,
\end{split}
\end{equation*}
and so we can rewrite $N_M(k_1, k_2)$ as
\begin{equation}\label{eq:N_M}
	N_M(k_1, k_2)=\frac{B(k_1)M\cos k_2M -(M+1)\cos k_2(M+1)}{B(k_1)\cos k_2M-\cos k_2(M+1)}=\frac{\frac{d}{dk_2}\left(B(k_1)\sin k_2M-\sin k_2(M+1) \right)}{B(k_1)\cos k_2M-\cos k_2(M+1)}.
\end{equation}

We now return to $\tilde A_c$. 
Equation~\eqref{eq:q_condition} is equivalent to 
\begin{equation}
 b e^{ik_2(M+1)} - t_2 e^{ik_2M} = be^{-ik_2(M+1)} - t_2e^{-ik_2M}
,
\label{eq:symmetry_g_barvv}
\end{equation}
and therefore
 $$ ( b-t_2\tau^T)u^+_{k_2}=-( b e^{ik_2(M+1)}-t_2e^{ik_2M})u^-_{k_2},\qquad \text{and}\qquad  (- b+t_2\tau)u^-_{k_2}=( b e^{ik_2(M+1)}-t_2e^{ik_2M})u^+_{k_2}$$
 whenever $k_2 \in \cQ_M(k_1)$.
 Combining this with the definition of $\tilde A_c(k_1)$ in Equation~\eqref{eq:cS_c}, we see that
\begin{equation}
\tilde A_c \begin{bmatrix} u_{k_2}^+ \\ 0\end{bmatrix} =\begin{bmatrix} -i\Delta u^+_{k_2} \\ -( b e^{ik_2(M+1)}-t_2e^{ik_2M})u^-_{k_2}\end{bmatrix},\qquad 
\text{and}\qquad \tilde A_c \begin{bmatrix} 0 \\ u^-_{k_2}\end{bmatrix} =\begin{bmatrix} ( b e^{ik_2(M+1)}-t_2e^{ik_2M})u^+_{k_2} \\ i\Delta u^-_{k_2} \end{bmatrix},
\end{equation}
or in other words, the change of variables induced by $\begin{bmatrix} u_{k_2}^+ \\ 0 \end{bmatrix}$ and $\begin{bmatrix}  0 \\ u_{k_2}^- \end{bmatrix}$
puts $\tilde A_c$ in block-diagonal form, with $2\times 2$ blocks
\begin{equation}
	\begin{split}
		& \tilde {\mathfrak g}^{-1}(k_1, k_2)
		:=
		\begin{bmatrix}
			-i\Delta & e^{-ik_2(M+1)} ( b - t_2 e^{ik_2}) \\
			-e^{ik_2(M+1)} ( b-t_2e^{-ik_2}) & i\Delta
		\end{bmatrix}
		\\ & \quad = 
		\frac1{\left|1 + t_1 e^{ik_1}\right|^2}
		\begin{bmatrix}
			-2 i t_1 \sin  k_1 &
			e^{-ik_2(M+1)} (1- t_1^2) (1 - B(k_1) e^{ik_2}) \\
			- e^{ik_2(M+1)} (1- t_1^2) (1 - B(k_1) e^{-ik_2}) &
			2 i t_1 \sin k_1
		\end{bmatrix},
	\end{split}
\label{eq:blocks}
\end{equation}
recalling the definitions~\eqref{eq:b_def} and~\eqref{eq:B_def}.
This block-diagonalization implies that 
\begin{equation} {\rm Pf} A_c= \prod_{k_1\in\mathcal D_L}\prod_{k_2\in\mathcal Q^+_M(k_1)} \sqrt{\det \tilde {\mathfrak g}^{-1}(k_1, k_2)},
\end{equation}
where explicitly, using the criticality condition \eqref{eq:anisotropic_critical_condition},
\begin{equation} 
	\det \tilde {\mathfrak g}^{-1}(k_1, k_2)
	=
	\frac{
	2(1 - t_2)^2(1 - \cos k_1)+2(1 - t_1)^2(1 - \cos k_2)
	}{ \left|1 + t_1 e^{ik_1}\right|^2}.
	\label{eqD}
\end{equation}
Note that this determinant vanishes iff $k_1=k_2=0$ mod $2\pi$ {(in particular, it is positive if $k_1\in \mathcal D_L$)}. 

Concerning the propagator, denoting the 
inverse of \eqref{eq:blocks} by
\begin{equation}
	\begin{split}
		\tilde {\mathfrak g} (k_1, k_2)
		:=&
		\begin{bmatrix}
			\tilde {g}_{++} (k_1, k_2) & 
			\tilde {g}_{+-}(k_1, k_2) \\
			\tilde {g}_{-+}(k_1, k_2) &
			\tilde {g}_{--}(k_1, k_2) 
		\end{bmatrix}
		\\
		:= &
		\frac{1}{D(k_1, k_2)}
		\begin{bmatrix}
			 2 i t_1 \sin k_1 &
			- e^{-ik_2(M+1)} (1- t_1^2) (1 - B(k_1) e^{ik_2}) \\
			e^{ik_2(M+1)} (1- t_1^2) (1 - B(k_1) e^{-ik_2}) &
			 -2 i t_1 \sin k_1 
		\end{bmatrix}
		,
	\end{split}
\end{equation}
where $D(k_1, k_2)$ is defined as in \eqref{defDk1k2} 
and, letting $$\tilde \phi_{k_2,\omega}(k_1):=\sum_{z_2=1}^M \phi_{\omega,x_2}(k_1) u^\omega_{k_2}(z_2),$$ we have, for $k_1,k_1'\in\mathcal D_L$ , 
$k_2,k_2'\in\mathcal Q^+_M$, and $\omega,\omega'\in\{\pm\}$,
\begin{equation} \langle \tilde \phi_{k_2,\omega}(k_1)\tilde \phi_{k_2',\omega'}(k_1')\rangle=-L\delta_{k_1,-k_1'}\delta_{k_2,k_2'}\tilde {g}_{\omega\omega'} (k_1,k_2),
\end{equation}
so that, in terms of $\phi_{\omega,z}=\frac1{L}\sum_{k_1\in \mathcal D_L}e^{-ik_1 z_1} \phi_{\omega,z_2}(k_1)$, 
\begin{equation} \langle \phi_{\omega,z} \phi_{\omega',z'}\rangle=-\frac1{L}\sum_{k_1\in\mathcal D_L}\sum_{k_2\in\mathcal Q_M^+(k_1)}\tilde {g}_{\omega\omega'} (k_1, k_2) 
e^{-ik_1(z_1-z_1')}u^\omega_{k_2}(z_2)u^{\omega'}_{k_2}(z_2')\equiv g_{\omega\omega'}(z, z').\label{prop:before}
\end{equation}
Then recalling the definition of $u^{\pm}_{k_2} (z_2)$ {and the identity \eqref{eq:symmetry_g_barvv}}, Equations~\eqref{eq:g_cyl_matrix} and~\eqref{eq:g_finite_cyl} 
follow by writing out the sines in terms of complex exponentials and relabelling the sum in terms of $\cQ_M (k_1) := \cQ_M^+(k_1) \sqcup (- \cQ_M^+(k_1))$.

\section{Proof of Proposition~\ref{thm:g_decomposition}}\label{sec:proofthm2.3}

For the proof of
items~\ref{it:gBh_bounds} and~\ref{it:gEh_bounds}, it is convenient to start by proving their analogues for the infinite volume limit propagators.

\paragraph{Decay bounds on $\fg^{[\eta]}_\infty$, $\fg^{(h)}_\infty$.} Recall that 
$\fg^{[\eta]}_\infty$ was defined in \eqref{ginftyeta}. We intend to prove that,  for all $x\in\mathbb Z^2$, 
 \begin{equation}\label{ccases} \|\partial_1^r\partial_2^s\fg^{[\eta]}_\infty(z)\|\le C^{1+r+s} \times \begin{cases}r! s! \eta^{-\frac{3+r+s}2}e^{-\eta^{-1/2}|z|_1}& \text{if}\qquad \eta\ge 1,\\
 e^{-|z|_1} & \text{if}\qquad 0\le \eta\le 1,\end{cases}\end{equation}
where $\partial_j$ is the discrete derivative with respect to the $j$-th coordinate. By using \eqref{ccases} in the definition of 
$\fg^{(h)}_\infty$, namely
\begin{gather}
  \mathfrak{g}_{\infty}^{(h)} (z)=\begin{cases}
  \int_{0}^1  \mathfrak{g}_{\infty}^{[\eta]}(z)
  \ d \eta, & \text{if} \quad h=0,   \\
  \int_{2^{-2h-2}}^{2^{-2h}} 
  \mathfrak{g}_{\infty}^{[\eta]}(z)
  \ d \eta, & \text{if}\quad h < 0,\end{cases}
\label{eee111}  
\end{gather}
we obtain the analogue of \eqref{eq:gBh_bounds}, 
\begin{gather}
\| \partial_1^r\partial_2^s{\mathfrak g}^{(h)}_\infty (z)\|\le C^{1+r+s}r! s!	2^{(1+r+s)h} e^{-c 2^h |z|_1}, 
\label{eq:g8h_bound}
\end{gather}
for all $z\in\mathbb Z^2$ and $h\le 0$.

In order to prove \eqref{ccases}, we start from the explicit expression of the function in the left side, 
\begin{equation} 
	\partial_1^r\partial_2^s\fg^{[\eta]}_\infty(z)=\int\limits_{[-\pi,\pi]^2} e^{-i(k_1z_1+k_2z_2)}(e^{-ik_1}-1)^r(e^{-ik_2}-1)^s e^{-\eta D(k_1, k_2)}\,M(k_1, k_2)\frac{\dd k_1\dd k_2}{(2\pi)^2},
	\label{eq:g8_partial_integral}
\end{equation}
where $D(k_1, k_2)$ is as in \cref{defDk1k2} (cf.\ \cref{eq:fa_def}) and 
\begin{equation} 
	M(k_1, k_2):= 
	D(k_1, k_2) \hat{\fg}(k_1, k_2)
	=
	\begin{bmatrix}
			-2 i t_1 \sin k_1 &
		 -(1-t_1^2)[1 - e^{-ik_2}B(k_1)] \\
		  (1-t_1^2)[1 - e^{ik_2}B(k_1)] &
			 2 i t_1 \sin k_1 
		\end{bmatrix}
		.
	\end{equation}
Note that the integrand in Equation~\eqref{eq:g8_partial_integral} is periodic with period $2\pi$ both in $k_1$ and in $k_2$, and is entire in both arguments. Therefore, 
we can shift both variables in the complex plane, $k_1\to k_1-i a$ 
and $k_2\to k_2 -i b$, 
to obtain
\begin{equation}  
	\begin{split}
		\partial_1^r\partial_2^s\fg^{[\eta]}_\infty(z)
		=
		e^{-a z_1-b z_2}
		\int\limits_{[-\pi,\pi]^2} & e^{-i(k_1x_1+k_2x_2)}(e^{-ik_1-a}-1)^r(e^{-ik_2-b}-1)^s 
		\\
		& \times e^{-\eta D(k_1-ia,k_2-ib)}\,M(k_1-ia,k_2-ib)\frac{\dd k_1\dd k_2}{(2\pi)^2}.
	\end{split}
\end{equation}
We now pick $a=\alpha\,\text{sign}(z_1)$ and $b=\alpha\,\text{sign}(z_2)$, with $\alpha=\min\{\eta^{-1/2},1\}$, and take the absolute value, thus getting 
\begin{equation}  
	\begin{split}
		\|\partial_1^r\partial_2^s\fg^{[\eta]}_\infty(z)\|
		\le & \ e^{-\alpha|z|_1}e^{-2\eta[(1-t_1)^2+(1-t_2)^2](\cosh\alpha-1)} \\
		& \times \int\limits_{[-\pi,\pi]^2} e^{-\eta D(k_1, k_2)}(\alpha e^{\alpha}+|e^{-ik_1}-1|)^r
		(\alpha e^{\alpha}+|e^{-ik_2}-1|)^s M_0(k_1, k_2,\alpha)\frac{\dd k_1\dd k_2}{(2\pi)^2},
	\end{split}
	\label{2.2.30}
\end{equation}
where $M_0(k_1, k_2,\alpha)=\max_{|a|=|b|=\alpha}\| M(k_1-ia,k_2-ib)\|$ 
and we used the fact that, if $|a|=|b|=\alpha$, then 
\begin{equation}
	|D(k_1-ia,k_2-ib)|\ge D(k_1, k_2)+2[(1-t_1)^2+(1-t_2)^2](\cosh\alpha-1)
\end{equation}
and
\begin{equation} 
|e^{-ik_1-a}-1|\le \alpha e^{\alpha}+|e^{-ik_1}-1|,\qquad |e^{-ik_2-b}-1|\le \alpha e^{\alpha}+|e^{-ik_2}-1|.
\end{equation}
Now, if $\eta \le 1$ and, therefore, $\alpha=1$, then \eqref{2.2.30} immediately implies that $\|\partial_1^r\partial_2^s\fg^{[\eta]}_\infty(z)\|\le C^{1+r+s} e^{-|z|_1}$, as desired. 
If $\eta\ge 1$ and, therefore, $\alpha=\eta^{-1/2}$, we make the following observations: (i) the factor $e^{-2\eta[(1-t_1)^2+(1-t_2)^2](\cosh\eta^{-1/2}-1)}$ is bounded from above uniformly in $\eta$;
(ii) if  $-\pi\le k_1, k_2\le \pi$, then $D(k_1, k_2)\ge c(k_1^2+k_2^2)$, $|e^{-ik_1}-1|\le C|k_1|$, $|e^{-ik_2}-1|\le C|k_2|$ and $M_0(k_1, k_2,\eta^{-1/2})\le C(\eta^{-1/2}+|k_1|+|k_2|)$. By using these inequalities in \eqref{2.2.30}, we find 
\begin{equation}  
	\begin{split}
		\|\partial_1^r\partial_2^s\fg^{[\eta]}_\infty(z)\|\le&  C^{1+r+s}e^{-\eta^{-1/2}|z|_1} \\
		& \times \int_{\mathbb R^2} e^{-c \eta(k_1^2+k_2^2)} (\eta^{-1/2}+|k_1|)^r (\eta^{-1/2}+|k_2|)^s (\eta^{-1/2}+|k_1|+|k_2|){\dd k_1\dd k_2},
	\end{split}
\label{eq:g8eta_bound}
\end{equation}
and expanding the powers in the integrand we obtain a sum of Gaussian integrals which reduce to~\eqref{ccases} for this case as well. 

\paragraph{Decay bounds on $\fg^{[\eta]}_{{\rm E}}$, $\fg^{(h)}_{{\rm E}}$, and 
proof of items \ref{it:gBh_bounds} and \ref{it:gEh_bounds}.} 
Recall that $\fg^{[\eta]}_{\E}(z,z')=\fg^{[\eta]}(z,z')-\fg^{[\eta]}_{\B}(z,z')$, with $\fg^{[\eta]}$ as in \eqref{eq:ga_def} and $\fg^{[\eta]}_{\B}$ as in 
\eqref{eq:gaB_def}. We focus on the case that $z_1-z_1'\neq \pm L/2$ (recall that in our conventions $z_1,z_1'\in\{1,\ldots, L\}$), in which
the function $s_L$ in \eqref{eq:gaB_def} is equal to $\pm1$; the complementary case, 
$z_1-z_1'= \pm L/2$, in which $\fg^{[\eta]}_{\B}(z,z')=0$, can be treated in a way analogous to the discussion below, and is left to the reader. 
If $z_1-z_1'\neq \pm L/2$, by using the anti-periodicity of the propagator in the horizontal direction, we can reduce without loss of generality to 
the case $z_1-z'_1=\per_L(z_1-z_1')$ (i.e., $-L/2<z_1-z_1'<L/2$), in which $\fg^{[\eta]}_{\B}(z,z')=\fg^{[\eta]}_{\infty}(z-z')$, and we shall do so in the following. 
Therefore, in this case,
\begin{equation} \label{getastrip}
 \fg^{[\eta]}_{\E}(z, z')=\sum_{k_1\in\mathcal D_L}\sum_{k_2 \in \mathcal{Q}_M(k_1)}
		\frac{1}{2L N_M(k_1, k_2)} \Big[G_+^{[\eta]}(k_1, k_2;z, z' )-G_-^{[\eta]}(k_1, k_2; z, z' )\Big]- \fg^{[\eta]}_\infty (z- z'),\end{equation}
where, recalling the definitions of $f_\eta$ in \eqref{eq:fa_def} and of $\hat \fg(k_1,k_2)$ and $\hat g_{\omega\omega'}(k_1,k_2)$ in \eqref{eq:ghat}, 
\begin{eqnarray}  \label{eq:G_eta_def}
    G_+^{[\eta]}(k_1, k_2;z, z' )
    &:=&
    e^{-ik_1(z_1-z_1')}
    e^{-ik_2(z_2-z_2')} 
    f_\eta(k_1, k_2)
    \hat{\fg}(k_1, k_2)
    ,
    \\
    G_-^{[\eta]}(k_1, k_2; z, z' )
    &:=&	
    e^{-ik_1(z_1-z_1')} e^{-ik_2(z_2+z_2')}f_\eta(k_1, k_2)
    \begin{bmatrix}
      \hat{g}_{++}(k_1, k_2) &
      \hat{g}_{+-}(k_1,-k_2) \\
      \hat{g}_{-+}(k_1, k_2) & e^{2ik_2(M+1)} 
      \hat{g}_{--}(k_1, k_2)
    \end{bmatrix},\nonumber
\end{eqnarray}
which are entire functions of $k_1, k_2$, and $2\pi$-periodic in both variables. We intend to prove that,  for all $z, z'\in\Lambda$,
\begin{equation}\label{ccases_E} 
\|\bs\partial^{\bs r}\fg^{[\eta]}_{\E}(z, z')\|\le C^{1+|\bs r|_1}  \times \begin{cases}\bs r! {\eta^{-\frac{3+|\bs r|_1}2}e^{-c\eta^{-1/2} d_\E(z,z')}}
   & \text{if}\qquad 1\le \eta\le 2^{-2h^*},\\
 e^{-c\, d_\E(z, z')} & \text{if}\qquad 0\le \eta\le 1,\end{cases}
\end{equation}
where we recall that $h^*=-\lfloor \log_2\min\{L,M\}\rfloor$. 
Recalling also the relationship between $\fg_{\E}^{(h)}$ and $\fg_{\E}^{[\eta]}$, this implies 
\begin{eqnarray}
	\| \bs \partial^{\bs r}{\mathfrak g}^{(h)}_{\E}(z, z')\|
	&\le& C^{1+\left|r\right|_1} \bs r ! 
	e^{-c 2^h d_\E (z, z')} 
	\int_{2^{-2h-2}}^{2^{-2h}} \eta^{-(3+|\bs r|_1)/2} \dd \eta\nonumber	\\ 
	&\le&
	(C')^{1+|\bs r|_1}\bs r! 2^{(1+|\bs r|_1)h} e^{-c 2^h d_\E(z, z')}
	, 
	\label{eq:gEstrip_h_bound}
\end{eqnarray}
for $h^* < h < 0$, and $\| \bs \partial^{\bs r}{\mathfrak g}^{(0)}_{\E}(z, z')\|\le C^{1+\left|r\right|_1} e^{-c \, d_\E (z, z')}$ for $h=0$.

Recalling that $\fg^{(h)} = \fg_\infty^{(h)} + \fg_{\E}^{(h)}$ and noting that $d_\E(z, z') \ge \|z - z'\|_1$, inequalities~\eqref{eq:g8h_bound} and~\eqref{eq:gEstrip_h_bound} 
also imply that $\fg^{(h)}$ satisfies a bound of the form \eqref{eq:gBh_bounds}.

\medskip

In order to prove \eqref{ccases_E}, we start from \eqref{getastrip}. Recalling that
$\mathcal Q_M(k_1)$ is the set of roots of $B(k_1)\sin k_2M-\sin k_2(M+1)$, with $k_2\in (-\pi, \pi]$, that  
$N_M(k_1, k_2)$ is given by \eqref{eq:N_M}, and that $G_\sharp^{[\eta]}(k_1, k_2; z, z')$ are entire and $2\pi$-periodic, for $\sharp\in\{\pm\}$, we can rewrite
\begin{equation}
\sum_{k_2\in\mathcal Q_M(k_1)}\frac{1}{2N_M(k_1, k_2)}G_\sharp^{[\eta]}(k_1, k_2;z, z')=\frac 12\oint_{\mathcal C} \frac{B(k_1)\cos k_2M - \cos k_2(M+1)}{B(k_1)\sin k_2M -\sin k_2(M+1)}G_\sharp^{[\eta]}(k_1, k_2; z, z')\frac{\dd k_2}{2\pi i},\label{eccola}\end{equation}
where $\mathcal C$ is the boundary of the rectangle in the complex plane of vertices $-\pi-i b, \pi-i b, \pi+i b, -\pi+ib$, $b>0$, traversed 
counterclockwise. 
We rewrite
\begin{equation}
\frac{B(k_1)\cos k_2M - \cos k_2(M+1)}{B(k_1)\sin k_2M -\sin k_2(M+1)}=\mp i\left[1+2 \frac{R_\pm (k_1, k_2) e^{\pm 2ik_2(M+1)}}{1-R_\pm (k_1, k_2) e^{\pm 2ik_2(M+1)}}\right],
\end{equation}
where
\begin{equation}\label{eq:defRpm}
R_\pm(k_1, k_2):=\frac{1- B(k_1)e^{\mp ik_2}}{1-B(k_1)e^{\pm ik_2}},
\end{equation}
and using this and noting that the contributions of the left and right sides of $\cC$ in the integral in Equation~\eqref{eccola} cancel by the periodicity of the integrand, we then have
\begin{equation}
 \sum_{k_2\in\mathcal Q_M(k_1)}\frac{1}{2N_M(k_1, k_2)}G_\sharp^{[\eta]}(k_1, k_2; z, z')=
\sum_{\sigma=0,\pm1} \int_{-\pi+i\sigma b}^{\pi+i\sigma b} A_\sigma (k_1,k_2) G_\sharp^{[\eta]}(k_1, k_2; z, z')\frac{\dd k_2}{2\pi } \label{eq:B.18}
\end{equation}
where $A_0(k_1,k_2)\equiv 1$ and, if $\sigma=\pm1$, 
\begin{equation}\label{Asigmadef}
A_\sigma(k_1,k_2):=\frac{R_\sigma (k_1, k_2) e^{ 2i\sigma k_2(M+1)}}{1-R_\sigma (k_1, k_2) e^{ 2i\sigma k_2(M+1)}}.
\end{equation}
We now need to sum \eqref{eq:B.18} over $k_1 \in \mathcal D_L$. Notice that the right side of \eqref{eq:B.18} is analytic in $k_1$ in a sufficiently small 
strip around the real axis (quantitative bounds on the width of the analyticity strip will follow), and is $2\pi$-periodic in $k_1$. Given 
any function $F(k_1)$ that is $2\pi$-periodic and analytic in a strip of width $2b>0$ around the real axis,
we have
\begin{equation} \frac1L\sum_{k_1\in\mathcal D_L}F(k_1)=\oint_{\mathcal C}\frac{d k_1}{2\pi}\frac{F(k_1)}{1+e^{-ik_1L}}=\sum_{\sigma'=0,\pm1}
\int_{-\pi+i\sigma' b}^{\pi+i\sigma' b} A'_{\sigma'}(k_1)\, F(k_1) \frac{d k_1}{2\pi},\label{eq:B.20}\end{equation}
where $A'_0(k_1)\equiv 1$ and, if $\sigma'=\pm1$, $A'_{\sigma'}(k_1)=-e^{i\sigma' k_1L}/(1+e^{i\sigma' k_1L})$; moreover $\mathcal C$ is the same contour defined after 
\eqref{eccola}. 
Using \eqref{getastrip}, \eqref{eq:B.18} and \eqref{eq:B.20}, we obtain
\begin{eqnarray}\label{eq:g_E_inside_proof} &&
\bs \partial^{\bs r}\fg^{[\eta]}_{\E} (z, z') = - \iint\limits_{[-\pi,\pi]^2}\frac{\dd k_1\dd k_2}{(2\pi)^2}
\Big[\bs \partial^{\bs r}G_-^{[\eta]}(k_1, k_2; z, z')\\
&&\quad +\sum_{\sharp=\pm}\sharp\sum_{\sigma,\sigma'=0,\pm}^* 
\, A'_{\sigma'}(k_1+i\sigma' b)\, A_\sigma(k_1+i\sigma'b ,k_2+i\sigma b)\, 
\bs \partial^{\bs r}G_\sharp^{[\eta]}(k_1+i\sigma'b , k_2+i\sigma b; z, z')\Big],\nonumber
\end{eqnarray}
where the $*$ on the sum indicates the constraint that $(\sigma,\sigma')\neq(0,0)$, and 
\begin{equation}\bs \partial^{\bs r}G_\sharp^{[\eta]}(k_1, k_2; z, z')=(e^{-ik_1}-1)^{r_{1,1}}(e^{ik_1}-1)^{r_{2,1}}(e^{-ik_2}-1)^{r_{1,2}}(e^{\sharp ik_2}-1)^{r_{2,2}}
G_\sharp^{[\eta]}(k_1, k_2; z, z').\label{eq:partialrG}\end{equation}
Now, by using the definition of $G_-^{[\eta]}$ and by proceeding as in the proof of \eqref{ccases}, we see that 
the first term in the right side of \eqref{eq:g_E_inside_proof} admits the same bound as $\fg^{[\eta]}_\infty(z- z')$, see \eqref{ccases}, 
with the only difference that $|z- z'|_1$ should be replaced by $|z_1-z_1'|+\min\{z_2+z_2',2(M+1)-z_2-z_2'\} \le d_\E (z, z')$. Therefore, 
the first term in the right side of \eqref{eq:g_E_inside_proof} satisfies \eqref{ccases_E} as desired. 

\medskip

Let us now prove that the contribution to $[\bs \partial^{\bs r}\fg^{[\eta]}_{\E} (z, z')]_{\omega\omega'}$ from 
the second line of \eqref{eq:g_E_inside_proof} satisfy \eqref{ccases_E}. For this purpose, if $\sigma'=0$, we shift $k_1$ in the complex plane 
as $k_1\to k_1-i\, b\, {\rm sign}(z_1-z_1')$. If $\sigma=0$, we shift $k_2$ in the complex plane as $k_2\to k_2-i \tau b$, with 
$\tau=\pm$, its specific valued depending on $\sharp$ and on the matrix element $(\omega,\omega')$ we are looking at; more precisely, $\tau=\tau_{\sharp,(\omega\omega')}$, with 
\begin{equation}\tau_{\sharp,(\omega\omega')}:=\begin{cases} -{\rm sign}(z_2-z_2') & \text{if $\sharp=+$}\\
-1 & \text{if 
$\sharp=-$ and $(\omega,\omega')\neq(-,-)$}\\
+1 & \text{if $\sharp=-$ and $(\omega,\omega')=(-,-)$.}\end{cases}\label{tausharp}\end{equation}
Once these complex shifts are performed, we bound the contribution to $[\bs \partial^{\bs r}\fg^{[\eta]}_{\E} (z, z')]_{\omega\omega'}$ from the 
second line of \eqref{eq:g_E_inside_proof} by the sum over $\sharp$ and over $\sigma,\sigma'$ (with $(\sigma,\sigma')\neq(0,0)$) of:
\begin{equation} \label{eq:B.24}
\iint\limits_{[-\pi,\pi]^2}\frac{\dd k_1\, \dd k_2}{(2\pi)^2}\,\big|A'_{\sigma'}(k_1+i\tilde \sigma' b)\big|\, \big|A_\sigma(k_1+i\tilde \sigma'b ,k_2+i\tilde \sigma b)\big|\,
 \Big|[\bs \partial^{\bs r}G_\sharp^{[\eta]}(k_1+i\tilde \sigma'b , k_2+i\tilde \sigma b; z, z')]_{\omega\omega'}\Big|,\end{equation}
where $\tilde \sigma=\begin{cases} \sigma & \text{if $\sigma\neq0$}\\
\tau_{\sharp,(\omega\omega')} & \text{if $\sigma=0$}\end{cases}$, and $\tilde \sigma'=\begin{cases} \sigma' & \text{if $\sigma'\neq0$}\\
-{\rm sign}(z_1-z_1') & \text{if $\sigma'=0$}\end{cases}$. Note that, if $\sigma'\neq 0$, then 
\begin{equation}\label{boundA'}|A'_{\sigma'}(k_1+i\tilde \sigma' b)|\le \frac{e^{-bL}}{1-e^{-bL}}.\end{equation} 
If $\sigma\neq0$, we recall that $A_{\sigma}(k_1,k_2)$ is given by \eqref{Asigmadef}, with 
$R_\sigma(k_1,k_2)$ as in \eqref{eq:defRpm}. We claim that, if $b\le c_0$, with $c_0$ sufficiently small, and $k_1,k_2$ real, then 
\begin{equation} 
\big|R_\sigma(k_1+i\tilde \sigma' b,k_2+i\tilde \sigma b)\big|\le e^{Cb}.\label{eq:RsigmaCb}\end{equation} 
for some $C>0$; this will be proved momentarily, after \eqref{eq:2.2.54}.
We now pick $b=c_0\min\{1,\eta^{-1/2}\}$, so that, using \eqref{eq:RsigmaCb}, for $\sigma\neq0$, $M$ sufficiently large, and $k_1,k_2$ real,  
\begin{equation} |A_\sigma(k_1+i\tilde \sigma' b,k_2+i\tilde \sigma b)|\le \frac{e^{C b}}{1-e^{-bM}}\, e^{-2b(M+1)}. %
\label{boundA}\end{equation}
If we now use \eqref{boundA'} and \eqref{boundA} in the second line of \eqref{eq:g_E_inside_proof} and we estimate the integral of 
$\big|\bs \partial^{\bs r}G_\sharp^{[\eta]}(k_1+i\tilde\sigma'b, k_2+i\tilde \sigma b; z, z')\big|$ via the same strategy used in the 
proof of \eqref{ccases}, see eqs.~\eqref{2.2.30} to \eqref{eq:g8eta_bound}, we find that, for $\eta\le 1$, 
\begin{equation} \eqref{eq:B.24}\le  C^{1+|\bs r|_1}e^{-c_0 d_\E(z,z')},
\end{equation}
where $d_\E$ was defined after \eqref{eq:gEh_bounds}, while, if $1\le \eta\le 2^{-2h^*}$, 
\begin{eqnarray}  \eqref{eq:B.24}&\le&
 C^{1+|\bs r|_1}e^{-c_0\eta^{-1/2}d_\E(z, z')} 
 int_{\mathbb R^2} \dd k_1\dd k_2 e^{-c\eta (k_1^2+ k_2^2)}(|k_1|+|k_2|+\eta^{-1/2})^{1+|\bs r|_1}\nonumber\\
&\le & (C')^{1+|\bs r|_1} 
e^{-c_0\eta^{-1/2}d_\E(z, z')} 
{\bs r}! \eta^{-\frac{3+|\bs r|_1}{2}}.\label{eq:2.2.54}
\end{eqnarray}
This completes the proof that the terms  in the second line of \eqref{eq:g_E_inside_proof} 
satisfy \eqref{ccases_E}, {with $c=c_0$}, provided that the bound \eqref{eq:RsigmaCb} holds.

\medskip

{\it Proof of \eqref{eq:RsigmaCb}.}
We will prove the following version of \eqref{eq:RsigmaCb}: if {$\sigma\neq0$, $b\le c_0$ with $c_0$ sufficiently small and
$a:=\tilde\sigma'b$ with $\tilde\sigma'=\pm1$,} then 
\begin{equation}|R_\sigma(k_1{+}ia, k_2+i\sigma b)|^2\le e^{2b} (1+C_0 b),\label{eq:app.a1}\end{equation}
where $C_0$ can be chosen 
\begin{equation} \label{eq:appC0}C_0=\max\big\{ 8\kappa(1+\pi^2), 128\pi^4\kappa(1-2\kappa)^{-2}\big\},\end{equation}
and $\kappa$ is the same as in \eqref{e:Bk}.
By using the definition of $R_\sigma$, \eqref{eq:defRpm}, one sees that \eqref{eq:app.a1} is equivalent to 
\begin{equation}\label{eq:app.a2}
e^{-2b}+\rho^2-2\rho e^{-b}\cos(\sigma k_2-\beta)\le (1+C_0b)\big[1+\rho^2e^{-2b}-2\rho e^{-b}\cos(\sigma k_2+\beta)\big],
\end{equation}
where $\rho=\rho(k_1,a):=|B(k_1{+}ia)|$ and $\beta=\beta(k_1,a):=\text{Arg}(B(k_1{+}ia))$. As shown below, if $-\pi\le k_1\le \pi$ and $|a|{=b}\le c_0$ with $c_0$ sufficiently small, then 
\begin{equation} \label{appa.bounds}\rho\le 1+\kappa[(1+\pi^2){b}^2-k_1^2/\pi^2]
\,, \qquad |\beta|\le \frac{4\kappa}{1-2\kappa}|k_1|\cdot{b} \,.
\end{equation}
By rearranging the terms in the two sides, one sees that \eqref{eq:app.a2} is equivalent to
\begin{equation}\label{eq:app.a3}
(\rho^2-1)(1-e^{-2b})\le 4 \rho e^{-b}\sin(\sigma k_2)\sin\beta+C_0 b[(1-\rho e^{-b})^2+2\rho e^{-b}\big(1-\cos(\sigma k_2+\beta)\big)].
\end{equation}
By using the first bound in \eqref{appa.bounds}, the fact that $|a|=b\le c_0$ with $c_0$ sufficiently small, and the bound $1-\cos(\sigma k_2+\beta)\ge (\sigma k_2+\beta)^2/\pi^2$ valid for $\beta$ sufficiently small, it is straightforward to check that the 
left side of \eqref{eq:app.a3} is smaller or equal than $4\kappa b[(1+\pi^2)b^2-k_1^2/\pi^2]$, while the right side is greater or equal than 
$$-4\rho e^{-b} |k_2|\cdot|\beta|+C_0b\Big[\frac{b^2}2+\frac2{\pi^2}\rho e^{-b}(|k_2|-|\beta|)^2\Big].$$ 
Therefore, \eqref{eq:app.a3} is a consequence of
\begin{equation}\label{eq:app.a4}
4\kappa (1+\pi^2)b^3+4\rho e^{-b} |k_2|\cdot|\beta|
\le 
4\kappa bk_1^2/\pi^2+C_0\frac{b^3}2+C_0b\frac2{\pi^2}\rho e^{-b}(|k_2|-|\beta|)^2.
\end{equation}
Now, the first term in the left side of \eqref{eq:app.a4} is smaller than the third term in the right side, $C_0b^3/2$, because $C_0\ge 8\kappa(1+\pi^2)$, 
see \eqref{eq:appC0}. By using the second bound in \eqref{appa.bounds}
and the fact that $\rho e^{-b}\le 1$ for $b$ small enough (thanks to the first bound in \eqref{appa.bounds}), 
we see that, if $|k_2|\le \frac{1-2\kappa}{4\pi^2}|k_1|$, then the second term in the left side of \eqref{eq:app.a4} is smaller 
than the first term in the right side. In the complementary case, $|k_2|\ge\frac{1-2\kappa}{4\pi^2}|k_1|$ (which implies, in particular, that
$|k_2|\ge 2|\beta|$, thanks to the second bound in \eqref{appa.bounds}), then the second term in the left side of \eqref{eq:app.a4} is bounded from above 
by $\frac{16\kappa}{1-2\kappa}\rho e^{-b} b |k_2|\cdot|k_1|$, while the last term in the right side is bounded from below by $\frac{C_0b}{2\pi^2}\rho e^{-b}|k_2|^2$;
now, 
$$\frac{16\kappa}{1-2\kappa}\rho e^{-b} b |k_2|\cdot|k_1|\le \frac{C_0b}{2\pi^2}\rho e^{-b}|k_2|^2\quad \Leftrightarrow \quad |k_2|\ge \frac{32\pi^2\kappa}{C_0(1-2\kappa)} |k_1|,
$$
which is verified for $|k_2|\ge\frac{1-2\kappa}{4\pi^2}|k_1|$, because $C_0\ge 128\pi^4\kappa(1-2\kappa)^{-2}$, see \eqref{eq:appC0}. In conclusion, 
\eqref{eq:app.a4} is always verified and, as a consequence, \eqref{eq:app.a3} (and, therefore, \eqref{eq:app.a1}) is, as desired. 

\medskip

We are left with proving the validity of \eqref{appa.bounds} for $|a|{=b}$ small enough. By definition
\begin{equation}\rho e^{i\beta} = 1-\kappa+\kappa\cos k_1\cosh a{-}i\kappa \sin k_1 \sinh a,\label{app.bast}\end{equation}
so that, using $1-\cos k_1\ge 2k_1^2/\pi^2$ and the fact that, for $|a|{=b}$ small, $\cosh a\le 1+{b}^2$ and $|\sinh a|\le 2{b}$, 
$$\rho\le 1+\kappa ({b}^2-2k_1^2/\pi^2+2{b}|k_1|).$$
Using $2{b}|k_1|\le \pi^2 {b}^2+k_1^2/\pi^2$ we get the first of \eqref{appa.bounds}. Finally, from \eqref{app.bast}, we find
$$|\beta|\le \frac{\kappa|\sin k_1\sinh a|}{1-\kappa+\kappa\cos k_1\cosh a}\le \frac{2\kappa{b}|k_1|}{1-\kappa(2+{b}^2)}.$$
Now, picking ${b}^2$ smaller than $(1-2\kappa)/2\kappa$, we find that $\beta$ satisfies the second of \eqref{appa.bounds}. 

\medskip

\paragraph{Gram representation: proof of items \ref{it:gh_gram} and \ref{it:gh_gram_bounds}.}
Recall that 
\begin{eqnarray} &&\hskip-.4truecm \bs\partial^{(\bs s,\bs s')}g^{(h)}_{\omega\omega'}(z, z')=\label{ghcyl}\\
&&\hskip-.2truecm =\int_{I_h} \dd\eta
\sum_{k_1\in\mathcal D_L}\sum_{k_2\in\mathcal Q_M(k)}\frac1{2LN_M(k_1, k_2)} \Big[\bs\partial^{(\bs s,\bs s')}G_{+,\omega\omega'}^{[\eta]}(k_1, k_2; z, z')-
\bs\partial^{(\bs s,\bs s')}G_{-,\omega\omega'}^{[\eta]}(k_1, k_2; z, z')\Big],\nonumber
\end{eqnarray}
where $I_0=[0,1)$, and $I_h=[2^{-2h-2},2^{-2h})$ for all $h^*\le h<0$. We recall that $\bs\partial^{(\bs s,\bs s')}G_\sharp^{[\eta]}$ is given by 
\eqref{eq:partialrG}, with $\bs s$ playing the role of $(r_{1,1},r_{1,2})$ and $\bs s'$ playing the role of $(r_{2,1},r_{2,2})$. In the following, we 
will exhibit a Gram decomposition separately for the two terms in \eqref{ghcyl} corresponding to $G_+^{[\eta]}$ and $G_-^{[\eta]}$, which will 
immediately imply a Gram decomposition for the combination of the two. 

We rewrite 
\begin{equation} \label{eq:Gsharpgamma}\mathds 1_{I_h}(\eta)\, \bs\partial^{(\bs s,\bs s')}G_{\sharp,\omega\omega'}^{[\eta]}(k_1, k_2; z, z')= \sum_{\sigma=1}^4 \big[
\tilde\gamma_{\sharp,\omega,\bs s, z}^{(h)}(k_1, k_2,\eta)\big]^*_\sigma\big[\gamma_{\sharp,\omega',\bs s', z'}^{(h)}(k_1, k_2,\eta)\big]_\sigma,
\end{equation}
where $\mathds 1_{I_h}$ is the characteristic function of the interval $I_h$, and 
$\big[\gamma_{\sharp,\omega,\bs s, z}^{(h)}(k_1, k_2,\eta)\big]_\sigma, \big[\tilde \gamma_{\sharp,\omega,\bs s, z}^{(h)}(k_1, k_2,\eta)\big]_\sigma$ are the 
components of the following 4-vectors:
\begin{eqnarray} 
&& \tilde \gamma_{\sharp,+,\bs s, z}^{(h)}(k_1, k_2,\eta)= \mathds 1_{I_h}(\eta)\,e^{ik_1z_1+ik_2 z_2}(e^{ik_1}-1)^{s_1}(e^{ik_2}-1)^{s_2}\sqrt{f_\eta(k_1, k_2)}
\begin{bmatrix} \left[ \sqrt{\hat g_{\sharp,++}(k_1, k_2)}\right]^*\phantom{.} \\ \left[ \sqrt{\hat g_{\sharp,+-}(k_1, k_2)}\right]^* \phantom{.}\\ 0 \\ 0 \end{bmatrix},\nonumber\\
&& \tilde \gamma_{\sharp,-,\bs s, z}^{(h)}(k_1, k_2,\eta)= \mathds 1_{I_h}(\eta)\, e^{ik_1z_1+ik_2 z_2}(e^{ik_1}-1)^{s_1}(e^{ik_2}-1)^{s_2}\sqrt{f_\eta(k_1, k_2)}
\begin{bmatrix}0\\ 0\\ \left[ \sqrt{\hat g_{\sharp,-+}(k_1, k_2)}\right]^*\phantom{.} \\ \left[ \sqrt{\hat g_{\sharp,--}(k_1, k_2)}\right]^*\phantom{.}  \end{bmatrix},\nonumber\\
&& \gamma_{\sharp,+,\bs s, z}^{(h)}(k_1, k_2,\eta)= \mathds 1_{I_h}(\eta)\, e^{ik_1z_1+\sharp ik_2 z_2}(e^{ik_1}-1)^{s_1}(e^{\sharp ik_2}-1)^{s_2}\sqrt{f_\eta(k_1, k_2)}
\begin{bmatrix} \sqrt{\hat g_{\sharp,++}(k_1, k_2)}\phantom{.} \\ 0 \\ \sqrt{\hat g_{\sharp,-+}(k,q)} \\ 0  \end{bmatrix},\nonumber\\
&& \gamma_{\sharp,-,\bs s, z}^{(h)}(k_1, k_2,\eta)= \mathds 1_{I_h}(\eta)\, e^{ik_1z_1+\sharp ik_2 z_2}(e^{ik_1}-1)^{s_1}(e^{\sharp ik_2}-1)^{s_2}\sqrt{f_\eta(k_1, k_2)}
\begin{bmatrix} 0 \\ \sqrt{\hat g_{\sharp,+-}(k_1, k_2)}\phantom{.} \\ 0 \\ \sqrt{\hat g_{\sharp,--}(k_1, k_2)}  \end{bmatrix},\nonumber
\end{eqnarray}
where: if $\sharp=+$, then $\hat g_{+,\omega\omega'}(k_1, k_2)$, with $\omega,\omega'\in\{\pm\}$,
are the components of the $2\times 2$ matrix $\hat \fg_+(k_1, k_2)\equiv\hat \fg(k_1, k_2)$, see 
\eqref{eq:ghat}; if $\sharp=-$, then $\hat g_{\sharp,\omega\omega'}(k_1, k_2)$, with $\omega,\omega'\in\{\pm\}$,
are the components of 
$$\hat\fg_-(k_1, k_2)\equiv  \begin{bmatrix}
		  \hat{g}_{++}(k_1, k_2) &
		 \hat{g}_{+-}(k_1,-k_2) \\
		  \hat{g}_{-+}(k_1, k_2) & e^{2ik_2(M+1)} 
		  \hat{g}_{--}(k_1, k_2)\end{bmatrix},
		$$
		cf.~\eqref{eq:G_scal_def}. The square roots $\sqrt{\fg_{\sharp,\omega\omega'}(k_1, k_2)}$ of the complex numbers 
$\fg_{\sharp,\omega\omega'}(k_1, k_2)$ are all defined by the same (arbitrarily chosen) branch. 

\medskip

In conclusion, in light of \eqref{eq:Gsharpgamma}, \eqref{ghcyl} can be rewritten as
\begin{eqnarray} &&
 \bs\partial^{(\bs s,\bs s')}\fg^{(h)}_{\omega\omega'}(z, z')=\\
 &&\quad =\int_0^\infty \dd\eta \sum_{k_1\in\mathcal D_L}\sum_{k_2\in\mathcal Q_M(k_1)}\frac1{2LN_M(k_1, k_2)} 
 \sum_{\sigma=1}^4\sum_{\sharp=\pm}\sharp 
\big[
\tilde\gamma_{\sharp,\omega,\bs s, z}^{(h)}(k_1, k_2,\eta)\big]^*_\sigma\big[\gamma_{\sharp,\omega',\bs s', z'}^{(h)}(k_1, k_2,\eta)\big]_\sigma=:
\nonumber\\
&&\quad =: \Big( \tilde\gamma_{+,\omega,\bs s, z}^{(h)}\otimes \hat e_1 + \tilde\gamma_{-,\omega,\bs s, z}^{(h)}\otimes  \hat e_2\  ,\
\gamma_{+,\omega',\bs s', z'}^{(h)}\otimes \hat e_1 - \gamma_{-,\omega',\bs s', z'}^{(h)}\otimes \hat e_2\Big)\equiv 
\Big( \tilde \gamma_{\omega,\bs s, z}^{(h)}, \gamma_{\omega',\bs s', z'}^{(h)}\Big),\nonumber
\end{eqnarray}
where in the last line $\hat e_1, \hat e_2$ are the elements of the standard Euclidean basis of $\mathbb R^2$. 
We can adapt all of the preceding discussion to $\fg^{(\le h)}$ simply by replacing $I_h$ with $[2^{-2h-2},\infty)$;
this concludes the proof of item \ref{it:gh_gram}.

\medskip

In order to prove the bounds  in item \ref{it:gh_gram_bounds}, we first note that the definitions given above for $ \tilde \gamma_{\omega,\bs s, z}^{(h)}$,
$ \gamma_{\omega,\bs s, z}^{(h)}$ immediately imply 
\begin{equation} \begin{split}\Big|  \tilde \gamma_{\omega,\bs s, z}^{(h)}\Big|^2, \Big|  \gamma_{\omega,\bs s, z}^{(h)}\Big|^2&\le 
\int_{I_h}\dd\eta \sum_{k\in\mathcal D_L}\sum_{k_2\in\mathcal Q_M(k)}\frac1{LN_M(k_1, k_2)}  
\big|e^{ik_1}-1\big|^{2s_1}\big|e^{ik_2}-1\big|^{2s_2}\cdot\\
&\cdot f_\eta(k_1, k_2) \sum_{\omega,\omega'=\pm} |g_{\omega\omega'}(k_1, k_2)|.\end{split}\label{grambbb}\end{equation}
Now, recall that the set $\mathcal D_L$ consists of points in $[-\pi,\pi]$ that are equi-spaced at a mutual distance $2\pi/L$, and that the 
set $\mathcal Q_M(k)$ consists of points in $[-\pi,\pi]$ that are {\it almost} equi-spaced at a mutual distance $\pi/(M+1)$
(more precisely, recall that there is exactly one point of $\mathcal Q_M(k)$ 
in every interval $\frac{\pi}{M+1}(n+\frac12,n+1)$, $n=0,\ldots,M-1$, and exactly one point in every interval 
$\frac{\pi}{M+1}(-n-1,-n-\frac12)$, $n=0,\ldots,M-1$). Note also that $N_M(k_1, k_2)\ge M$\footnote{{In fact, by Equation~\eqref{eq:ell_def} and the definition of $\mathcal Q_M(k_1)$, one has $N_M(k_1, k_2)- M=(1-B(k_1)\cos k_2)/(B^2(k_1)-2B(k_1)\cos k_2+1)\ge 0$.}}, 
and that the summand in \eqref{grambbb} is continuous, so we can bound \eqref{grambbb} by a Riemann sum and obtain
\begin{equation}
  \begin{split}
		\Big|  \tilde \gamma_{\omega,\bs s, z}^{(h)}\Big|^2, \Big|  \gamma_{\omega,\bs s, z}^{(h)}\Big|^2
		\le& C
\int_{I_h}\dd\eta \iint_{[-\pi,\pi]^2} \dd k_1 \dd k_2 \big|e^{ik_1}-1\big|^{2s_1}\big|e^{ik_2}-1\big|^{2s_2} f_\eta(k_1, k_2) \sum_{\omega,\omega'=\pm} |g_{\omega\omega'}(k_1, k_2)|
   \\ 
		\le& (C')^{1+s_1+s_2}
		\int_{I_h}\dd\eta 
		\iint_{[-\pi,\pi]^2} \dd k_1 \dd k_2 
		\min(1,|k_1|^{2s_1}|k_2|^{2s_2} e^{-c\eta(k_1^2+k_2^2)}(|k_1|+|k_2|))
    \\
		\le& (C'')^{1+s_1+s_2}
		s_1! s_2!
		\int_{I_h} \min(1,\eta^{-s_1 - s_2 - 3/2}) \dd \eta
		\\
		\le& (C''')^{1+2s_1+2s_2}s_1!s_2! \,2^{h(1+2s_1+2s_2)}.
  \end{split}
\end{equation}
Similarly
\begin{equation}
  \begin{split}
		\Big|  \tilde \gamma_{\omega,\bs s, z}^{(\le h)}\Big|^2, \Big|  \gamma_{\omega,\bs s, z}^{(\le h)}\Big|^2
		\le& (C'')^{1+s_1+s_2}
		s_1! s_2!
		\int_{2^{-2h-2}}^\infty \eta^{-s_1 - s_2 - 3/2} \dd \eta
		\\
		\le& (C''')^{1+2s_1+2s_2}s_1!s_2! \,2^{h(1+2s_1+2s_2)},
  \end{split}
\end{equation}
and these bounds constitute item \ref{it:gh_gram_bounds}.

\section{Proof of Proposition~\ref{thm:g_scaling}}\label{sec:proof_2.9}

Recall that $\Lambda$ is the discrete cylinder of sides $L=2\lfloor a^{-1}\ell_1/2\rfloor$ and $M+1=\lfloor a^{-1}\ell_2\rfloor+1$ and 
$\fg_\scal(z,z')$ the scaling limit propagator \eqref{eq:g_scal_cyl_def} in the continuum cylinder $\Lambda_{\ell_1,\ell_2}$ of sides $\ell_1,\ell_2$.
In order to 
emphasize its dependence upon the sides of the cylinder, let us denote the scaling limit propagator in $\Lambda_{\ell_1,\ell_2}$ by $\fg_\scal(\ell_1,\ell_2;z,z')$.
Note that, upon rescaling by $\xi>0$, this propagator satisfies:
\begin{equation}\xi\fg_\scal(\ell_1,\ell_2; \xi\,z,\xi\,z')=\fg_\scal(\xi^{-1}\ell_1,\xi^{-1}\ell_2;z,z').\label{appresc.prop}\end{equation}
We will prove that, for any $z,z'\in\Lambda$ such that $z\neq z'$, 
and any $w,w'\in \Lambda_{a^{-1}\ell_1,a^{-1}\ell_2}$ such that $w\neq w'$, $\|w-z\|\le \sqrt2$ and $\|w'-z'\|\le\sqrt2$,  
\begin{equation} \big\|\fg_\cc(z,z')-a\fg_\scal(\ell_1,\ell_2;aw,aw')\big\|\le C(\min\{L,M,\|z-z'\|\})^{-2},\label{app.this}\end{equation}
provided that $\min\{L,M,\|z-z'\|\}$ is sufficiently large. 
Proposition~\ref{thm:g_scaling} readily follows from \eqref{app.this}, simply by rescaling by $a^{-1}$. 
Note that, thanks to \eqref{appresc.prop}, 
$a\fg_\scal(\ell_1,\ell_2;aw,aw')=\fg_\scal(a^{-1}\ell_1,a^{-1}\ell_2;w,w')$; note also that $|a^{-1}\ell_1-L|\le 2\sqrt2$ and $|a^{-1\ell_2-M}|\le \sqrt2$. 
By using the explicit expression of the scaling limit propagator \eqref{eq:g_scal_cyl_def} and the fact that $\|w-z\|\le \sqrt2$ and $\|w'-z'\|\le\sqrt2$, 
we find that $\|\fg_\scal(a^{-1}\ell_1,a^{-1}\ell_2;w,w')-\fg_\scal(L,M+1;z,z')\|\le C(\min\{L,M,\|z-z'\|\})^{-2}$. Therefore, in order to prove \eqref{app.this}, it is enough to 
show that, for $\min\{L,M,\|z-z'\|\}$ large, 
\begin{equation} \big\|\fg_\cc(z,z')-\fg_\scal(L,M+1;z,z')\big\|\le C(\min\{L,M,\|z-z'\|\})^{-2},\label{app.this.2}\end{equation}
which is what we will prove in the rest of this appendix. 

Recall that $\fg_\cc (z, z') = \int_0^\infty  \fg^{[\eta]}(z, z')\, d\eta$, with 
\begin{equation}
 {\mathfrak g}^{[\eta]} (z, z') =
    \sum_{\sharp=\pm}\sharp
    \sum_{k_1 \in \mathcal D_L}\sum_{k_2 \in \mathcal{Q}_M(k_1)}
    \frac{1}{2LN_M(k_1, k_2)}     G_{\sharp}^{[\eta]}(k_1, k_2; z, z'),
\end{equation}
where $G_{\sharp}^{[\eta]}$ were defined in \eqref{eq:G_eta_def}. Similarly, $\fg_\scal (L,M+1;z, z') = \int_0^\infty \fg_\scal^{[\eta]} (L,M+1;z, z')\, d\eta$, with
\begin{equation}
  \fg_\scal^{[\eta]}(L,M+1;z, z')
  :=\sum_{\sharp=\pm}\sharp
    \sum_{k_1 \in \frac\pi{L} (2 \bZ + 1) }\
    \sum_{k_2 \in \frac{\pi}{2(M+1)}(2 \bZ + 1)}    \frac{1}{2L(M+1)}
 G_{\scal;\sharp}^{[\eta]}(k_1,k_2;z,z'),
  \label{eq:g_eta_scal_def}
\end{equation}
where $d(k_1, k_2):=(1-t_2)^2k_1^2+(1-t_1)^2k_2^2$ and 
\begin{equation}
  \begin{split}
    G_{\scal;+}^{[\eta]}(k_1, k_2; z, z')
    := &
    e^{-ik_1(z_1-z_1')-ik_2(z_2-z_2')-\eta d(k_1, k_2)}
    \begin{bmatrix} -2it_1 k_1 & -(1-t_1^2)ik_2 \\
     -(1-t_1^2)ik_2 & 2it_1k_1\end{bmatrix}, 
   \\ 
    G_{\scal;-}^{[\eta]}(k_1, k_2; z, z')
    := &
     e^{-ik_1(z_1-z_1')-ik_2(z_2+z_2')-\eta d(k_1, k_2)}
     \begin{bmatrix} -2it_1 k_1 & (1-t_1^2)ik_2 \\
     -(1-t_1^2)ik_2 & e^{2ik_2(M+1)}2it_1 k_1\end{bmatrix}.
        \label{eq:G_scal_def}
      \end{split}
\end{equation}
We rewrite
\begin{equation}
    \fg_{\scal}^{[\eta]}(L,M+1;z, z') - {\mathfrak g}^{[\eta]} (z, z') =  R_1^{[\eta]} (z, z') + R_2^{[\eta]} (z, z') 
    ,\end{equation}
    where
\begin{equation}
R_1^{[\eta]}(z,z'):=\sum_{\sharp=\pm}\sharp\frac1{2L(M+1)}\Big[\sum_{(k_1,k_2)\in\mathcal B_{L,M}}
    G_{\scal;\sharp}(k_1,k_2;z,z')- \sum_{(k_1,k_2)\in\mathcal D_{L,M}} 
    G_{\sharp}^{[\eta]}(k_1, k_2; z, z')\Big],\end{equation}
with $\mathcal B_{L,M}:= \frac\pi{L}(2\mathbb Z+1)\times \frac\pi{2(M+1)}(2\mathbb Z+1)$, 
and $\mathcal D_{L,M}:=\mathcal D_L\times\mathcal D_{2(M+1)}$. Moreover, 
\begin{equation} R_2^{[\eta]}(z,z'):=\sum_{\sharp=\pm}\sharp
\frac1{L}\sum_{k_1\in\mathcal D_L}\Big[\sum_{k_2\in\mathcal D_{2(M+1)}}\frac1{2(M+1)}  -
\sum_{k_2\in\mathcal Q_M(k_1,k_2)}
\frac1{2N_M(k_1,k_2)}\Big]G_{\sharp}^{[\eta]}(k_1, k_2; z, z').\end{equation}

\paragraph{The first remainder term.}
We consider the contribution to $\fg_{\scal} (L,M+1;z, z')-\fg_c(z,z')$ from $R_1^{[\eta]}$ first.
Examining the definitions \eqref{eq:G_eta_def} and \eqref{eq:G_scal_def}, we see that, if $(k_1,k_2)\in \mathcal D_{L,M}$, 
each matrix element of  $G_{\scal;\sharp}^{[\eta]}(k_1, k_2; z, z')-G_{\sharp}^{[\eta]}(k_1, k_2; z, z')$ is bounded in absolute value by 
$C|k|^2e^{-c \eta |k|^2}$, with $|k|^2=k_1^2+k_2^2$, for some $C,c>0$; if $(k_1,k_2)\in \mathcal B_{L,M}\setminus \mathcal D_{L,M}$, 
each matrix element of  $G_{\scal;\sharp}^{[\eta]}(k_1, k_2; z, z')$ is bounded in absolute value by $C|k|e^{-c \eta |k|^2}$. These bounds are sufficient for performing the integral of $\|R_1^{[\eta]}(z,z')\|$ over $\eta\ge (\min\{L,M\})^2$. In fact, for such values of $\eta$, 
\begin{equation}\frac1{2L(M+1)}\sum_{(k_1,k_2)\in\mathcal B_{L,M}} e^{-c \eta |k|^2}\cdot\begin{cases}|k|^2 & \text{if $\max\{|k_1|,|k_2|\}<\pi$}\\
|k| & \text{otherwise}\end{cases} \le C\eta^{-2}, \end{equation}
for some $C>0$, so that 
\begin{equation} \int_{(\min\{L,M\})^2}^\infty \|R_1^{[\eta]}(z,z')\|d\eta\le C \int_{(\min\{L,M\})^2}^\infty \eta^{-2}\,d\eta\le 
C' (\min\{L,M\})^{-2}. \label{R1firstr}\end{equation}
In order to bound the contribution from the integral of $\|R_1^{[\eta]}(z,z')\|$ over $\eta\le (\min\{L,M\})^2$, we need to rewrite $R_1^{[\eta]}(z,z')$ as a suitable integral in the complex plane,
in analogy with what we did in Section~\ref{sec:proofthm2.3}. More precisely, by using \eqref{eq:B.20} and its analogue for the sums over $k_2$, 
we find that the matrix elements of $R_1^{[\eta]}(z, z')$ can be rewritten as: 
\begin{eqnarray} \big[R_1^{[\eta]}(z, z')\big]_{\omega\omega'}&=&\sum_{\sharp=\pm}\sharp\sum_{\sigma_1,\sigma_2=0,\pm}\cdot \label{complexR1}\\
&\cdot& \Big\{
\int_{-\infty+i\sigma_1 b}^{\infty+i\sigma_1b}\frac{dk_1}{2\pi}
\int_{-\infty+i\sigma_2 b}^{\infty+i\sigma_2 b}\frac{dk_2}{2\pi}\, A'_{\sigma_1}(k_1)\, A''_{\sigma_2}(k_2)\, \big[G_{\scal;\sharp}(k_1,k_2;z,z')\big]_{\omega\omega'}\nonumber\\
&&-\int_{-\pi+i\sigma_1 b}^{\pi+i\sigma_1b}\frac{dk_1}{2\pi}
\int_{-\pi+i\sigma_2 b}^{\pi+i\sigma_2 b}\frac{dk_2}{2\pi}\, A'_{\sigma_1}(k_1)\, A''_{\sigma_2}(k_2)\, \big[G_\sharp^{[\eta]}(k_1, k_2; z, z' )\big]_{\omega\omega'}\Big\},\nonumber\end{eqnarray}
where $b$ will be conveniently fixed below, $A'_\sigma(k)$ was defined after \eqref{eq:B.20} and $A''_\sigma$ is defined analogously: $A_0''(k)\equiv 1$ and, if $\sigma=\pm$, 
$A_\sigma''(k)=-e^{2i\sigma k(M+1)}/(1+e^{2i\sigma k(M+1)})$. We now proceed as described before and after \eqref{tausharp}: if $\sigma_1=0$, we shift $k_1$ in the complex plane as $k_1\to k_1-ib\,{\rm sign}(z_1-z_1')$;
if $\sigma_2=0$, depending on the values of $\sharp,\omega,\omega'$, we shift $k_2\to k_2-i\tau_{\sharp,(\omega\omega')}b$, with $\tau_{\sharp,(\omega\omega')}$ as in \eqref{tausharp}. Next, we combine the third line of \eqref{complexR1} with the contribution to the integral in the second line from the region $\max\{|\Re k_1|,|\Re k_2|\}\le \pi$. 
After these manipulations, we find that $\big[R_1^{[\eta]}(z,z')\big]_{\omega\omega'}$ can be further rewritten as
\begin{eqnarray} &&\hskip-.3truecm\big[R_1^{[\eta]}(z, z')\big]_{\omega\omega'}=\sum_{\sharp=\pm}\sharp\sum_{\sigma_1,\sigma_2=0,\pm} \int_{-\infty+i\tilde\sigma_1 b}^{\infty+i\tilde\sigma_1b}\frac{dk_1}{2\pi}
\int_{-\infty+i\tilde\sigma_2 b}^{\infty+i\tilde\sigma_2 b}\frac{dk_2}{2\pi}\, A'_{\sigma_1}(k_1)\, A''_{\sigma_2}(k_2)\cdot \\
&&\quad \cdot \begin{cases} 
\big[G_{\scal;\sharp}(k_1,k_2;z,z')\big]_{\omega\omega'}-\big[G_\sharp^{[\eta]}(k_1, k_2; z, z' )\big]_{\omega\omega'} & \text{if $\max\{|\Re k_1|,|\Re k_2|\}\le \pi$}\\
\big[G_{\scal;\sharp}(k_1,k_2;z,z')\big]_{\omega\omega'} & \text{if $\max\{|\Re k_1|,|\Re k_2|\}>\pi$}\end{cases}\nonumber\end{eqnarray}
where $\tilde\sigma_1= -{\rm sign}(z_1-z_1')$, if $\sigma_1=0$, and $\tilde\sigma_1=\sigma_1$, otherwise; and $\tilde\sigma_2=\tau_{\sharp,(\omega\omega')}$,  if $\sigma_2=0$, and $\tilde\sigma_2=\sigma_2$, otherwise. 
We now pick $b=\eta^{-1/2}$ and notice that, if $\Im k_1=\tilde\sigma_1 \eta^{-1/2}$ and $\Im k_2=\tilde \sigma_2 \eta^{-1/2}$, with $\eta\le(\min\{L,M\})^2$, 
the integrand in the right side of \eqref{complexR1} is bounded in absolute value by 
$$Ce^{\eta^{-1/2}}e^{-\eta^{-1/2}\|z-z'\|-c\eta |k|^2}\cdot\begin{cases} |k|^2+\eta^{-1} & \text{if $\max\{|\Re k_1|,|\Re k_2|\}\le \pi$}\\
|k|+\eta^{-1/2} & \text{if $\max\{|\Re k_1|,|\Re k_2|\}> \pi$}\end{cases} $$
for some $C,c>0$. Therefore, recalling that $\|z-z'\|\gg1$, 
\begin{eqnarray}&&\|R_1^{[\eta]}(z,z')\|\le C e^{-\frac12\eta^{-1/2}\|z-z'\|}\Biggl[\ \int\limits_{[-\pi,\pi]^2} dk \,  e^{-c\eta|k|^2}(|k|^2+\eta^{-1})\\
&&\quad +\int\limits_{\mathbb R^2\setminus [-\pi,\pi]^2} dk \,  e^{-c\eta|k|^2}(|k|+\eta^{-1/2})\Biggr]\le C' e^{-\frac12\eta^{-1/2}\|z-z'\|}(\eta^{-2}+e^{-c'\eta}\eta^{-3/2}).\nonumber
\end{eqnarray}
Note that $e^{-\frac14\eta^{-1/2}\|z-z'\|-c'\eta}\le Ce^{-c''\|z-z'\|^{2/3}}$, so that, by integrating over $\eta\le (\min\{L,M\})^2$, we find: 
\begin{equation}\int_0^{(\min\{L,M\})^2}\|R_1^{[\eta]}(z,z')\|d\eta\le C ( \|z-z'\|^{-2}+\|z-z'\|^{-1}e^{-c''\|z-z'\|^{2/3}})\le C' \|z-z'\|^{-2}.\label{eq:C.15}
\end{equation}
Combining this with \eqref{R1firstr}, we find that $\int_0^{\infty}\|R_1^{[\eta]}(z,z')\|d\eta\le C  (\min\{L,M,\|z-z'\|\})^{-2}$. 

\paragraph{The second remainder term.}
Let us now consider the contribution to $\fg_{\scal} (L,M+1;z, z')-\fg_{\cc}(z,z')$ from $R_2^{[\eta]}$. By using \eqref{eq:B.18} and the analogue of \eqref{eq:B.20} for the sums over $k_2$ in $\mathcal D_{2(M+1)}$, we rewrite $R_2^{[\eta]}(z,z')$ as
\begin{equation} \label{form1R2}R_2^{[\eta]}(z,z'):=\sum_{\sharp=\pm}\sharp
\frac1{L}\sum_{k_1\in\mathcal D_L}\sum_{\sigma=\pm}\int_{-\pi+i\sigma b}^{\pi+i\sigma b}\frac{dk_2}{2\pi} \big[A''_{\sigma}(k_2)-A_{\sigma}(k_1,k_2)\big]
G_{\sharp}^{[\eta]}(k_1, k_2; z, z'),\end{equation}
where $A''_\sigma$ and $A_\sigma$ were defined after \eqref{complexR1} and in \eqref{Asigmadef}, respectively. Note that, for $\sigma=\pm$, 
\begin{eqnarray} A''_{\sigma}(k_2)-A_{\sigma}(k_1,k_2)&=&
-\frac{e^{2i\sigma k(M+1)}}{1+e^{2i\sigma k(M+1)}}-\frac{R_\sigma (k_1, k_2) e^{ 2i\sigma k_2(M+1)}}{1-R_\sigma (k_1, k_2) e^{ 2i\sigma k_2(M+1)}}\label{eq:freq_shift_difference}
\\
&=&
-\frac{2(1+R_\sigma (k_1, k_2)) e^{ 2i\sigma k_2(M+1)}}{
(1+ e^{ 2i\sigma k_2(M+1)})(1-R_\sigma (k_1, k_2) e^{ 2i\sigma k_2(M+1)})}.
\nonumber\end{eqnarray}
Recalling the definition \eqref{eq:defRpm} of $R_\sigma$, %
we have
\begin{equation}
  |1+R_\sigma(k_1, k_2)|
  =
  2
  \left|
  \frac{1-B(k_1)\cos (k_2)}{1-B(k_1)e^{i\sigma k_2}}
  \right|
  \le
  C
  \frac{|k_1|^2 + |k_2|^2+b^2}{b}
  \label{eq:freq_diff_num_bound}
\end{equation}
for $|\Im k_1|\le \sigma \Im k_2 =b$ positive and sufficiently small; and recalling the bound \eqref{eq:RsigmaCb} on $R_\sigma$, which remains valid with $i\tilde\sigma b$ replaced by 
$i\tilde\sigma b'$, $|b'|\le b$, the denominator of \eqref{eq:freq_shift_difference} 
is bounded from below as
\begin{equation}
	|1-R_\sigma (k_1, k_2) e^{ 2i\sigma k_2(M+1)}||1+ e^{ 2i\sigma k_2(M+1)}|
	\ge (1-e^{-b(M+1))^2}
	\label{eq:freq_denom_badcase}
\end{equation}
for $M$ larger than some constant and $|\Im k_1|\le \sigma \Im k_2 =b$ positive and sufficiently small. We now proceed slightly differently, depending on whether 
$\eta$ is larger or smaller than $(\ell(z,z'))^2$, with $\ell(z,z'):=\max\{\min\{L,M\},\|z-z'\|\}$. 

\medskip

{\it The case of $\eta$ smaller than $(\ell(z,z'))^2$.} In this case, we rewrite \eqref{form1R2} by using \eqref{eq:B.20}; if $\sigma'=0$, we perform the complex shift $k_1\to k_1-ib\,{\rm{sign}}(z_1-z_1')$, thus getting (letting $\tilde\sigma'=\sigma'$, if $\sigma'=\pm$, and $\tilde\sigma'=-{\rm sign}(z_1-z_1')$, if $\sigma'=0$)
\begin{equation} R_2^{[\eta]}(z,z'):=\sum_{\sharp=\pm}\sharp\sum_{\sigma'=0,\pm}\sum_{\sigma=\pm}\int_{-\pi+i\tilde\sigma'b}^{\pi+i\tilde\sigma'b}\frac{dk_1}{2\pi}\int_{-\pi+i\sigma b}^{\pi+i\sigma b} \frac{dk_2}{2\pi}\, A'_{\sigma'}(k_1)\big[A''_{\sigma}(k_2)-A_{\sigma}(k_1,k_2)\big]
G_{\sharp}^{[\eta]}(k_1, k_2; z, z').\end{equation}
We now bound the integrand by its absolute value, by using, in particular, \eqref{eq:freq_shift_difference}-\eqref{eq:freq_diff_num_bound}, and by estimating the matrix elements 
of $G_{\sharp}^{[\eta]}$ in the same way as we did several times above and in Appendix \ref{sec:proofthm2.3}. We thus get, for $b=c_0\min\{1,\eta^{-1/2}\}$ with $c_0$ sufficiently small,
\begin{equation}
		\left\|R_2^{[\eta]} (z, z') \right\|
		\le 
		\frac{Ce^{-b \| z- z'\|}}{(1-e^{-bL})(1-e^{-bM})^2}
		\int_{[-\pi,\pi]^2} d k\, \frac{(|k|+b)^3}b
		e^{- \eta |k|^2}	\le
		\frac{Cb^4e^{-b \| z -z'\|}}{(1-e^{-bL})(1-e^{-bM})^2}.
\end{equation}
If we now integrate this inequality with respect to $\eta$, for $0\le \eta\le \ell(z,z')$, recalling that $\ell(z,z')=\max\{\min\{L,M\},\|z-z'\|\}$, we find, for $\|z-z'\time \gg1$, 
\begin{eqnarray}\label{eq:C.22}
\int_0^{(\ell(z,z'))^2}
		\left\|R_2^{[\eta]} (z, z') \right\|
		\dd \eta
		&\le& C\int_0^1 e^{-c_0\|z-z'\|} d\eta+C\int_1^{(\min\{L,M\})^2} \eta^{-2} e^{-c_0\eta^{-1/2}\|z-z'\|}d\eta\\
		&+&\mathds 1(\|z-z'\|>\min\{L,M\})\, \frac{C}{LM^2}
		\int_{(\min\{L,M\})^2}^{\|z-z'\|^2}\eta^{-1/2}e^{-c_0\eta^{-1/2}\|z-z'\|}d\eta\nonumber\\
&\le & C'\Big(e^{-c_0\|z-z'\|}+\frac1{\|z-z'\|^2}+ \frac{\|z-z'\|}{LM^2}\Big)\le \frac{C''}{(\min\{L,M,\|z-z'\|\})^2}.\nonumber
\end{eqnarray}

\medskip

{\it The case of $\eta$ larger than $\ell(z,z')$.} In this case we go back to the representation \eqref{form1R2} (no rewriting of the sum over $k_1$ in terms of an integral in the complex plane). 

We proceed slightly differently for the diagonal and off-diagonal elements of $R_2^{[\eta]}$.
Let us begin with the diagonal terms.  Note that the diagonal elements of $G_\sharp^{[\eta]}$ have the form
\begin{equation}
  \pm 2i t_1 
  e^{- i k_1 (z_1 - z_1') }
  e^{-i k_2 Z_2}
  e^{-\eta D(k_1, k_2)}  \sin k_1 
  \nonumber
\end{equation}
where $Z_2$ is either $z_2 - z_2'$, $z_2 + z_2'$, or $z_2 + z_2' - 2M -2$. 
We thus see that each diagonal element of $R_2^{[\eta]}$ is given by a sum of four terms (due to the sums over $\sharp$ and $\sigma$) of the form
\begin{equation}
  \begin{split}
    &
    \pm
    \frac{2 i t_1 }{L}
    \sum_{k_1 \in \mathcal D_L}
    \int_{-\pi +i\sigma b}^{\pi+i\sigma b}\,  \big[A''_{\sigma}(k_2)-A_{\sigma}(k_1,k_2)\big]
      e^{- i k_1 (z_1 - z_1') }
      e^{-i k_2 Z_2}
      e^{-\eta D(k_1, k_2)}
      \sin k_1 
      \frac{\dd k_2}{2\pi}
      \\ & = 
      \pm
      \frac{2 t_1 }{L}
      \sum_{k_1 \in \mathcal D_L}
      \int_{-\pi +i\sigma b}^{\pi+i\sigma b}
      \frac{2(1+R_\sigma (k_1, k_2)) e^{i \sigma k_2 Z_{2,\sigma}'}	\sin k_1 (z_1 - z_1')     e^{-\eta D(k_1, k_2)}
	\sin k_1 }{
	(1-R_\sigma (k_1, k_2) e^{ 2i\sigma k_2(M+1)})(1+ e^{ 2i\sigma k_2(M+1)})}
  	\frac{\dd k_2}{2\pi},
      \end{split}
      \label{eq:various_sines}
\end{equation}
where in passing from the first to the second line we used \eqref{eq:freq_shift_difference} and the fact that $R_\sigma(k_1,k_2)$ is even in $k_1$. Moreover,
in the second line, $Z'_{2,\sigma}$ is  either $2(M+1)-\sigma(z_2 - z_2')$, $2(M+1)-\sigma(z_2 + z_2')$, or $2(M+1)(1+\sigma)-\sigma(z_2 + z_2')$; in any case, 
$Z'_{2,\sigma}\ge 2$. 
We can then use this, together with Inequalities~\eqref{eq:freq_diff_num_bound} and~\eqref{eq:freq_denom_badcase}
	and the observation that $| \sin k_1 (z_1 - z_1') | \le |k_1| \cdot \|z-z'\| $
	to obtain, for $\omega=\pm$ and $b=c_0\eta^{-1/2}$, with $\eta\ge \ell(z,z')$, 
\begin{equation}
    \Big|\big[R_2^{[\eta]} (z, z')\big]_{\omega\omega}
    \Big|
    \le
    \frac{C\|z-z'\|}{L}\sum_{k_1\in\mathcal D_L} \int_{-\pi}^\pi dk_2\, 
    \frac{ \eta^{1/2} \left( k_1^2 + k_2^2 + \eta^{-1} \right)}{ (1-e^{-c_0\eta^{-1/2}M})^2}
    |k_1|^2  e^{- c\eta (k_1^2 + k_2^2)}\\
    \le \frac{C\|z-z'\|\eta^{-3/2}}{M^2},
\end{equation}
and thus, recalling that $\ell(z,z')=\max\{\min\{L,M\},\|z-z'\|\}$, 
\begin{equation}\label{eq:C.25}
	\left| 
		\int_{(\ell(z,z'))^2}^\infty 
		    \big[R_2^{[\eta]} (z, z')\big]_{\omega\omega}
		    \dd \eta	\right|
    \le 
	\frac{C \|z-z'\|}{M^2\, \ell(x,y)}\le 	\frac{C}{M^2},
\end{equation}
which is of the desired order. 

The off-diagonal elements of $G_\sharp^{[\eta]}$ are equal, up to a sign, to 
\begin{equation}
 2 (1-t_1^2)
  e^{- i k_1 (z_1 - z_1') }
  e^{-i k_2 Z_2} e^{-\eta D(k_1,k_2)}
  (1 - B(k_1) e^{\pm i k_2}),
  \nonumber
\end{equation}
where as before $Z_2$ is either $z_2 - z_2'$, $z_2 + z_2'$, or $z_2 + z_2' - 2M -2$.
Noting that $R_\pm (k_1, k_2) = R_\mp (k_1,-k_2)$, we rewrite
\begin{equation}
  \begin{split}
    \int_{-\pi -i b}^{\pi -i b}
    &
    \big[ A''_-(k_2)-A_-(k_1,k_2)\big] G_\sharp^{[\eta]}(k_1, k_2; z, z')
    \frac{\dd k_2}{2\pi}
    \\ & =
    \int_{-\pi +i b}^{\pi +i b}
    \big[ A''_+(k_2)-A_+(k_1,k_2)\big] G_\sharp^{[\eta]}(k_1, -k_2; z, z')
    \frac{\dd k_2}{2\pi}
  \end{split}
\end{equation}
and so each off-diagonal element of $R_2^{[\eta]}$ can be written as a sum of two terms (due to the sum over $\sharp$) of the form
\begin{equation}
  \begin{split}
    \frac{1}{L}
    \sum_{k_1 \in \mathcal D_L}
    \int_{-\pi +i b}^{\pi+ib}
    &
    \frac{2(1+R_+ (k_1, k_2)) e^{ 2i k_2(M+1)}}{
      (1-R_+ (k_1, k_2) e^{ 2i k_2 (M+1)})(1+ e^{ 2ik_2(M+1)})}
      e^{-ik_1 (z_1-z_1')}
      e^{-\eta D(k_1, k_2)}
      \\ & \times
      \left[ 
	e^{-ik_2 Z_2 } (1 - B(k_1)e^{\pm ik_2})
	+
	e^{i k_2 Z_2 } (1- B(k_1) e^{\mp ik_2})
      \right]
      \frac{\dd k_2}{2 \pi}
    \end{split}
    \nonumber
\end{equation}
up to uninteresting coefficients; then noting that
\begin{equation*}
  \begin{split}
    &
    \tfrac12\left[ 
      e^{-ik_2 Z_2 } (1 - B(k_1)e^{\pm ik_2})
      +
      e^{i k_2 Z_2 } (1- B(k_1) e^{\mp ik_2})
    \right]
    =
    \cos k_2Z_2 - B(k_1) \cos k_2 (Z_2 \mp 1)
    \\ & =
    (1 - \cos k_2) \cos k_2 Z_2 \mp \sin k_2 \sin k_2 Z_2 + [1-B(k_1)] \cos k_2 (Z_2 \mp 1)
    ,
  \end{split}
\end{equation*}
we obtain, for $\omega=\pm$ and $b=c_0\eta^{-1/2}$ with $\eta\ge \ell(z,z')$, noting also that $|Z_2|\le 2M$,  
\begin{equation}
    \left| 
    \big[R_2^{[\eta]} (z,z')\big]_{\omega,-\omega}
    \right|
    \le
    \frac{C|Z_2|}L\sum_{k_1\in\mathcal D_L}
    \int_{-\pi}^\pi dk_2\,
    \frac{\eta^{1/2} \left( k_1^2 + k_2^2 + \eta^{-1} \right)^2}{  (1-e^{-c_0\eta^{-1/2}M})^2}
    e^{- c\eta(k_1^2+k_2^2)} 
    \le
    \frac{C}{M} \eta^{-3/2},
\end{equation}
from which
\begin{equation}
  \left|
  \int_{(\ell(z,z'))^2}^\infty
    \left[ 
      R_2^{[\eta]} (z, z')
    \right]_{\omega,-\omega}
  \dd \eta
  \right|
  \le \frac{ C}{M\ell(z,z')},
\end{equation}
which is again of the desired order. Combining this with \eqref{eq:C.25} and \eqref{eq:C.22}, we find that $\int_0^{\infty}\big\|R_2^{[\eta]} (z, z') \big\|\,d\eta\le C(\min\{L,M,\|z-z'\|\})^{-2}$.
Together with the bound on $R_1^{[\eta]}$, see the line after \eqref{eq:C.15}, this concludes the proof of \eqref{app.this.2} and, therefore, of Proposition \ref{thm:g_scaling}. 

\section{Noninteracting correlation functions in the scaling limit}\label{Sec:pfaffian_proof}

In this appendix, we explain how to express the scaling limit of the non-interacting correlation function appearing in \cref{prop:main} in terms of the propagators studied in \cref{sec:proof_2.9}, thus proving, in particular, \cite[Eq.(\xref{scalinglim.free.1})]{AGG_part1}. 
For notational simplicity, in this appendix we let $t_1^*(\lambda)\equiv t_1$ and $t_2^*(\lambda)\equiv t_2=(1-t_1)/(1+t_1)$. 
By using the Grassmann representation of Proposition \ref{thm:Ising_to_Grassman} in the case $\lambda=0$, we find that, for the lattice of unit mesh 
and any $m$-tuple of distinct edges $x_1,\ldots,x_m$, with $m\ge 2$, 
\begin{equation}\media{\epsilon_{x_1};\cdots;\epsilon_{x_m}}_{0,t_1,t_2;\Lambda}=
\frac{\partial^m}{\partial A_{x_1}\cdots\partial A_{x_m}}\log \int \mathcal D\Phi e^{\mathcal S_{t_1,t_2}(\Phi)+
\sum_{x\in \fB_\Lambda}(1-t_{j(x)}^2)E_x A_x} \Big|_{\bf A=0}
		\end{equation}
(note that the expectation in the left side is the {\it truncated} one). 
Introducing the rescaled energy observable $\varepsilon^a_\ell (z) := a^{-1} \sigma_z \sigma_{z + a \hat e_\ell}$,
rescaling the lattice by a factor of $a$
and passing over to the
non-truncated expectation, we obtain
\begin{eqnarray} && \media{\varepsilon^a_{l_1}(z_1)\cdots\varepsilon^a_{l_m}(z_m)}_{0,t_1,t_2;\Lambda^a}=
\label{appD.2}
\\
&& \qquad =a^{-m}(1-t_1^2)^{m_1}(1-t_2^2)^{m_2}
\media{\left(E_{x(z_1,l_1)}-\media{E_{x(z_1,l_1)}}\right)\cdots \left(E_{x(z_m,l_m)}-\media{E_{x(z_m,l_m)}}\right)},
\nonumber
		\end{eqnarray}
where, in the right side: 
$E_{x(z,1)}=\lis H_z H_{z+a\hat e_1}$ and $E_{x(z,2)}=\lis V_z V_{z+a\hat e_2}$; the symbol $\media{(\cdot)}$ indicates normalized Grassmann measure $\frac{\int\mathcal D\Phi e^{\mathcal S^a_{t_1,t_2}(\Phi)}(\cdot)}{\int\mathcal D\Phi e^{\mathcal S^a_{t_1,t_2}(\Phi)}}$, with $\mathcal S^a_{t_1,t_2}$ the same as \eqref{eq:cS_def} on the rescaled lattice $\Lambda^a$. 

Recall the transformation \eqref{eq:sc_2} relating the variables $\{\lis H_z, H_z, \lis V_z, V_z\}_{z\in\Lambda}$ to $\{\phi_{\omega,z},$ $\xi_{\omega,z}\}_{z\in\Lambda}^{\omega\in\{\pm\}}$, 
from which we see that, if $x$ is a vertical edge of endpoints $z,z+a\hat e_2$, then $E_x=\phi_{+,z}\phi_{-,z+a\hat e_2}$, while, 
if $x$ is a horizontal  edge of endpoints $z,z+a\hat e_1$, then (with obvious notation)
$E_x=\big[s_+*(\phi_{+}-\phi_-)(z)\big]\, \big[s_-*(\phi_{+}+\phi_-)(z+a\hat e_1)\big]$ plus terms involving the `massive' variables $\{\xi_{\omega,z}\}_{z\in\Lambda,\omega\in\{\pm\}}$.

The reader can convince herself that, for the purpose of computing the limit $a\to 0^+$ of \eqref{appD.2}, in the right side of \eqref{appD.2} we can freely replace $E_x$ 
by the following local expressions in the Grassmann `massless' variables: $\phi_{+,z}\phi_{-,z}$, if $x$ is a vertical edge of endpoints $z,z+a\hat e_2$
(note that $\phi_{+,z}\phi_{-,z}$ is obtained from $\phi_{+,z}\phi_{-,z+a\hat e_2}$ by `localizing' the second field at the same position of the first one); and 
$(1+t_1)^{-2}(\phi_{+,z}-\phi_{-,z})(\phi_{+,z}+\phi_{-,z})=2(1+t_1)^{-2}\phi_{+,z}\phi_{-,z}$, if $x$ is a vertical edge of endpoints $z,z+a\hat e_1$ 
(note that $(1+t_1)^{-2}(\phi_{+,z}-\phi_{-,z})(\phi_{+,z}+\phi_{-,z})$ is obtained from $\big[s_+*(\phi_{+}-\phi_-)(z)\big]\, \big[s_-*(\phi_{+}+\phi_-)(z+a\hat e_1)\big]$
by localizing $\big[s_-*(\phi_{+}+\phi_-)(z+a\hat e_1)\big]$ at $z$, and by replacing the non-local, exponentially decaying, kernels $s_\pm(z_1)$ by their local counterparts, namely 
$c_0\delta_{z_1,0}$, with $c_0=\lim_{L\to\infty}\sum_{y=1}^L s_{\pm}(y)=(1+t_1)^{-1}$). It is, in fact, easy to check that the difference between the exact expression of $E_x$ 
and such a `local approximations' is of higher order in $a$ and its contribution to the correlation function vanishes in the limit $a\to 0$. Therefore, 
\begin{eqnarray}
&& \lim_{a\to 0^+}
\media{\varepsilon^a_{l_1}(z_1) \cdots \varepsilon^a_{l_m}(z_m)}_{0,t_1,t_2; \Lambda^a} =\label{eq:Pfaffian_explicit_factor}\\
&&\qquad = \lim_{a\to 0^+}a^{-m}\left(\frac{2(1-t_1^2)}{(1+t_1)^2}\right)^{m_1}(1-t_2^2)^{m_2}\,\media{:\!\phi_{+,z_1}\phi_{-,z_1}\!:\, \cdots\, :\!\phi_{+,z_m}\phi_{-,z_m}\!:},
\nonumber
\end{eqnarray}
where $:\phi_{+,z}\phi_{-,z}:$ denotes the difference 
$\phi_{+,z}\phi_{-,z}-\media{\phi_{+,z}\phi_{-,z}}$. Note that $\frac{2(1-t_1^2)}{(1+t_1)^2}=2t_2$. 
The Grassmann average in the right side of \eqref{eq:Pfaffian_explicit_factor} can be expressed in terms of the fermionic
Wick rule or, equivalently, in terms of the Pfaffian of the $2m\times 2m$ anti-symmetric matrix $\mathcal M^a(\bs z)$, whose elements, labelled by the indices $(1,+), (1,-), \ldots, (m,+),(m,-)$, are equal to 
$$\big[\mathcal M^a(\bs z)\big]_{(i,\omega)(j,\omega')}=\begin{cases}\media{\phi_{\omega,z_i}\phi_{\omega',z_j}} &\text{if $i\neq j$,}\\
0 &\text{otherwise.}\end{cases}$$
In view of Proposition \ref{thm:g_scaling}, $\lim_{a\to 0}a^{-1}\media{\phi_{\omega,z}\phi_{\omega',z'}}=\big[\fg_\scal (z, z')\big]_{\omega\omega'}$ and, therefore, 
\begin{equation}
 \lim_{a\to 0^+}
\media{\varepsilon^a_{l_1}(z_1) \cdots \varepsilon^a_{l_m}(z_m)}_{0,t_1,t_2; \Lambda^a} = (2t_2)^{m_1}(1-t_2^2)^{m_2}\,{\rm Pf}(\mathcal M(\bs z)),
\label{eq:Pfaffian_v2}
\end{equation}
with 
\begin{equation} 
	\big[\mathcal M(\bs z)\big]_{(i,\omega),(j,\omega')}=\begin{cases} \big[\fg_\scal(z_i,z_j)\big]_{\omega\omega'} & \text{if $i\neq j$,}\\
0 & \text{otherwise,}\end{cases} 
\end{equation} 
Since $\fg_\scal$ is covariant under rescaling, see \eqref{appresc.prop}, the scaling limit \eqref{eq:Pfaffian_v2} is, as well. 
Note that rescalings are, together with translations and parity, the only conformal transformations from finite cylinders to finite cylinders or, equivalently, from a finite circular annulus to a finite circular annulus: in fact, it is well known 
\cite{AIMO.Schottky,Schottky}
that an annulus $\{z\in\mathbb C\ :\ r<|z|<R\}$ can be conformally mapped to another one only if the two annuli have the same modulus 
$\frac1{2\pi}\log(R/r)$; moreover, any automorphism of the annulus $\{z\in\mathbb C\ : \ r<|z|<R\}$ is either a rotation 
$z\to ze^{i \theta}$ or a rotation followed by an inversion $z\to Rr/z$. Equivalently,  in terms of finite cylinders, this classical result of complex analysis implies that the 
only conformal transformations from finite cylinders to finite cylinders are uniform rescaling, translations and parity. 

\subsection*{Acknowledgements}
We thank Hugo Duminil-Copin for several inspiring discussions. 
This work has been supported by the European Research Council (ERC) under the European Union's Horizon 2020 research and innovation programme (ERC CoG UniCoSM, grant agreement No.\ 724939 for all three authors and also ERC StG MaMBoQ, grant agreement No.\ 802901 for R.L.G.). 
G.A.\ acknowledges financial support from the Swiss Fonds National.  
A.G.\ acknowledges financial support from MIUR, PRIN 2017 project MaQuMA PRIN201719VMAST01.

\printbibliography[heading=bibintoc]
\end{document}